\newtheorem{Theorem}{Theorem}
\newtheorem{lemma}{Lemma}
\newtheorem{Remark}{Remark}
\newtheorem{Corollary}{Corollary}
\newtheorem{definition}{Definition}
\newcommand{\Figwidth}{\columnwidth}%
\newcommand{\hspaceonetwocol}[2]{\hspace{#2}}
\newcommand{\includeonetwocol}[2]{#2}
\def\twocolbreak{\nonumber\\ &}%
\def\twocolnewline{\nonumber\\}%
\def\twocolAlignMarker{&}%
\newcommand{\Figwidth}{4.5in}%
\newcommand{\hspaceonetwocol}[2]{\hspace{#1}}
\newcommand{\includeonetwocol}[2]{#1}
\def\twocolbreak{}%
\def\twocolnewline{}%
\def\twocolAlignMarker{}%
\def\msubK{{\ell_{m}}}
\def\hfeat{h_{\ell_{m}}^{m}}
\def\hnod{h_{i,m}}
\def\Prob{\mathbb{P}}
\def\ChiS{{\Lambda}}
\def\TreeSideInfo{{\tilde{\boldsymbol \tau}}}
\def\TreeLabels{{\boldsymbol \tau}}
\def\RootEstimator{{\hat{\tau}_0}}
\def\RootLabel{{\tau_0}}
\def\CroppedTree{{T^t}}
\def\LLRCroppedTree{{\Gamma_0^t}}
\def\TreeIter{{\hat{t}}}
\def\TreeDepth{{t}}
\begin{document}
\title{Recovering a Single Community with Side Information}

\author{Hussein Saad, {\em Student Member, IEEE}, and Aria Nosratinia, {\em Fellow, IEEE}
\thanks{The authors are with the Department of Electrical Engineering, University of Texas at
    Dallas, Richardson, TX 75083-0688 USA, E-mail:
    hussein.saad@utdallas.edu; aria@utdallas.edu.}
\thanks{This work was presented in part in ISIT 2018, Vail, CO.~\cite{ISIT-2018-1,ISIT-2018-2}.}
}
\maketitle


\begin{abstract}

We study the effect of the quality and quantity of side information on the recovery of a hidden community of size $K=o(n)$ in a graph of size $n$.  Side information for each node in the graph is modeled by a random vector with the following features: either the dimension of the vector is allowed to vary with $n$, while log-likelihood ratio (LLR) of each component with respect to the node label is fixed, or the LLR is allowed to vary and the vector dimension is fixed. These two models represent the variation in quality and quantity of side information. Under maximum likelihood detection, we calculate tight necessary and sufficient conditions for exact recovery of the labels. We demonstrate how side information needs to evolve with $n$ in terms of either its quantity, or quality, to improve the exact recovery threshold. A similar set of results are obtained for weak recovery. Under belief propagation, tight necessary and sufficient conditions for weak recovery are calculated when the LLRs are constant, and sufficient conditions when the LLRs vary with $n$. Moreover, we design and analyze a local voting procedure using side information that can achieve exact recovery when applied after belief propagation. The results for belief propagation are validated via simulations on finite synthetic data-sets, showing that the asymptotic results of this paper can also shed light on the performance at finite $n$.
\end{abstract}

\IEEEpeerreviewmaketitle


\section{Introduction}
\label{into.}

Detecting communities (or clusters) in graphs is a fundamental problem that has been studied in various fields, statistics~\cite{SBM_first,Min_max_SBM,stat1,stat2,stat3}, computer science~\cite{SCT,CS1,CS2,CS3,CS4} and theoretical statistical physics~\cite{Phys,Phys2}. It has many applications: finding like-minded people in social networks~\cite{social}, improving recommendation systems~\cite{recommendation}, detecting protein complexes~\cite{protein}. In this paper, we consider the problem of finding a single sub-graph (community) hidden in a large graph, where the community size is much smaller than the graph size. Applications of finding a hidden community include fraud activity detection~\cite{Hidd_1, Hidd_2} and correlation mining~\cite{pmlr-v31-firouzi13a}. 

Several models have been studied for random graphs that exhibit a community structure~\cite{survey1}. A widely used model in the context of community detection is the stochastic block model (SBM)~\cite{exact_general_sbm}. In this paper, the stochastic block model for one community is considered~\cite{montnari,infor_limits,recovering_one, Kadavankandy:SingleCommunity}. The stochastic block model for one community consists of a graph of size $n$ with a community of size $K$, where any two nodes are connected with probability $p$ if they are both within the community, and with probability $q$ otherwise.

The problem of finding a hidden community upon observing \textit{only} the graph has been studied in~\cite{montnari,infor_limits,recovering_one}. The information limits\footnote{The extremal phase transition threshold is also known as {\em information theoretic limit}~\cite{exact_general_sbm} or {\em information limit}~\cite{infor_limits}. We use the latter term throughout this paper.} of weak recovery and exact recovery  have been studied in~\cite{infor_limits}. Weak recovery is achieved when the expected number of misclassified nodes is $o(K)$, and exact recovery when all labels are recovered with probability approaching one. The limits of belief propagation for weak recovery have been characterized~\cite{recovering_one,montnari} in terms of a signal-to-noise ratio parameter $\lambda = \frac{K^{2}(p-q)^{2}}{(n-k)q}$. The utility of a voting procedure after belief propagation to achieve exact recovery was pointed out in~\cite{recovering_one}.

Graphical models are popular because they represent many large data sets and give insight on the performance of inference algorithms, but also in many inference problems they do not capture all data that is both relevant and available. In many practical applications, non-graphical relevant information is available that can aid the inference. For example, social networks such as Facebook and Twitter have access to other information other than the graph edges such as date of birth, nationality, school. A citation network has the authors’ names, keywords, and therefore may provide significant additional information beyond the co-authoring relationships. 
This paper characterizes the utility of side information in single-community detection, in particular exploring {\em when and by how much can side information improve the information limit, as well as the phase transition of belief propagation, in single-community detection.}

We model a varying quantity and quality of side information by associating with each node a vector (i.e., non-graphical) observation whose dimension represents the quantity of side information and whose (element-wise) log-likelihood ratios (LLRs) with respect to node labels represents the quality of side information.  The contributions of this paper can be summarized as follows:

\begin{itemize}

\item The information limits in the presence of side information are characterized. When the dimension of side information for each node varies but its LLR is fixed across $n$, tight necessary and sufficient conditions are calculated for both weak and exact recovery. Also, it is shown that under the same sufficient conditions, weak recovery is achievable even when the size of the community is random and unknown. We also find conditions on the graph and side information where achievability of weak recovery implies achievability of exact recovery. Subject to some mild conditions on the exponential moments of LLR, the results apply to both discrete as well as continuous-valued side information.
  
When the side information for each node has fixed dimension but varying LLR, we find tight necessary and sufficient conditions for exact recovery, and necessary conditions for weak recovery. Under varying LLR, our results apply to side information with finite alphabet.

\item The phase transition of belief propagation in the presence of side information is characterized, where we assume the side information per node has a fixed dimension.  When the LLRs are fixed across $n$, tight necessary and sufficient conditions are calculated for weak recovery. Furthermore, it is shown that when belief propagation fails, no local algorithm can achieve weak recovery. It is also shown than belief propagation is strictly inferior to the maximum likelihood detector. Numerical results on finite synthetic data-sets validate our asymptotic analysis and show the relevance of our asymptotic results to even graphs of moderate size. We also calculate conditions under which  belief propagation followed by a local voting procedure achieves exact recovery.

When the side information has variable LLR across $n$, the belief propagation misclassification rate was calculated using density evolution. Our results generalize~\cite{Kadavankandy:SingleCommunity}, where it was shown that belief propagation achieves weak recovery for $\lambda >0$ only for binary side information consisting of noisy labels with vanishing noise.

\end{itemize}

We now present a brief review of the literature in the area of side information for community detection and highlight the distinctions of the present work.  In the context of detecting two or more  communities: Mossel and Xu~\cite{MoselXu:ACM16} showed that, under certain condition, belief propagation with noisy label information has the same residual error as the maximum a-posteriori estimator for two symmetric communities. Cai {\em et. al}~\cite{Cai:BP-BEC} studied  weak recovery of two symmetric communities under belief propagation upon observing a vanishing fraction of labels. Neither~\cite{MoselXu:ACM16} nor~\cite{Cai:BP-BEC} establishes a converse. For two symmetric communities, Saad and Nosratinia~\cite{our2,Saad:JSTSP18} studied exact recovery under side information. Asadi~\cite{Asadi} studied the effect of i.i.d.\ vectors of side information on the phase transition of exact recovery for more than two communities. Kanade {\em et. al}~\cite{Kanade:GlobalLocal} showed that observation of a vanishing number of labels is unhelpful to {\em correlated recovery}\footnote{Correlated recovery denotes probability of error that is strictly better than a random guess, and is not a subject of this paper.} phase transition. For single community detection, Kadavankandy {\em et al.}~\cite{Kadavankandy:SingleCommunity} studied belief propagation with noisy label information with vanishing noise (unbounded LLRs).

The issue of side information in the context of single-community detection has not been addressed in the literature except for~\cite{Kadavankandy:SingleCommunity} whose results are generalized in this paper. Analyzing the effect of side information on information limit of weak recovery is a novel contribution of this work. A converse for the local algorithms such as belief propagation with side information has not been available prior to this work. The study of side information whose LLRs vary with $n$ is largely novel. And finally, while this work (inevitably) shares many tools and techniques with other works in the area of stochastic block models and community detection, the treatment of side information with variable LLR (as a function of $n$) presents new challenges for the bounding of errors by the application of Chernoff bound and large deviations, which are addressed in this work.


\section{System Model and Definitions}
\label{sys.}

Let $\boldsymbol G$ be a realization  from a random ensemble of graphs $\mathcal{G}(n,K,p,q)$, where each graph has $n$ nodes and contains a hidden community $C^{*}$ with size $|C^{*}| = K$. The underlying distribution of the graph is as follows:  an edge connects a pair of nodes with probability $p$ if both nodes are in $C^{*}$ and with probability $q$ otherwise. $G_{ij}$ is the indicator of an edge between nodes $i, j$. For each node $i$, a vector of dimension $M$ is observed consisting of side information, whose distribution depends on the label $x_i$ of the node. By convention $x_{i} = 1$ if $i \in C^{*}$ and $x_{i} = 0$ if $i \notin C^{*}$. For node $i$, the entries of the side information vector are each denoted $y_{i,m}$  and can be interpreted as different features of the side information. The side information for the entire graph is collected into the matrix ${\boldsymbol Y}_{n\times M}$.  The column vector $\boldsymbol{y}_{m}=[y_{1,m}, \ldots,y_{n,m}]^t$ collects the side information feature $m$ for all nodes $i$. 

 The vector of true labels is denoted $\boldsymbol{x}^{*} \in \{0,1\}^{n}$. $P$ and $Q$ are Bernoulli distributions with parameters $p, q$, respectively, and 
\[
L_{G}(i,j) = \log\Big(\frac{P(G_{ij})}{Q(G_{ij})}\Big)
\] is the log-likelihood ratio of edge $G_{ij}$ with respect to $P$ and $Q$.
  
In this paper, we address the problem of {\em single-community detection}, i.e.,  recovering $\boldsymbol{x}^{*}$ from $\boldsymbol G$ and $\boldsymbol Y$, under the following conditions: $K = o(n)$ while $\lim_{n\rightarrow\infty}K=\infty$, $p \geq q$, $\frac{p}{q} = \theta(1)$ and $\lim\sup_{n \to \infty} p < 1$.

An estimator $\hat{\boldsymbol{x}}(\boldsymbol{G},\boldsymbol{Y})$ is said to achieve exact recovery of $\boldsymbol{x}^*$ if, as $n \to \infty$, $\mathbb{P}(\hat{\boldsymbol{x}} = \boldsymbol{x}^{*}) \to 1$. An estimator $\hat{\boldsymbol{x}}(\boldsymbol{G},\boldsymbol{Y})$ is said to achieve weak recovery if, as $n \to \infty$, $\frac{d(\hat{\boldsymbol{x}},\boldsymbol{x}^{*})}{K} \to 0$ in probability, where $d(\cdot,\cdot)$ denotes the Hamming distance. It was shown in~\cite{infor_limits} that the latter definition is equivalent to the existence of an estimator $\hat{\boldsymbol{x}}$ such that $\mathbb{E}[d(\hat{\boldsymbol{x}},\boldsymbol{x}^{*})] = o(K)$. This equivalence will be used throughout our paper.


\section{Information Limits}
\label{info.lim.}

\subsection{Fixed-Quality Features}
\label{Varying_features}

In this subsection, the side information for each node is allowed to evolve with $n$ by having a varying number of independent and identically distributed scalar observations, each of which has a finite (imperfect) amount of information about the node label. By allowing the dimension of the side information per-node to vary and its scalar components to be identically distributed, the side information is represented with fixed-quality quanta. The results of this section demonstrate that as $n$ grows, the number of these side information quanta per-node must increase in a prescribed fashion in order to have a positive effect on the threshold for recovery.

For all $n$, for all $i=1,\ldots, n$, define the distributions:
\[
V(\upsilon) \triangleq \Prob(y_{i,m}=\upsilon|x_i=1)  \qquad U(\upsilon)\triangleq \Prob(y_{i,m}=\upsilon|x_i=-1)
\]
Thus the components of the side information for each node (features) are identically distributed for all nodes and all graph sizes $n$; we also assume all features are independent conditioned on the node labels $\boldsymbol{x}^*$. The dimension $M$ of the side information per node is allowed to vary as the size of the graph $n$ changes.

In addition, we assume $U,V$ are such that the resulting LLR random variable, defined below, has bounded support:
\[
L_{S}(i,m) = \log\Big(\frac{V(y_{i,m})}{U(y_{i,m})}\Big)
\] 
Throughout the paper, $L_S$ will continue to denote the LLR random variable of one side information feature, and $L_{G}$ denotes the random variable of the LLR of a graph edge.

\begin{definition}
\label{def.4}
\begin{align}
\psi_{QU}(t,m_{1},m_{2})  & \triangleq m_{1}\log(\mathbb{E}_{Q}[e^{tL_{G}}]) + m_{2}\log(\mathbb{E}_{U}[e^{tL_{S}}]) \\
\psi_{PV}(t,m_{1},m_{2})  & \triangleq m_{1}\log(\mathbb{E}_{P}[e^{tL_{G}}]) + m_{2}\log(\mathbb{E}_{V}[e^{tL_{S}}]) \\
E_{QU}(\theta,m_{1},m_{2}) &\triangleq \sup_{t\in[0,1]}  t\theta - \psi_{QU}(t,m_{1},m_{2}) \\ 
E_{PV}(\theta,m_{1},m_{2}) &\triangleq \sup_{t\in[-1,0]} t\theta - \psi_{PV}(t,m_{1},m_{2})
\end{align}
where $\theta$, $m_{1}$ and $m_{2}$ $\in \mathbb{R}$.
\end{definition}

\subsubsection{Weak Recovery}
\label{weak.lim.1}
 
\begin{Theorem}
\label{The.1.new}
For single community detection under bounded-LLR side information, weak recovery is achieved if and only if:
\begin{equation}
\begin{split}
(K-1) D(P||Q) &+ M D(V||U) \to \infty \text{ , } \\
\liminf_{n\to\infty} \Big[(K-1)D(P||Q) &+ 2M D(V||U) \Big]> 2\log(\frac{n}{K}) 
\end{split}
\label{Eq.Hajek}
\end{equation}
\ \ 
\end{Theorem}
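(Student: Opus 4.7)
The plan is to prove the biconditional by separately addressing achievability (sufficiency) and converse (necessity), exploiting the Chernoff exponents $E_{PV}, E_{QU}$ from Definition~\ref{def.4}.

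For achievability, I would analyze a genie-aided version of the maximum likelihood detector. For each node $i$, suppose a genie reveals the labels of all other nodes; the optimal decision then amounts to thresholding the conditional log-likelihood
\[
T_{i} = \sum_{k \in C^{*} \setminus \{i\}} L_{G}(i,k) + \sum_{m=1}^{M} L_{S}(i,m)
\]
against a constant $\tau$. Because the unconditional ML error is no larger than the genie-aided error, the expected Hamming distance is upper bounded by $K\,\Prob(T_{i}^{(1)} < \tau) + (n-K)\,\Prob(T_{j}^{(0)} > \tau)$, where the superscripts indicate whether the node is truly in $C^{*}$. I would then apply Chernoff bounds through the rate functions $E_{PV}$ and $E_{QU}$, choosing $\tau$ of order $\log(n/K)$ and an interior Chernoff parameter that controls both terms simultaneously. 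The first condition $(K-1)D(P||Q) + MD(V||U) \to \infty$ drives $\Prob(T_{i}^{(1)} < \tau) \to 0$ via a weak-law argument at the tilted mean, so the contribution of misclassified community nodes is $o(K)$; the second condition $\liminf[(K-1)D(P||Q) + 2MD(V||U)] > 2\log(n/K)$ forces the Chernoff exponent for non-community nodes to exceed the $\log(n/K)$ union-bound requirement by an $\omega(1)$ margin.

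For the converse, I would lower bound the expected error of any estimator by a pairwise-swap argument in the spirit of~\cite{infor_limits}. For each pair $(i,j)$ with $i \in C^{*}$ and $j \notin C^{*}$, consider the residual binary hypothesis test distinguishing $(x_{i},x_{j})=(1,0)$ from $(0,1)$ after revealing all other labels: the minimum error is governed by the Neyman--Pearson lemma and, because the LLRs are bounded, matches the Chernoff exponent up to subexponential factors via Cramér's theorem. Summing these pairwise errors and using a coupling to link them to the expected Hamming distance shows that whenever either stated condition fails, the expected number of misclassifications is $\Omega(K)$, contradicting weak recovery.

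The main technical obstacle is matching constants tightly in both directions, particularly the factor-of-two asymmetry between the $(K-1)D(P||Q)$ and $2MD(V||U)$ contributions. This asymmetry traces back to the structure of the Chernoff tilt: evaluating $\psi_{QU}(t,K-1,M)$ near the inner optimum yields an effective exponent scaling like $(K-1)D(P||Q)/2 + MD(V||U)$, and requiring it to exceed $\log(n/K)$ produces the stated condition after multiplying by $2$. The bounded-LLR assumption is essential throughout: it guarantees all moment generating functions are finite on a neighborhood of the relevant interval, permits Cramér's theorem to be applied with matching upper and lower large-deviation rates, and makes the Chernoff bound exponentially tight. The borderline case where the $\liminf$ equals $2\log(n/K)$ exactly will require a Bahadur--Rao type second-order refinement to confirm that the converse remains sharp there, and care will be needed when bookkeeping the dependence on $K-1$ versus $K$ in the two error types so that the achievability and converse exponents align.
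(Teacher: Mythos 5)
Your achievability argument has the genie inequality backwards. A genie that reveals all other labels gives a detector \emph{more} information, so the genie-aided per-node error is a \emph{lower} bound on what a real detector can achieve, not an upper bound; showing that the genie-aided test succeeds does not show that any implementable estimator does. The paper's sufficiency proof (Appendix~\ref{App.4}) works without a genie: it analyzes the constrained ML rule $\hat{C}=\arg\max_{|C|=K}\big(e_1(C,C)+e_2(C)\big)$, writes the error event for $R=|\hat C\cap C^*|=r$ as the existence of misclassified sets $S\subset C^*$ and $T\subset (C^*)^c$ with $|S|=|T|=K-r$, splits it around a threshold $\theta$, and performs a union bound over these subsets with Chernoff estimates on each tail. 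The second condition is what makes the $\binom{n-K}{K-r}\approx(n/K)^{K-r}$ combinatorial factor lose to the exponent, and the first condition then drives $\sum_r \mathbb{P}(R=r)\to0$. If you want a simpler per-node test, you must decouple it from the genie — e.g., run it against a preliminary estimate of $C^*$ (the two-stage local-voting scheme of Algorithm~1), but that is precisely what the paper reserves for \emph{exact} recovery after weak recovery is already established, so it is circular here.

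Your converse also does not land where the paper's does. A pairwise-swap Neyman--Pearson test with Cram\'er matching recovers the \emph{first} condition (that the divergence must diverge), since it forces the per-pair Bayes error to zero, but it does not produce the quantitative threshold in the second condition. In the paper that threshold comes from a separate mutual-information counting argument: $I(\boldsymbol G,\boldsymbol Y;\boldsymbol x^*)\ge K\log(n/K)(1+o(1))$ on one side, and $I(\boldsymbol G,\boldsymbol Y;\boldsymbol x^*)\le \binom{K}{2}D(P\|Q)+KM\,D(V\|U)$ on the other, and dividing by $K$ gives $\tfrac{K-1}{2}D(P\|Q)+M D(V\|U)\ge\log(n/K)$. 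This also explains the factor-of-two asymmetry you flagged: it is not a feature of the Chernoff tilt at the inner optimum of $\psi_{QU}$, but rather of the combinatorics — the graph contributes $\binom{K}{2}\approx\tfrac{K(K-1)}{2}$ informative edge observations within the community while the side information contributes $KM$ independent samples, so per node the graph's share is $\tfrac{K-1}{2}D(P\|Q)$ while the side information's share is $M D(V\|U)$. Your plan as written would therefore prove achievability of the wrong quantity and a converse weaker than the theorem claims.
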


\begin{proof}
For necessity please see Appendix~\ref{App.3}. For sufficiency, please see Appendix~\ref{App.4}.
\end{proof}

\begin{Remark}
\label{ex-ten2}
The condition of bounded support for the LLRs can be somewhat weakened to  Eqs.~\eqref{eq.lem2.1.1} and~\eqref{eq.lem2.1.4}.  As an example $U \sim \mathcal{N}(0,1)$ and $V \sim \mathcal{N}(\mu,1)$ with $\mu\neq 0$ satisfies~\eqref{eq.lem2.1.1}, \eqref{eq.lem2.1.4} and the theorem continues to hold even though the LLR is not bounded.
\end{Remark}

\begin{Remark}
\label{Re.1.new}
Theorem~\ref{The.1.new} shows that if $M$ grows with $n$ slowly enough, e.g., if $M$ is fixed and independent of $n$, or if $M = o(\log(\frac{n}{K}))$, side information does not affect the information limits. 
\end{Remark}

\begin{Remark}
\label{ex-ten}
If the features are conditionally independent but {\em not} identically distributed, it is easy to show the necessary and sufficient conditions are:
\begin{gather}
(K-1) D(P||Q) + \sum_{m=1}^{M} D(V_{m}||U_{m}) \to \infty \text{ , }  \nonumber\\
\liminf_{n\to\infty} (K-1)D(P||Q) + 2\sum_{m=1}^{M} D(V_{m}||U_{m}) > 2\log(\frac{n}{K}) \nonumber
\end{gather}
where $V_{m}$ and $U_{m}$ are analogous to $U$ and $V$ earlier, except specialized to each feature.
\end{Remark}

The assumption that the size of the community $|C^*|$ is known a-priori is not always reasonable: we might need to detect a small community whose size is not known in advance. In that case, the performance is characterized by the following lemma. 

\begin{lemma}
\label{Suff.Random}
For single-community detection under bounded-LLR side information, if the size of the community is not known in advance but obeys a probability distribution satisfying:
\begin{equation}
\mathbb{P}\Big(\Big | \; |C^{*}| - K \Big |  \leq \frac{K}{\log(K)}\Big) \geq 1 - o(1)
\end{equation}
for some known $K=o(n)$. If conditions~\eqref{Eq.Hajek} hold, then: 
\begin{equation}
\mathbb{P}\Big(\frac{ | \hat{C}\triangle C^{*} |}{K}  \leq 2\epsilon + \frac{1}{\log(K)}\Big) \geq 1 - o(1)
\end{equation}
where $$\epsilon = \big(\min(\log(K), (K-1)D(P||Q) + MD(V||U) )\big)^{-\frac{1}{2}} = o(1).$$
\end{lemma}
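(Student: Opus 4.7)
The plan is to reduce to the known-size setting of Theorem~\ref{The.1.new} via a conditioning argument, and to absorb the additional uncertainty in $|C^*|$ into an error term of order $1/\log K$.

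Let $\mathcal{E}=\{||C^*|-K|\leq K/\log K\}$. By hypothesis, $\mathbb{P}(\mathcal{E})\geq 1-o(1)$, so it suffices to work conditionally on $\{|C^*|=K'\}$ for every $K'$ with $|K'-K|\leq K/\log K$. Fix such a $K'$. Conditional on $|C^*|=K'$, the data are generated by exactly the model treated in Theorem~\ref{The.1.new} but with community size $K'$. Because $K'=K(1+o(1))$, we have $(K'-1)D(P\|Q)=(K-1)D(P\|Q)(1+o(1))$ and $\log(n/K')=\log(n/K)+o(1)$, so the sufficient conditions~\eqref{Eq.Hajek} written for $K$ imply the analogous conditions for $K'$, and the rate parameter satisfies $\epsilon(K')=\epsilon(K)(1+o(1))$. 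The achievability construction from Appendix~\ref{App.4} applied with the true size $K'$ therefore produces an estimator $\hat{C}_{K'}$ of cardinality $K'$ with $\mathbb{P}(|\hat{C}_{K'}\triangle C^{*}|/K'\leq\epsilon)\geq 1-o(1)$.

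Since $K'$ is unknown to the algorithm, we must instead use an estimator $\hat{C}$ of the known target size $K$: take $\hat{C}$ to be the $K$ nodes with the largest joint (graph $+$ side-information) log-likelihood score. Under this top-$k$ representation, switching the target cardinality from $K'$ to $K$ only swaps the at most $|K-K'|\leq K/\log K$ boundary nodes, so $|\hat{C}\triangle \hat{C}_{K'}|\leq K/\log K$. The triangle inequality for symmetric difference then yields
\begin{equation*}
|\hat{C}\triangle C^{*}|\leq |\hat{C}\triangle \hat{C}_{K'}|+|\hat{C}_{K'}\triangle C^{*}|\leq \frac{K}{\log K}+K'\epsilon,
\end{equation*}
and dividing by $K$, together with $K'\leq K(1+1/\log K)$, gives $|\hat{C}\triangle C^{*}|/K\leq \epsilon(1+1/\log K)+1/\log K\leq 2\epsilon+1/\log K$ for $K$ sufficiently large. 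Removing the conditioning on $\mathcal{E}$ completes the argument.

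The main obstacle is the size-adjustment step: verifying that the estimator of target size $K$ differs from the estimator of target size $K'$ by at most $|K-K'|$ nodes. If the estimator in Appendix~\ref{App.4} is phrased as an unconstrained ML search over subsets of a fixed cardinality, this top-$k$ nestedness is not automatic; I would either rewrite the estimator via a per-node sufficient statistic that ranks the nodes (so that the top-$K$ and top-$K'$ subsets are automatically nested), or alternatively bound $\mathbb{E}[|\hat{C}\triangle C^{*}|\mid|C^*|=K']$ directly and apply Markov's inequality, letting the additive $1/\log K$ term absorb the $O(K/\log K)$ discrepancy due to the size mismatch.
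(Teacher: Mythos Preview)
Your high-level plan---condition on the realized size $K'$, invoke the known-size achievability, then correct for the $|K-K'|$ mismatch---is natural, but the size-adjustment step is a genuine gap that you have correctly flagged and not actually closed. The estimator analyzed in Appendix~\ref{App.4} is the combinatorial maximizer
\[
\hat C_{K'} \;=\; \arg\max_{|C|=K'} \; e_1(C,C) + e_2(C),
\]
and the objective is \emph{not} separable across nodes because $e_1(C,C)$ counts within-$C$ edges. Consequently there is no single per-node score whose top-$K$ and top-$K'$ sets are $\hat C_K$ and $\hat C_{K'}$, and the claimed nesting $|\hat C \triangle \hat C_{K'}| \le |K-K'|$ can fail. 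Your proposed workaround~(a)---replace ML by a per-node ranking---would require a fresh achievability proof for that new estimator, which Theorem~\ref{The.1.new} does not supply. Your workaround~(b)---bound the error directly conditional on $|C^*|=K'$---is in fact what the paper does, and it cannot be carried out by black-boxing Theorem~\ref{The.1.new}.

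The paper's proof takes exactly route~(b): it keeps the size-$K$ estimator~\eqref{ML_rule.1} (which is no longer ML when $|C^*|=K'$) and re-runs the union/Chernoff argument of Appendix~\ref{App.4} from scratch with the mismatched sizes. The key bookkeeping observation is that for every $r \le (1-\epsilon)K - |K'-K|$ one has $K/K' \to 1$, $(K-r)/(K'-r) \to 1$, and $a/a' \to 1$ (where $a = \binom{K}{2}-\binom{r}{2}$ and $a'=\binom{K'}{2}-\binom{r}{2}$), so every exponent in the Chernoff bounds matches its Appendix~\ref{App.4} counterpart up to a $(1+o(1))$ factor, and the same tail estimate $\mathbb{P}(R=r) \le 2\exp\{-(K-r)(1+o(1))E\}$ goes through. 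This is precisely the step your proposal leaves unspecified; the rest of your conditioning and triangle-inequality framing is fine but is not where the work lies.
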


\begin{proof}
Please see Appendix~\ref{App.4.1}
\end{proof}

\subsubsection{Exact Recovery}
\label{exact.lim.1}

The sufficient conditions for exact recovery are derived using a two-step algorithm (see Table~\ref{Alg.1}). Its first step consists of any algorithm achieving weak recovery, e.g. maximum likelihood (see Lemma~\ref{Suff.Random}). The second step applies a local voting procedure.

\begin{table*}
\caption{Algorithm for exact recovery.}
\label{Alg.1}
\centering
\normalsize
\begin{tabular}{|p{4.5in}|}
\hline
Algorithm 1\\
\hline
\begin{enumerate}
\item 
Input: $n$, $K$, $\boldsymbol{G}$, $\boldsymbol{Y}$, $\delta \in (0,1): n\delta, \frac{1}{\delta} \in \mathbb{N}$.
\item
Consider a partition of the nodes $\{S_k\}$ with $|S_k|=n\delta$. $\boldsymbol{G}_{k}$  and
$\boldsymbol{Y}_{k}$ are the subgraph and side information corresponding to ${S_k}^c$, i.e., after each member of partition has been withheld.
\item
\label{Alg1-step3}
Consider estimator $\hat{C}_k(\boldsymbol{G}_{k}, \boldsymbol{Y}_{k})$ that produces $|\hat{C}_k|=\lceil K(1-\delta)\rceil$  and further assume it achieves weak recovery.

\item
For all $S_k$ and all $i\in S_k$ calculate $r_{i} = (\sum_{j \in \hat{C}_{k}} L_{G}(ij)) + \sum_{m=1}^{M} L_{S}(i,m)$ 

\item 
Output: $\tilde{C}=\{ \text{Nodes corresponding to $K$ largest }r_{i}\}$.
\end{enumerate}



\\
\hline
\end{tabular}
\end{table*}

\begin{lemma}
\label{The.5.new}
Define $C^*_k = C^* \cap {S_k}^c$ and assume $\hat{C}_k$ achieves weak recovery, i.e.
\begin{equation}
\mathbb{P}\big(|\hat{C}_{k} \triangle C^{*}_{k}| \leq \delta K \text{ for } 1\leq k \leq \frac{1}{\delta}\big) \to 1 \; .
\label{suff._exact.eq.1.new}
\end{equation}
If
\begin{equation}
\liminf_{n\to\infty} E_{QU}\big(\log(\frac{n}{K}),K,M\big) > \log(n)
\end{equation}
then $\mathbb{P}(\tilde{C} = C^{*}) \to 1$.
\end{lemma}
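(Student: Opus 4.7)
My plan is to condition on the weak-recovery event $\mathcal{A}=\bigcap_{k}\{|\hat C_k \triangle C^{*}_k|\leq \delta K\}$, which by hypothesis has probability $1-o(1)$, and then show that on $\mathcal{A}$ the single threshold $\tau=\log(n/K)$ separates the voting statistic of every community member from every non-member with probability $1-o(1)$; this forces $\tilde C=C^{*}$. The crucial independence structure is that $\hat C_k$ is a function of $(\boldsymbol G_k,\boldsymbol Y_k)$, hence for each $i\in S_k$ the edges $\{G_{ij}\}_{j\in\hat C_k}$ and the side-information entries $\{y_{i,m}\}_{m=1}^{M}$ are jointly independent of $\hat C_k$, permitting a per-node Chernoff bound followed by a union bound.

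\textbf{Per-node Chernoff bounds.} For $i\in S_k\setminus C^{*}$, every $L_G(i,j)$ in the sum is $Q$-distributed and every $L_S(i,m)$ is $U$-distributed, so Definition~\ref{def.4} directly gives
\[
\Prob\bigl(r_i\ge \tau\,\big|\,\hat C_k\bigr)\le \exp\bigl(-E_{QU}(\tau,|\hat C_k|,M)\bigr).
\]
For $i\in S_k\cap C^{*}$ I decompose $r_i=\tilde r_i+\Delta_i$, where $\tilde r_i=\sum_{j\in\hat C_k\cap C^{*}} L_G(i,j)+\sum_{m=1}^{M} L_S(i,m)$ is a clean sum of independent LLRs under $P$ and $V$, and $\Delta_i=\sum_{j\in\hat C_k\setminus C^{*}}L_G(i,j)$ gathers the contamination. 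Because $|L_G|$ is bounded (which follows from $p/q=\Theta(1)$ and $\limsup p<1$) and on $\mathcal{A}$ one has $|\hat C_k\setminus C^{*}|\le \delta K$, we obtain $|\Delta_i|=O(\delta K)$, so $\{r_i\le\tau\}\subseteq\{\tilde r_i\le\tau+O(\delta K)\}$ and a Chernoff bound yields
\[
\Prob\bigl(r_i\le \tau\,\big|\,\hat C_k\bigr)\le \exp\bigl(-E_{PV}(\tau+O(\delta K),|\hat C_k\cap C^{*}|,M)\bigr).
\]

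\textbf{Reducing the $PV$ bound to the single hypothesis.} A change of measure gives $\mathbb E_P[e^{tL_G}]=\mathbb E_Q[e^{(t+1)L_G}]$ (and analogously for $V,U$), so $\psi_{PV}(t,\cdot,\cdot)=\psi_{QU}(t+1,\cdot,\cdot)$; substituting $s=t+1$ in the Fenchel transform then produces the key identity
\[
E_{PV}(\theta,m_1,m_2)=E_{QU}(\theta,m_1,m_2)-\theta.
\]
With $\theta=\log(n/K)$ the inside bound thus carries an extra multiplicative factor of $n/K$. Consequently the union bound over at most $n$ outside nodes needs $E_{QU}(\log(n/K),K,M)>\log n$ (to give $o(1/n)$), while the union bound over the $K$ inside nodes needs $E_{QU}(\log(n/K),K,M)-\log(n/K)>\log K$, i.e.\ again $E_{QU}(\log(n/K),K,M)>\log n$. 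The hypothesis therefore controls both tails simultaneously.

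\textbf{Main obstacle.} The chief technical difficulty is absorbing the $O(\delta K)$ perturbations of the Chernoff exponents coming from (i) $|\hat C_k|$ and $|\hat C_k\cap C^{*}|$ differing from $K$, and (ii) the shift of the threshold in the inside bound. Using linearity of $\psi_{QU}$ and $\psi_{PV}$ in $(m_1,m_2)$ together with boundedness of $L_G$, I will show that each perturbation degrades the exponent by at most $O(\delta K)$, and that these degradations can be made negligible compared to the positive slack in $\liminf[E_{QU}(\log(n/K),K,M)-\log n]>0$ by taking $\delta$ sufficiently small (or $\delta_n\to 0$ slowly). A final union bound over the $1/\delta$ partitions, combined with $\Prob(\mathcal{A}^{c})=o(1)$, completes the argument.
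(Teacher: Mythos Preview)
Your overall architecture (condition on $\mathcal{A}$, Chernoff per node, union bound, and the identity $E_{PV}(\theta,m_1,m_2)=E_{QU}(\theta,m_1,m_2)-\theta$) is exactly right, and the outside-node bound goes through essentially as in the paper. The gap is in the inside-node bound, specifically in how you handle the contamination $\Delta_i$.

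You bound $|\Delta_i|\le B\,|\hat C_k\setminus C^{*}|\le B\delta K$ \emph{deterministically} from $|L_G|\le B$, then shift the threshold by $B\delta K$ and invoke Lemma~\ref{Lem.1} to conclude the exponent drops by $O(\delta K)$. That arithmetic is correct, but the conclusion ``these degradations can be made negligible compared to the positive slack'' is not: the slack in $\liminf\big[E_{QU}(\log(n/K),K,M)-\log n\big]$ is only $\epsilon\log n$, whereas in the typical exact-recovery regime (e.g.\ \eqref{reg.1} with $K=\rho n/\log n$) one has $\delta K=\Theta(\delta n/\log n)\gg \log n$ for any constant $\delta$. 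The $O(\delta K)$ threshold shift therefore wipes out the entire Chernoff exponent. Your escape hatch ``or $\delta_n\to 0$ slowly'' does not help either: you would need $\delta_n K=o(\log n)$, which turns the weak-recovery premise $|\hat C_k\triangle C^*_k|\le \delta_n K$ into near-exact recovery, and in any case Algorithm~\ref{Alg.1} fixes $\delta$ as a constant.

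The paper's proof avoids this by \emph{not} bounding $\Delta_i$ deterministically. It keeps the $\delta K$ contaminated edge-LLRs (which are $Q$-distributed, since $i\in C^{*}$ but $j\notin C^{*}$) inside the Chernoff moment generating function; see \eqref{Lem.suff.exact.eq.3.new}. Their contribution to the exponent is then $\delta K\log\mathbb{E}_Q[e^{-t^{*}L_G}]$, which by Lemma~\ref{Lem.2} and convexity is $O(\delta K D(Q\|P))$. Crucially, under the hypothesis one has $KD(Q\|P)=O\big(E_{QU}(\log(n/K),K,M)\big)=O(\log n)$ (via $E_{QU}(K\gamma,K,M)\ge E_{QU}(0,K,M)\gtrsim KD(Q\|P)$), so the contamination costs only a \emph{multiplicative} factor $(1-O(\delta))$ on the exponent rather than an additive $O(\delta K)$. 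That is what makes a small constant $\delta$ sufficient; replacing your deterministic bound on $\Delta_i$ by this MGF treatment repairs the argument.
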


\begin{proof}
Please see Appendix~\ref{App.5}.
\end{proof}

Then the main result of this section follows:

\begin{Theorem}
\label{The.3.new}
In single community detection under bounded-LLR side information, assume~\eqref{Eq.Hajek} holds, then exact recovery is achieved if and only if: 
\begin{equation}
\label{Cond.1.The.3.new}
\liminf_{n\to\infty} E_{QU}\big(\log(\frac{n}{K}),K,M\big) > \log(n)
\end{equation}
\end{Theorem}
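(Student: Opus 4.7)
The plan is to prove the two implications separately, using Lemma~\ref{The.5.new} for sufficiency and a genie-aided second-moment argument for the converse.

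For the achievability direction, I would instantiate Algorithm~1 with maximum likelihood as the weak-recovery routine in step~\ref{Alg1-step3}. The standing hypothesis~\eqref{Eq.Hajek} together with Theorem~\ref{The.1.new} implies that ML achieves weak recovery on any graph whose parameters satisfy~\eqref{Eq.Hajek}. For each of the $1/\delta = O(1)$ subgraphs obtained by withholding a partition piece $S_k$, the parameters $n' = n(1-\delta)$, $K' = \lceil K(1-\delta) \rceil$ preserve the ratio $n'/K' = n/K$, and~\eqref{Eq.Hajek} transfers to the subgraph with a $(1-\delta)$ factor loss on its first term that is absorbed whenever $\delta$ is taken small enough to respect the strict slack in~\eqref{Eq.Hajek}. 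A union bound across the constantly many subgraphs then yields the joint weak-recovery hypothesis~\eqref{suff._exact.eq.1.new} of Lemma~\ref{The.5.new}, which combined with condition~\eqref{Cond.1.The.3.new} delivers exact recovery of the cleanup estimator $\tilde C$.

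For the converse, I would reduce to a genie-aided score test: for each node $i$, define $r_i \triangleq \sum_{l \in C^*, l \neq i} L_G(i,l) + \sum_m L_S(i,m)$, computed as though $C^*$ were known. A direct likelihood-ratio expansion for a single-element swap shows that ML succeeds only if $r_i > r_j - L_G(i,j)$ for every $i \in C^*$ and $j \notin C^*$, where the single-edge correction is negligible on the exponential scale; so it suffices to exhibit a pair with $r_j \geq r_i$ up to a bounded correction. Conditional on $C^*$, the collection $\{r_j\}_{j \notin C^*}$ is independent across $j$, each distributed as a sum of $K$ i.i.d.\ edge-LLRs from $Q$ and $M$ i.i.d.\ feature-LLRs from $U$. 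A sharp large-deviation lower bound gives $\Prob(r_j \geq \log(n/K)) \geq \exp\bigl(-E_{QU}(\log(n/K),K,M)(1+o(1))\bigr)$, so whenever $\liminf_{n \to \infty} E_{QU}(\log(n/K),K,M) \leq \log n$, the expected number of non-community nodes with $r_j \geq \log(n/K)$ is bounded below by a positive constant along a subsequence; a second-moment computation then forces $\max_{j \notin C^*} r_j \geq \log(n/K)$ with probability bounded away from zero. Pairing this with a concentration argument on the community side---using~\eqref{Eq.Hajek} and the bounded-LLR hypothesis to show that $\min_{i \in C^*} r_i$ is not super-constantly above $\log(n/K)$ on the same event---yields ML failure.

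The main technical obstacle lies in the converse: promoting the Chernoff upper bound that defines $E_{QU}$ to a matching sharp large-deviation lower bound of Bahadur--Rao type, uniformly in the growing parameters $K$ and $M$, with a polynomial prefactor that does not wash out the $\log n$ threshold. The bounded-support hypothesis on $L_G$ and $L_S$ (or the weaker integrability conditions of Remark~\ref{ex-ten2}) is precisely what makes this tight inversion available. The remaining ingredients---conditional independence of $\{r_j\}_{j \notin C^*}$ given $C^*$, concentration of $\min_{i \in C^*} r_i$, and the union bound on the achievability half---are essentially standard.
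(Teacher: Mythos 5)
Your overall structure matches the paper's proof on both directions: Algorithm~1 with ML-based weak recovery plus Lemma~\ref{The.5.new} for sufficiency, and a genie-aided single-swap argument with matching large-deviation lower bounds for the converse (the paper's Lemma~\ref{Lem.necc.exact} and Appendix~\ref{App.7}). You also correctly identify the main technical obstacle---inverting the Chernoff bound to a tight Bahadur--Rao-type lower bound with a prefactor that does not wash out the $\log n$ threshold; the paper accomplishes this by exponential tilting plus Chebyshev control of the tilted variance, and the bounded-LLR hypothesis feeds Lemma~\ref{Lem.2} to keep the tilted variance proportional to the divergences.

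There are, however, two substantive gaps. First, on the achievability side, the community $C^*_k = C^* \cap {S_k}^c$ hidden in the withheld subgraph $\boldsymbol{G}_k$ has \emph{random} size (it is a hypergeometric subsample of $C^*$), so you cannot directly invoke Theorem~\ref{The.1.new}, which assumes $|C^*|=K$ deterministic. The paper instead shows $|C^*_k|$ concentrates around $(1-\delta)K$ via Hoeffding's comparison of sampling with and without replacement, and then invokes Lemma~\ref{Suff.Random}---the random-community-size variant of weak recovery---for each of the $1/\delta$ subgraphs. Your phrase ``the parameters $n'$, $K'$ preserve the ratio and~\eqref{Eq.Hajek} transfers'' elides the randomness of the subgraph community, which is exactly where Lemma~\ref{Suff.Random} is needed.

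Second, and more consequentially, your community-side converse argument is stated with the wrong sign and does not confront the dependency issue. Under~\eqref{Eq.Hajek}, the typical score $r_i$ for $i \in C^*$ concentrates near $(K-1)D(P\|Q) + M D(V\|U) \geq 2\log(n/K)$, i.e.\ well \emph{above} the threshold $\log(n/K)$. For ML to be shown to fail with $\Omega(1)$ probability, you need a large-deviation \emph{lower} bound on the event that $\min_{i\in C^*} r_i$ dips \emph{down} to $\log(n/K)$, not merely that it ``is not super-constantly above''---a concentration argument points in the opposite direction. Moreover, the variables $\{r_i\}_{i\in C^*}$ are \emph{not} independent: any pair $i,i'\in C^*$ share the edge $G_{ii'}$ in both $r_i$ and $r_{i'}$. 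The paper handles this by restricting to a small sub-collection $T\subset C^*$ of size $K_o = K/\log K$, further restricting to the nodes $T'\subset T$ whose intra-$T$ edge contribution $e_1(i,T)$ is small (controlled by Chebyshev and the slack budget $\tilde\theta_n$), and then observing that the residual scores $e_1(i, C^*\setminus T) + \sum_m L_S(i,m)$ are genuinely i.i.d.\ over $i\in T'$. Your proposal would stall at exactly this step without some such expurgation. (By contrast, on the outside the $\{r_j\}_{j\notin C^*}$ really are conditionally independent given $C^*$, so your second-moment route works there, though a direct $1-(1-p)^{n-K}\geq 1-e^{-1}$ estimate---the paper's choice---is simpler.)
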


\begin{proof}
For sufficiency, please see Appendix~\ref{App.6}. For necessity see Appendix~\ref{App.7}.
\end{proof}

\begin{Remark}
\label{Re.2.12}
The assumption that~\eqref{Eq.Hajek} holds is necessary because otherwise weak recovery is not achievable, and by extension, exact recovery.
\end{Remark}

\begin{Remark}
\label{Re.2}
Theorem~\ref{The.3.new} shows if $M$ grows with $n$ slowly enough, e.g., $M$ is fixed and independent of $n$ or $M = o(K)$, side information will not affect the information limits of exact recovery.
\end{Remark}

To illustrate the effect of side information on information limits, consider the following example:
\begin{align}
K = \frac{c n}{\log(n)}, \quad q = \frac{b \log^{2}(n)}{n},  \quad p = \frac{a \log^{2}(n)}{n}
\end{align}
for positive constants $c, a \geq b$. Then, $KD(P||Q) = O(\log(n))$, and hence, weak recovery is achieved without side information, and by extension, with side information. Moreover, {\em exact} recovery without side information is achieved if and only if:
\begin{equation}
 \sup_{t\in [0,1]} tc(a-b) + bc - bc(\frac{a}{b})^{t} > 1 
\label{no_side}
\end{equation}
  
\begin{figure}
\centering
\includegraphics[width=\Figwidth]{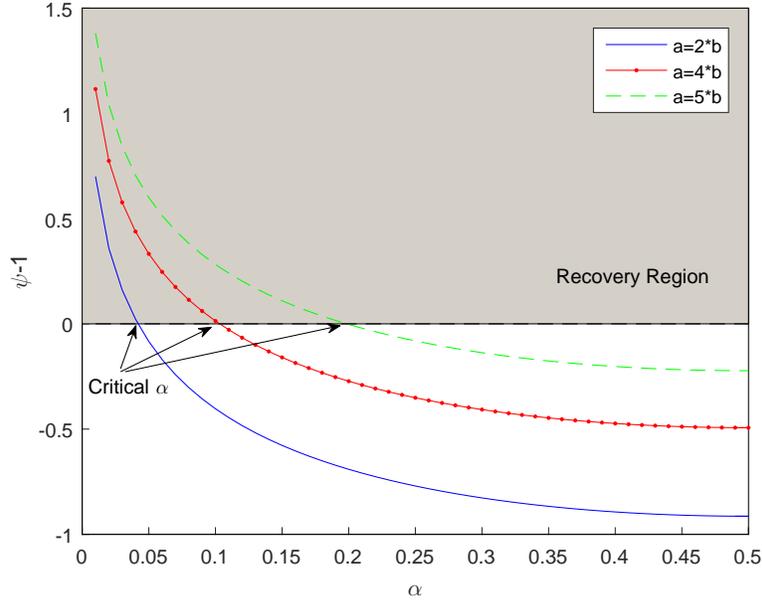}
\caption{Exact recovery threshold, $\psi-1$ for different values of $\alpha$ at $c=b=1$.}
\label{fig.1}
\end{figure}

Assume noisy label side information with error probability $\alpha \in (0,0.5)$. By Theorem~\ref{The.3.new}, exact recovery is achieved if and only if:
\begin{align}
 \sup_{t\in [0,1]} & tc(a-b) +  bc - bc(\frac{a}{b})^{t} - \twocolbreak
\frac{M}{\log(n)}\log( (1-\alpha)^{t}\alpha^{(1-t)} + (1-\alpha)^{(1-t)}\alpha^{t}) > 1 
\label{with_side}
\end{align} 
If $M = o(\log(n))$, then~\eqref{with_side} reduces to~\eqref{no_side}, thus  side information does not improve the information limits of exact recovery. If $M>o(\log(n))$, then $\log( (1-\alpha)^{t}\alpha^{(1-t)} + (1-\alpha)^{(1-t)}\alpha^{t})<0$ since $t\in[0,1]$. It follows that~\eqref{with_side} is less restrictive than~\eqref{no_side}, thus improving the information limit. 

Let $\psi$ denote the left hand side of~\eqref{with_side} with $M = \log(n)$, i.e.,
\begin{align}
\psi & = \sup_{t\in [0,1]} tc(a-b) + bc - bc(\frac{a}{b})^{t} \twocolbreak \includeonetwocol{}{\hspace{0.4in}} - \log( (1-\alpha)^{t}\alpha^{(1-t)} + (1-\alpha)^{(1-t)}\alpha^{t})
\end{align}
The behavior of $\psi$ against $\alpha$ describes the influence of side information on exact recovery and is depicted in Fig.~\ref{fig.1}. 
 
\subsection{Variable-Quality Features}
\label{Varying_noise} 

In this section, the number of features, $M$, is assumed to be constant but the LLR of each feature is allowed to vary with $n$. 

\subsubsection{Weak Recovery}
\label{weak.lim.}\hfill

Recall that the probability distribution side information feature $m$ is $V_{m}$  when the node is inside and outside the community, and $U_{m}$ when the node is outside the community.

\begin{Theorem}[Necessary Conditions for Weak Recovery]
\label{The.1}
For single community detection under bounded-LLR side information, weak recovery is achieved only if:
\begin{equation}
\begin{split}
&(K-1) D(P||Q) + \sum_{m=1}^{M} (D(V_{m}||U_{m})+D(U_{m}||V_{m})) \to \infty \\
&\liminf_{n\to\infty} (K-1)D(P||Q) + 2\sum_{m=1}^{M}D(V_{m}||U_{m}) \geq 2\log(\frac{n}{K}) 
\label{Cond.2.The.1}
\end{split}
\end{equation}
\end{Theorem}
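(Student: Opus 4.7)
The plan is to establish the two conditions by distinct genie-aided converses, paralleling the necessity proof for Theorem~\ref{The.1.new} but adapted to allow the per-feature LLRs to vary with $n$. Throughout, we may restrict attention to the MAP detector since it minimizes $\mathbb{E}[d(\hat{\boldsymbol x}, \boldsymbol x^*)]$; it therefore suffices to exhibit genie reductions that force the MAP expected Hamming distance to be $\Omega(K)$ whenever one of the two conditions fails.

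\textbf{First condition: total symmetric KL unbounded.} Fix any pair $(i,j)$ with $i \in C^*$, $j \notin C^*$ and let a genie reveal every label except $(x_i,x_j)$. The residual problem is a binary test between $H_1:(x_i,x_j)=(1,0)$ and $H_0:(x_i,x_j)=(0,1)$. Under $H_1$, the edges $\{G_{ik}\}_{k \in C^*\setminus\{i\}}$ are Bernoulli$(p)$ and $\{G_{jk}\}_{k \in C^*\setminus\{i\}}$ are Bernoulli$(q)$, with roles swapped under $H_0$; for side information, $y_{i,m}\sim V_m$ and $y_{j,m}\sim U_m$ under $H_1$ and the reverse under $H_0$. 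Independence across edges and features gives total KL divergence
\[
(K-1)\bigl[D(P\|Q)+D(Q\|P)\bigr] + \sum_{m=1}^{M}\bigl[D(V_m\|U_m)+D(U_m\|V_m)\bigr],
\]
which, using $p/q=\Theta(1)$ and $\limsup p<1$ so that $D(P\|Q)\asymp D(Q\|P)$, is of the same order as the left-hand side of the first line of~\eqref{Cond.2.The.1}. If that quantity stays bounded along a subsequence, the Bhattacharyya/Le~Cam lower bound $P_e \geq \tfrac{1}{2}\exp(-D_{\mathrm{KL}})$ yields a uniform $c>0$ such that the binary MAP error probability is at least $c$. Summing over the $K$ choices of $i$ produces $\mathbb{E}[d(\hat{\boldsymbol x}_{\mathrm{MAP}}, \boldsymbol x^*)] \geq cK$, contradicting weak recovery.

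\textbf{Second condition: discriminant $\geq 2\log(n/K)$.} Use a stronger genie that partitions the vertex set into $K$ clusters of size $\lfloor n/K\rfloor$ arranged so that each cluster contains exactly one member of $C^*$, and reveals the cluster memberships but not which vertex in each cluster is the community member. Across clusters the inference problems decouple, and within each cluster the MAP becomes an $(\lfloor n/K\rfloor+1)$-ary hypothesis test. A Chernoff/large-deviations computation for the log-posterior statistic $r_i$ at tilt $t^*=1/2$ gives, for $i \in C^*$ and $j$ in the same cluster with $j \notin C^*$,
\[
\Pr(r_j \geq r_i) \geq \exp\Bigl(-\tfrac{1}{2}\bigl[(K-1)D(P\|Q)+2\sum_{m=1}^{M} D(V_m\|U_m)\bigr](1+o(1))\Bigr).
\]
If the second line of~\eqref{Cond.2.The.1} fails, this lower bound exceeds $(K/n)^{1-\epsilon}$ for some $\epsilon>0$ along a subsequence, so within a cluster of size $\Theta(n/K)$ some non-community vertex beats the true community vertex with probability bounded away from zero, producing $\Omega(K)$ errors in expectation across the $K$ clusters and again contradicting weak recovery.

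The hard part is the Chernoff step in the second argument when the side-information LLRs scale with $n$: one must verify that the optimal tilt $t^*$ converges to $1/2$ uniformly over the sequence $(V_m,U_m)$ and that the leading-order rate function equals $\tfrac{1}{2}[(K-1)D(P\|Q)+2\sum_m D(V_m\|U_m)]$ to within $o(1)$. The bounded-LLR hypothesis ensures that the relevant moment generating functions exist and Cram\'er's theorem applies, but the variable-LLR regime demands careful control of the second-order terms in the Legendre transform. A secondary technicality is that the genie's partition must be independent of the random community; conditioning on a high-probability event that $C^*$ is well-distributed across an arbitrary deterministic partition handles this.
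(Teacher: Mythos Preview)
Your first argument (genie revealing all labels except two, reducing to a binary test) is essentially the paper's own approach and is correct. The paper proceeds via the chain ``small Type-I+II error $\Rightarrow$ TV $\to 1$ $\Rightarrow$ KL $\to\infty$,'' then uses $p/q=\Theta(1)$ to collapse $D(P\|Q)+D(Q\|P)$ to $D(P\|Q)$; your Le~Cam formulation is equivalent.

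Your second argument, however, has a genuine gap. The paper does \emph{not} use a partition-genie hypothesis test for the second condition; it uses a mutual-information (rate--distortion) argument: the data-processing/Fano step gives $I(\boldsymbol G,\boldsymbol Y;\boldsymbol x^*)\geq K\log(n/K)(1+o(1))$, while comparing $P(\boldsymbol G,\boldsymbol Y\mid\boldsymbol x^*)$ to the product reference $Q^{\otimes\binom{n}{2}}\prod_m U_m^{\otimes n}$ gives $I(\boldsymbol G,\boldsymbol Y;\boldsymbol x^*)\leq\binom{K}{2}D(P\|Q)+K\sum_m D(V_m\|U_m)$. Combining the two immediately yields the second line of~\eqref{Cond.2.The.1}. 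The crucial point is that the $\binom{K}{2}$ (rather than $K(K-1)$) in the upper bound is what produces the factor $\tfrac12$ on $(K-1)D(P\|Q)$ relative to the side-information term; this asymmetry reflects that each intra-community edge is shared between two vertices.

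Your genie, by contrast, reveals the $K-1$ community members outside a given cluster and then runs an $(n/K)$-ary test inside. In that test each candidate $i$ sees $K-1$ \emph{independent} edges to the known community, so the natural large-deviations rate for $\Pr(r_j\geq\theta)$ at the threshold $\theta=(K-1)D(P\|Q)+\sum_m D(V_m\|U_m)$ is exactly $\theta$ itself (the optimal tilt is $t^*=1$, since $\psi_{QU}(1)=0$ and $\psi_{QU}'(1)=\theta$), not $\tfrac12(K-1)D(P\|Q)+\sum_m D(V_m\|U_m)$. Your asserted exponent at $t^*=1/2$ is not what the Chernoff computation gives: at $t=1/2$ one obtains Bhattacharyya coefficients, $-\psi(1/2)=2(K-1)(-\log BC(P,Q))+2\sum_m(-\log BC(V_m,U_m))$, which does not reduce to your claimed expression, and in any case the optimal tilt is not $1/2$ absent a symmetry you do not have. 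The per-vertex genie simply cannot see the edge-sharing that halves the graph contribution, so this route will not produce the correct constant; you need the mutual-information argument (or an equivalent global counting) here.
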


\ \ 

\begin{proof}
The proof follows similar to Theorem~\ref{The.1.new}.
\end{proof}

\subsubsection{Exact Recovery}
\label{exact.lim.}\hfill

We begin by concentrating on the following regime, and will subsequently show its relation to the set of problems that are both feasible and interesting.
\begin{equation}
\label{reg.1}
K = \rho \frac{n}{\log(n)}, \qquad p = a \frac{\log(n)^{2}}{n} \qquad  q = b \frac{\log(n)^{2}}{n}
\end{equation}
with constants $\rho \in (0,1)$ and $a \geq b > 0$. 

The alphabet for each feature $m$ is denoted with $\{u_{1}^{m}, u_{2}^{m}, \cdots, u_{L_{m}}^{m}\}$, where $L_{m}$ is the cardinality of feature $m$ which, in this section, is assumed to be bounded and constant across $n$. The likelihoods of the features are defined as follows:
\begin{align}
\alpha_{+,\msubK}^{m} \triangleq \mathbb{P}(y_{i,m} = u_{\msubK}^{m} | x_{i} = 1) \\
 \alpha_{-,\msubK}^{m} \triangleq \mathbb{P}(y_{i,m} = u_{\msubK}^{m} | x_{i} = 0) 
\end{align}
Recall that in our side information model, all features are independent conditioned on the labels. To ensure that the quality of the side information is increasing with $n$, both $\alpha_{+,\msubK}^{m}$ and $\alpha_{-,\msubK}^{m}$ are assumed to be either constant or monotonic in $n$.

To better understand the behavior of information limits, we categorize side information outcomes based on the trends of LLR and likelihoods. For simplicity we speak of trends for one feature; extension to multiple features is straight forward. An outcome is called \textit{informative} if $h_{\ell} = O(\log(n))$ and \textit{non-informative} if $h_{\ell} = o(\log(n))$. An outcome is called \textit{rare} if $\log(\alpha_{\pm,\ell}) = O(\log(n))$ and \textit{not rare} if $\log(\alpha_{\pm,\ell}) = o(\log(n))$. Among the four different combinations, the \textit{worst} case is when the outcome is both \textit{non-informative} and \textit{not rare} for nodes inside and outside the community. We will show that if such an outcome exists, then side information will not improve the information limit. The \textit{best} case is when the outcome is \textit{informative} and \textit{rare} for the nodes inside the community, or for the nodes outside the community, but not both. Two cases are in between: (1) an outcome that is \textit{non-informative} and \textit{rare} for nodes inside and outside the community and (2) an outcome that is \textit{informative} and \textit{not rare} for nodes inside and outside the community. It will be shown that the last three cases can affect the information limit under certain conditions. 

For convenience we define:
\begin{align}
  T  &\triangleq \log\big(\frac{a}{b}\big) 
  \end{align}
 We introduce the following functions whose value, as shown in the sequel, characterizes the exact recovery threshold:
  \begin{align}
	\eta_{1}(\rho, a,b) &\triangleq \rho\Big (b + \frac{a-b}{T} \log\big(\frac{a-b}{ebT}\big)\Big ) \label{eta1}\\
  \eta_{2}(\rho, a,b,\beta) &\triangleq \rho b + \frac{\rho(a-b)-\beta}{T} \log\big(\frac{\rho(a-b) -\beta}{\rho ebT}\big)  + \beta \label{eta2}\\
	\eta_{3}(\rho, a,b,\beta) &\triangleq \rho b + \frac{\rho(a-b)+\beta}{T} \log\big(\frac{\rho(a-b) +\beta}{\rho ebT}\big) \label{eta3} 
	\end{align}
For example in the regime~\eqref{reg.1}, one can conclude using~\eqref{Cond.1.The.3.new} that exact recovery without side information is achieved if and only if $\eta_1>1$.


The LLR of each feature is denoted:
\begin{equation}
    \hfeat  \triangleq \log\Big(\frac{\alpha_{+,\msubK}^{m}}{\alpha_{-,\msubK}^{m}}\Big )  \label{h}
\end{equation}
We also define the following functions of the likelihood and LLR of side information, whose evolution with $n$ is critical to the phase transition of exact recovery~\cite{Saad:JSTSP18}.
\begin{align}
f_1(n) &\triangleq \sum_{m=1}^M \hfeat, \\
f_2(n) &\triangleq\sum_{m=1}^M \log(\alpha_{+,\msubK}^{m}), \\
f_3(n) &\triangleq\sum_{m=1}^M \log(\alpha_{-,\msubK}^{m})
\end{align}
In the following, the side information outcomes $[u_{\ell_1}^1, \ldots,u_{\ell_M}^M]$ are represented by their index $[\ell_1,\ldots,\ell_M]$ without loss of generality. Throughout, dependence on $n$ of outcomes and their likelihood is implicit.

\begin{Theorem}
\label{The.3.5}
In the regime characterized by~\eqref{reg.1}, assume $M$ is constant and $\alpha_{+,\msubK}^{m}$ and $\alpha_{-,\msubK}^{m}$ are either constant or monotonic in $n$. Then, necessary and sufficient conditions for exact recovery depend on side information statistics  in the following manner:
\begin{enumerate}
\item If there exists any sequence (over $n$) of  side information outcomes $[\ell_1,\ldots,\ell_M]$ such that $f_1(n)$, $f_2(n)$, $f_3(n)$ are all $o(\log(n))$,
then $\eta_{1}(\rho,a,b) > 1$ must hold.

\item 
\label{Th3.item2} If there exists any sequence (over $n$) of side information outcomes $[\ell_1,\ldots,\ell_M]$ such that $f_1(n)= o(\log(n))$ and $f_2(n), f_3(n)$ evolve according to $-\beta\log(n)+ o(\log(n))$ with $\beta > 0$, then $\eta_{1}(\rho,a,b) + \beta > 1$  must hold.

\item 
\label{Th3.item3} If there exists any sequence (over $n$) of side information outcomes $[\ell_1,\ldots,\ell_M]$ such that $f_1(n) = \beta_{1}\log(n)+ o(\log(n))$ with $ 0 < \beta_1 < \rho(a-b-bT)$ and furthermore $f_2(n)=o(\log(n))$, then $\eta_{2}(\rho,a,b,\beta_1) > 1$ must hold.

\item If there exists any sequence (over $n$) of side information outcomes $[\ell_1,\ldots,\ell_M]$ such that $f_1(n) = \beta_{2}\log(n)+ o(\log(n))$ with $ 0 < \beta_2 < \rho(a-b-bT)$ and furthermore $f_3(n)=o(\log(n))$, then $\eta_{3}(\rho,a,b,\beta_2) > 1$ must hold.

\item If there exists any sequence (over $n$) of side information outcomes $[\ell_1,\ldots,\ell_M]$ such that $f_1(n) = \beta_{3}\log(n)+ o(\log(n))$ with $ 0 < \beta_3 < \rho(a-b-bT)$ and furthermore $f_2(n)=-\beta_{3}'\log(n) + o(\log(n))$, then $\eta_{2}(\rho,a,b,\beta_3) + \beta_{3}' > 1$ must hold.

\item If there exists any sequence (over $n$) of side information outcomes $[\ell_1,\ldots,\ell_M]$ such that $f_1(n) = \beta_{4}\log(n)+ o(\log(n))$ with $ 0 < \beta_4 < \rho(a-b-bT)$ and furthermore $f_3(n)=-\beta_{4}'\log(n) + o(\log(n))$, then $\eta_{3}(\rho,a,b,\beta_4) + \beta_{4}' > 1$ must hold.
\end{enumerate}
\end{Theorem}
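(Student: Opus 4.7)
The plan is to establish each necessary condition through a genie-aided pair-exchange lower bound on the maximum-likelihood error probability, extending the necessity argument of Theorem~\ref{The.3.new} to accommodate the richer taxonomy of outcome types that arise when the LLRs scale with $n$. Specifically, I would first reveal all true labels except those of a single pair $(i,j)$ with $i\in C^*$ and $j\notin C^*$, so that exact recovery reduces to a binary hypothesis test on this pair. Writing $\Delta_{ij}$ for the swap log-likelihood ratio, $\Delta_{ij}$ decomposes as a graph contribution $\Delta^G_{ij}=\sum_{k\in C^*\setminus\{i,j\}}(L_G(j,k)-L_G(i,k))$ (a sum of roughly $K-1$ independent edge LLRs under $Q$) plus a deterministic side-information imbalance $\Delta^S_{ij}$ that depends only on the two realized outcomes $y_i, y_j$.

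For each of the six cases I would condition on $(y_i,y_j)$ matching the prescribed outcome sequence and then apply a Chernoff bound to $\mathbb{P}(\Delta^G_{ij}\geq -\Delta^S_{ij})$. Combining with the regime asymptotics $(K-1)\log \mathbb{E}_Q[e^{tL_G}] = \rho b\log(n)\bigl((a/b)^t - 1\bigr)+o(\log n)$ produces a tail exponent $\eta(\beta)\log n$ that specializes to $\eta_1$, $\eta_2$, or $\eta_3$ depending on the sign of the side-information shift $\beta\log n$ absorbed into the threshold. The range restriction $\beta_j<\rho(a-b-bT)$ appearing in cases~\ref{Th3.item3}--6 is precisely the condition that places the Chernoff optimizer $t^* = T^{-1}\log\bigl((\rho(a-b)\pm\beta)/(\rho bT)\bigr)$ in the open interval $(0,1)$, so the supremum is attained in the interior and the closed-form expressions for $\eta_2,\eta_3$ are valid.

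Second, I would count the pairs whose outcomes match the targeted sequence. Using the assumed outcome-likelihood scaling $f_2(n), f_3(n) = -\beta'\log n+o(\log n)$ yields an expected count of order $K(n-K)\cdot n^{-\beta'+o(1)} = n^{2-\beta'+o(1)}$, and multiplying by the per-pair swap-error probability $n^{-\eta(\beta)+o(1)}$ gives an expected number of misclassified pairs of order $n^{2-\eta(\beta)-\beta'+o(1)}$ (with one copy of the outcome rarity absorbed into the additive $\beta$ that appears inside $\eta_2$). If the relevant inequality in the theorem statement fails, this expectation diverges. A standard second-moment argument restricted to a vertex-disjoint sub-family of pairs---on which swap-errors are conditionally independent given the revealed labels and independent side-information---then forces at least one error to occur with probability bounded away from zero, contradicting exact recovery.

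The principal technical obstacle is establishing the Chernoff asymptotics with sufficient sharpness to distinguish the three candidate rate functions $\eta_1,\eta_2,\eta_3$: the shift $\beta\log n$ both moves the tail threshold and re-weights the tilted measure, and one must control both $\log n$ and $o(\log n)$ terms in $(K-1)\log \mathbb{E}_Q[e^{tL_G}]$ uniformly in the tilt parameter $t$. A secondary difficulty is decoupling pair-errors that share a common vertex; restricting to a sub-family of $\Theta(n)$ vertex-disjoint candidate pairs suffices to achieve the requisite independence while losing only a constant factor in the expected count, which preserves divergence whenever the inequality in the theorem fails.
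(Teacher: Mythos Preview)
Your proposal contains a genuine gap: the pair-exchange setup and the exponent you compute are inconsistent, and the mechanism you describe cannot produce the side-information shift that distinguishes $\eta_2,\eta_3$ from $\eta_1$.

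First, $\Delta^G_{ij}=\sum_{k}(L_G(j,k)-L_G(i,k))$ is \emph{not} ``a sum of $K-1$ edge LLRs under $Q$'': it is the difference of a $Q$-sum (the $j$-edges) and a $P$-sum (the $i$-edges). The optimizer $t^*=T^{-1}\log\bigl((\rho(a-b)\pm\beta)/(\rho bT)\bigr)$ you quote is the tilting parameter for a single $Q$-sum against a deterministic threshold; it does not govern $\mathbb P(\Delta^G_{ij}\ge 0)$. The correct pair-swap rate in this regime is the two-sample Chernoff information $\rho(\sqrt a-\sqrt b)^2\log n$, which does not equal $\eta_1\log n$ and does not carry a $\beta$-dependence.

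Second, if both $y_i$ and $y_j$ match the same prescribed outcome then $\Delta^S_{ij}=h(y_j)-h(y_i)=0$: the side-information contribution cancels in the swap statistic, so no $\beta\log n$ shift can enter the graph tail. Your counting also uses a single rarity factor $n^{-\beta'}$, whereas matching both outcomes costs $\alpha_+\alpha_-=e^{f_2+f_3}$. And restricting to $\Theta(K)$ vertex-disjoint pairs (which still share the edge $G_{i_1i_2}$, so are not quite independent) reduces the available count from $K(n-K)$ to $K$, further loosening the resulting condition.

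The paper avoids all of this by \emph{decoupling} the two sides through a common scalar threshold rather than analyzing the swap directly. Lemma~\ref{Lem.necc.exact} shows that exact recovery forces some $\theta_n$ to satisfy both $\mathbb P(\sum W_\ell+Z\le(K-1)\theta_n-\tilde\theta_n)\le 2/K_o$ and $\mathbb P(\sum\tilde W_\ell+\tilde Z\ge(K-1)\theta_n)\le 1/(n-K)$. The necessity argument then fixes $\theta_n=\gamma$, conditions on the critical outcome \emph{separately} for one inside and one outside node, and uses the tilting lower bound of Lemma~\ref{Le.10} on each one-sided tail. Because the threshold is common but the factors $\alpha_+,\alpha_-$ enter the two tails independently, the LLR shift $f_1$ moves the graph threshold (producing $\eta_2-\beta$ or $\eta_3$) while the rarity $f_2$ or $f_3$ multiplies only once, and the two combine to exactly $\eta_2$ or $\eta_3$. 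The correlation among inside-node scores is handled not by second moment but by restricting to a subset $T$ of size $K_o=o(K)$ and controlling the internal edge contribution $e_1(i,T)$ via Chebyshev, after which the remaining scores $e_1(i,C^*\setminus T)$ are genuinely independent.
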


\begin{proof}
For necessity, see Appendix~\ref{App.8}. For sufficiency, see Appendix~\ref{App.10}.
\end{proof}

\begin{Remark}
The six items in Theorem~\ref{The.3.5} are concurrent. For example, if some side information outcome sequences fall under Item~\ref{Th3.item2} and some fall under Item~\ref{Th3.item3}, then the necessary and sufficient condition for exact recovery is $\min (\eta_{1}(\rho,a,b,\beta), \eta_{2}(\rho,a,b,\beta_1)) > 1$.
\end{Remark}

\begin{Remark}
Theorem~\ref{The.3.5} does not address  $f_1(n) = \omega(\log(n))$ because it leads to a trivial problem. For example, for noisy label side information, if the noise parameter $\alpha = e^{-n}$, then side information alone is sufficient for exact recovery. Also, when $f_1(n) = \beta \log(n)$ with $| \beta | \geq \rho(a-b-bT)$, a necessary condition is easily obtained but a matching sufficient condition for this case remains unavailable.
\end{Remark}

In the following, we specialize the results of Theorem~\ref{The.3.5} to noisy-labels and partially-revealed-label side information.
\begin{Corollary}
For side information consisting of noisy labels with error probability $\alpha \in (0,0.5)$, Theorem~\ref{The.3.5} combined with Lemma~\ref{rem.f1.f2} state that exact recovery is achieved if and only if:
$$
\begin{cases}
\eta_{1}(\rho, a,b) > 1,  &\text{ when } \log(\frac{1-\alpha}{\alpha}) = o(\log(n)) \label{suff.BSC.1}  \\ 
\eta_{2}(\rho, a,b,\beta) > 1, &\text{ when } \log(\frac{1-\alpha}{\alpha}) = (\beta+o(1))\log(n), \twocolbreak \quad \quad 0<\beta<\rho(a-b-bT) \label{suff.BSC.2}
\end{cases}
$$
\end{Corollary}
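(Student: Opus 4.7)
The plan is to apply Theorem~\ref{The.3.5} directly by evaluating $f_1,f_2,f_3$ in the noisy-label model and reading off the conditions that its six items impose. Since $M=1$, a single binary feature is observed: $y_{i,1}=x_i$ with probability $1-\alpha$ and $y_{i,1}=1-x_i$ with probability $\alpha$. The two possible outcomes therefore yield only two $(f_1,f_2,f_3)$ triples, namely $(\log\frac{1-\alpha}{\alpha},\log(1-\alpha),\log\alpha)$ for the ``correct-label'' observation and $(-\log\frac{1-\alpha}{\alpha},\log\alpha,\log(1-\alpha))$ for the ``flipped-label'' one. These formulas are exactly what Lemma~\ref{rem.f1.f2} provides in normalized form, so invoking that lemma turns the problem into a purely case-analytic matching against Items~1--6 of Theorem~\ref{The.3.5}.

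I would then split on the asymptotic order of $\log\frac{1-\alpha}{\alpha}$. In the first regime, $\log\frac{1-\alpha}{\alpha}=o(\log n)$, both $\log(1-\alpha)$ and $\log\alpha$ are also $o(\log n)$: if either were $\Theta(\log n)$, so would their difference. Both observable outcomes therefore satisfy the hypothesis of Item~1 of Theorem~\ref{The.3.5} and no other item applies, leaving $\eta_1(\rho,a,b)>1$ as the necessary and sufficient condition. In the second regime, $\log\frac{1-\alpha}{\alpha}=(\beta+o(1))\log n$ with $\beta\in(0,\rho(a-b-bT))$, we have $\alpha\to 0$, so $\log(1-\alpha)=o(\log n)$ while $\log\alpha=-\beta\log n+o(\log n)$. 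The correct-label outcome then has $f_1=\beta\log n+o(\log n)$ and $f_2=o(\log n)$, matching the hypothesis of Item~3 with $\beta_1=\beta$ and imposing $\eta_2(\rho,a,b,\beta)>1$.

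The remaining task is to check that no strictly tighter constraint comes from the other items. The correct-label outcome also triggers Item~6 (since $f_3=-\beta\log n+o(\log n)$), producing $\eta_3(\rho,a,b,\beta)+\beta>1$, and the flipped-label outcome, after interchanging the roles of in-community and out-community nodes, contributes the same $\eta_3$-type condition. I would then verify the analytic inequality $\eta_2(\rho,a,b,\beta)\le \eta_3(\rho,a,b,\beta)+\beta$ for all $\beta\in(0,\rho(a-b-bT))$: both sides agree at $\beta=0$, and differentiating shows that the derivative of $\eta_3+\beta-\eta_2$ is positive on the relevant interval (using $x-1>\log x$ for $x=a/b>1$). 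Hence $\eta_2>1$ is the binding condition, and sufficiency in both regimes is inherited from the sufficiency half of Theorem~\ref{The.3.5}. The monotonicity comparison just described is the one place I expect to need a little care; the rest is essentially table lookup against Items~1 and 3 of Theorem~\ref{The.3.5}.
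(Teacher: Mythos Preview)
Your overall plan---evaluate $(f_1,f_2,f_3)$ for the two outcomes and match against Theorem~\ref{The.3.5}---is the paper's plan too, and your treatment of the regime $\log\frac{1-\alpha}{\alpha}=o(\log n)$ is fine. The gap is in the second regime, and it stems from a misreading of Lemma~\ref{rem.f1.f2}.

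Lemma~\ref{rem.f1.f2} does \emph{not} compute $f_1,f_2,f_3$; it is the analytic inequality $\eta_3(\rho,a,b,\beta)\ge \eta_2(\rho,a,b,\beta)$ on $0<\beta<\rho(a-b-bT)$. That inequality is precisely the comparison you need and do not establish. In the regime $\log\frac{1-\alpha}{\alpha}=(\beta+o(1))\log n$, the two outcomes yield
\[
\text{correct label: } (f_1,f_2,f_3)=(\beta\log n,\,o(\log n),\,-\beta\log n),\qquad
\text{flipped label: } (-\beta\log n,\,-\beta\log n,\,o(\log n)).
\]
The correct-label outcome falls under Item~3 and imposes $\eta_2(\rho,a,b,\beta)>1$; the flipped-label outcome falls under Item~4 (negative $f_1$, $f_3=o(\log n)$) and imposes $\eta_3(\rho,a,b,\beta)>1$. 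The joint necessary-and-sufficient condition is therefore $\min\{\eta_2,\eta_3\}>1$, and to reduce this to $\eta_2>1$ one needs $\eta_3\ge\eta_2$, which is exactly Lemma~\ref{rem.f1.f2}.

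Your argument instead (i) attributes the flipped-label outcome to an ``$\eta_3$-type'' condition via Item~6 rather than Item~4, and (ii) proves only the weaker inequality $\eta_2\le \eta_3+\beta$. That weaker inequality shows $\eta_2>1\Rightarrow \eta_3+\beta>1$, but it does \emph{not} give $\eta_3>1$, so it does not eliminate the Item~4 constraint. Replacing your derivative computation by a direct appeal to Lemma~\ref{rem.f1.f2} (i.e., $\eta_3\ge\eta_2$) closes the argument and matches the paper exactly.
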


Figure~\ref{fig.2} shows the error exponent for the noisy label side information as a function of $\beta$.
\begin{figure}
\centering
\includegraphics[width=\Figwidth]{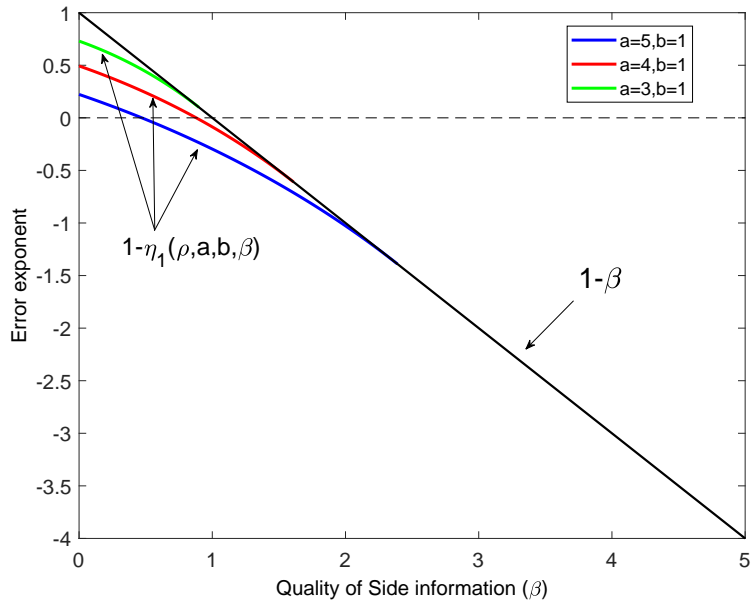}
\caption{Error exponent for noisy side information.}
\label{fig.2}
\end{figure}

\begin{Corollary}
For side information consisting of a fraction $1-\epsilon$ of the labels revealed, Theorem~\ref{The.3.5} states that exact recovery is achieved if and only if:
$$
\begin{cases}
\eta_{1}(\rho, a,b) > 1, &\text{ when } \log(\epsilon) = o(\log(n)) \label{suff.BEC.1}  \\ 
\eta_{1}(\rho, a,b) + \beta > 1, &\text{ when } \log(\epsilon) = (-\beta+o(1))\log(n), \twocolbreak \qquad \beta>0  \label{suff.BEC.2}
\end{cases}
$$
\end{Corollary}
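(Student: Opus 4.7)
The plan is to recognize the partially-revealed-labels model as a special instance of the framework of Theorem~\ref{The.3.5} with $M=1$ and a three-letter alphabet, and then read off the relevant item of that theorem for each scaling of $\epsilon$. Setting $u_1=1$, $u_2=0$, $u_3=e$ (erasure), the likelihoods are $\alpha_{+,1}^1 = 1-\epsilon$, $\alpha_{-,1}^1 = 0$, $\alpha_{+,2}^1 = 0$, $\alpha_{-,2}^1 = 1-\epsilon$, and $\alpha_{+,3}^1 = \alpha_{-,3}^1 = \epsilon$. Hence the per-outcome LLRs are $h_1^1 = +\infty$, $h_2^1 = -\infty$, and $h_3^1 = 0$.

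Next I would argue that the two revealed outcomes play no role in the exact-recovery threshold. A node with $y_i = 1$ has posterior mass zero for $x_i = 0$ and therefore can never be misclassified; symmetrically for $y_i = 0$. In terms of Theorem~\ref{The.3.5}, these outcomes correspond to the $f_1 = \omega(\log n)$ regime explicitly excluded there as the ``trivial'' case in which side information alone determines the label. Thus the error event is carried entirely by nodes whose label is erased, and the only outcome sequence in $[\ell_1,\ldots,\ell_M]$ that needs to be fed into Theorem~\ref{The.3.5} is $\ell_1 = 3$.

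For the erasure outcome, the three functions of interest are $f_1(n) = h_3^1 = 0$, which is trivially $o(\log n)$, and $f_2(n) = f_3(n) = \log(\epsilon)$. The corollary's two cases now match Items~1 and~\ref{Th3.item2} of Theorem~\ref{The.3.5}. When $\log(\epsilon) = o(\log n)$, we have $f_1,f_2,f_3$ all $o(\log n)$, and Item~1 gives $\eta_1(\rho,a,b) > 1$ as the necessary-and-sufficient condition. When $\log(\epsilon) = (-\beta + o(1))\log n$ with $\beta > 0$, we have $f_1 = o(\log n)$ and $f_2,f_3 = -\beta\log n + o(\log n)$, so Item~\ref{Th3.item2} yields $\eta_1(\rho,a,b) + \beta > 1$. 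Both necessity and sufficiency are then inherited directly from Theorem~\ref{The.3.5}.

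The main subtlety, and the only place the argument needs care, is justifying that the revealed outcomes can be safely ignored when invoking Theorem~\ref{The.3.5}: one needs to verify that the genie that would reveal all labels corresponds to an improvement of the threshold rather than a degradation, and that the dominant error exponent is indeed governed by the erased sub-population. This follows because, conditioned on the revealed labels, the remaining problem is again a single-community detection problem on the sub-graph of erased nodes, of expected size $\epsilon n$ with $\epsilon K$ community members, where the side information is non-informative; the error exponents $\eta_1(\rho,a,b)$ and $\eta_1(\rho,a,b)+\beta$ obtained from Theorem~\ref{The.3.5} already account for this correctly through $f_2, f_3$. Once this reduction is made explicit, the corollary is immediate.
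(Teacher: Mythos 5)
Your reduction of the partially-revealed side information to a three-letter alphabet $\{1,0,\text{erasure}\}$ with likelihoods $\alpha_{+,1}=1-\epsilon$, $\alpha_{-,1}=0$, $\alpha_{+,2}=0$, $\alpha_{-,2}=1-\epsilon$, $\alpha_{+,3}=\alpha_{-,3}=\epsilon$ is correct, and the subsequent bookkeeping ($f_1=0$, $f_2=f_3=\log\epsilon$, so Item~1 of Theorem~\ref{The.3.5} applies when $\log\epsilon=o(\log n)$ and Item~2 when $\log\epsilon=(-\beta+o(1))\log n$) matches the paper's (unspoken) derivation of the corollary. Since the paper states the corollary without a written proof, treating it as an immediate read-off from Theorem~\ref{The.3.5}, your argument is effectively the intended one.

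The one place you could be tighter is the justification for ignoring the revealed outcomes. Saying they ``correspond to the $f_1 = \omega(\log n)$ regime'' borrows the right intuition but is not quite the operative mechanism: for those outcomes the LLR is literally $\pm\infty$ \emph{and} one of $\alpha_{+,\ell},\alpha_{-,\ell}$ is identically zero, which is a stronger and cleaner fact. In the paper's machinery (the Chernoff-bound sums over outcomes in Lemma~\ref{Lem.suff.M_ary} and the analogous decomposition in the necessity proof), the term for $\ell=1$ vanishes from the Type-II error because $\alpha_{-,1}=0$, and the term for $\ell=2$ vanishes because the conditional threshold $K(1-\delta)\gamma - h_2 = +\infty$ makes the tail probability zero; symmetrically for the Type-I error. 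So those outcomes contribute exactly zero to both error exponents, which is stronger than merely ``falling into the trivial regime.'' Your closing paragraph's attempt to rephrase this as a reduced sub-graph problem with $\epsilon n$ nodes is not needed and is a bit misleading, since the edges to revealed nodes remain informative; the clean route is simply to note that the sums in the paper's bounds degenerate to the single erasure term, which is already exactly what Items~1 and~2 of Theorem~\ref{The.3.5} analyze.
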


Figure~\ref{fig.3} shows the error exponent for partially revealed labels, as a function of $\beta$.
\begin{figure}
\centering
\includegraphics[width=\Figwidth]{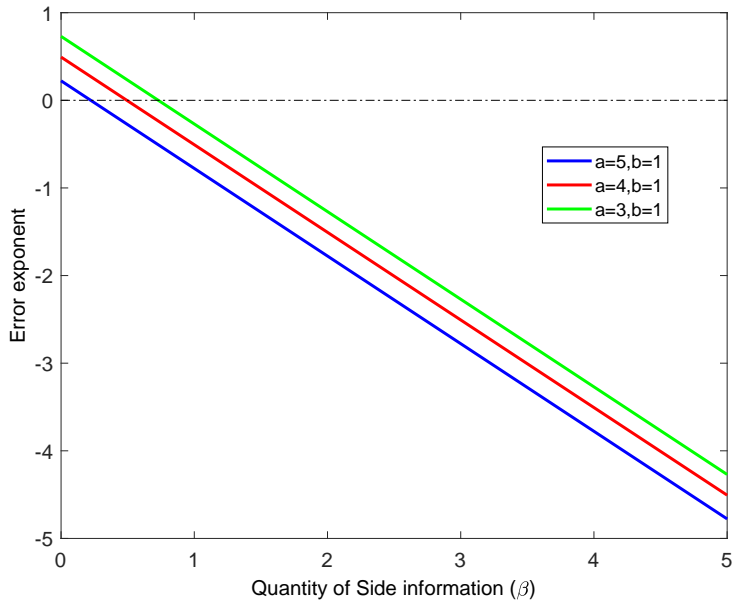}
\caption{Error exponent for partially revealed side information.}
\label{fig.3}
\end{figure}

We now comment on the coverage of the regime~\eqref{reg.1}. If the average degree of a node is $o(\log n)$, then the graph will have isolated nodes and exact recovery is impossible. If the average degree of the node is $\omega (\log n)$, then the problem is trivial. Therefore the regime of interest is when the average degree is $\Omega(\log n)$. This restricts $Kp$ and $Kq$ in a manner that is reflected in~\eqref{reg.1}. Beyond that, in the system model of this paper $K=o(n)$, so $\frac{\log(\frac{n}{K})}{\log(n)}$ is either $o(1)$ or approaching a constant $C \in (0,1]$. The regime~\eqref{reg.1} focuses on the former, but the proofs are easily modified to cover the latter. For the convenience of the reader, we highlight the places in the proof where a modification is necessary to cover the latter case.


\section{Belief Propagation}
\label{BP}
Belief propagation for recovering a single community was studied {\em without} side information in~\cite{recovering_one,montnari} in terms of a signal-to-noise ratio parameter $\lambda = \frac{K^{2}(p-q)^{2}}{(n-k)q}$, showing that {\em weak recovery} is achieved if and only if $\lambda > \frac{1}{e}$. Moreover, belief propagation followed by a local voting procedure was shown to achieve {\em exact recovery} if  $\lambda > \frac{1}{e}$, as long as information limits allow exact recovery. 

In this section $M=1$, i.e. we consider scalar side information random variables that are discrete and take value from an alphabet size $L$. Extension to a vector side information is straight forward as long as dimensionality is constant across $n$; the extension is outlined in Corollary~\ref{Extension}.

Denote the expectation of the likelihood ratio of the side information conditioned on $x = 1$ by:
\begin{equation}
\ChiS \triangleq  \sum_{\ell=1}^{L} \frac{\alpha_{+,\ell}^{2}}{\alpha_{-,\ell}}
\end{equation}
By definition, $\ChiS= \tilde{\chi}^2 +1$, where $\tilde{\chi}^2$ is the chi-squared divergence between the conditional distributions of side information. Thus, $\ChiS \geq 1$.

\subsection{Bounded LLR}
\label{BP.bounded}

We begin by demonstrating the performance of belief propagation algorithm on a random tree with side information. Then, we show that the same performance is possible on a random {\em graph} drawn from $\mathcal{G}(n,K,p,q)$, using a coupling lemma~\cite{recovering_one} expressing local approximation of random graphs by trees.

\subsubsection{Belief Propagation on a Random Tree with Side Information} 
\label{BP.tree}\hfill

We model random trees with side information in a manner roughly parallel to random graphs. Let $T$ be an infinite tree with nodes $i$, each of them possessing a label $\tau_{i} \in \{0,1\}$.  The root is node $i=0$. The subtree of depth $t$ rooted at node $i$ is denoted $T_{i}^{t}$. For brevity, the subtree rooted at $i=0$ with depth $t$ is denoted $\CroppedTree$. Unlike the random graph counterpart, the tree and its node labels are generated together as follows: $\tau_{0}$ is a Bernoulli-$\frac{K}{n}$ random variable. For any $i\in T$, the number of its children with label $1$ is a random variable $H_i$ that is Poisson with parameter $Kp$ if $\tau_i=1$, and Poisson with parameter $Kq$ if $\tau_i=0$. The number of children of node $i$ with label $0$ is a random variable $F_i$ which is Poisson with parameter $(n-K)q$, regardless of the label of node $i$. The side information $\tilde{\tau}_i$ takes value in a finite alphabet $\{u_{1},\cdots,u_{L}\}$. The set of all labels in $T$ is denoted with $\TreeLabels$, all side information with $\TreeSideInfo$, and the labels and side information of $\CroppedTree$ with $\TreeLabels^t$ and $\TreeSideInfo^t$ respectively.  The likelihood of side information continues to be denoted by $\alpha_{+,\ell},\alpha_{-,\ell}$, as earlier.

The problem of interest is to infer the label $\RootLabel$ given observations $\CroppedTree$ and $\TreeSideInfo^t$. The error probability of an estimator $\RootEstimator(\CroppedTree,\TreeSideInfo^t)$ can be written as:
\begin{align}
\label{error.general}
p_{e}^{t} &\triangleq \frac{K}{n} \mathbb{P}(\RootEstimator = 0 | \RootLabel = 1) + \frac{n-K}{n} \mathbb{P}(\RootEstimator = 1 | \RootLabel = 0)
\end{align}
The maximum a posteriori (MAP) detector minimizes $p_{e}^{t}$ and can be written in terms of the log-likelihood ratio as $\hat{\tau}_{MAP} = {\mathbbm 1}_{\{\LLRCroppedTree \geq \nu\}}$, where $\nu = \log(\frac{n-K}{K})$ and:
\begin{align}
\label{likeli}
\LLRCroppedTree = \log\bigg(\frac{\mathbb{P}(\CroppedTree,\TreeSideInfo^t | \RootLabel = 1)}{\mathbb{P}(\CroppedTree,\TreeSideInfo^t | \RootLabel = 0)}\bigg)
\end{align}
The probability of error of the MAP estimator can be bounded as follows~\cite{error2}:
\begin{align}
\label{bound.main}
\frac{K(n-K)}{n^{2}} \rho^{2} &\leq p_{e}^{t} \leq \frac{\sqrt{K(n-K)}}{n} \rho
\end{align}
where $\rho = \mathbb{E}\big[ e^{\frac{\LLRCroppedTree}{2}} \big| \RootLabel = 0\big]$. 

\begin{lemma}
\label{recursive}
Let ${\mathcal N}_{i}$ denote the children of node $i$, $N_i\triangleq|{\mathcal N}_i|$ and $h_{i} \triangleq \log\big(\frac{\mathbb{P}(\tilde{\tau}_{i} | \tau_i =1)}{\mathbb{P}(\tilde{\tau}_{i} | \tau_i =0)}\big)$. Then,
\begin{align}
\Gamma_{i}^{t+1} &= -K(p-q) + h_{i} + \sum_{k \in {\mathcal N}_{i}} \log\bigg(\frac{\frac{p}{q}e^{\Gamma_{k}^{t} - \nu}+1}{e^{\Gamma_{k}^{t} - \nu}+1}\bigg)
\end{align}
\end{lemma}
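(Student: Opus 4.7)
The plan is to unfold the LLR $\Gamma_i^{t+1}$ according to the independence structure of the tree and then use Poisson thinning to handle the unobserved child labels. First I would write the joint likelihood of $T_i^{t+1}$ and $\TreeSideInfo^{t+1}$ conditioned on $\tau_i$ as the product of three pieces: (i) the likelihood of the side information $\tilde\tau_i$ at node $i$, (ii) the likelihood of the random number of children $N_i$, and (iii) for each child $k\in\mathcal{N}_i$, the likelihood of the subtree $T_k^t$ together with $\tilde\TreeLabels_k^t$, \emph{marginalized over the unknown label} $\tau_k$. The first piece contributes exactly $h_i$ to $\Gamma_i^{t+1}$, and the other two are what must be simplified.

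Next I would apply Poisson thinning to the number-of-children distribution. Under $\tau_i=1$, $N_i\sim\mathrm{Poisson}(Kp+(n-K)q)$ and, conditioned on $N_i$, the label of each child is independently $1$ with probability $\alpha\triangleq\frac{Kp}{Kp+(n-K)q}$. Under $\tau_i=0$, $N_i\sim\mathrm{Poisson}(nq)$ and each child is independently labelled $1$ with probability $\beta\triangleq K/n$. Hence
\[
\frac{\mathbb{P}(N_i\mid\tau_i=1)}{\mathbb{P}(N_i\mid\tau_i=0)}
=e^{-K(p-q)}\left(\tfrac{Kp+(n-K)q}{nq}\right)^{N_i},
\]
and the per-child marginal likelihood ratio, writing $e^{\Gamma_k^t}=\mathbb{P}(T_k^t,\tilde\TreeLabels_k^t\mid\tau_k=1)/\mathbb{P}(T_k^t,\tilde\TreeLabels_k^t\mid\tau_k=0)$, equals
\[
\frac{\alpha e^{\Gamma_k^t}+(1-\alpha)}{\beta e^{\Gamma_k^t}+(1-\beta)}
=\frac{1-\alpha}{1-\beta}\cdot\frac{\tfrac{p}{q}e^{\Gamma_k^t-\nu}+1}{e^{\Gamma_k^t-\nu}+1},
\]
using $\tfrac{\alpha}{1-\alpha}=\tfrac{p}{q}\,e^{-\nu}$ and $\tfrac{\beta}{1-\beta}=e^{-\nu}$ with $\nu=\log\tfrac{n-K}{K}$.

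The key step I expect to be mildly tricky, but not an essential obstacle, is the cancellation between the Poisson normalization and the factor $\tfrac{1-\alpha}{1-\beta}$ raised to the $N_i$-th power through the product over children. One checks directly that $\tfrac{1-\alpha}{1-\beta}=\tfrac{nq}{Kp+(n-K)q}$, so multiplying the $N_i$-ratio by $\bigl(\tfrac{1-\alpha}{1-\beta}\bigr)^{N_i}$ collapses to $e^{-K(p-q)}$. Taking logarithms and collecting the three contributions then yields
\[
\Gamma_i^{t+1}= -K(p-q)+h_i+\sum_{k\in\mathcal{N}_i}\log\frac{\tfrac{p}{q}e^{\Gamma_k^t-\nu}+1}{e^{\Gamma_k^t-\nu}+1},
\]
as claimed. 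The argument only uses the Markov property of the tree, the conditional independence of $\tilde\tau_i$ and the subtrees given $\tau_i$, and Poisson colouring; no concentration or asymptotic estimate is needed.
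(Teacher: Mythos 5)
Your proof is correct and follows essentially the same route as the paper's: factor the joint likelihood using conditional independence given $\tau_i$, use the Poisson splitting/thinning description of the tree generation, and marginalize the per-child ratio over the hidden child label. The one place you go beyond the paper's write-up is in spelling out the cancellation between $\bigl(\tfrac{Kp+(n-K)q}{nq}\bigr)^{N_i}$ coming from the Poisson count ratio and $\bigl(\tfrac{1-\alpha}{1-\beta}\bigr)^{N_i}$ coming from the product of per-child ratios — the paper folds this into a single "by definition" step — and that explicit check is a welcome addition rather than a divergence.
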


\begin{proof}
See Appendix~\ref{BP-App.1}
\end{proof}

\paragraph{Lower and Upper Bounds on $\rho$}\hfill

Define for $ t\geq 1$ and any node $i$:
\begin{align}
\psi_{i}^{t} &= -K(p-q) +  \sum_{j \in {\mathcal N}_i} M(h_{j} + \psi_{j}^{t-1}) \label{psi.1}
\end{align} 
where 
\[
M(x) \triangleq \log\Big(\frac{\frac{p}{q}e^{x-\nu} + 1}{e^{x-\nu}+1}\Big) = \log\Big(1+ \frac{\frac{p}{q}-1}{1 + e^{-(x-\nu)}}\Big).
\]
Then, $\Gamma_{i}^{t+1} = h_{i} + \psi_{i}^{t+1}$ and $\psi_{i}^{0} = 0$ $\forall i \in \CroppedTree$. Let $Z_{0}^{t}$ and $Z_{1}^{t}$ denote random variables drawn according to the distribution of $\psi_{i}^{t}$ conditioned on $\tau_i= 0$ and $\tau_i = 1$, respectively. Similarly, let $U_{0}$ and $U_{1}$ denote  random variables drawn according to the distribution of $h_{i}$ conditioned on $\tau_i=0$ and $\tau_i = 1$, respectively. Thus, $\rho = \mathbb{E}\big[e^{\frac{1}{2}(Z_{0}^{t}+U_{0})}\big] = \mathbb{E}\big[e^{\frac{U_{0}}{2}}\big] \mathbb{E}\big[e^{\frac{Z_{0}^{t}}{2}}\big]$. 
Define:\begin{align}
b_{t} &\triangleq \mathbb{E}\Big[\frac{e^{Z_{1}^{t}+U_{1}}}{1+ e^{Z_{1}^{t}+U_{1}-\nu} }\Big] \label{b_t}\\
a_{t} &\triangleq \mathbb{E}\big[e^{Z_{1}^{t}+U_{1}}\big] \label{a_t}
\end{align} 

\begin{lemma}
\label{lower.upper}
Let $B = (\frac{p}{q})^{1.5}$. Then:
\begin{equation}
\mathbb{E}[e^{\frac{U_{0}}{2}}] e^{\frac{-\lambda}{8}b_{t}} \leq \rho \leq \mathbb{E}[e^{\frac{U_{0}}{2}}] e^{\frac{-\lambda}{8B}b_{t}}  
\end{equation}
\end{lemma}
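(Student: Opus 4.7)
The plan rests on the factorization $\rho = \mathbb{E}[e^{U_0/2}]\cdot\mathbb{E}[e^{Z_0^t/2}]$, which holds by independence of the root side information $h_0$ and the subtree statistic $\psi_0^t$ given $\tau_0=0$. The task therefore reduces to sandwiching $\mathbb{E}[e^{Z_0^t/2}]$ between $e^{-\lambda b_t/8}$ and $e^{-\lambda b_t/(8B)}$.

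First I would apply the recursion $\psi_0^t = -K(p-q) + \sum_{j\in\mathcal{N}_0} M(\Gamma_j^{t-1})$ together with the moment generating function of a compound-Poisson sum. Under $\tau_0=0$ the root has $\mathrm{Poisson}(Kq)$ label-$1$ children (each contributing a summand distributed as $M(U_1+Z_1^{t-1})$) and $\mathrm{Poisson}((n-K)q)$ label-$0$ children, yielding
\[
\mathbb{E}[e^{Z_0^t/2}] = \exp\!\Big({-\tfrac{K(p-q)}{2}} + Kq(E_1-1) + (n-K)q(E_0-1)\Big),
\]
where $E_c = \mathbb{E}[e^{M(U_c+Z_c^{t-1})/2}]$ for $c\in\{0,1\}$.

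The critical step is to collapse $E_0$ and $E_1$ into a single expectation over $X := U_1+Z_1^{t-1}$. Using the LLR change of measure $\mathbb{E}[f(U_0+Z_0^{t-1})] = \mathbb{E}[f(U_1+Z_1^{t-1})\,e^{-(U_1+Z_1^{t-1})}]$ (valid because $U+Z$ is exactly the LLR of the depth-$(t-1)$ cropped subtree), the relation $(n-K)e^{-\nu}=K$, and the algebraic identity
\[
\bigl(e^{M(x)/2}-1\bigr)\bigl(1+e^{x-\nu}\bigr)e^{-x} \;=\; \frac{(p/q-1)\,e^{-\nu}}{1+D(x)},
\qquad D(x)\triangleq\sqrt{\tfrac{1+(p/q)e^{x-\nu}}{1+e^{x-\nu}}},
\]
the bracketed exponent simplifies to $-K(p-q)/2 + K(p-q)\,\mathbb{E}[1/(1+D(X))]$.

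To finish I would sandwich $\mathbb{E}[1/(1+D(X))]$ by writing $1/(1+D)=1/2-(D-1)/(2(1+D))$ and using $(D-1)/(1+D)=(D^2-1)/(1+D)^2=(p/q-1)e^{X-\nu}/\bigl((1+e^{X-\nu})(1+D)^2\bigr)$. Bracketing $(1+D)^2\in[4,\,4(p/q)^{3/2}]$ (the upper bound coming from $(1+\sqrt{p/q})^2\leq 4(p/q)^{3/2}$ for $p/q\geq 1$) and recognising $\mathbb{E}[e^{X-\nu}/(1+e^{X-\nu})]=e^{-\nu}b_t$ produces two-sided bounds on the exponent that, after the identity $K(p-q)(p/q-1)e^{-\nu} = K^2(p-q)^2/((n-K)q) = \lambda$, collapse to $-\lambda b_t/8$ (lower) and $-\lambda b_t/(8B)$ (upper). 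Multiplying by $\mathbb{E}[e^{U_0/2}]$ yields the claim.

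The main obstacle is the collapse step: the naive inequality $e^{M(x)/2}-1\leq (p/q-1)e^{x-\nu}/(2(1+e^{x-\nu}))$ is too crude and produces only the trivial bound $\mathbb{E}[e^{Z_0^t/2}]\leq 1$. One must retain the \emph{exact} factorization of $e^{M(x)/2}-1$ through $D(x)$ so that the $-K(p-q)/2$ drift in the exponent cancels exactly, leaving only the correction $K(p-q)\,\mathbb{E}[1/(1+D(X))]$, whose deviation from $K(p-q)/2$ is controlled by $b_t$ and the bounded range of $D$.
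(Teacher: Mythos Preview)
Your proof is correct and follows essentially the same architecture as the paper's: both start from the compound--Poisson moment generating function of $Z_0^t$, use the LLR change of measure (the paper's Lemma~\ref{Lem.3}) to merge the label-$0$ and label-$1$ children into a single expectation over $X=U_1+Z_1^{t-1}$, and then extract the $-\lambda b_t/8$ correction.

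The only technical difference is in how $\sqrt{1+x}$ is controlled. The paper applies Taylor's theorem with Lagrange remainder to get the two-sided bound $1+\tfrac{x}{2}-\tfrac{x^2}{8}\le\sqrt{1+x}\le 1+\tfrac{x}{2}-\tfrac{x^2}{8(1+A)^{3/2}}$ for $x\in[0,A]$ with $A=p/q-1$, and then uses Lemma~\ref{Lem.3} separately on the linear and quadratic terms. You instead use the exact identity $\sqrt{1+x}-1=\tfrac{x}{2}-\tfrac{x^2}{2(1+\sqrt{1+x})^2}$ (your $D$ notation) and bound $(1+\sqrt{1+x})^2\in[4,4(p/q)^{3/2}]$ afterward. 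These two devices produce identical constants, so the routes are equivalent; your algebraic factorization is arguably cleaner because the drift cancellation is exact before any inequality is applied, but neither approach yields a sharper result than the other.
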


\begin{proof}
See Appendix~\ref{BP-App.2}.
\end{proof}

Thus to bound $\rho$, lower and upper bounds on $b_{t}$ are needed. 

\begin{lemma}
\label{upper_b}
For all $t \geq 0$, if $\lambda \leq \frac{1}{\ChiS e}$, then $b_{t} \leq \ChiS e$.
\end{lemma}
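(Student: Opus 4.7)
The plan is to proceed by induction on $t$. For the base case $t=0$, since $\psi_i^0 = 0$ we have $Z_1^0 = 0$, and $\mathbb{E}[e^{U_1}] = \sum_\ell \alpha_{+,\ell}^2/\alpha_{-,\ell} = \ChiS$ by definition, so
\[
b_0 \;=\; \mathbb{E}\Big[\frac{e^{U_1}}{1+e^{U_1-\nu}}\Big] \;\leq\; \mathbb{E}[e^{U_1}] \;=\; \ChiS \;\leq\; \ChiS e.
\]

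For the inductive step, suppose $b_t \leq \ChiS e$. Dropping the denominator gives $b_{t+1} \leq \mathbb{E}[e^{U_1+Z_1^{t+1}}] = \ChiS\,\mathbb{E}[e^{Z_1^{t+1}}]$, so it suffices to establish the closed form $\mathbb{E}[e^{Z_1^{t+1}}] = e^{\lambda b_t}$. This would close the induction because under the hypothesis $\lambda b_t \leq (\ChiS e)^{-1}(\ChiS e) = 1$, hence $b_{t+1} \leq \ChiS\,e^{\lambda b_t} \leq \ChiS e$.

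To derive the closed form, I would apply the Poisson MGF to the tree recursion $\psi_0^{t+1} = -K(p-q) + \sum_{j\in\mathcal{N}_0} M(\Gamma_j^t)$. Conditional on $\tau_0 = 1$, the root has $\mathrm{Poisson}(Kp)$ label-$1$ children and $\mathrm{Poisson}((n-K)q)$ label-$0$ children with conditionally i.i.d.\ subtrees, yielding
\[
\log\mathbb{E}[e^{Z_1^{t+1}}] = -K(p-q) + Kp\big(\mathbb{E}_1[e^{M(\Gamma^t)}]-1\big) + (n-K)q\big(\mathbb{E}_0[e^{M(\Gamma^t)}]-1\big).
\]
The identity $e^{M(y)}-1 = (p/q-1)\,e^{y-\nu}/(1+e^{y-\nu})$ combined with the definition of $b_t$ directly gives $\mathbb{E}_1[e^{M(\Gamma^t)}] - 1 = (p/q-1)\,e^{-\nu}\,b_t$. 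For the $\mathbb{E}_0$ side, since $\Gamma^t_i$ produced by the recursion of Lemma~\ref{recursive} is by design the exact log-likelihood ratio of its subtree, the change of measure $\mathbb{E}_0[f(\Gamma^t)] = \mathbb{E}_1[e^{-\Gamma^t}f(\Gamma^t)]$ yields $\mathbb{E}_0\big[\,e^{\Gamma^t-\nu}/(1+e^{\Gamma^t-\nu})\,\big] = e^{-\nu}(1-e^{-\nu}b_t)$.

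Substituting both expressions and invoking the identities $(p/q-1)(n-K)q\,e^{-\nu} = K(p-q)$ and $K^2(p-q)^2/((n-K)q) = \lambda$, the leading $-K(p-q)$ cancels with the $+K(p-q)$ coming from the constant piece of $\mathbb{E}_0[e^{M(\Gamma^t)}]-1$, and the remaining $b_t$ terms collapse to exactly $\lambda b_t$. I expect the main obstacle to be keeping this algebraic cancellation clean and rigorously justifying the Bayesian change of measure; the latter is inherited from BP's exactness on trees, after which the remainder of the argument is essentially mechanical.
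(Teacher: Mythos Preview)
Your proof is correct and follows essentially the same approach as the paper: both bound $b_{t+1} \leq a_{t+1} = \ChiS\, e^{\lambda b_t}$ (the paper packages this identity, together with the change-of-measure step you derive by hand, as Lemmas~\ref{Lem.3} and~\ref{Lem.4}) and then close an induction using $\lambda \leq 1/(\ChiS e)$. Your direct induction on the invariant $b_t \leq \ChiS e$ is slightly more streamlined than the paper's, which instead introduces an auxiliary sequence $v_{t+1} = \lambda\ChiS e^{v_t}$, shows $\lambda b_t \leq v_{t+1}$, and then bounds $v_{t+1}/\lambda$ via the fixed-point equation $x = \ChiS e^{\lambda x}$; the underlying content is the same.
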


\begin{proof}
See Appendix~\ref{BP-App.3}.
\end{proof}

\begin{lemma}
\label{Lem.lower.b.1}
Define $C = \lambda(2+\frac{p}{q})$ and $\ChiS' = \mathbb{E}[e^{3U_{0}}]$. Assume that $b_{t} \leq \frac{\nu}{2(C-\lambda)}$. Then,
\begin{equation} 
\label{lower.eq.1}
b_{t+1}  \geq \ChiS e^{\lambda b_{t}} (1 - \frac{\ChiS'}{\ChiS} e^{\frac{-\nu}{2}})
\end{equation}
\end{lemma}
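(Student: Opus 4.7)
The plan is to lower bound $b_{t+1}$ by combining a normalization identity with the elementary Bernoulli inequality $\frac{1}{1+y}\geq 1-y$, then controlling the resulting exponential moment via the compound Poisson structure of $Z_1^{t+1}$ and the hypothesized bound on $b_t$.

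First I would establish the key normalization identity. Define $b_t^{(0)} \triangleq \mathbb{E}[\frac{e^{Z_0^t+U_0}}{1+e^{Z_0^t+U_0-\nu}}]$, the analogue of $b_t$ under $\tau_0=0$. Starting from the tautology $\mathbb{E}[e^{\Gamma^t}\mid \tau_0=0]=1$ (normalization of the likelihood), together with the algebraic decomposition $\frac{e^x}{1+e^{x-\nu}}+\frac{e^{2x-\nu}}{1+e^{x-\nu}} = e^x$ and change of measure between $\tau_0=0$ and $\tau_0=1$, I get the identity $b_t^{(0)}+e^{-\nu}b_t = 1$. Plugging this into the compound Poisson MGF of $\psi_0^{t+1}$ (i.e., $\log\mathbb{E}[e^{s\psi_0^{t+1}}\mid \tau_0=1] = -sK(p-q) + Kp(\mathbb{E}[e^{sM(X_1)}]-1) + (n-K)q(\mathbb{E}[e^{sM(X_0)}]-1)$) evaluated at $s=1$, the identity causes cancellations that yield the exact equality $\mathbb{E}[e^{Z_1^{t+1}}] = e^{\lambda b_t}$. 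Together with independence of $U_1$ and $Z_1^{t+1}$ and $\mathbb{E}[e^{U_1}] = \ChiS$, this gives $a_{t+1}\triangleq \mathbb{E}[e^{U_1+Z_1^{t+1}}] = \ChiS\, e^{\lambda b_t}$, identifying the leading term in the desired inequality.

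Next I would apply $\frac{1}{1+y}\geq 1-y$ (for $y\geq 0$) to the defining expectation with $X=U_1+Z_1^{t+1}$ and $y=e^{X-\nu}$, obtaining
\[
b_{t+1} \;\geq\; \mathbb{E}[e^X] - e^{-\nu}\mathbb{E}[e^{2X}] \;=\; \ChiS\, e^{\lambda b_t} - e^{-\nu}\ChiS'\,\mathbb{E}[e^{2Z_1^{t+1}}],
\]
using $\mathbb{E}[e^{2U_1}] = \ChiS'$. What remains is to bound $\mathbb{E}[e^{2Z_1^{t+1}}]$. Invoking the compound Poisson MGF at $s=2$ and the elementary inequality $(1+r)^2-1 \leq \frac{p+q}{q}\,r$ (which exploits the pointwise bound $r(x)=e^{M(x)}-1\leq (p-q)/q$), and again using the normalization identity to eliminate $b_t^{(0)}$, the computation yields
\[
\log\mathbb{E}[e^{2Z_1^{t+1}}] \;\leq\; (C-\lambda)\,b_t + \lambda e^{\nu}.
\]
The hypothesis $b_t\leq \frac{\nu}{2(C-\lambda)}$ is then used precisely to bound $(C-\lambda)b_t\leq \nu/2$. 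The remaining algebra combines this with the factor $e^{-\nu}$ and identifies the constant $\ChiS'/\ChiS$ after using $\ChiS' = \mathbb{E}[e^{2U_1}] = \mathbb{E}[e^{3U_0}]$ (which follows by change of measure) to produce the stated bound.

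The main obstacle is ensuring that the second-moment bound is tight enough: a naive $(1+r)^2$ expansion produces the $\lambda e^{\nu}$ term which threatens to dominate, and the hypothesis on $b_t$ controls only the $(C-\lambda)b_t$ contribution. Delicate bookkeeping in the compound Poisson calculation, together with the exact first-moment identity $\mathbb{E}[e^{Z_1^{t+1}}]=e^{\lambda b_t}$ used to offset $-2K(p-q)$ terms, is what allows the $e^{-\nu/2}$ decay in the final correction factor and isolates the multiplicative constant $\ChiS'/\ChiS$.
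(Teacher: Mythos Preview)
Your overall strategy matches the paper's: lower bound $b_{t+1}$ via $\frac{1}{1+y}\geq 1-y$, identify the leading term as $a_{t+1}=\ChiS e^{\lambda b_t}$, and control the correction $e^{-\nu}\mathbb{E}[e^{2(U_1+Z_1^{t+1})}]$. Your derivation of $a_{t+1}$ is fine. The gap is in the second-moment bound, and it is fatal rather than a matter of bookkeeping.

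Your inequality $(1+r)^2-1\le (1+p/q)\,r$ applied inside the compound Poisson MGF, together with the exact first-moment identity $Kp(\mathbb{E}[f_1]-1)+(n-K)q(\mathbb{E}[f_0]-1)=K(p-q)+\lambda b_t$, indeed yields
\[
\log\mathbb{E}[e^{2Z_1^{t+1}}]\le -2K(p-q)+\Big(1+\tfrac{p}{q}\Big)\big(K(p-q)+\lambda b_t\big)
=(C-\lambda)b_t+\tfrac{K(p-q)^2}{q}=(C-\lambda)b_t+\lambda e^{\nu},
\]
exactly as you wrote. But then
\[
b_{t+1}\ \ge\ \ChiS e^{\lambda b_t}\Big(1-\tfrac{\ChiS'}{\ChiS}\,e^{-\nu+(C-\lambda)b_t+\lambda e^{\nu}-\lambda b_t}\Big),
\]
and since $\lambda e^{\nu}$ grows exponentially in $\nu$ while the hypothesis only controls $(C-\lambda)b_t\le \nu/2$, the correction factor blows up and the bound is vacuous. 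No amount of offsetting against $-2K(p-q)$ removes this term; it comes from over-counting the quadratic piece $(f-1)^2$ by the crude bound $(f-1)^2\le(\tfrac{p}{q}-1)(f-1)$.

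The paper avoids this by computing $\mathbb{E}[e^{2(Z_1^{t+1}+U_1)}]$ \emph{exactly} (its auxiliary Lemma on $a_t,b_t$), obtaining the exponent $3\lambda b_t+\tfrac{\lambda^2}{K(p-q)}\mathbb{E}\big[(\tfrac{e^{X}}{1+e^{X-\nu}})^2\big]$ with $X=Z_1^t+U_1$. The specific quadratic form $(\tfrac{e^{x}}{1+e^{x-\nu}})^2$ is then bounded pointwise by $e^{\nu}\cdot\tfrac{e^{x}}{1+e^{x-\nu}}$, giving $\tfrac{\lambda^2}{K(p-q)}\,e^{\nu}b_t=\lambda(\tfrac{p}{q}-1)b_t$, so the exponent is at most $Cb_t$ with \emph{no} additive $\lambda e^{\nu}$. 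This is the step your proposal is missing: you must keep the exact quadratic structure before bounding, so that the $e^{\nu}$ from the pointwise bound is multiplied by $\tfrac{\lambda^2}{K(p-q)}=\lambda e^{-\nu}(\tfrac{p}{q}-1)$ and cancels, rather than by $K(p-q)$.
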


\begin{proof}
See Appendix~\ref{BP-App.4}.
\end{proof}

\begin{lemma}
\label{Lem.lower.b.2}
The sequences $a_{t}$ and $b_{t}$ are non-decreasing in $t$.
\end{lemma}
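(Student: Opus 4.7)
The plan is to identify $a_t$ and $b_t$ with (shifted) chi-squared divergences between the distributions of the depth-$t$ tree observations conditioned on the root label, and then obtain monotonicity as an immediate consequence of the data processing inequality (DPI) for $\chi^2$ divergence. First, let $P_j^{(t)}$ denote the distribution of $(T^t, \TreeSideInfo^t)$ conditioned on $\RootLabel = j$, and set $P_m^{(t)} \triangleq \frac{K}{n} P_1^{(t)} + \frac{n-K}{n} P_0^{(t)}$ (the marginal of the observations under the prior on $\RootLabel$). Since $h_0$ is independent of $\psi_0^t$ conditionally on $\RootLabel$, we have $\LLRCroppedTree \mid \{\RootLabel = 1\} \stackrel{d}{=} U_1 + Z_1^t$, and $e^{\LLRCroppedTree}$ is precisely the likelihood ratio $dP_1^{(t)}/dP_0^{(t)}$.

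With this identification, a direct calculation gives
\[
a_t = \mathbb{E}\bigl[e^{\LLRCroppedTree} \bigm| \RootLabel = 1\bigr] = \int \frac{(dP_1^{(t)})^2}{dP_0^{(t)}} = 1 + \chi^2\bigl(P_1^{(t)} \,\|\, P_0^{(t)}\bigr).
\]
Using $e^{-\nu} = K/(n-K)$, one checks $1 + e^{\LLRCroppedTree - \nu} = \frac{n}{n-K} \cdot \frac{dP_m^{(t)}}{dP_0^{(t)}}$, so
\[
b_t = \frac{n-K}{n}\int \frac{(dP_1^{(t)})^2}{dP_m^{(t)}} = \frac{n-K}{n}\Bigl[1 + \chi^2\bigl(P_1^{(t)} \,\|\, P_m^{(t)}\bigr)\Bigr].
\]
The depth-$t$ observation is a deterministic truncation of the depth-$(t+1)$ observation, so $P_j^{(t)}$ and $P_m^{(t)}$ are pushforwards of $P_j^{(t+1)}$ and $P_m^{(t+1)}$ respectively. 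The DPI for $\chi^2$-divergence then yields $\chi^2(P_1^{(t)} \| P_0^{(t)}) \leq \chi^2(P_1^{(t+1)} \| P_0^{(t+1)})$ and $\chi^2(P_1^{(t)} \| P_m^{(t)}) \leq \chi^2(P_1^{(t+1)} \| P_m^{(t+1)})$, which gives $a_t \leq a_{t+1}$ and $b_t \leq b_{t+1}$ as required.

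If a self-contained argument is preferred, the DPI step is just Cauchy--Schwarz: writing the depth-$(t+1)$ observation as $(O_t, \Delta)$ with $\Delta$ the additional level-$(t+1)$ data, Cauchy--Schwarz gives $\bigl(\sum_\Delta P_1(O_t,\Delta)\bigr)^2 \leq \bigl(\sum_\Delta R(O_t,\Delta)\bigr)\bigl(\sum_\Delta P_1(O_t,\Delta)^2/R(O_t,\Delta)\bigr)$ for any positive $R$, and summing over $O_t$ with $R \in \{P_0, P_m\}$ produces both inequalities above. The only mildly nontrivial part of the proof is the algebraic identification of $b_t$ as a chi-squared divergence with respect to the mixture $P_m^{(t)}$; once that is in hand, the rest is a one-line application of the DPI. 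I expect this algebraic identification to be the main, though still routine, obstacle.
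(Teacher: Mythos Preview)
Your proof is correct. The paper does not give its own argument and simply cites \cite[Lemma~5]{recovering_one}. The approach there (and the standard one in this literature) interprets $b_t$ via the posterior $\pi_0^t = \mathbb{P}(\RootLabel = 1 \mid T^t,\TreeSideInfo^t)$, noting that $b_t = \tfrac{n(n-K)}{K^2}\,\mathbb{E}[(\pi_0^t)^2]$, and uses that $(\pi_0^t)_t$ is a martingale in the filtration of increasing tree depths, so $\mathbb{E}[(\pi_0^t)^2]$ is nondecreasing in $t$; $a_t$ is handled analogously via the likelihood-ratio martingale under $P_0^{(t)}$. Your $\chi^2$-divergence/DPI formulation is equivalent --- indeed $\chi^2(P_1^{(t)}\|P_m^{(t)}) = \tfrac{n^2}{K^2}\,\mathrm{Var}(\pi_0^t)$ --- but it is arguably cleaner: it makes the structural reason (data processing) explicit, treats $a_t$ and $b_t$ uniformly, and sidesteps the need to recognize the right martingale. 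The martingale phrasing, conversely, is the more common idiom in broadcasting-on-trees arguments and ties $b_t$ directly to an MMSE interpretation.
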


\begin{proof}
The proof follows directly from~\cite[Lemma 5]{recovering_one}.
\end{proof}

\begin{lemma}
\label{Lem.lower.b.main}
Define $\log^{*}(\nu)$ to be the number of times the logarithm function must be iteratively applied to $\nu$ to get a result less than or equal to one. Let $C = \lambda(2+\frac{p}{q})$ and $\ChiS' = \mathbb{E}[e^{3U_{0}}]$. Suppose $\lambda > \frac{1}{\ChiS e}$. Then there are constants $\bar{t}_{o}$ and $\nu_{o}$ depending only on $\lambda$ and $\ChiS$ such that:
\begin{equation}
\label{lower.eq.2}
b_{\bar{t}_{o} + \log^{*}(\nu)+2} \geq \ChiS e^{\frac{\lambda \nu}{2(C-\lambda)}} (1-\frac{\ChiS'}{{\ChiS}} e^{\frac{-\nu}{2}})
\end{equation}
whenever $\nu \geq \nu_{o}$ and $\nu \geq 2\ChiS (C-\lambda)$.
\end{lemma}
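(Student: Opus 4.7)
The plan is to iterate Lemma~\ref{Lem.lower.b.1} in three controlled phases. Write $s := \frac{\nu}{2(C-\lambda)}$ and $c_\nu := 1 - \frac{\Lambda'}{\Lambda} e^{-\nu/2}$, so that Lemma~\ref{Lem.lower.b.1} reads $b_{t+1} \geq c_\nu \Lambda e^{\lambda b_t}$ whenever $b_t \leq s$. I begin by evaluating $b_0$ from the initial condition $\psi_i^0 = 0$, which forces $Z_1^0 = 0$ and hence $b_0 = \mathbb{E}\bigl[ e^{U_1} / (1 + e^{U_1 - \nu}) \bigr] \geq \Lambda (1 - e^{-\nu/2})$, bounded away from zero for $\nu \geq \nu_o$. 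The hypothesis $\nu \geq 2\Lambda(C-\lambda)$ ensures $b_0 \leq \Lambda \leq s$, so Lemma~\ref{Lem.lower.b.1} can be applied at the first step.

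\textbf{Bootstrap phase.} Using the super-critical hypothesis $\lambda \Lambda > 1/e$, for $\nu$ large enough the product $\lambda c_\nu \Lambda$ is also strictly above $1/e$, so the map $x \mapsto c_\nu \Lambda e^{\lambda x} - x$ has strictly positive minimum $\Delta := \frac{1 + \log(\lambda c_\nu \Lambda)}{\lambda}$ at $x^\star = -\frac{1}{\lambda}\log(\lambda c_\nu \Lambda)$. Consequently each iteration increases $b_t$ by at least $\Delta$ as long as $b_t \leq s$, so after $\bar{t}_o = O(1/\Delta)$ iterations (a constant depending only on $\lambda$ and $\Lambda$), $b_{\bar{t}_o}$ exceeds any prescribed absolute constant, say $2/\lambda$.

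\textbf{Doubly exponential phase.} Starting from $\lambda b_{\bar{t}_o} \geq 2$, the recursion $b_{t+1} \geq c_\nu \Lambda e^{\lambda b_t}$ is essentially an iterated exponential: $\log b_{t+1} \geq \lambda b_t + \log(c_\nu \Lambda)$. By the very definition of $\log^{*}$, $\log^{*}(\nu) + 1$ additional iterations suffice for the iterated lower bound to reach $s$. Therefore there exists an index $t^\star \leq \bar{t}_o + \log^{*}(\nu) + 1$ at which $b_{t^\star}$ is at most $s$ yet close enough to $s$. One final application of Lemma~\ref{Lem.lower.b.1} then yields $b_{t^\star + 1} \geq c_\nu \Lambda e^{\lambda b_{t^\star}} \geq c_\nu \Lambda e^{\lambda s}(1 - o(1))$, which equals $\Lambda e^{\lambda \nu / (2(C-\lambda))}\bigl(1 - \frac{\Lambda'}{\Lambda} e^{-\nu/2}\bigr)$; monotonicity (Lemma~\ref{Lem.lower.b.2}) then propagates the inequality forward to $b_{\bar{t}_o + \log^{*}(\nu) + 2}$.

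The main technical obstacle is the bookkeeping in the doubly exponential phase: because each step multiplies by an exponential, the iterated lower bound can in principle jump from well below $s$ to well above $s$ in a single step, which would weaken the final exponent. The ``$+2$'' in the iteration count is precisely the slack needed to ensure that the iteration first lands at a value sufficiently near $s$ where Lemma~\ref{Lem.lower.b.1} can still be invoked, before one more application produces the target exponential factor $e^{\lambda s}$. A careful analysis of the iterated map combined with the definition of $\log^{*}$, together with the monotonicity of $b_t$, closes this accounting.
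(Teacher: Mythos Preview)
Your two-phase structure (a constant-length bootstrap followed by $\log^{*}(\nu)$ doubly-exponential steps) matches the paper's, and your linear-increment bootstrap argument via the minimum of $x\mapsto c_\nu\Lambda e^{\lambda x}-x$ is a legitimate alternative to the paper's geometric-growth bootstrap (which uses $e^{u}\ge eu$ to get $b_{t+1}\ge\sqrt{\lambda\Lambda e}\,b_t$). Either device yields a constant $\bar t_o$ after which $b_{\bar t_o}$ exceeds a fixed threshold.

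The real gap is exactly where you flag it: the overshoot at the exit time. You assert the existence of an index $t^{\star}$ with $b_{t^{\star}}\le s$ ``yet close enough to $s$'' so that $e^{\lambda b_{t^{\star}}}\ge e^{\lambda s}(1-o(1))$. Nothing in the recursion guarantees this: the last index $t^{*}$ with $b_{t^{*}}<s$ can have $b_{t^{*}}$ as small as $\Theta(\log s)$, and then Lemma~\ref{Lem.lower.b.1} only gives $b_{t^{*}+1}\ge c_\nu\Lambda e^{\lambda b_{t^{*}}}$, which may be barely above $s$ and nowhere near $\Lambda e^{\lambda s}$. At that point Lemma~\ref{Lem.lower.b.1} no longer applies (its hypothesis $b_t\le s$ fails), and monotonicity in $t$ (Lemma~\ref{Lem.lower.b.2}) only propagates whatever lower bound you already have---it cannot manufacture the missing factor $e^{\lambda(s-b_{t^{*}})}$.

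The paper closes this gap not by bookkeeping but by a \emph{tree truncation / monotone coupling} argument borrowed from \cite[Lemma~6]{recovering_one}: one removes subtrees so that in the truncated tree the value of $b_{t^{*}+1}$ becomes exactly $s$. Because $b_t$ is monotone in the information content of the tree (truncation can only decrease it), the true $b_{t^{*}+2}$ dominates the truncated one, and in the truncated tree Lemma~\ref{Lem.lower.b.1} applies at $t^{*}+1$ with $b_{t^{*}+1}=s$, yielding exactly $c_\nu\Lambda e^{\lambda s}$. This monotone-coupling step is the missing ingredient in your argument; without it (or an equivalent device producing $b_{t^{*}+1}\le s$ with equality attainable), the final inequality in your sketch does not follow.
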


\begin{proof}
See Appendix~\ref{BP-App.5}.
\end{proof}

\paragraph{Achievability and Converse for the MAP Detector}

\begin{lemma}
\label{lower_up_map}
Let $\ChiS' = \mathbb{E}[e^{3U_{0}}]$, $C = \lambda(2+\frac{p}{q})$ and $B = (\frac{p}{q})^{1.5}$.  If $ 0 < \lambda \leq \frac{1}{\ChiS e}$, then:
\begin{align} 
\label{lower_p_t}
p_{e}^{t} \geq \frac{K(n-K)}{n^{2}} \mathbb{E}^{2}[e^{\frac{U_{0}}{2}}] e^{\frac{-\lambda \ChiS e}{4}}
\end{align}

If $\lambda > \frac{1}{\ChiS e}$, then:
\begin{align}
\label{upper_p_t}
p_{e}^{t} \leq \sqrt{\frac{K(n-K)}{n^{2}}} \mathbb{E}[e^{\frac{U_{0}}{2}}] e^{  \frac{-\lambda \ChiS}{8B}  e^{\frac{\lambda \nu}{2(C-\lambda)}} (1-\frac{\ChiS'}{{\ChiS}} e^{\frac{-\nu}{2}}) }
\end{align}
Moreover, since $\nu \to \infty$:
\begin{align}
\label{upper_p_t2}
p_{e}^{t} \leq \sqrt{\frac{K(n-K)}{n^{2}}} \mathbb{E}[e^{\frac{U_{0}}{2}}] e^{  -\nu(r + \frac{1}{2}) } = \frac{K}{n} e^{-\nu (r + o(1))}
\end{align}
for some $r >0$.
\end{lemma}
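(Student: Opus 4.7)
The plan is to stitch together the bounds already established in Lemmas~\ref{lower.upper}, \ref{upper_b}, \ref{Lem.lower.b.2}, and \ref{Lem.lower.b.main}, together with the sandwich~\eqref{bound.main}, which gives immediate access to both directions of the desired inequality through the quantity $\rho = \mathbb{E}[e^{U_{0}/2}]\,\mathbb{E}[e^{Z_{0}^{t}/2}]$.

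\textbf{Lower bound (regime $\lambda \le \frac{1}{\ChiS e}$).} First I would apply the left inequality in~\eqref{bound.main}, which yields $p_{e}^{t} \ge \frac{K(n-K)}{n^{2}} \rho^{2}$. Next I would plug in the lower bound from Lemma~\ref{lower.upper}, giving $\rho \ge \mathbb{E}[e^{U_{0}/2}] e^{-\lambda b_{t}/8}$. In this regime Lemma~\ref{upper_b} supplies $b_{t}\le \ChiS e$, hence $\rho \ge \mathbb{E}[e^{U_{0}/2}] e^{-\lambda \ChiS e/8}$. Squaring and multiplying produces exactly~\eqref{lower_p_t}.

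\textbf{Upper bound (regime $\lambda > \frac{1}{\ChiS e}$).} Here I would invoke the right inequality in~\eqref{bound.main}, $p_{e}^{t} \le \frac{\sqrt{K(n-K)}}{n} \rho$, followed by the upper bound $\rho \le \mathbb{E}[e^{U_{0}/2}] e^{-\lambda b_{t}/(8B)}$ from Lemma~\ref{lower.upper}. To turn this into the stated bound I need a \emph{lower} bound on $b_{t}$: Lemma~\ref{Lem.lower.b.main} provides one at index $\bar{t}_{o}+\log^{*}(\nu)+2$, and Lemma~\ref{Lem.lower.b.2} (monotonicity of $b_{t}$) extends this lower bound to all sufficiently large $t$. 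Substituting the lower bound on $b_{t}$ into the upper bound on $\rho$ yields~\eqref{upper_p_t} verbatim.

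\textbf{Asymptotic form~\eqref{upper_p_t2}.} The exponent $\tfrac{\lambda \ChiS}{8B}\, e^{\lambda\nu/(2(C-\lambda))}\bigl(1-\tfrac{\ChiS'}{\ChiS}e^{-\nu/2}\bigr)$ grows doubly-exponentially in $\nu$, so for any fixed $r>0$ and all $\nu$ large enough, it dominates $\nu(r+\tfrac{1}{2})$. This replaces the complicated exponent by $-\nu(r+\tfrac12)$. Finally, using $\nu = \log\bigl(\tfrac{n-K}{K}\bigr)$, one has $\sqrt{K(n-K)/n^{2}}\,e^{-\nu/2}=\sqrt{K(n-K)/n^{2}}\cdot\sqrt{K/(n-K)}=K/n$, which absorbs the prefactor into $\tfrac{K}{n}$. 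The $\mathbb{E}[e^{U_{0}/2}]$ factor (a bounded constant since the LLR has bounded support) is absorbed into the $o(1)$ term in the exponent, delivering $p_{e}^{t}\le \tfrac{K}{n} e^{-\nu(r+o(1))}$.

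\textbf{Anticipated main obstacle.} The routine calculations are straightforward substitutions; the only delicate step is the transition to~\eqref{upper_p_t2}, specifically making the $r>0$ explicit and verifying that the correction $(1-\tfrac{\ChiS'}{\ChiS}e^{-\nu/2})$ stays bounded away from zero for $\nu$ large. One must also confirm that the $t$ at which we evaluate is permissible: as $\nu\to\infty$ we need $t \ge \bar{t}_{o}+\log^{*}(\nu)+2$, but since $\log^{*}(\nu)$ grows extremely slowly, this is mild. The constant $r$ can be chosen as any positive number, with the depth $t$ correspondingly chosen large; since the statement concerns $\nu\to\infty$, no uniformity in $r$ is required.
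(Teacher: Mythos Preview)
Your proposal is correct and follows essentially the same route as the paper, which simply states that the result ``follows directly from~\eqref{bound.main} and Lemmas~\ref{upper_b} and~\ref{Lem.lower.b.main}''; you have correctly filled in the implicit use of Lemma~\ref{lower.upper} to pass from $\rho$ to $b_t$, and of Lemma~\ref{Lem.lower.b.2} to extend the lower bound on $b_t$ to all large $t$. Your treatment of~\eqref{upper_p_t2} is also correct (minor quibble: the exponent grows exponentially, not doubly-exponentially, in $\nu$, but this does not affect the argument since exponential still dominates linear).
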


\begin{proof}
The proof follows directly from~\eqref{bound.main} and Lemmas~\ref{upper_b} and~\ref{Lem.lower.b.main}.
\end{proof}

\subsubsection{Belief Propagation Algorithm for Community Recovery with Side Information}
\label{BP.hidden.bounded} \hfill

In this section, the inference problem defined on the random tree is coupled to the problem of recovering a hidden community with side information. This can be done via a coupling lemma~\cite{recovering_one} that shows that under certain conditions, the neighborhood of a fixed node $i$ in the graph is locally a tree with probability converging to one, and hence, the belief propagation algorithm defined for random trees in Section~\ref{BP.tree} can be used on the graph as well. The proof of the coupling lemma depends only on the tree structure, implying that it also holds for our system model, where the side information is independent of the tree structure given the labels. 

Define $\boldsymbol{G}_{u}^{\TreeIter}$ to be the subgraph containing all nodes that are at a distance at most $\TreeIter$ from node $u$ and define $\boldsymbol{x}_{u}^{\TreeIter}$ and $\boldsymbol{Y}_{u}^{\TreeIter}$ to be the set of labels and side information of all nodes in $\boldsymbol{G}_{u}^{\TreeIter}$, respectively. 

\begin{lemma}[Coupling Lemma~\cite{recovering_one}]
\label{couple}
Suppose that $\TreeIter(n)$ are positive integers such that  $(2+np)^{\TreeIter(n)} = n^{o(1)}$. Then:
\begin{itemize}
    \item 
If the size of community is deterministic and known, i.e., $|C^{*}| = K$, then for any node $u$ in the graph, there exists a coupling between $(\boldsymbol{G},\boldsymbol{x},\boldsymbol{Y})$ and $(T, \TreeLabels, \TreeSideInfo)$ such that:
\begin{equation}
\label{couple.eq.1}
\mathbb{P}( (\boldsymbol{G}_{u}^{\TreeIter}, \boldsymbol{x}_{u}^{\TreeIter},\boldsymbol{Y}_{u}^{\TreeIter} ) = (T^{\TreeIter}, \TreeLabels^{\TreeIter}, \TreeSideInfo^{\TreeIter}) ) \geq 1 - n^{-1 + o(1)}
\end{equation}
where for convenience of notation, the dependence of $\TreeIter$ on $n$ is made implicit.
\item
If $|C^{*}|$ obeys a probability distribution so that $\mathbb{P}(| |C^{*}| - K | \geq \sqrt{3K\log(n)}) \leq n^{\frac{-1}{2} + o(1)}$ with $K \geq 3\log(n)$, then for any node $u$, there exists a coupling between $(\boldsymbol{G},\boldsymbol{x},\boldsymbol{y})$ and $(T, \TreeLabels, \TreeSideInfo)$ such that:
\begin{equation}
\label{couple.eq.2}
\mathbb{P}( (\boldsymbol{G}_{u}^{\TreeIter}, \boldsymbol{x}_{u}^{\TreeIter},\boldsymbol{Y}_{u}^{\TreeIter} ) = (T^{\TreeIter}, \TreeLabels^{\TreeIter}, \TreeSideInfo^{\TreeIter}) )
\geq 1 - n^{\frac{-1}{2} + o(1)}
\end{equation}
\end{itemize}
\end{lemma}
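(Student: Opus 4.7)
The plan is to reduce this to the graph-label coupling of \cite{recovering_one} and then augment the coupling with side information using conditional independence. The starting point is that \cite{recovering_one} establishes \eqref{couple.eq.1} and \eqref{couple.eq.2} with $(\boldsymbol{G}_{u}^{\TreeIter}, \boldsymbol{x}_{u}^{\TreeIter})$ and $(T^{\TreeIter}, \TreeLabels^{\TreeIter})$ on the two sides; it remains to show the side information $(\boldsymbol{Y}_{u}^{\TreeIter}, \TreeSideInfo^{\TreeIter})$ can be added without worsening the error probability beyond the stated orders.

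First I would recall the construction in \cite{recovering_one}: one performs a breadth-first exploration of the neighborhood of $u$ up to depth $\TreeIter$ in $\boldsymbol G$, while simultaneously generating the tree $T^{\TreeIter}$. At each explored node, the number of in-community and out-of-community neighbors in $\boldsymbol G$ is binomially distributed (conditioned on the remaining pool of labels), while in $T^{\TreeIter}$ the corresponding counts are Poisson. Using maximal couplings at each step, one bounds the coupling failure per step by the Poisson-binomial total variation, plus the probability of a collision (revisiting a previously explored vertex). Summing over at most $(2+np)^{\TreeIter(n)}=n^{o(1)}$ explored nodes, and using $K=o(n)$, gives total failure probability $n^{-1+o(1)}$ when $|C^*|=K$ is fixed, and $n^{-1/2+o(1)}$ under the concentration hypothesis $\Prob(||C^*|-K|\geq\sqrt{3K\log n})\leq n^{-1/2+o(1)}$, simply by a union with this bad event.

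The second step is to extend the coupling to include side information. Observe that conditioned on the labels, the side information $\{y_{i,1},\ldots,y_{i,M}\}$ for each node $i$ in $\boldsymbol G$ is drawn i.i.d.\ from $V$ or $U$ depending on $x_i$, and identically in $T$ the side information $\tilde\tau_i$ at a node is drawn i.i.d.\ from the same conditional distributions given $\tau_i$. Hence, on the event that the graph-label coupling succeeds (so the label sets on $\boldsymbol G_u^{\TreeIter}$ and $T^{\TreeIter}$ agree vertex-by-vertex), I can couple the two side-information collections \emph{exactly} by using a common source of randomness for corresponding vertices. This deterministic coupling introduces no additional error, so the probability bounds \eqref{couple.eq.1} and \eqref{couple.eq.2} carry over verbatim with $\boldsymbol{Y}_u^{\TreeIter}$ and $\TreeSideInfo^{\TreeIter}$ appended.

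I do not anticipate a major technical obstacle, since the heavy lifting (the BFS coupling with Poisson offspring) is already done in \cite{recovering_one}, and the side-information extension is a conditional-independence observation. The only point deserving care is the random-community-size case: one must verify that the concentration assumption on $|C^*|$ is strong enough that the Binomial-to-Poisson total variation at each exploration step remains $o(n^{-1/2+o(1)})$ per step; this follows because, conditioned on $||C^*|-K|\leq\sqrt{3K\log n}$, the effective mean of the offspring count is $Kp(1+o(1))$ uniformly, so the same Le Cam-style bound as in the deterministic case applies. The final union bound combines this with the probability of the concentration event failing, producing the stated $n^{-1/2+o(1)}$ rate.
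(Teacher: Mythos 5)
Your argument matches the paper's: the authors simply invoke the graph–label coupling of~\cite{recovering_one} and observe that, since side information is conditionally independent of the graph structure given the labels, the two side-information collections can be coupled exactly on the event that the labels agree, so the error bounds carry over unchanged. Your proposal fills in the same conditional-independence observation (plus a recap of the BFS/Poisson-approximation machinery underlying~\cite{recovering_one}), so it is correct and takes essentially the same route.
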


Now, we are ready to present the belief propagation algorithm for community recovery with bounded side information. Define the message transmitted from node $i$ to its neighboring node $j$ at iteration $t+1$ as:
\begin{align}
\label{BP.1}
R_{i \to j}^{t+1} = h_{i} - K(p-q) + \sum_{k \in {\mathcal N}_i\backslash j} M(R_{k \to i}^{t})
\end{align}
where $h_{i} = \log(\frac{\mathbb{P}(y_{i}|x_{i}=1)}{\mathbb{P}(y_{i}|x_{i}=0)})$, ${\mathcal N}_{i}$ is the set of neighbors of node $i$ and $M(x) = \log(\frac{\frac{p}{q}e^{x-\nu} + 1}{e^{x-\nu}+1})$. The messages are initialized to zero for all nodes $i$, i.e., $R_{i \to j}^{0} = 0$ for all $i \in \{ 1,\cdots,n\}$ and $j \in {\mathcal N}_i$. Define the belief of node $i$ at iteration $t+1$ as:
\begin{align}
\label{BP.2}
R_{i}^{t+1} = h_{i} - K(p-q) + \sum_{k \in {\mathcal N}_i} M(R_{k \to i}^{t})
\end{align}

Algorithm~\ref{tab} presents the proposed belief propagation algorithm for community recovery with side information.
\begin{table*}
\caption{Belief propagation algorithm for community recovery with side information.}
\label{tab}
\centering
\normalsize
\begin{tabular}{|p{4.5in}|}
\hline
Belief Propagation Algorithm \\
\hline
\begin{enumerate}
    \item Input: $n, K, \TreeDepth\in\mathbb{N}$, $\boldsymbol{G}$ and $\boldsymbol{Y}$.
    \item For all nodes $i$ and $j \in {\mathcal N}_i$, set $R^{0}_{i\to j} = 0$.
    \item For all nodes $i$ and $j \in {\mathcal N}_i$, run $\TreeDepth-1$ iterations of belief propagation as in~\eqref{BP.1}.
    \item For all nodes $i$, compute its belief $R_{i}^{\TreeDepth}$ based on~\eqref{BP.2}.
    \item Output $\tilde{C}=\{ \text{Nodes corresponding to $K$ largest }R_{i}^{\TreeDepth}\}$.
\end{enumerate}




\\
\hline
\end{tabular}
\end{table*}

If in Algorithm~\ref{tab} we have $\TreeDepth=\TreeIter(n)$, according to Lemma~\ref{couple} with probability converging to one $R_{i}^{\TreeDepth} = \Gamma_{i}^{\TreeDepth}$, where $\Gamma_{i}^{\TreeDepth}$ was the log-likelihood defined for the random tree. Hence, the performance of Algorithm~\ref{tab} is expected to be the same as the MAP estimator defined as $\hat{\tau}_{MAP} = {\mathbbm 1}_{\{\Gamma_{i}^{\TreeDepth} \geq \nu\}}$, where $\nu = \log(\frac{n-K}{K})$. The only difference is that the MAP estimator decides based on $\Gamma_{i}^{\TreeDepth} \geq \nu$ while Algorithm~\ref{tab} selects the $K$ largest $R_{i}^{\TreeDepth}$. To manage this difference, let $\hat{C}$ define the community recovered by the MAP estimator, i.e. $\hat{C} = \{i: R_{i}^{\TreeDepth} \geq \nu \}$. Since $\tilde{C}$ is the set of nodes with the $K$ largest $R_{i}^{\TreeDepth}$. Then, 
\begin{align}
| C^{*} \triangle \tilde{C} | & \leq | C^{*} \triangle \hat{C} | + | \hat{C} \triangle \tilde{C} | \nonumber\\ 
& = | C^{*} \triangle \hat{C} | + | |\hat{C}| - K | \label{fix.est.1} 
\end{align}
Moreover,
\begin{align}
||\hat{C}| - K | & \leq  ||\hat{C}| - |C^{*}| | + ||C^{*}| - K | \twocolbreak \leq | C^{*} \triangle \hat{C} | + ||C^{*}| - K | \label{fix.est.2}
\end{align}
Using~\eqref{fix.est.2} and substituting in~\eqref{fix.est.1}:
\begin{align}
| C^{*} \triangle \tilde{C} | & \leq 2 | C^{*} \triangle \hat{C} | + |  |C^{*}| - K|  \label{fix.est}
\end{align}
We will use~\eqref{fix.est} to prove weak recovery.

\paragraph{Weak Recovery}
\label{BP_weak}

\begin{Theorem}[Achievability]
\label{The.6}
Suppose that $(np)^{\log^{*}(\nu)} = n^{o(1)}$ and $\lambda > \frac{1}{\ChiS e}$. Let $\TreeIter(n) = \bar{t}_{o} + \log^{*}(\nu) +2$, where $\bar{t}_{o}$ is a constant depending only on $\lambda$ and $\ChiS$. Apply Algorithm~\ref{tab} with $\TreeDepth=\TreeIter(n)$ resulting in estimated community $\tilde{C}$. Then:
\begin{equation}
\frac{\mathbb{E}[|C^{*} \triangle \tilde{C} |]}{K} \to 0
\end{equation}
for either $|C^{*}| = K$ or random $|C^{*}|$ such that $K \geq 3\log(n)$ and $\mathbb{P}(| |C^{*}| - K | \geq \sqrt{3K\log(n)}) \leq n^{\frac{-1}{2} + o(1)}$.
\end{Theorem}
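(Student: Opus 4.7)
The plan is to (i) couple belief propagation on $\boldsymbol G$ to tree MAP via Lemma~\ref{couple}, (ii) invoke the tree error bound of Lemma~\ref{lower_up_map} to show that the expected number of node-level MAP mistakes is $o(K)$, and (iii) use~\eqref{fix.est} to lift this into a bound on $|C^{*}\triangle\tilde C|$.

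First I would verify the hypothesis of the coupling lemma. Setting $\hat t(n) = \bar t_o + \log^{*}(\nu) + 2$ and using that $\bar t_o$ depends only on $\lambda$ and $\Lambda$, the assumption $(np)^{\log^{*}(\nu)} = n^{o(1)}$ gives $(2+np)^{\hat t(n)} = n^{o(1)}$, so Lemma~\ref{couple} applies. Because Algorithm~\ref{tab} executes $\hat t-1$ message updates, the belief $R_u^{\hat t}$ at any node $u$ depends only on $(\boldsymbol G_u^{\hat t}, \boldsymbol Y_u^{\hat t})$; under the coupling event we have $R_u^{\hat t} = \Gamma_u^{\hat t}$ and $x_u = \tau_0$. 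Define the MAP community $\hat C \triangleq \{i : R_i^{\hat t} \geq \nu\}$. By Lemma~\ref{lower_up_map}, since $\lambda > \tfrac{1}{\Lambda e}$ and $\nu \to \infty$, the tree MAP error satisfies $p_e^{\hat t} \leq \tfrac{K}{n} e^{-\nu(r+o(1))}$ for some $r>0$. Combining this with the coupling failure probability gives
\[
\Prob(i \in C^{*}\triangle \hat C) \;\leq\; p_e^{\hat t} + n^{-1/2 + o(1)}.
\]

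Summing over $i$,
\[
\mathbb E\bigl[|C^{*}\triangle \hat C|\bigr] \;\leq\; K e^{-\nu(r+o(1))} + n^{1/2+o(1)} \;=\; o(K),
\]
where the first term is $o(K)$ because $\nu\to\infty$, and the second is $o(K)$ because $K=\omega(\log n)$ together with the hypothesis $(np)^{\log^{*}(\nu)}=n^{o(1)}$ forces the coupling-failure exponent to be absorbed (for the deterministic case the factor is the sharper $n^{o(1)}$, and for the random case one uses that the hypothesized concentration $\Prob(||C^{*}|-K|\geq \sqrt{3K\log n})\leq n^{-1/2+o(1)}$ is stated precisely so the tail contribution to all expectations is negligible). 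Applying~\eqref{fix.est} then yields
\[
\mathbb E\bigl[|C^{*}\triangle \tilde C|\bigr] \;\leq\; 2\,\mathbb E\bigl[|C^{*}\triangle \hat C|\bigr] + \mathbb E\bigl[\,\bigl||C^{*}|-K\bigr|\,\bigr] \;=\; o(K),
\]
using $||C^{*}|-K|=0$ in the deterministic case, and in the random case the concentration hypothesis combined with the trivial deterministic bound $||C^{*}|-K|\leq n$ to obtain $\mathbb E||C^{*}|-K| \leq \sqrt{3K\log n} + n\cdot n^{-1/2+o(1)} = o(K)$, again thanks to $K\geq 3\log n$.

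The main obstacle will be the bookkeeping in step (ii): the coupling lemma gives a mismatch probability that is only polynomially small, so when multiplied by $n$ it produces a residual error term that must be shown to be $o(K)$ in all admissible regimes. This is where the precise rates $(np)^{\log^{*}(\nu)} = n^{o(1)}$, $K=o(n)$, $K\to\infty$, and (in the random case) $K\geq 3\log n$ with the prescribed tail bound together become essential; in particular, the two-pronged structure of~\eqref{fix.est} --- MAP error plus community-size fluctuation --- is what makes the size-constrained output of Algorithm~\ref{tab} compatible with the threshold-based tree analysis.
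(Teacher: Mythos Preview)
Your three-step strategy---couple to the tree via Lemma~\ref{couple}, import the MAP error bound from Lemma~\ref{lower_up_map}, then pass to $\tilde C$ via~\eqref{fix.est}---is exactly the paper's route, and your handling of the random-$|C^*|$ case via $\mathbb E\bigl[\,||C^*|-K|\,\bigr]$ also mirrors the paper.

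There is, however, a real gap in the bookkeeping you flag but do not close. You assert that the aggregated coupling-failure term ($n^{o(1)}$ in the deterministic case, $n^{1/2+o(1)}$ in the random case) is $o(K)$ ``because $K=\omega(\log n)$.'' That is not enough: nothing in the standing hypotheses $K=o(n)$, $K\to\infty$, $K\ge 3\log n$ prevents, say, $K=n^{1/3}$, in which case $n^{1/2+o(1)}$ is not $o(K)$. The missing ingredient is that the very definition of $\lambda$ forces $K$ to be nearly linear in $n$. Since $\lambda=\tfrac{K^2(p-q)^2}{(n-K)q}$ converges to a positive constant and $p/q=\Theta(1)$, one has $\bigl(\tfrac{n-K}{K}\bigr)^2=O(np)$. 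The hypothesis $(np)^{\log^*(\nu)}=n^{o(1)}$ then gives $np=n^{o(1)}$, hence $\tfrac{n}{K}=n^{o(1)}$, i.e.\ $K=n^{1-o(1)}$. With this in hand, both $n^{o(1)}/K\to 0$ and $n^{1/2+o(1)}/K\to 0$ follow immediately, and likewise $\sqrt{3K\log n}=o(K)$. This derivation is short but indispensable; without it the residual terms you wrote down are not controlled.
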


\begin{proof}
See Appendix~\ref{BP-App.5.1}.
\end{proof}

\begin{Theorem}[Converse]
\label{The.7}
Suppose that $\lambda \leq \frac{1}{\ChiS e}$. Let $\TreeIter \in \mathbb{N}$ depend on $n$ such that $(2+np)^{\TreeIter} = n^{o(1)}$. Then, for any local estimator $\hat{C}$ of $x^{*}_{u}$ that has access to observations of the graph and side information limited to a neighborhood of radius  $\TreeIter$ from $u$,
\begin{align}
\frac{\mathbb{E}[|C^{*} \triangle \hat{C} |]}{K} \geq (1-\frac{K}{n}) \mathbb{E}^{2}[e^{\frac{U_{0}}{2}}] e^{\frac{-\lambda \ChiS e}{4}} - o(1)
\end{align} 
\end{Theorem}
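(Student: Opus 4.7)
The plan is to reduce the converse for local estimators on the random graph to the known MAP lower bound on the infinite random tree (Lemma~\ref{lower_up_map}), via the coupling lemma.

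First, write
\[
\mathbb{E}[|C^{*}\triangle\hat{C}|]=\sum_{u=1}^{n}\mathbb{P}(\hat{x}_{u}\neq x_{u}^{*}),
\]
where $\hat{x}_u=\mathbbm{1}\{u\in\hat{C}\}$. By the node-exchangeability of $\mathcal{G}(n,K,p,q)$ together with the fact that the local estimator is defined identically at every $u$, every summand is equal, so $\mathbb{E}[|C^{*}\triangle\hat{C}|]=n\,p_{e}^{(n)}$, with $p_{e}^{(n)}\triangleq\mathbb{P}(\hat{x}_{u}\neq x_{u}^{*})$ for any fixed $u$. Thus it suffices to lower-bound the per-node error of an arbitrary local estimator.

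Second, I would invoke the coupling lemma (Lemma~\ref{couple}): since $\hat{C}$ depends on $u$ only through $(\boldsymbol{G}_{u}^{\TreeIter},\boldsymbol{Y}_{u}^{\TreeIter})$ and $(2+np)^{\TreeIter}=n^{o(1)}$, there is a coupling under which $(\boldsymbol{G}_{u}^{\TreeIter},\boldsymbol{x}_{u}^{\TreeIter},\boldsymbol{Y}_{u}^{\TreeIter})=(T^{\TreeIter},\TreeLabels^{\TreeIter},\TreeSideInfo^{\TreeIter})$ with probability at least $1-n^{-1+o(1)}$. On the coupled event, the local rule applied to the graph produces the same output as the same rule applied to the tree data; hence $p_{e}^{(n)}\geq p_{e}^{t}(\hat{\tau})-n^{-1+o(1)}$, where $\hat{\tau}$ denotes the corresponding tree-side estimator of $\RootLabel$. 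Since MAP is the Bayes-optimal rule for tree-side inference, $p_{e}^{t}(\hat{\tau})\geq p_{e}^{t}(\hat{\tau}_{\text{MAP}})$, and because $\lambda\leq\frac{1}{\ChiS e}$, Lemma~\ref{lower_up_map} gives
\[
p_{e}^{t}(\hat{\tau}_{\text{MAP}})\geq\frac{K(n-K)}{n^{2}}\,\mathbb{E}^{2}\!\left[e^{U_{0}/2}\right]e^{-\lambda\ChiS e/4}.
\]

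Combining these bounds and dividing by $K$,
\[
\frac{\mathbb{E}[|C^{*}\triangle\hat{C}|]}{K}\;\geq\;\frac{n}{K}\!\left(\frac{K(n-K)}{n^{2}}\mathbb{E}^{2}[e^{U_{0}/2}]e^{-\lambda\ChiS e/4}-n^{-1+o(1)}\right)\;=\;\left(1-\frac{K}{n}\right)\mathbb{E}^{2}[e^{U_{0}/2}]e^{-\lambda\ChiS e/4}-\frac{n^{o(1)}}{K},
\]
and the residual $n^{o(1)}/K$ is $o(1)$ under the standing assumption $K\to\infty$ in the regime of interest (where $K$ grows faster than any $n^{o(1)}$ factor). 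The main obstacle, and the place where the proof needs to be careful, is the step that transfers the tree-side MAP lower bound to an arbitrary local estimator on the graph: one must (i) argue that any local rule on the graph is, up to the coupling failure of probability $n^{-1+o(1)}$, a rule on the tree, and (ii) that among all tree-side rules MAP is optimal, so the explicit lower bound of Lemma~\ref{lower_up_map} applies uniformly. Everything else is bookkeeping to check that the coupling error is swallowed by the $o(1)$ in the stated bound.
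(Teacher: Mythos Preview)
Your proof is correct and follows essentially the same route as the paper: reduce to the per-node error via exchangeability, couple the local neighborhood to the tree (Lemma~\ref{couple}), invoke MAP optimality on the tree, and apply the lower bound of Lemma~\ref{lower_up_map}. The only point the paper makes explicit that you gloss over is the justification that $n^{o(1)}/K=o(1)$: rather than assuming $K$ outpaces $n^{o(1)}$, the paper derives $K=n^{1-o(1)}$ from the standing hypotheses, namely $\lambda=\frac{K^{2}(p-q)^{2}}{(n-K)q}$ bounded and $\frac{p}{q}=\theta(1)$ give $(\frac{n-K}{K})^{2}=O(np)$, and $(2+np)^{\TreeIter}=n^{o(1)}$ with $\TreeIter\geq 1$ gives $np=n^{o(1)}$.
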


\begin{proof}
See Appendix~\ref{BP-App.5.2}.
\end{proof}

\begin{Corollary}
\label{Extension}
The same result holds for side information consisting of multiple features, i.e., constant $M \geq 1$. In other words, using the same notation as in Section~\ref{exact.lim.}, weak recovery is possible if and only if $\lambda > \frac{1}{\ChiS e}$ where $\ChiS = \sum_{\ell_{1}=1}^{L_{1}} \cdots \sum_{\ell_M=1}^{L_M} (\prod_{m=1}^{M} \frac{(\alpha_{+,\msubK}^{m})^{2}}{\alpha_{-,\msubK}^{m}})$.
\end{Corollary}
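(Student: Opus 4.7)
The plan is to reduce the $M$-feature case to the scalar case already settled in Theorems~\ref{The.6} and~\ref{The.7} by treating the entire side-information vector at each node as a single observation in a product alphabet. Concretely, for node $i$, define the new scalar side-information random variable $\bar y_i \triangleq (y_{i,1},\ldots,y_{i,M})$ taking values in $\{u_{\ell_1}^1,\ldots,u_{\ell_M}^M : \ell_m\in\{1,\ldots,L_m\}\}$, an alphabet of cardinality $\bar L\triangleq\prod_m L_m$. By the conditional-independence-of-features assumption, the new likelihoods factor as
\[
\bar\alpha_{+,(\ell_1,\ldots,\ell_M)}=\prod_{m=1}^M \alpha_{+,\ell_m}^m,\qquad
\bar\alpha_{-,(\ell_1,\ldots,\ell_M)}=\prod_{m=1}^M \alpha_{-,\ell_m}^m,
\]
so the associated chi-square-plus-one quantity collapses to exactly the product expression claimed in the corollary:
\[
\bar\Lambda=\sum_{\ell_1,\ldots,\ell_M}\frac{\bar\alpha_{+}^{\,2}}{\bar\alpha_{-}}=\sum_{\ell_1=1}^{L_1}\cdots\sum_{\ell_M=1}^{L_M}\prod_{m=1}^{M}\frac{(\alpha_{+,\ell_m}^{m})^2}{\alpha_{-,\ell_m}^{m}}.
\]

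Next, I would verify that every quantity appearing in the scalar proofs has an unchanged formal structure under this relabeling. Since the features are conditionally independent given $x_i$, the per-node LLR is
\[
h_i=\log\frac{\mathbb{P}(\bar y_i|x_i=1)}{\mathbb{P}(\bar y_i|x_i=0)}=\sum_{m=1}^{M}\log\frac{\alpha_{+,\ell_m}^{m}}{\alpha_{-,\ell_m}^{m}},
\]
and because $M$ is constant in $n$ and each feature is bounded-LLR, $h_i$ remains a bounded random variable. Therefore $\mathbb{E}[e^{U_0/2}]$ and $\mathbb{E}[e^{3U_0}]$ are finite constants, which is all that is used to control the moments in Lemmas~\ref{lower.upper}--\ref{Lem.lower.b.main}. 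The BP message recursion~\eqref{BP.1}--\eqref{BP.2} depends on side information only through $h_i$, and the coupling lemma~\ref{couple} applies equally to the joint-vector side information (the lemma's proof only uses graph structure, and $\bar y_i$ is simply an additional node label drawn from the label-conditional distribution, unchanged up to relabeling of the alphabet).

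With these checks in place, both the achievability argument of Theorem~\ref{The.6} and the converse argument of Theorem~\ref{The.7} apply verbatim with $\ChiS$ replaced by $\bar\Lambda$. The achievability direction yields $\mathbb{E}[|C^{*}\triangle\tilde C|]/K\to 0$ whenever $\lambda>1/(\bar\Lambda e)$; the converse yields a nonvanishing error rate for any local estimator whenever $\lambda\le 1/(\bar\Lambda e)$. Together these prove that weak recovery by local algorithms is possible if and only if $\lambda>1/(\bar\Lambda e)$, as stated.

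The main (minor) obstacle I anticipate is bookkeeping rather than a new idea: one must confirm that bounded-LLR for each individual feature implies bounded-LLR for the joint observation (which follows since $M$ is constant), and that the coupling construction in Lemma~\ref{couple} is indifferent to whether the per-node auxiliary data is a scalar or a fixed-length vector (which it is, since the lemma treats side information as an arbitrary label-conditional mark on each node). Once these two points are granted, no part of the scalar proof needs to be redone, and the replacement $\ChiS\mapsto\bar\Lambda$ propagates through every inequality in the chain from Lemma~\ref{upper_b} to Theorem~\ref{The.7}.
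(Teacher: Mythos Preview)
Your proposal is correct and matches the paper's intent: the corollary is stated without proof in the paper precisely because the $M$-feature case reduces to the scalar case by regarding the feature vector as a single observation in the product alphabet $\prod_m \{u_1^m,\ldots,u_{L_m}^m\}$, with the factorization of likelihoods (from conditional independence) yielding the displayed expression for $\ChiS$. Your checks that the per-node LLR remains bounded (since $M$ is constant) and that the coupling lemma is agnostic to the side-information alphabet are exactly the points that make the scalar arguments in Theorems~\ref{The.6}--\ref{The.7} go through verbatim.
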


\paragraph{Exact Recovery}
\label{BP_exact}\hfill

In Section~\ref{exact.lim.1}, it was shown that under certain conditions any estimator that achieves weak recovery on a random cluster size will also achieve exact recovery if followed by a local voting process. This can be used to demonstrate sufficient conditions for exact recovery under belief propagation. To do so, we employ a modified form of the algorithm in Table~\ref{Alg.1}, where in Step~\ref{Alg1-step3} for weak recovery we use the belief propagation algorithm presented in Table~\ref{tab}.







\begin{Theorem}
\label{The.8}
Suppose that $(np)^{\log^{*}(\nu)} = n^{o(1)}$ and $\lambda > \frac{1}{\ChiS e}$. Let $\delta \in (0,1)$ such that $\frac{1}{\delta} \in \mathbb{N}$, $ n\delta \in \mathbb{N}$ and $\lambda (1-\delta) > \frac{1}{\ChiS e}$. Let $\TreeIter = \bar{t}_{o} + \log^{*}(n) +2$, where $\bar{t}_{o}$ is a constant depending only on $\lambda(1-\delta)$ and $\ChiS$ as described in Lemma~\ref{Lem.lower.b.main}. Assume that~\eqref{Cond.1.The.3.new} holds. Let $\tilde{C}$ be the estimated community produced by the modified version of Algorithm~\ref{Alg.1} with $\TreeDepth=\TreeIter(n)$. Then $\mathbb{P}(\tilde{C} = C^{*}) \to 1$ as $n \to \infty$.
\end{Theorem}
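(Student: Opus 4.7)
The plan is to reduce the statement to Lemma~\ref{The.5.new}, which delivers exact recovery via Algorithm~\ref{Alg.1} as soon as (a) each sub-estimator $\hat{C}_k$ achieves the uniform weak recovery stated in~\eqref{suff._exact.eq.1.new}, and (b) the information-theoretic threshold~\eqref{Cond.1.The.3.new} holds. Condition~(b) is assumed outright, so the entire proof reduces to verifying~(a) when the weak-recovery subroutine in Step~\ref{Alg1-step3} is the belief-propagation algorithm of Table~\ref{tab}.

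I would fix a balanced partition $\{S_k\}$ and examine one withheld sub-problem $(\boldsymbol{G}_k,\boldsymbol{Y}_k)$ on $n'=n(1-\delta)$ vertices, with community $C^*_k=C^*\cap S_k^c$ of random size $K'\triangleq|C^*_k|$. By symmetry we may take $C^*$ to be a uniformly random $K$-subset, so $K'$ is hypergeometric with mean $K(1-\delta)$, yielding the concentration
\[
\Prob\bigl(|K'-K(1-\delta)|\geq\sqrt{3K\log n}\bigr)\leq n^{-1/2+o(1)}
\]
comfortably (in fact polynomially better). I would then apply BP of depth $\TreeIter$ to each sub-problem and take $\hat{C}_k$ to be the $\lceil K(1-\delta)\rceil$ nodes with the largest beliefs.

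Next I would verify the hypotheses of Theorem~\ref{The.6} on each sub-problem: the depth bound $(2+n'p)^{\TreeIter}\leq(2+np)^{\TreeIter}=n^{o(1)}$ inherits directly from the standing assumption $(np)^{\log^*(\nu)}=n^{o(1)}$; the random-community-size branch of Coupling Lemma~\ref{couple} is activated by the concentration above; and the effective signal-to-noise ratio on the subgraph is
\[
\lambda'=\frac{(K')^{2}(p-q)^{2}}{(n'-K')q}=\lambda(1-\delta)(1+o(1)),
\]
which exceeds $1/(\ChiS e)$ by the hypothesis $\lambda(1-\delta)>1/(\ChiS e)$ (precisely why $\bar{t}_o$ in the theorem is calibrated against $\lambda(1-\delta)$ rather than $\lambda$). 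Theorem~\ref{The.6} then gives $\mathbb{E}[|\hat{C}_k\triangle C^*_k|]/K\to 0$; Markov's inequality upgrades this to $\Prob(|\hat{C}_k\triangle C^*_k|>\delta K)\to 0$, and a union bound over the \emph{constant} number $1/\delta$ of sub-problems produces exactly~\eqref{suff._exact.eq.1.new}. Lemma~\ref{The.5.new} then closes the argument.

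The main obstacle is the bookkeeping that transfers the BP weak-recovery guarantee of Theorem~\ref{The.6} cleanly and simultaneously to every withheld sub-graph: one must check that the coupling hypotheses survive the restriction to $S_k^c$ with the correct random $K'$, that a single depth $\TreeIter(n)$ and threshold constant $\bar{t}_o$ work across all $1/\delta$ sub-problems (so that $\bar{t}_o=\bar{t}_o(\lambda(1-\delta),\ChiS)$ is the essential choice), and that the size-constrained output $|\hat{C}_k|=\lceil K(1-\delta)\rceil$ is controlled via the same symmetric-difference bound~\eqref{fix.est} already derived for the top-$K$ BP output. None of these are deep, but together they constitute the substance of the argument.
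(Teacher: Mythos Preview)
Your proposal is correct and follows essentially the same route as the paper: reduce to Lemma~\ref{The.5.new}, establish the concentration of $|C^*_k|$ around $K(1-\delta)$ (the paper does this via Hoeffding's convexity comparison with a binomial, you via hypergeometric tail bounds), invoke Theorem~\ref{The.6} on each sub-problem with effective SNR $\lambda(1-\delta)$, and union-bound over the $1/\delta$ sub-problems. Your write-up is in fact slightly more explicit than the paper's own proof about why $\bar{t}_o$ must be calibrated to $\lambda(1-\delta)$ and about the Markov step from $\mathbb{E}[|\hat{C}_k\triangle C^*_k|]=o(K)$ to~\eqref{suff._exact.eq.1.new}.
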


\begin{proof}
See Appendix~\ref{BP-App.5.3}.
\end{proof}

\paragraph{Comparison with Information Limits}
\label{compare.bounded}\hfill

Since $K \to \infty$ and the LLRs are bounded, the weak recovery result in Theorem~\ref{The.1.new} reduces to $\liminf_{n \to \infty} \frac{KD(P||Q)}{2\log(\frac{n}{K})} > 1$. This condition can be written as~\cite{recovering_one}:
\begin{align}
\lambda > C \frac{K}{n} \log(\frac{n}{K})
\end{align}
for some positive constant $C$. Thus, weak recovery only demands a vanishing $\lambda$. On the other hand, belief propagation achieves weak recovery for $\lambda > \frac{1}{ \ChiS e}$, where $\ChiS$ is greater than one and bounded as long as  LLR is bounded. This implies a gap between the information limits and belief propagation limits for weak recovery. Since $\ChiS \geq 1$, side information diminishes the gap.

For exact recovery, the following regime is considered:
\begin{align}
K = \frac{c n}{\log(n)}, \text{   } q = \frac{b \log^{2}(n)}{n},  \text{   } p = 2q
\end{align}
for fixed positive $b, c$ as $n \to \infty$. In this regime, $KD(P||Q) =O(\log(n)) $, and hence, weak recovery is always asymptotically possible. Also, $\lambda = c^{2}b$. Moreover, exact recovery is asymptotically possible if $cb(1 - \frac{1 + \log\log(2)}{\log(2)}) >1$. For belief propagation, we showed that exact recovery is possible if $cb(1 - \frac{1 + \log\log(2)}{\log(2)}) >1$ and $\lambda > \frac{1}{\ChiS e}$. 

Figure~\ref{fig.4} compares
the regions where weak recovery is achieved for belief propagation with
and without side information, as well as exact recovery with bounded-LLR
side information. Side information with $L=2$ is considered, where each node observes a
noisy label with cross-over probability $\alpha=0.3$. 
In Region~$1$, the belief propagation algorithm followed by voting achieves exact recovery with no need for side information. In Region~$2$,  belief propagation followed by voting achieves exact recovery with side information, but not without. In Region~$3$, weak recovery is achieved by belief propagation with no need for side information, but exact recovery is not asymptotically possible. In Region~$4$, weak recovery is achieved by the belief propagation as long as side information is available; exact recovery is not  asymptotically possible. In Region~$5$, exact recovery is asymptotically possible, but belief propagation without side information or with side information whose $\alpha =0.3$ cannot achieve even weak recovery (needs smaller $\alpha$, i.e., better side information).  In Region~$6$,  weak recovery,  but not exact recovery, is asymptotically possible via optimal algorithms, but belief propagation without side information or with side information whose $\alpha =0.3$ cannot achieve even weak recovery.
  
\begin{figure}
\centering
\includegraphics[width=\Figwidth]{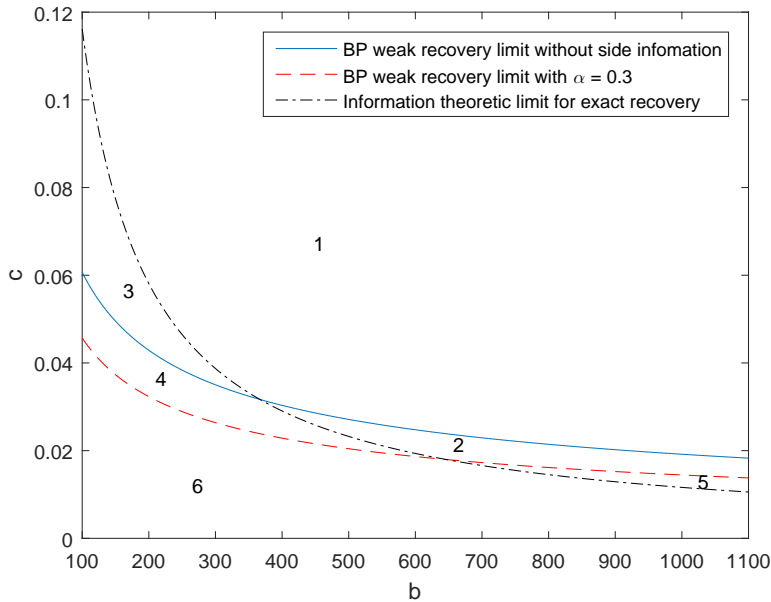}
\caption{Phase diagram with $K = c \frac{n}{\log(n)}$, $q = \frac{b\log^{2}(n)}{n}$, $p = 2q$ and $\alpha = 0.3$ for $b, c$ fixed as $n \to \infty$.}
\label{fig.4}
\end{figure}

Figure~\ref{fig.5} explores the effect of different values of $\alpha$, showing that as quality of side information improves (smaller $\alpha$), the gap between the belief propagation limit and the information limit decreases.

\begin{figure}
\centering
\includegraphics[width=\Figwidth]{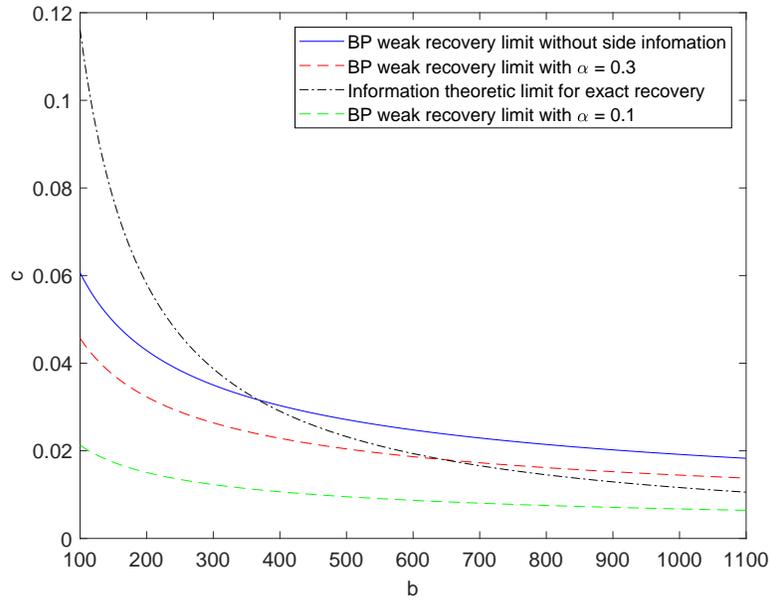}
\caption{Phase diagram with $K = c \frac{n}{\log(n)}$, $q = \frac{b\log^{2}(n)}{n}$, $p = 2q$ and $\alpha = 0.3, 0.1$ for $b, c$ fixed as $n \to \infty$.}
\label{fig.5}
\end{figure}

\paragraph{Application to Finite Data}
\label{numerical.bounded}\hfill

This section explores the relevance of asymptotic results, obtained in this paper, to finite data. The setup consists of a graph with $n=10^{4}, K = 100, \TreeDepth = 10$ and side information consisting of noisy labels with error probability $\alpha$. We study the performance of Algorithm~\ref{tab} on this data set. The following performance metric is used $\zeta = \frac{1}{2K}\sum_{i=1}^{n} |x_{i}^{*} - \hat{x}_{i}|$. The normalization by $2K$, and the fact that the algorithm is guaranteed to return a community of known size $K$, defines the range of the error metric $\zeta\in [0,1]$. Two scenarios are considered: First, $q = 5\times 10^{-4}$ and $p=10q$, which results in $\lambda \approx 0.041<\frac{1}{e}$. The results are reported for different values of $\alpha$ in Table~\ref{tb.1}, which show that when $\lambda < \frac{1}{\ChiS e}$, significant residual error exists. On the other hand, when $\lambda >> \frac{1}{\ChiS e}$, error occurrences are rare. In the second scenario,   $q = 5\times 10^{-4}$ and $p=80q$, resulting in $\lambda \approx 3.152 > \frac{1}{e}$. The results are reported for different values of $\alpha$ in Table~\ref{tb.2}. In this scenario, the performance of belief propagation without side information is much better compared with the first scenario because $\lambda > \frac{1}{e}$. The results also show that the performance is improved as $\alpha$ decreases.

\begin{table}
\caption{Performance of belief propagation for $\lambda < \frac{1}{e}$.}
\centering
\begin{tabular}{ |c|c|c|c|} 
 \hline
$\alpha$  & $\zeta$ w/o side & $\lambda \times \ChiS e \approx $ & $\zeta$ with side \\ \hline
0.1 & 0.95 & 0.903 & 0.75	 \\ \hline
0.01 & 0.95 & 10 & 0.4 \\ \hline
0.001 & 0.95 & 100 & 0.05 \\
\hline
\end{tabular}
\label{tb.1}
\\[0.2in]
\caption{Performance of belief propagation for $\lambda > \frac{1}{e}$.}
\centering
\begin{tabular}{ |c|c|c|c|} 
 \hline
$\alpha$  & $\zeta$ w/o side & $\lambda \times \ChiS e \approx $ & $\zeta$ with side \\ \hline
0.1 & 0.125 & 70 & 0.1 \\ \hline
0.01 & 0.125 & 840 & 0.03 \\ \hline
0.001 & 0.125 & 8551 & 0.02 \\
\hline
\end{tabular}
\label{tb.2}
\end{table}

\subsection{Unbounded LLR} 

The results of the previous section suggest that when $\ChiS \to \infty$ arbitrarily slowly, belief propagation achieves weak recovery for any fixed $\lambda >0$. In this section we prove this result for scalar side information with finite cardinality and $\ChiS$ that grows at a specific rate. 

The proof technique uses density evolution of $\Gamma_{i}^{t}$. More precisely, we assume that $\nu$, $\frac{\alpha_{+,\ell}}{\alpha_{-,\ell}}$, and $\lambda$ are constants independent of $n$, while $nq, Kq \overset{n\rightarrow\infty}{\xrightarrow{\hspace{0.2in}}} \infty$, which implies that $\frac{p}{q} \overset{n\rightarrow\infty}{\xrightarrow{\hspace{0.2in}}} 1$. This assumption allows us to precisely characterize the conditional probability density function of $\Gamma_{i}^{t}$ (asymptotically Gaussian), and hence, calculate the fraction of misclassified labels via the Q-function. Then, $\frac{n}{K}$ is allowed to grow and the behavior of the fraction of misclassified labels is studied as $\nu$ and the LLR of the side information grow.

Recall the definition of $\psi_{i}^{t}$ from~\eqref{psi.1} and $\Gamma_{i}^{t}$ from~\eqref{likeli} as well as the definitions of $Z_{0}^{t}$, $Z_{1}^{t}$, $U_{0}$ and $U_{1}$ defined directly afterward.

\begin{lemma}
\label{mean_variance_BP}
Assume $\lambda$, $\frac{\alpha_{+,\ell}}{\alpha_{-,\ell}}$ and $\nu$ are constants independent of $n$ while $nq, Kq \overset{n\rightarrow\infty}{\xrightarrow{\hspace{0.2in}}} \infty$. Then, for all $t \geq 0$:
\begin{align}
\mathbb{E}[Z_{0}^{t+1}] &= \frac{-\lambda}{2}b_{t} + o(1) \label{mean_1} \\
\mathbb{E}[Z_{1}^{t+1}] &= \frac{\lambda}{2}b_{t} + o(1) \label{mean_2}\\
\text{var}(Z_{0}^{t+1}) &= \text{var}(Z_{1}^{t+1}) = \lambda b_{t} + o(1) \label{var_1}
\end{align}
\end{lemma}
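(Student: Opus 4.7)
The plan is to start from the recursion $\psi_i^{t+1} = -K(p-q) + \sum_{j \in \mathcal{N}_i} M(h_j + \psi_j^t)$ of Lemma~\ref{recursive} and perform a careful Taylor expansion of $M$ in the small parameter $\delta \triangleq p/q - 1$. Since $\lambda$ is held fixed, $\nu$ is constant, and $Kq \to \infty$, rearranging $\lambda = K^2(p-q)^2/((n-K)q)$ yields $\delta = O(1/\sqrt{Kq}) \to 0$. Writing $r(x) \triangleq 1/(1+e^{-(x-\nu)}) \in [0,1]$ so that $M(x) = \log(1+\delta\, r(x))$, the uniform expansion $M(x) = \delta r(x) - \tfrac{1}{2}\delta^2 r(x)^2 + O(\delta^3)$ is valid for all $x$ once $\delta$ is small enough. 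The compound-Poisson structure of $\mathcal{N}_i$ then gives, conditioning on $\tau_i = 0$,
\[
\mathbb{E}[Z_0^{t+1}] = -K(p-q) + Kq\,\mathbb{E}[M(U_1+Z_1^t)] + (n-K)q\,\mathbb{E}[M(U_0+Z_0^t)],
\]
and the analogous expression for $\mathbb{E}[Z_1^{t+1}]$ is obtained by replacing $Kq$ with $Kp = Kq + K(p-q)$.

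The next step is to express every expectation under $(U_1,Z_1^t)$. Since $h_j$ and $\psi_j^t$ are conditionally independent log-likelihood ratios, the change of measure $\mathbb{E}[f(U_0,Z_0^t)] = \mathbb{E}[f(U_1,Z_1^t)\,e^{-(U_1+Z_1^t)}]$ applies. Combining this with the algebraic identities $r(x)e^{-x} = e^{-\nu}(1-r(x))$, $r(x)^2 e^{-x} = e^{-\nu}(r(x)-r(x)^2)$, and $e^\nu = (n-K)/K$, the $O(\delta)$ contributions from the two $\mathbb{E}[M(\cdot)]$ terms sum to $K(p-q)$, exactly cancelling the offset $-K(p-q)$, while the $O(\delta^2)$ contributions collapse to $-\tfrac{1}{2}Kq\delta^2\,\mathbb{E}[r(U_1+Z_1^t)]$. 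Using $Kq\delta^2 = \lambda(n-K)/K$ and $b_t = e^{\nu}\mathbb{E}[r(U_1+Z_1^t)]$, this equals $-\lambda b_t/2$, giving~\eqref{mean_1}. For~\eqref{mean_2}, the extra term coming from $Kp-Kq$ is $K(p-q)\,\mathbb{E}[M(U_1+Z_1^t)] = K(p-q)\delta\,\mathbb{E}[r(U_1+Z_1^t)] + o(1) = \lambda b_t + o(1)$, which shifts the mean to $+\lambda b_t/2$.

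For the variances~\eqref{var_1}, I will use the compound-Poisson variance identity $\text{var}(\sum_{j=1}^N X_j) = \mathbb{E}[N]\mathbb{E}[X^2]$, valid for i.i.d.\ $X_j$ and Poisson $N$. Since $M(x)^2 = \delta^2 r(x)^2 + O(\delta^3)$, the same change-of-measure manipulation yields $\text{var}(Z_0^{t+1}) = Kq\delta^2\,\mathbb{E}[r(U_1+Z_1^t)] + o(1) = \lambda b_t + o(1)$; the swap $Kq\to Kp$ shifts this only by $K(p-q)\,\mathbb{E}[M^2] = O(K(p-q)\delta^2) = o(1)$, so $\text{var}(Z_1^{t+1})$ has the same limit. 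The main bookkeeping obstacle throughout is controlling the $O(\delta^3)$ Taylor residual after it is amplified by $Kq$ or $(n-K)q$; here the scaling $\delta = O(1/\sqrt{Kq})$ combined with $n/K$ bounded (because $\nu$ is a fixed constant) gives $Kq\cdot \delta^3 = O((Kq)^{-1/2}) = o(1)$, which closes the argument.
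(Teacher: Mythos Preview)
Your proof is correct and follows essentially the same route as the paper: Taylor-expand $M(x)=\log(1+\delta r(x))$ to second order, use the change-of-measure identity (the paper's Lemma~\ref{Lem.3}) to collapse the $(U_0,Z_0^t)$ expectations into $(U_1,Z_1^t)$ expectations, observe that the first-order terms cancel $-K(p-q)$ and the second-order terms yield $\pm\lambda b_t/2$, and control the $O(\delta^3)$ remainder via $nq\delta^3=o(1)$. The only cosmetic difference is that the paper works directly with $\frac{1}{1+e^{-(x-\nu)}}$ and its powers rather than naming $r(x)$ and the algebraic identities $r(x)e^{-x}=e^{-\nu}(1-r(x))$, $r(x)^2e^{-x}=e^{-\nu}(r(x)-r(x)^2)$ explicitly, but the computations are identical.
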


\begin{proof}
See Appendix~\ref{BP-App.6}.
\end{proof}

The following lemma shows that the distributions of $Z_{1}^{t}$ and $Z_{0}^{t}$ are asymptotically Gaussian. 

\begin{lemma}
\label{Gaussian}
Assume $\lambda$, $\frac{\alpha_{+,\ell}}{\alpha_{-,\ell}}$ and $\nu$ are constants independent of $n$ while $nq, Kq \overset{n\rightarrow\infty}{\xrightarrow{\hspace{0.2in}}} \infty$. Let $\phi(x)$ be the cumulative distribution function (CDF) of a standard normal distribution. Define $v_{0} =0$ and $v_{t+1} =\lambda \mathbb{E}_{Z,U_{1}}[\frac{1}{e^{-\nu} + e^{-(\frac{v_{t}}{2} +\sqrt{v_{t}}Z)-U_{1}}}]$, where $Z \sim \mathcal{N}(0,1)$. Then, for all $t\geq 0$:
\begin{align}
&\sup_{x} \big| \mathbb{P}\big( \frac{Z_{0}^{t+1} + \frac{v_{t+1}}{2} }{\sqrt{v_{t+1}}} \leq x \big) - \phi(x) \big|  \to 0 \label{gauss.1} \\
&\sup_{x} \big| \mathbb{P}\big( \frac{Z_{1}^{t+1} - \frac{v_{t+1}}{2} }{\sqrt{v_{t+1}}} \leq x \big) - \phi(x) \big|  \to 0 \label{gauss.2}
\end{align}
\end{lemma}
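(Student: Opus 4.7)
The proof will proceed by induction on $t$, the base case $t=0$ being a special instance of the inductive step since $\psi_i^0 \equiv 0$ is deterministic, so $Z_0^0 = Z_1^0 = 0$, $v_0 = 0$, and the conclusion reduces to a Gaussian limit for the one-step update. The structural observation driving the argument is that, conditionally on $\tau_i$, the branching structure of the tree $T$ makes $\psi_i^{t+1} + K(p-q)$ a compound Poisson sum: for $\tau_i = 1$ there are $H_i \sim \text{Pois}(Kp)$ increments distributed as independent copies of $M(U_1 + Z_1^t)$ and $F_i \sim \text{Pois}((n-K)q)$ independent copies of $M(U_0 + Z_0^t)$; the $\tau_i = 0$ case is analogous with rates $Kq$ and $(n-K)q$. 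Because $0 \le M(x) \le \log(p/q) \to 0$, every increment is uniformly small, placing the sum squarely in the small-jump/large-rate regime in which compound Poisson laws are Gaussian in the limit.

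For the inductive step I would use the characteristic-function identity $\mathbb{E}[\exp(is\sum_{j=1}^N W_j)] = \exp(\mu(\phi_W(s) - 1))$ for $N \sim \text{Pois}(\mu)$, together with the Taylor expansion $\phi_W(s) = 1 + is\mathbb{E}[W] - \tfrac{s^2}{2}\mathbb{E}[W^2] + R(s)$ with $|R(s)| \le |s|^3 \mathbb{E}[|W|^3]/6$. Since $|W| \le \log(p/q)$ we have $\mathbb{E}[|W|^3] \le \log(p/q) \cdot \mathbb{E}[W^2]$, and the cubic contribution $\mu \mathbb{E}[|W|^3]$ vanishes because $\mu \mathbb{E}[W^2]$ stays finite (Lemma \ref{mean_variance_BP} identifies it, up to $o(1)$, as part of $v_{t+1}$) while $\log(p/q) = o(1)$. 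Combining the two Poisson sums and the centering by $-K(p-q)$, the log-characteristic function of $Z_s^{t+1}$ therefore converges pointwise to $i s m_s - s^2 v_{t+1}/2$, with $m_s = \pm v_{t+1}/2$ supplied by Lemma \ref{mean_variance_BP}. The identification $v_{t+1} = \lambda b_t$ together with the Gaussian-averaged representation in the statement then follows from the inductive hypothesis applied to $b_t$ via the substitution $Z_1^t = v_t/2 + \sqrt{v_t} Z$ with $Z \sim \mathcal{N}(0,1)$. Pointwise convergence of characteristic functions yields convergence in distribution by L\'evy's continuity theorem, and the Kolmogorov-distance statement~\eqref{gauss.1}--\eqref{gauss.2} then follows from Polya's theorem since the limit CDF is continuous.

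The main obstacle is propagating the inductive hypothesis, which only delivers weak convergence of the normalized $Z_s^t$, to the moment convergence $\mathbb{E}[W^k] \to (p/q-1)^k \mathbb{E}[\sigma^k(U_s + Z^* - \nu)]$ for $k=1,2$, where $\sigma(u) = (1+e^{-u})^{-1}$ and $Z^*$ is the Gaussian limit. I would handle this by exploiting the uniform Taylor expansion $M(x) = (p/q-1)\sigma(x-\nu) + O((p/q-1)^2)$ together with the Portmanteau theorem applied to the bounded continuous function $\sigma$; since $U_s$ is independent of $Z_s^t$ and takes values in a finite alphabet, the required joint weak convergence is automatic, yielding the moment limits. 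A secondary technicality is ensuring that the $o(1)$ terms in Lemma \ref{mean_variance_BP} remain uniform enough to feed into the characteristic-function expansion at each fixed $s$, but since only pointwise convergence in $s$ is required, this presents no genuine obstacle.
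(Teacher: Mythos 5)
Your proposal is correct and arrives at the result, but via a route that differs from the paper's in two places, so a comparison is warranted. For the normal approximation itself, the paper invokes a quantitative Berry--Esseen bound for compound Poisson sums (its Lemma~\ref{berry}, cited from \cite{Poisson_lemma}): it verifies directly that the normalized third absolute moment $\frac{nq\,\mathbb{E}[|X_i|^3]}{(nq\,\mathbb{E}[X_i^2])^{3/2}}\to 0$ and reads off the uniform Kolmogorov convergence to $\mathcal{N}(\pm\lambda b_t/2,\lambda b_t)$ in one step. You instead take the classical characteristic-function route: Taylor-expand $\phi_W$ around $0$, kill the cubic remainder using $|W|\le\log(p/q)=o(1)$ so that $\mu\,\mathbb{E}[|W|^3]\le\log(p/q)\cdot\mu\,\mathbb{E}[W^2]\to 0$, apply L\'evy's continuity theorem, then upgrade weak convergence to a uniform (Kolmogorov) statement via Polya's theorem since the limiting Gaussian CDF is continuous. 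Both arguments rest on the same essential estimate---the normalized third moment vanishes because the jump sizes are uniformly $o(1)$---but the paper's is quantitative (rate available, unused) and relies on an external lemma, whereas yours is self-contained and slightly longer. For the second ingredient, the inductive identification $\lambda b_t\to v_{t+1}$, the paper exploits the preservation of Kolmogorov distance under the strictly monotone map $z\mapsto(e^{-\nu}+e^{-(z+u)})^{-1}$ followed by a bounded-convergence argument; your use of the Portmanteau theorem against the bounded continuous test function $z\mapsto(e^{-\nu}+e^{-(z+u)})^{-1}$ (conditioned on each of the finitely many values of $U_1$) is a straightforwardly equivalent alternative, and perhaps slightly cleaner since it avoids invoking monotonicity. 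Your formulation of the intermediate moment limit as $\mathbb{E}[W^k]\to(p/q-1)^k\,\mathbb{E}[\sigma^k(\cdot)]$ is imprecise as written (the right-hand side vanishes; what converges is the product $\mu\,\mathbb{E}[W^k]$ after the two Poisson rates are combined using Lemma~\ref{Lem.3}), but this is a wording issue rather than a gap---the subsequent use is consistent with the correct scaled quantity and matches what Lemma~\ref{mean_variance_BP} establishes.
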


\begin{proof}
See Appendix~\ref{BP-App.7}.
\end{proof}

\begin{lemma}
\label{MAP_unbounded}
Assume $\lambda$, $\frac{\alpha_{+,\ell}}{\alpha_{-,\ell}}$ and $\nu$ are constants independent of $n$ while $nq, Kq \overset{n\rightarrow\infty}{\xrightarrow{\hspace{0.2in}}} \infty$.  Let $\hat{C}$ define the community recovered by the MAP estimator, i.e. $\hat{C} = \{i: \Gamma_{i}^{t} \geq \nu \}$. Then,
\begin{align}
\lim_{nq,Kq \to \infty} \lim_{n\to\infty} \frac{\mathbb{E}[\hat{C} \triangle C^{*}]}{K} &= \frac{n-K}{K} \mathbb{E}_{U_{0}}[Q(\frac{\nu+\frac{v_{t}}{2} - U_{0}}{\sqrt{v_{t}}})] \twocolbreak + \mathbb{E}_{U_{1}}[Q(\frac{-\nu+\frac{v_{t}}{2} + U_{1}}{\sqrt{v_{t}}})]
\end{align}
where $v_{0} =0$ and $v_{t+1} =\lambda \mathbb{E}_{Z,U_{1}}[\frac{1}{e^{-\nu} + e^{-(\frac{v_{t}}{2} +\sqrt{v_{t}}Z)-U_{1}}}]$, and $Z \sim \mathcal{N}(0,1)$.
\end{lemma}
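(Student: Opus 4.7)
The plan is a two-step reduction: first use the Coupling Lemma to translate the graph misclassification count into tree-rooted probabilities, and then apply the asymptotic Gaussianity from Lemma~\ref{Gaussian} to evaluate those probabilities in closed form.

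Step 1. I would start from the identity
\[
\mathbb{E}[|\hat{C}\triangle C^{*}|] = K\,\mathbb{P}(\RootEstimator = 0 \mid \RootLabel = 1) + (n-K)\,\mathbb{P}(\RootEstimator = 1 \mid \RootLabel = 0),
\]
with $\RootEstimator = \mathbbm{1}\{\LLRCroppedTree \geq \nu\}$. By Lemma~\ref{couple}, for each node $u$ the $\TreeIter$-neighborhood of $(\boldsymbol{G},\boldsymbol{x},\boldsymbol{Y})$ couples to $(\CroppedTree,\TreeLabels^t,\TreeSideInfo^t)$ with failure probability $n^{-1+o(1)}$; on the success event the graph belief $R_u^{\TreeIter}$ equals $\LLRCroppedTree$, so the graph and tree MAP rules coincide. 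The aggregate coupling-failure contribution $n \cdot n^{-1+o(1)} = n^{o(1)}$ becomes $o(K)$ after dividing by $K$, since $Kq\to\infty$ with $q$ bounded forces $K\to\infty$.

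Step 2. Writing $\LLRCroppedTree = h_0 + \psi_0^t$ as in Lemma~\ref{recursive}, where $h_0$ and $\psi_0^t$ are conditionally independent given $\RootLabel$ with $h_0\mid\{\RootLabel=j\}\sim U_j$ and $\psi_0^t\mid\{\RootLabel=j\}\sim Z_j^t$, the two tree error probabilities become $\mathbb{P}(Z_0^t + U_0 \geq \nu)$ and $\mathbb{P}(Z_1^t + U_1 < \nu)$. Since $\frac{\alpha_{+,\ell}}{\alpha_{-,\ell}}$ is constant in $n$, the distributions of $U_0,U_1$ are fixed; conditioning on $U_j$ and applying the uniform CDF convergence from Lemma~\ref{Gaussian} gives
\[
\mathbb{P}(Z_0^t + U_0 \geq \nu) \to \mathbb{E}_{U_0}\Big[Q\Big(\tfrac{\nu + v_t/2 - U_0}{\sqrt{v_t}}\Big)\Big],
\]
and analogously for the $\RootLabel = 1$ side. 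The recursion for $v_t$ is then checked by substituting $Z_1^t = v_t/2 + \sqrt{v_t}Z$ with $Z\sim\mathcal{N}(0,1)$ into~\eqref{b_t}, using $v_{t+1} = \lambda b_t + o(1)$ from Lemma~\ref{mean_variance_BP}.

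The main obstacle is the bookkeeping across two asymptotic regimes. Lemma~\ref{Gaussian} delivers CDF convergence that is uniform in $x$ but only asymptotic in $nq, Kq \to \infty$, while the coupling error is controlled by $n \to \infty$; the iterated limit $\lim_{nq,Kq\to\infty}\lim_{n\to\infty}$ in the statement is tailored precisely to this split. A secondary but unavoidable issue is upgrading the pointwise CDF convergence to expectation convergence through $U_j$, which is handled by bounded convergence since $U_j$ has a fixed finite alphabet; without this the two limits could not be interchanged with the $U_j$-integration.
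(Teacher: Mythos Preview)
Your Step~2 is exactly the paper's argument: write $\frac{\mathbb{E}[|\hat{C}\triangle C^*|]}{K}=\frac{n}{K}p_e^t=\frac{n-K}{K}p_{e,0}+p_{e,1}$ from~\eqref{error.general}, then invoke Lemmas~\ref{mean_variance_BP} and~\ref{Gaussian} to evaluate the two conditional tail probabilities as $Q$-functions after conditioning on the finite-alphabet $U_j$.

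Step~1 is where you diverge, and it is unnecessary. The paper treats this lemma as a purely \emph{tree} statement---the sentence immediately after it says the lemma ``characterizes the performance of the optimal estimator of the root of a random tree,'' and the transfer to the graph via Lemma~\ref{couple} is deferred to Section~\ref{BP.hidden.unbounded}. Hence the identity $\frac{\mathbb{E}[|\hat{C}\triangle C^*|]}{K}=\frac{n}{K}p_e^t$ is read directly from~\eqref{error.general} with no coupling argument. Routing through Lemma~\ref{couple} would import the extra hypothesis $(2+np)^{\TreeIter}=n^{o(1)}$, which this lemma does not assume, and your claim that the coupling-failure contribution $n^{o(1)}/K$ vanishes does not follow from $K\to\infty$ alone---that needs $K=n^{1-o(1)}$, which the proof of Theorem~\ref{The.6} establishes separately under its own hypotheses.
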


\begin{proof}
Let $p_{e,0}, p_{e,1}$ denote Type I and Type II errors for recovering $\tau_0$. Then the proof follows from Lemmas~\ref{mean_variance_BP} and~\ref{Gaussian}, and because 
\[
\frac{\mathbb{E}[\hat{C} \triangle C^{*}]}{K} = \frac{n}{K} p_{e}^{t}  = \frac{n-K}{K}  p_{e,0} + p_{e,1}.
\]
\end{proof}

Lemma~\ref{MAP_unbounded} applies for side information with cardinality $L \geq 1$, and hence, generalizes~\cite{Kadavankandy:SingleCommunity} which was limited to $L=2$. Now $\frac{n}{K}$ is allowed to grow and the behavior of the fraction of misclassified labels is studied as $\nu$ and the LLR of the side information grows without bound. The following lemma shows that if $\ChiS \to \infty$ such that  $| h_{\ell} | = | \log(\frac{\alpha_{+,\ell}}{\alpha_{-,\ell}}) | < \nu$, belief propagation achieves weak recovery for any fixed $\lambda >0$ upon observing the tree structure of depth $t^{*}+2$ and side information with finite $L$, where $t^{*} = \log^{*}(\nu)$ is the number of times the logarithm function must be iteratively applied to $\nu$ to get a result less than or equal to one.

\begin{lemma}
\label{The.weak.unbound}
Let $\hat{C}$ be the output of the MAP estimator for the root of a random tree of depth $t^{*}+2$ upon observing the tree structure and side information with cardinality $L < \infty$. Assume as $\frac{n}{K} \to \infty$, $\ChiS \to \infty$ such that $| h_{\ell}| < \nu$. Then for any fixed $\lambda >0$:
\begin{align}
& \lim_{\frac{n}{K} \to \infty} \lim_{nq,Kq \to \infty} \lim_{n\to\infty} \frac{\mathbb{E}[\hat{C} \triangle C^{*}]}{K} = 0
\end{align}
\end{lemma}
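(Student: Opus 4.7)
My plan is to combine Lemma~\ref{MAP_unbounded} with an analysis of the density-evolution recursion
\[
v_{t+1}=\lambda\,\mathbb{E}_{Z,U_1}\!\Bigl[\tfrac{1}{e^{-\nu}+e^{-(v_t/2+\sqrt{v_t}Z+U_1)}}\Bigr],
\]
in the spirit of Lemmas~\ref{Lem.lower.b.1}--\ref{Lem.lower.b.main}, but now tracking the dependence on $\ChiS=\ChiS(n)\to\infty$. By Lemma~\ref{MAP_unbounded},
\[
\frac{\mathbb{E}[|\hat C\triangle C^*|]}{K}=\tfrac{n-K}{K}\,\mathbb{E}_{U_0}\!\Bigl[Q\Bigl(\tfrac{\nu+v_t/2-U_0}{\sqrt{v_t}}\Bigr)\Bigr]+\mathbb{E}_{U_1}\!\Bigl[Q\Bigl(\tfrac{-\nu+v_t/2+U_1}{\sqrt{v_t}}\Bigr)\Bigr]
\]
with $t=t^*+2$. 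Using $|U_0|,|U_1|\le\nu$, $Q(x)\le\tfrac12 e^{-x^2/2}$, and the expansion $(\nu+v_t/2-U_0)^2/(2v_t)=v_t/8+(\nu-U_0)/2+(\nu-U_0)^2/(2v_t)$, one obtains $\tfrac{n-K}{K}\mathbb{E}_{U_0}[Q(\cdot)]\le \tfrac12 e^{\nu/2-v_t/8}\,\mathbb{E}[e^{U_0/2}]$. Since $\mathbb{E}[e^{U_0/2}]=\sum_\ell\sqrt{\alpha_{+,\ell}\alpha_{-,\ell}}\le 1$ by Cauchy--Schwarz and a symmetric bound handles the second term, it suffices to prove $v_t-4\nu\to\infty$ for $t=t^*+2$.

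To establish this bound on $v_t$, I would split the analysis into two regimes. In the \emph{growth} regime $v_t\ll\nu$, write $Y=v_t/2+\sqrt{v_t}Z+U_1$. Using $\tfrac{1}{e^{-\nu}+e^{-Y}}\ge\tfrac12 e^Y$ on $\{Y<\nu\}$, together with the fact that (by $|U_1|<\nu$ and standard Gaussian concentration) the complement event contributes only a $1-o(1)$ correction rather than destroying the exponential dependence, I expect to derive $v_{t+1}\ge \tfrac\lambda 2\ChiS\,e^{v_t}(1-o(1))$, the direct analogue of~\eqref{lower.eq.1}. Iterating from $v_0=0$ gives $v_1\gtrsim\lambda\ChiS/2$, $v_2\gtrsim \tfrac\lambda 2\ChiS\,e^{\lambda\ChiS/2}$, and in general $v_k$ grows as a $(k{-}1)$-fold iterated exponential with base $\lambda\ChiS/2$. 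Because $\ChiS\to\infty$ for any fixed $\lambda>0$, the iterate exceeds $\nu$ within $\log^*(\nu)$ steps. Entering the \emph{saturation} regime $v_t\gtrsim\nu$, on the high-probability event $\{Y>\nu\}$ the integrand is $\approx e^\nu$, yielding $v_{t+1}\gtrsim \tfrac\lambda 2 e^\nu$, which exceeds $4\nu$ by an exponentially large margin. Hence two additional iterations past $t^*$ suffice, matching the $t=t^*+2$ in the statement.

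The main technical obstacle is making the growth-regime lower bound rigorous: one must control the mass lost on atypical $Z$ for which $Y$ may exceed $\nu$, and on rare large realizations of $U_1$. The assumptions $L<\infty$ and $|h_\ell|<\nu$ ensure that $U_1$ is almost surely bounded by $\nu$ and has finite exponential moments of all orders (including the third-moment quantity analogous to $\mathbb{E}[e^{3U_0}]$ in the bounded-LLR proof), so Chernoff-type correction terms analogous to $\tfrac{\ChiS'}{\ChiS}e^{-\nu/2}$ in Lemma~\ref{Lem.lower.b.1} remain $o(1)$ and do not kill the super-exponential growth. A secondary subtlety is that Lemma~\ref{MAP_unbounded} is stated at fixed $\nu$, so the final step passes $\nu\to\infty$ after the inner limits; this is justified by uniformity of the Gaussian CLT in Lemma~\ref{Gaussian} on the relevant scales.
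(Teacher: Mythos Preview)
Your approach is correct and follows the same overall strategy as the paper: apply Lemma~\ref{MAP_unbounded} and then show that the density-evolution iterate $v_{t^*+2}$ dominates $\nu$, so that both $Q$-function terms vanish. Your reduction to $v_t-4\nu\to\infty$ via the explicit expansion of the $Q$-argument and the Cauchy--Schwarz bound $\mathbb{E}[e^{U_0/2}]\le 1$ is a nice clean step that the paper replaces by the cruder observation $\nu,h_\ell=o(v_{t^*+2})$, whence the $Q$-arguments are $\tfrac12\sqrt{v_{t^*+2}}(1+o(1))$.

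Where you diverge is in the recursion analysis. You mimic Lemmas~\ref{Lem.lower.b.1}--\ref{Lem.lower.b.main} with a growth/saturation split and one-sided bounds of the form $v_{t+1}\ge\tfrac{\lambda}{2}\ChiS e^{v_t}(1-o(1))$, and you correctly flag the third-moment correction $\ChiS'/\ChiS$ as the delicate point. The paper avoids this entirely: because $|h_\ell|<\nu$, one can write $v_{t+1}=\lambda\sum_\ell\frac{\alpha_{+,\ell}^2}{\alpha_{-,\ell}}\,\mathbb{E}_Z\bigl[(e^{-C_\ell\nu}+e^{-(v_t/2+\sqrt{v_t}Z)})^{-1}\bigr]$ with $C_\ell=1-h_\ell/\nu>0$, and then argue directly that this equals $\lambda\ChiS e^{v_t}(1+o(1))$ as $\kappa\to\infty$. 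This lets the paper compare $v_t$ to the \emph{purely deterministic} tower $w_0=0$, $w_{t+1}=e^{w_t}$: since $\ChiS\to\infty$, eventually $\lambda\ChiS(1+o(1))\ge 1$, so $v_{t+1}\ge e^{v_t}\ge e^{w_t}=w_{t+1}$ by induction; and $w_{t^*+1}\ge\nu$, $w_{t^*+2}\gg\nu$ by the definition of $\log^*$. No $\ChiS'$ ever appears. Your route works too, but the comparison-sequence trick is cleaner and sidesteps exactly the obstacle you identified.
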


\begin{proof}
See Appendix~\ref{BP-App.8}.
\end{proof}

Although Lemma~\ref{The.weak.unbound} is for $L$-ary side information, it focuses on one asymptotic regime of side information where $| h_{\ell} | < \nu$. To study other asymptotic regimes of side information, one example is considered for $L=2$, i.e., side information takes values in $\{0,1\}$. For constants $\eta$, $\beta$ $\in (0,1)$ and $\gamma > 0$, define: 
\begin{align}
\alpha_{+,1} &= \mathbb{P}(y = 1 | x^{*} = 1) = \eta \beta \nonumber \\
\alpha_{-,1} &= \mathbb{P}(y = 1 | x^{*} = 0) = \frac{\eta(1-\beta)}{(\frac{n-K}{K})^{\gamma}} \label{case.1.side}
\end{align}
Thus, $ \ChiS \to \infty$ and $h_{1} = (1+o(1))\gamma \log(\frac{n-K}{K})$ and $h_{2} = (1+o(1))\log(1- \eta \beta)$. For $0 < \gamma < 1$, Lemma~\ref{The.weak.unbound} shows that belief propagation achieves weak recovery for any fixed $\lambda >0$. This implies that belief propagation achieves weak recovery also for $\gamma \geq 1$ because  $\gamma \geq 1$ implies higher-quality side information. This generalizes the results obtained in~\cite{Kadavankandy:SingleCommunity} which was only for $\gamma =1$.

\subsubsection{Belief Propagation Algorithm for Community Recovery with Unbounded Side Information}
\label{BP.hidden.unbounded}\hfill

Lemma~\ref{MAP_unbounded} characterizes the performance of the optimal estimator of the root of a random tree upon observing the tree of depth $t$ and the side information. Similar to Section~\ref{BP.hidden.bounded}, the inference problem defined on the random tree is coupled to the problem of recovering a hidden community with side information. This is done via Lemma~\ref{couple}, which together with Equation~\eqref{fix.est} allow us to use Algorithm~\ref{tab} (as long as $(np)^{t} = n^{o(1)}$). Let $\tilde{C}$ be the output of Algorithm~\ref{tab}, i.e.,  the set of nodes with the $K$ largest $R_{i}^{t}$. Then, using Equation~\eqref{fix.est} we have: $\frac{\mathbb{E}[\tilde{C} \triangle C^{*}]}{K} \leq 2\frac{\mathbb{E}[\hat{C} \triangle C^{*}]}{K}$. Thus, the results of Lemma~\ref{The.weak.unbound} and the special case~\eqref{case.1.side} hold. This also suggests that belief propagation (Algorithm~\ref{tab}) achieves weak recovery for any $\lambda >0$ when $\ChiS$ grows with $\frac{n}{K}$ arbitrarily slowly.


\section{Conclusion}

This paper studies the effect of the quality and quantity of side information on the recovery of a hidden community of size $K=o(n)$. Under maximum likelihood detection, tight necessary and sufficient conditions are calculated for exact recovery, where we demonstrate how side information must evolve with $n$ in terms of either quantity or quality to improve the exact recovery threshold. A similar set of results are obtained for weak recovery. Under belief propagation, tight necessary and sufficient conditions for weak recovery are calculated when the LLRs are constant, and sufficient conditions when the LLRs vary with $n$. 
It is established that belief propagation followed by a local voting procedure achieves exact recovery, and its performance gap with respect to ML is reduced by side information.
Simulations on finite synthetic data-sets show that the asymptotic results of this paper are relevant in assessing the performance of belief propagation at finite $n$.


\appendices

\section{Auxiliary Lemmas For Information Limits}
\label{App.1}

\begin{lemma}
\label{Lem.1}
Define
\begin{align}
\hat{E}_{QU}(\theta,m_{1},m_{2}) &\triangleq \sup_{t \in \mathbb{R}} t\theta - m_{1}\log_{Q}(\mathbb{E}[e^{tL_{G}}]) \twocolbreak \includeonetwocol{}{\hspace{0.2in}} - m_{2}\log_{U}(\mathbb{E}[e^{tL_{S}}]) \nonumber \\
\hat{E}_{PV}(\theta,m_{1},m_{2}) &\triangleq \sup_{t \in \mathbb{R}} t\theta - m_{1}\log_{P}(\mathbb{E}[e^{tL_{G}}]) \twocolbreak \includeonetwocol{}{\hspace{0.2in}} - m_{2}\log_{V}(\mathbb{E}[e^{tL_{S}}])\nonumber
\end{align}
For $\theta \in [-m_{1}D(Q||P)-m_{2}D(U||V), m_{1}D(P||Q)+m_{2}D(V||U)]$, the following holds:
\begin{align}
\hat{E}_{QU}(\theta,m_{1},m_{2}) &= E_{QU}(\theta,m_{1},m_{2}) \label{eq.Lem1.1.1} \\ 
\hat{E}_{PV}(\theta,m_{1},m_{2}) &= E_{PV}(\theta,m_{1},m_{2}) \label{eq.Lem1.1.2}
\end{align}
Moreover, for $\delta: -m_{1}D(Q||P)-m_{2}D(U||V) \leq \theta \leq \theta+\delta \leq m_{1}D(P||Q)+m_{2}D(V||U)]$, the following holds:
\begin{align}
E_{QU}(\theta,m_{1},m_{2}) &\leq E_{QU}(\theta+\delta,m_{1},m_{2}) \twocolbreak \leq E_{QU}(\theta,m_{1},m_{2}) + \delta \label{eq.Lem1.2.1} \\
E_{PV}(\theta,m_{1},m_{2}) &\geq E_{PV}(\theta+\delta,m_{1},m_{2}) \twocolbreak \geq E_{PV}(\theta,m_{1},m_{2}) - \delta \label{eq.Lem1.2.2}
\end{align}
\end{lemma}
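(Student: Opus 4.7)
\textbf{Proof proposal for Lemma~\ref{Lem.1}.}

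The plan is to observe that the unconstrained sup in $\hat E_{QU}$ and $\hat E_{PV}$ is in fact attained inside the stated intervals $[0,1]$ and $[-1,0]$ respectively, by concavity of the objective plus a sign check on the derivative at the endpoints. For the second pair of inequalities, I would apply the standard Fenchel-duality trick of evaluating the sup at the optimizer of the other argument.

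For~\eqref{eq.Lem1.1.1}, let $F_{QU}(t)\triangleq t\theta-\psi_{QU}(t,m_1,m_2)$. Since $\log \mathbb{E}_Q[e^{tL_G}]$ and $\log \mathbb{E}_U[e^{tL_S}]$ are cumulant generating functions, they are convex in $t$, hence $F_{QU}$ is concave. Using $\mathbb{E}_Q[e^{L_G}]=\mathbb{E}_Q[P/Q]=1$ and $\mathbb{E}_U[e^{L_S}]=\mathbb{E}_U[V/U]=1$, one computes
\[
F_{QU}'(0)=\theta+m_1 D(Q\|P)+m_2 D(U\|V),\qquad F_{QU}'(1)=\theta-m_1 D(P\|Q)-m_2 D(V\|U),
\]
where I used $\mathbb{E}_Q[L_G e^{L_G}]=\mathbb{E}_P[L_G]=D(P\|Q)$ and likewise for $L_S$. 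Under the hypothesis on $\theta$, the first derivative is $\geq 0$ and the second is $\leq 0$, so by concavity the unconstrained maximizer lies in $[0,1]$; this gives $\hat E_{QU}=E_{QU}$. The argument for~\eqref{eq.Lem1.1.2} is symmetric: $F_{PV}(t)$ is concave, $F_{PV}'(-1)=\theta+m_1 D(Q\|P)+m_2 D(U\|V)\geq 0$ via $\mathbb{E}_P[L_G e^{-L_G}]=-D(Q\|P)$, and $F_{PV}'(0)=\theta-m_1 D(P\|Q)-m_2 D(V\|U)\leq 0$, so the maximizer lies in $[-1,0]$.

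For~\eqref{eq.Lem1.2.1}, let $t^{\star}\in[0,1]$ attain $E_{QU}(\theta,m_1,m_2)$ and $t'\in[0,1]$ attain $E_{QU}(\theta+\delta,m_1,m_2)$. Using $t^{\star}$ as a (suboptimal) choice at $\theta+\delta$ gives
\[
E_{QU}(\theta+\delta,m_1,m_2)\ge t^{\star}(\theta+\delta)-\psi_{QU}(t^{\star},m_1,m_2)=E_{QU}(\theta,m_1,m_2)+t^{\star}\delta \ge E_{QU}(\theta,m_1,m_2),
\]
since $t^{\star}\geq 0$ and $\delta\geq 0$. Conversely, using $t'$ at $\theta$ gives $E_{QU}(\theta,m_1,m_2)\ge t'\theta-\psi_{QU}(t',m_1,m_2)=E_{QU}(\theta+\delta,m_1,m_2)-t'\delta$, which rearranges to $E_{QU}(\theta+\delta,m_1,m_2)-E_{QU}(\theta,m_1,m_2)\le t'\delta\le \delta$ because $t'\le 1$. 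The analogous argument with optimizers in $[-1,0]$ yields~\eqref{eq.Lem1.2.2}: the lower bound on $E_{PV}(\theta+\delta,m_1,m_2)$ uses $t^{\star}\delta\ge -\delta$, and the upper bound uses $t'\le 0$.

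There is no real obstacle beyond being careful that the LLR moment generating functions are well-defined on the closed intervals involved and that $\mathbb{E}_Q[e^{L_G}]=\mathbb{E}_U[e^{L_S}]=1$ (so that the endpoint values $F_{QU}(1)=\theta$ and $F_{PV}(-1)=-\theta$ and the derivative identities are as claimed). Bounded support of $L_G$ and $L_S$ makes all the relevant expectations finite and lets us differentiate under the expectation, so the concavity-plus-endpoint-derivative argument runs without obstruction.
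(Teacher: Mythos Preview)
Your proof is correct and follows essentially the same route as the paper. For \eqref{eq.Lem1.1.1}--\eqref{eq.Lem1.1.2} both you and the paper compute $\psi_{QU}'$ (equivalently $F_{QU}'$) at $t=0,1$ and $\psi_{PV}'$ at $t=-1,0$ and invoke concavity to pin the maximizer in the required interval. For \eqref{eq.Lem1.2.1}--\eqref{eq.Lem1.2.2} the paper argues via monotonicity of $E_{QU}$ and $E_{PV}$ together with the identity $E_{PV}(\theta,m_1,m_2)=E_{QU}(\theta,m_1,m_2)-\theta$ (which follows from $\psi_{PV}(t,m_1,m_2)=\psi_{QU}(t+1,m_1,m_2)$), whereas you substitute the optimizer of one problem into the other; these are two equivalent ways of extracting the same Lipschitz/monotonicity information and neither is materially simpler than the other.
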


\begin{proof}
Equations~\eqref{eq.Lem1.1.1} and~\eqref{eq.Lem1.1.2} follow since $E_{PV}(\theta,m_{1},m_{2}) = E_{QU}(\theta,m_{1},m_{2}) - \theta$ and because:
\begin{align}
&E_{QU}(-m_{1}D(Q||P)-m_{2}D(U||V),m_{1},m_{2}) = 0 \nonumber \\
& E_{PV}(m_{1}D(P||Q)+m_{2}D(V||U),m_{1},m_{2}) =0 \nonumber \\
&\psi_{QU}'(m_{1},m_{2},0) = \psi_{PV}'(m_{1},m_{2},-1) = \twocolbreak \includeonetwocol{}{\hspace{1in}} -m_{1}D(Q||P) -m_{2}D(U||V) \nonumber \\ 
&\psi_{QU}'(m_{1},m_{2},1) = \psi_{PV}'(m_{1},m_{2},0) =  \twocolbreak \includeonetwocol{}{\hspace{1in}} 
m_{1}D(P||Q)  + m_{2}D(V||U)
\end{align}

Equations~\eqref{eq.Lem1.2.1} and~\eqref{eq.Lem1.2.2} follow since $E_{PV}(E_{QU})$ is decreasing (increasing) for $\theta \in [-m_{1}D(Q||P)-m_{2}D(U||V), m_{1}D(P||Q)+m_{2}D(V||U)]$.
\end{proof}


\begin{lemma}
\label{Lem.2} 
Assume $|L_{G}| \leq B$ and $|L_{S}| \leq B'$ for some positive constants $B$ and $B'$. Define $B'' = \max\{B,B'\}$. Then, for $t\in [-1,1]$ and $\eta \in [0,1]$,
\begin{align}
& \psi_{QU}''(m_{1},m_{2},t) \leq 2e^{5B''}\Big(  \min \big\{ m_{1}D(Q||P)+  m_{2}D(U||V) , \twocolbreak
\includeonetwocol{}{\hspace{1in}} m_{1}D(P||Q)+  m_{2}D(V||U) \big\} \Big)\label{eq.lem2.1.1} \\ 
& \psi_{QU}(m_{1},m_{2},t) \leq (m_{1}D(Q||P) + m_{2}D(U||V)) \twocolbreak \includeonetwocol{}{\hspace{1in}\times} (-t + e^{5B''}t^{2})  \label{eq.lem2.1.2} \\
&E_{QU}\Big(m_{1},m_{2},-(1-\eta)(m_{1}D(Q||P) + m_{2}D(U||V))\Big)  \twocolbreak \includeonetwocol{}{\hspace{0.2in}} \geq 
 \frac{\eta^{2}}{4e^{5B''}} (m_{1}D(Q||P) + m_{2}D(U||V)) \label{eq.lem2.1.3} \\ 
&\psi_{PV}''(m_{1},m_{2},t) \leq 2e^{5B''}\Big(  \min \big\{ m_{1}D(Q||P)+  m_{2}D(U||V) , \twocolbreak
\includeonetwocol{}{\hspace{1in}} m_{1}D(P||Q)+  m_{2}D(V||U) \big\} \Big) \label{eq.lem2.1.4} \\ 
&\psi_{PV}(m_{1},m_{2},t)\leq (m_{1}D(P||Q) + m_{2}D(V||U)) \twocolbreak \includeonetwocol{}{\hspace{1in}\times} (t + e^{5B''}t^{2}) \label{eq.lem2.1.5} \\ 
&E_{PV}(m_{1},m_{2},(1-\eta)(m_{1}D(P||Q) + m_{2}D(V||U))) \twocolbreak \includeonetwocol{}{\hspace{0.2in}} 
\geq \frac{\eta^{2}}{4e^{5B''}} (m_{1}D(P||Q) + m_{2}D(V||U)) \label{eq.lem2.1.6}
\end{align}
where $\psi_{QU}''(m_{1},m_{2},t)$ and $\psi_{PV}''(m_{1},m_{2},t)$ denote the second derivatives with respect to $t$.
\end{lemma}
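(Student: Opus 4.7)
The argument proceeds in three stages: first bound the second derivatives \eqref{eq.lem2.1.1} and \eqref{eq.lem2.1.4}, next integrate to obtain the pointwise envelopes \eqref{eq.lem2.1.2} and \eqref{eq.lem2.1.5}, and finally substitute these envelopes into the definitions of $E_{QU}, E_{PV}$ and optimize over $t$ to get \eqref{eq.lem2.1.3} and \eqref{eq.lem2.1.6}. Since $\psi_{QU}$ is a weighted sum of two cumulant generating functions, its second derivative decomposes as $\psi_{QU}''(t) = m_1\operatorname{Var}_{Q_t}(L_G) + m_2\operatorname{Var}_{U_t}(L_S)$, where $Q_t, U_t$ are the exponentially tilted measures with Radon--Nikodym derivatives proportional to $e^{tL_G}$ and $e^{tL_S}$. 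For $|t|\leq 1$ and $|L_G|\leq B\leq B''$, Jensen's inequality confines $\mathbb{E}_Q[e^{tL_G}]$ to $[e^{-B''}, e^{B''}]$, hence $dQ_t/dQ \leq e^{2B''}$ pointwise and $\operatorname{Var}_{Q_t}(L_G)\leq e^{2B''}\mathbb{E}_Q[L_G^2]$.

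The main technical obstacle is the scalar bound $\mathbb{E}_Q[L_G^2] \leq 2 e^{B''} D(Q\|P)$, which I would derive from the identity $\mathbb{E}_Q[e^{L_G}]=1$: combined with $\mathbb{E}_Q[L_G]=-D(Q\|P)$, this rearranges into $D(Q\|P) = \mathbb{E}_Q[e^{L_G}-1-L_G]$, and the elementary Taylor lower bound $e^y - 1 - y \geq y^2 e^{-|y|}/2$ (valid for all real $y$) delivers the claim. The analogous calculation under $P$ (from $\mathbb{E}_P[e^{-L_G}]=1$) yields $\mathbb{E}_P[L_G^2]\leq 2 e^{B''} D(P\|Q)$, and switching back to $Q$ via $dQ/dP\leq e^{B''}$ produces $\mathbb{E}_Q[L_G^2]\leq 2 e^{2B''} D(P\|Q)$, giving the minimum form in \eqref{eq.lem2.1.1} after summing the graph and side-information contributions and folding constants into the prefactor $e^{5B''}$. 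The same reasoning with the roles of $Q,U$ played by $P,V$ establishes \eqref{eq.lem2.1.4}. I expect this reduction from a tilted variance to a KL divergence to be the only nontrivial estimate in the lemma.

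The remaining bounds follow mechanically. For \eqref{eq.lem2.1.2}, Taylor's theorem with integral remainder at $t=0$ gives $\psi_{QU}(t) = t\psi_{QU}'(0) + \int_0^t (t-s)\psi_{QU}''(s)\,ds$; substituting $\psi_{QU}'(0) = -(m_1 D(Q\|P) + m_2 D(U\|V))$ and inserting the bound from \eqref{eq.lem2.1.1} yields the quadratic envelope on $t \in [-1,1]$, and the parallel expansion using $\psi_{PV}'(0) = m_1 D(P\|Q) + m_2 D(V\|U)$ gives \eqref{eq.lem2.1.5}. For \eqref{eq.lem2.1.3}, setting $D_Q \triangleq m_1 D(Q\|P) + m_2 D(U\|V)$ and plugging the envelope into the definition of $E_{QU}$ produces a lower bound $D_Q \sup_{t\in[0,1]}(\eta t - e^{5B''} t^2)$; the unconstrained maximizer $t^\star = \eta/(2 e^{5B''})$ lies in $[0,1]$ for $\eta\in[0,1]$ and evaluates to the claimed $\eta^2 D_Q/(4 e^{5B''})$. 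The mirror calculation on $t \in [-1,0]$ with maximizer $t^\star = -\eta/(2 e^{5B''})$ delivers \eqref{eq.lem2.1.6}.
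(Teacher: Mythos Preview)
Your proposal is correct and follows essentially the same route as the paper: bound the tilted variance by $e^{2B''}\mathbb{E}_Q[L_G^2]$, reduce the second moment to KL via the identity $\mathbb{E}_Q[e^{L_G}-1-L_G]=D(Q\|P)$ together with the Taylor lower bound $e^y-1-y\ge \tfrac{y^2}{2}e^{-|y|}$, then integrate and optimize. The only cosmetic difference is that the paper obtains the $D(P\|Q)$ branch of the minimum by the shift identity $\psi_{QU}(t)=\psi_{PV}(t-1)$ rather than your direct change of measure $dQ/dP\le e^{B''}$, but both arguments land on the same constant $2e^{5B''}$.
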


\begin{proof}
By direct computation of the second derivative,
\begin{align}
\label{eq.lem2.2}
\psi_{QU}''(m_{1},m_{2},t) & \leq m_{1} \frac{\mathbb{E}_Q[L_{G}^{2}e^{tL_{G}}]}{\mathbb{E}_Q[e^{tL_{G}}]} + m_{2} \frac{\mathbb{E}_U[L_{S}^{2}e^{tL_{S}}]}{\mathbb{E}_U[e^{tL_{S}}]} \twocolbreak
\overset{(a)}{\leq} m_{1} e^{2B} \mathbb{E}_{Q}[L_{G}^{2}] + m_{2} e^{2B'} \mathbb{E}_{U}[L_{S}^{2}]
\end{align}
where $(a)$ follows by the assumption that $|L_{G}| \leq B$, $|L_{S}| \leq B'$ and holds for all $t \in [-1,1]$. 

Now consider the following function: $\phi(x) = e^{x} - 1 - x$ restricted to $|x| \leq B$. It is easy to see that $\phi(x)$ is non-negative, convex with $\phi(0) = \phi'(0) = 0$ and $\phi''(x) = e^{x}$. Hence, $e^{-B} \leq \phi''(x) \leq e^{B}$. From Taylor's theorem with integral remainder~\cite{apostol1969calculus}, we get: $\frac{e^{-B}x^{2}}{2} \leq \phi(x) \leq \frac{e^{B}x^{2}}{2}$, which implies $x^{2} \leq 2e^{B}\phi(x)$. Using this result for $x = L_{G}$ and $x=L_{S}$:
\begin{align}
\mathbb{E}_{Q}[L_{G}^{2}] &\leq 2e^{B} \mathbb{E}_{Q}[\phi(L_{G})] = 2e^{B}D(Q||P) \label{eq.lem2.3.1} \\ 
\mathbb{E}_{U}[L_{S}^{2}] &\leq 2e^{B'} \mathbb{E}_{U}[\phi(L_{S})] = 2e^{B'}D(U||V) \label{eq.lem2.3.2}
\end{align}

Combining~\eqref{eq.lem2.2},~\eqref{eq.lem2.3.1},~\eqref{eq.lem2.3.2} lead to $\psi_{QU}''(m_{1},m_{2},t) \leq 2m_{1}e^{3B}D(Q||P) + 2m_{2}e^{3B'}D(U||V)$ for $t\in[-1,1]$. Similarly, it can shown for $t \in[0,2]$: $\psi_{QU}''(m_{1},m_{2},t) \leq 2m_{1}e^{5B}D(Q||P) + 2m_{2}e^{5B'}D(U||V)$.

On the other hand, using $\phi(x) = e^{-x} - 1 + x$ with $|x| \leq B$, it can be shown that $\psi_{PV}''(m_{1},m_{2},t) \leq 2m_{1}e^{5B}D(P||Q) + 2m_{2}e^{5B'}D(V||U)$, for $t \in[0,2]$. By definition, $\psi_{QU}(m_{1},m_{2},t) = \psi_{PV}(m_{1},m_{2},t-1)$, and hence,  $\psi_{QU}''(m_{1},m_{2},t) \leq 2m_{1}e^{5B}D(P||Q) + 2m_{2}e^{5B'}D(V||U)$, for $t \in[-1,1]$, which concludes the proof of~\eqref{eq.lem2.1.1}. The proof of~\eqref{eq.lem2.1.4} follows similarly.

Now since $\psi_{QU}(m_{1},m_{2},0) =0$ and $\psi_{QU}'(m_{1},m_{2},0) = -m_{1}D(Q||P) - m_{2}D(U||V)$, then using Taylor's theorem with integral remainder, we have for $t\in[-1,1]$:
\begin{align}
& \psi_{QU}(m_{1},m_{2},t) \nonumber \\
& = \psi_{QU}(m_{1},m_{2},0) + t\psi_{QU}'(m_{1},m_{2},0)  \twocolbreak \includeonetwocol{}{\hspace{0.2in}}
+ \int_{t}^{0} (\lambda - t) \psi_{QU}''(m_{1},m_{2},t) d\lambda \nonumber\\
&\overset{(a)}{\leq} -t(m_{1}D(Q||P) + m_{2}D(U||V)) \twocolbreak  \includeonetwocol{}{\hspace{0.2in}}
 + e^{5B''}(m_{1} D(Q||P) + m_{2}D(U||V))t^{2} \label{eq.lem2.4}
\end{align}
where $(a)$ follows using~\eqref{eq.lem2.1.1}. Similarly, it can be shown that:
\begin{align}
\label{eq.lem2.5}
& \psi_{PV}(m_{1},m_{2},t) \leq t(m_{1}D(P||Q) + m_{2}D(V||U)) \twocolbreak
  + e^{5B''}(m_{1} D(P||Q) + m_{2}D(V||U))t^{2}
\end{align}

Combining~\eqref{eq.lem2.4} and~\eqref{eq.lem2.5} concludes the proof of~\eqref{eq.lem2.1.2},~\eqref{eq.lem2.1.5}. Using~\eqref{eq.lem2.1.2} and~\eqref{eq.lem2.1.5}, we get:
\begin{align}
\label{eq.lem2.6} \nonumber
& E_{QU}\Big(m_{1},m_{2},-(1-\eta)(m_{1}D(Q||P) + m_{2}D(U||V))\Big)  \\ \nonumber
& \geq \sup_{t\in [0,1]} t(-(1-\eta)(m_{1}D(Q||P) + m_{2}D(U||V)))  \twocolbreak \includeonetwocol{}{\hspace{0.2in}} +
t(m_{1}D(Q||P) + m_{2}D(U||V)) \\ \nonumber
 & \includeonetwocol{}{\hspace{0.2in}} - e^{5B''}(m_{1} D(Q||P) + m_{2}D(U||V))t^{2} \\ 
&= \frac{\eta^{2}}{4e^{5B''}} (m_{1}D(Q||P) + m_{2}D(U||V))
\end{align}

Similarly,
\begin{align}
& E_{PV}\Big(m_{1},m_{2},(1-\eta)(m_{1}D(P||Q) + m_{2}D(V||U))\Big) \nonumber \\  
& \geq \frac{\eta^{2}}{4e^{5B''}} (m_{1} D(P||Q) + m_{2}D(V||U)) \label{eq.lem2.7}
\end{align}
Combining~\eqref{eq.lem2.6} and~\eqref{eq.lem2.7} concludes the proof of~\eqref{eq.lem2.1.3},~\eqref{eq.lem2.1.6}.
\end{proof}

\begin{lemma}
\label{rem.f1.f2}
$\eta_{3}(\rho, a,b,\beta) \geq \eta_{2}(\rho, a,b,\beta)$, for $0<\beta < \rho(a-b-bT)$.
\end{lemma}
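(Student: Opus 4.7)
The plan is to reduce the inequality $\eta_3 \geq \eta_2$ to a nonnegativity check at the right endpoint of the $\beta$-range via a concavity argument. Let $A := \rho(a-b)$, $\beta_{\max} := \rho(a-b-bT)$, and set $h(\beta) := \eta_3(\rho,a,b,\beta) - \eta_2(\rho,a,b,\beta)$. Direct substitution gives $h(0) = 0$. Computing derivatives yields
\[
h'(\beta) = \tfrac{1}{T}\log\!\bigl(\tfrac{A^2-\beta^2}{(\rho bT)^2}\bigr) - 1, \qquad h''(\beta) = -\tfrac{2\beta}{T(A^2-\beta^2)},
\]
so $h$ is strictly concave on $[0,\beta_{\max}]$. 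Hence $h$ lies above the secant chord from $(0,h(0))=(0,0)$ to $(\beta_{\max}, h(\beta_{\max}))$, and because $h(0)=0$, it suffices to prove $h(\beta_{\max})\geq 0$.

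At the right endpoint, the key identity $A - \beta_{\max} = \rho bT$ gives $\frac{A - \beta_{\max}}{T}\log\frac{A-\beta_{\max}}{\rho e bT} = \rho b\log(1/e) = -\rho b$, which cancels with the leading $\rho b$ in $\eta_2$ and reduces $\eta_2(\beta_{\max})$ to exactly $\beta_{\max}$. The required bound $h(\beta_{\max}) \geq 0$ therefore becomes $\eta_3(\beta_{\max}) \geq \beta_{\max}$. Writing $r := a/b$, so $T = \log r$, and setting $z := 2(r-1)/T - 1$, a straightforward algebraic manipulation reduces this to the single-variable inequality
\[
z \log z \;\geq\; (1 + T/2)(z - 1),\qquad T\geq 0.
\]

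Verifying this one-variable inequality is the main technical hurdle. Setting $F(T) := z \log z - (1 + T/2)(z - 1)$, one has $F(0) = 0$, and Taylor-expanding about $T=0$ yields $F(T) = T^4/72 + O(T^6)$, confirming positivity for small $T$. For the global statement I would show $F'(T) \geq 0$ for $T \geq 0$. Differentiation gives $F'(T) = z'(T)\bigl(\log z - T/2\bigr) - (z-1)/2$ with $z'(T) = 2(e^T(T-1)+1)/T^2 \geq 0$. Positivity of $\log z - T/2$ follows from Jensen's inequality applied to the convex map $e^x$ on $[0,T]$: one gets $(z+1)/2 = (e^T-1)/T \geq e^{T/2}$, hence $z > e^{T/2}$ for $T>0$. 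The delicate point is that both terms of $F'$ share identical leading Taylor expansions through order $T^2$, so $F'$ is of order $T^3$ near zero; any loose bound fails and a careful quantitative comparison will be required. Once $F(T)\geq 0$ is established, combining it with the concavity reduction finishes the proof.
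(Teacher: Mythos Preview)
Your overall reduction matches the paper's: define $h(\beta) = \eta_3 - \eta_2$, note $h(0)=0$, observe that $h$ is concave on the interval (the paper actually writes ``convex'' here, but your second-derivative computation $h''(\beta) = -2\beta/\bigl(T(A^2-\beta^2)\bigr)$ is the correct one, and concavity is what the endpoint argument needs), and reduce to checking $h(\beta_{\max}) \geq 0$. Both arguments then compute $\eta_2(\beta_{\max}) = \beta_{\max}$ via the identity $A-\beta_{\max}=\rho bT$, so the remaining task is $\eta_3(\beta_{\max}) \geq \beta_{\max}$.

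Here you diverge from the paper and take a much harder path. You substitute $\beta_{\max}$ into $\eta_3$, reduce to the one-variable inequality $z\log z \geq (1+T/2)(z-1)$ with $z = 2(e^T-1)/T - 1$, and then propose a monotonicity argument for $F(T)$ that you yourself flag as delicate and leave unfinished. The paper avoids this entirely by proving the stronger statement $\eta_3(\beta) \geq \beta$ for \emph{all} $\beta$: the function $g(\beta) := \eta_3(\beta) - \beta$ satisfies $g''(\beta) = 1/\bigl(T(A+\beta)\bigr) > 0$, hence is convex; its minimizer $\beta^{*} = \rho(aT-a+b)$ is found from $g'(\beta^{*})=0$, and direct substitution gives $g(\beta^{*})=0$. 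So $g\geq 0$ everywhere, in particular at $\beta_{\max}$, and your one-variable inequality drops out as an immediate corollary. Your proposal is correct in outline, but the final step is incomplete as written, and the paper's global-convexity trick for $\eta_3-\beta$ is the missing idea that closes it in three lines.
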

\begin{proof}
It is easy to show that $\eta_{3}(\rho, a,b,\beta) - \beta$ is convex in $\beta>0$. Thus, the optimal $\beta$ can be calculated as $\beta^{*} = \rho(aT - a +b)$ at which $\eta_{3}(\rho, a,b,\beta^{*}) - \beta^{*} = 0$. Thus, $\eta_{3}(\rho, a,b,\beta) \geq \beta$ for all $a \geq b>0$.

Furthermore, note that $\eta_{2}(\rho, a,b,\beta)$ is convex and increasing in $0<\beta < \rho(a-b-bT)$. By direct substitution, it can be shown that at $\beta = \rho(a-b-bT)$: $\eta_{2}(\rho, a,b,\beta) = \beta$. This implies that at $\beta = \rho(a-b-bT)$:
\begin{align}
\eta_{3}(\rho, a,b,\beta) - \eta_{2}(\rho, a,b,\beta) & = \eta_{3}(\rho, a,b,\beta) - \beta \geq 0 \label{beta_inc}
\end{align}
Using~\eqref{beta_inc} together with the fact that $\eta_{3}(\rho, a,b,\beta) - \eta_{2}(\rho, a,b,\beta)$ is convex in $\beta > 0$, leads to the conclusion that $\eta_{3}(\rho, a,b,\beta) \geq \eta_{2}(\rho, a,b,\beta)$ for $0<\beta < \rho(a-b-bT)$.
\end{proof}

\begin{lemma}
\label{Le.10}
Let $X_{1},\cdots,X_{n}$ be a sequence of i.i.d random variables. Define $\Gamma(t) = \log(\mathbb{E}[e^{tX}])$. Define $S = \sum_{i=1}^{n} X_{i}$, then for any $\epsilon >0$ and $a \in \mathbb{R}$:
\begin{align}
&\mathbb{P}\big(S \geq a-\epsilon\big) \geq e^{-\big(t^{*}a-n\Gamma(t^{*})+|t^{*}|\epsilon\big)} \Big( 1 - \frac{n\sigma^{2}_{\hat{X}}}{\epsilon^{2}} \Big) \label{up.1} \\
&\mathbb{P}\big(S \leq a+\epsilon\big) \geq e^{-\big(t^{*}a-n\Gamma(t^{*})+|t^{*}|\epsilon\big)} \Big( 1 - \frac{n\sigma^{2}_{\hat{X}}}{\epsilon^{2}} \Big)\label{up.2}
\end{align}
where $t^{*} = \arg\sup_{t\in \mathbb{R}} ta - \Gamma(t)$, $\hat{X}$ is a random variable with the same alphabet as $X$ but distributed according to $\frac{e^{t^{*}x}\mathbb{P}(x)}{\mathbb{E}_{X}[e^{t^{*}x}]}$ and $\mu_{\hat{X}}, \sigma^{2}_{\hat{X}}$ are the mean and variance of $\hat{X}$, respectively.
\end{lemma}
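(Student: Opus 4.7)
The plan is to establish both~\eqref{up.1} and~\eqref{up.2} simultaneously via a classical exponential-tilting (change-of-measure) argument followed by a Chebyshev lower bound on the tilted mass. The target event in both cases will be the common sub-event $\{|S-a|\leq \epsilon\}$, which is contained in both $\{S\geq a-\epsilon\}$ and $\{S\leq a+\epsilon\}$, so it suffices to lower-bound $\mathbb{P}(|S-a|\leq \epsilon)$.

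First, I would introduce the tilted distribution $\hat P(x) = e^{t^*x}\mathbb{P}(x)/\mathbb{E}[e^{t^*X}]$ and the corresponding i.i.d.\ product measure $\hat{\mathbb{P}}$. By the first-order condition of the Legendre transform defining $t^*$, the tilted per-sample mean is $\mu_{\hat X}$ with $\mathbb{E}_{\hat{\mathbb{P}}}[S]=a$, and $\mathrm{Var}_{\hat{\mathbb{P}}}(S)=n\sigma^2_{\hat X}$. The Radon--Nikodym derivative of $\mathbb{P}$ with respect to $\hat{\mathbb{P}}$ on the product space is
\[
\prod_{i=1}^{n}\frac{\mathbb{P}(X_i)}{\hat P(X_i)} \;=\; e^{\,n\Gamma(t^*)-t^*S}.
\]
Applying the change-of-measure identity to $A=\{|S-a|\leq \epsilon\}$ gives
\[
\mathbb{P}(A) \;=\; \mathbb{E}_{\hat{\mathbb{P}}}\!\left[\mathbf{1}_A\,e^{n\Gamma(t^*)-t^*S}\right].
\]

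On the event $A$ one has $|t^*S-t^*a|\leq |t^*|\epsilon$, so the Radon--Nikodym factor is bounded below deterministically by $e^{-(t^*a-n\Gamma(t^*)+|t^*|\epsilon)}$. Pulling this constant out of the expectation yields
\[
\mathbb{P}(|S-a|\leq \epsilon) \;\geq\; e^{-(t^*a-n\Gamma(t^*)+|t^*|\epsilon)}\,\hat{\mathbb{P}}(|S-a|\leq \epsilon).
\]
To close the argument I would lower-bound $\hat{\mathbb{P}}(|S-a|\leq \epsilon)$ by Chebyshev's inequality: since $S$ has tilted mean $a$ and tilted variance $n\sigma^2_{\hat X}$, the complementary event has tilted probability at most $n\sigma^2_{\hat X}/\epsilon^2$, producing the factor $1-n\sigma^2_{\hat X}/\epsilon^2$. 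Combining these steps and noting $A\subset\{S\geq a-\epsilon\}$ and $A\subset\{S\leq a+\epsilon\}$ delivers both inequalities with the same exponent.

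The argument is otherwise routine; the only piece of bookkeeping worth highlighting is the absolute value $|t^*|$ in the exponent, which is needed precisely so that a single bound holds uniformly in the sign of $t^*$. When $t^*\geq 0$ the binding worst-case for $e^{-t^*S}$ on $A$ is at $S=a+\epsilon$, while for $t^*<0$ it is at $S=a-\epsilon$; writing $|t^*|\epsilon$ captures both cases, which is what lets~\eqref{up.1} and~\eqref{up.2} share a common form. No deeper regularity of $\Gamma$ is required beyond the existence of the mgf at $t^*$ and finiteness of $\sigma_{\hat X}^2=\Gamma''(t^*)$, which is implicit in the hypotheses under which this lemma will be invoked.
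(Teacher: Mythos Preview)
Your proposal is correct and follows essentially the same route as the paper: lower-bound by the two-sided event $\{|S-a|\le\epsilon\}$, perform the exponential tilt at $t^*$, bound the Radon--Nikodym factor deterministically on that event to extract $e^{-(t^*a-n\Gamma(t^*)+|t^*|\epsilon)}$, and then apply Chebyshev under the tilted law (using that the first-order condition forces $n\mu_{\hat X}=a$) to obtain the $1-n\sigma_{\hat X}^2/\epsilon^2$ factor. The paper's version is written out integral-by-integral rather than in Radon--Nikodym language, but the argument is the same.
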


\begin{proof}
\begin{align}
&\mathbb{P}\big(S \geq a-\epsilon\big) \geq \mathbb{P}\big(a-\epsilon \leq S \leq a+\epsilon\big) \nonumber \\
= &  \int_{a-\epsilon \leq S \leq a+\epsilon}  \mathbb{P}(x_{1}) \cdots \mathbb{P}(x_{n}) dx_{1}\cdots dx_{n} \nonumber \\ 
\overset{(a)}{\geq} &  e^{-(ta-n\Gamma(t))-|t|\epsilon}  \int_{a-\epsilon \leq S \leq a+\epsilon} \prod_{i=1}^{n} \bigg( \frac{e^{tx_{i}}\mathbb{P}(x_{i})}{\mathbb{E}_{X}[e^{tx}]} dx_{i}\bigg) \nonumber \\ 
\overset{(b)}{=} &  e^{-(ta-n\Gamma(t))-|t|\epsilon} \mathbb{P}_{\hat{X}_{n}}\Big( a-\epsilon \leq S \leq a+\epsilon \Big) \nonumber \\
\overset{(c)}{\geq} &  e^{-(ta-n\Gamma(t))-|t|\epsilon} \Big( 1 - \frac{ n\sigma^{2}_{\hat{X}} + (n\mu_{\hat{X}} - a)^{2} }{\epsilon^{2}} \Big) \label{lem.ap}
\end{align}
where, for all finite $\mathbb{E}[e^{tX}]$, $(a)$ is true because $e^{t\sum x_i} \le e^{n(ta+|t|\epsilon)}$ over the range of integration, $(b)$ holds because $\frac{e^{t x}\mathbb{P}_X(x)}{\mathbb{E}_{X}[e^{tX}]}$ is a valid distribution~\cite{large_dev}, and $(c)$ holds by Chebyshev inequality and by defining $\mu_{\hat{X}}, \sigma^{2}_{\hat{X}}$ to be the mean and variance of $\hat{X}$, respectively.
%
%
Since $ta - n\Gamma(t)$ is concave in $t$, to find $t^*=\arg \sup_t (ta - n\Gamma(t))$ we set the derivative to zero, finding $a = n\frac{\mathbb{E}_{X}[xe^{t^{*}x}]}{\mathbb{E}[e^{t^{*}x}]}$. Also, by direct computation of $\mu_{\hat{X}}$, it can be shown that $\mu_{\hat{X}} = \frac{\mathbb{E}_{X}[xe^{tx}]}{\mathbb{E}[e^{tx}]}$. This means that at $t = t^{*}$, $n\mu_{\hat{X}} = a$. Thus, substituting back in~\eqref{lem.ap} leads to:
\begin{align}
&\mathbb{P}\big(S \geq a-\epsilon\big) \geq e^{-(t^{*}a-n\Gamma(t^{*}))-|t^{*}|\epsilon} \big( 1 - \frac{n\sigma^{2}_{\hat{X}}}{\epsilon^{2}} \big) \nonumber
\end{align}
This concludes the proof of~\eqref{up.1}. The proof of~\eqref{up.2} follows similarly.

In our model $\epsilon = \log^{\frac{2}{3}}(n)$ and $n\sigma^{2}_{\hat{X}}$ is $O(\log(n))$, and hence,
\begin{align}
&\mathbb{P}\big(S \geq a-\epsilon\big) \geq e^{-(t^{*}a-n\Gamma(t^{*}))-|t^{*}|\epsilon} \big( 1 - o(1) \big) \nonumber
\end{align}
which concludes the proof.
\end{proof}



\section{Necessity of Theorem~\ref{The.1.new}}
\label{App.3}

Let $\boldsymbol{x}^{*}_{\backslash i,j}$ represent the vector $\boldsymbol{x}^*$ with two coordinates $i,j$ removed. We wish to determine $x_i^*$ via an observation of $\boldsymbol{G},\boldsymbol{Y}$, as well as a node index $J$ and the expurgated vector of labels $\boldsymbol{x}^{*}_{\backslash i,J}$, where node $J$ is randomly and uniformly chosen from inside (outside) the community if node $i$ is outside (inside) the community, i.e., $\{j \; : \; x_j^* \neq x_i^*   \}$. Then:
\begin{align} 
& \frac{\mathbb{P}(\boldsymbol{G},\boldsymbol{Y},J,\boldsymbol{x}^{*}_{\backslash i,J} | x^{*}_{i} = 0)}{\mathbb{P}(\boldsymbol{G},\boldsymbol{Y},J,\boldsymbol{x}^{*}_{\backslash i,J} | x^{*}_{i} = 1)} \nonumber \\
 = & \frac{\mathbb{P}(\boldsymbol{G}|\boldsymbol{Y},J,\boldsymbol{x}^{*}_{\backslash i,J}, x^{*}_{i} = 0)}{\mathbb{P}(\boldsymbol{G}|\boldsymbol{Y},J,\boldsymbol{x}^{*}_{\backslash i,J}, x^{*}_{i} = 1)}  \twocolbreak  \includeonetwocol{}{\times}
\frac{\mathbb{P}(\boldsymbol{x}^{*}_{\backslash i,J}|J,x^{*}_{i}=0,\boldsymbol{Y})}{\mathbb{P}(\boldsymbol{x}^{*}_{\backslash i,J}|J,x^{*}_{i}=1,\boldsymbol{Y})} \frac{\mathbb{P}(\boldsymbol{Y},J|x^{*}_{i}=0)}{\mathbb{P}(\boldsymbol{Y},J|x^{*}_{i}=1)}\nonumber \\ 
 \overset{(a)}{=} & \frac{\mathbb{P}(\boldsymbol{G}|J,\boldsymbol{x}^{*}_{\backslash i,J}, x^{*}_{i} = 0)}{\mathbb{P}(\boldsymbol{G}|J,\boldsymbol{x}^{*}_{\backslash i,J}, x^{*}_{i} = 1)}  \frac{\mathbb{P}(y_{i,1}\cdots,y_{i,M}|x^{*}_{i}=0)}{\mathbb{P}(y_{i,1}\cdots,y_{i,M}|x^{*}_{i}=1)}  \twocolbreak \includeonetwocol{}{\times}
\frac{\mathbb{P}(y_{J,1},\cdots,y_{J,M}|J,x^{*}_{i}=0)}{\mathbb{P}(y_{J,1},\cdots,y_{J,M}|J,x^{*}_{i}=1)} \nonumber \\ 
= & \bigg(  \prod_{ \substack{ k\neq i,J \\ x^{*}_{k}=1 }}  \frac{Q(G_{ik})P(G_{Jk})}{P(G_{ik})Q(G_{Jk})} \bigg) \bigg( \prod_{m=1}^{M} \frac{U(y_{i,m})V(y_{J,m})}{V(y_{i,m})U(y_{J,m})} \bigg) \label{hyp.}
\end{align}
where $(a)$ holds because $\boldsymbol{G}$ and $\boldsymbol{Y}$ are independent given the labels, $\mathbb{P}(J|x^{*}_{i}=0) = \mathbb{P}(J|x^{*}_{i}=1)$ and $\mathbb{P}(\boldsymbol{x}^{*}_{\backslash i,J}|J,x^{*}_{i}=0,\boldsymbol{Y}) = \mathbb{P}(\boldsymbol{x}^{*}_{\backslash i,J}|J,x^{*}_{i}=1,\boldsymbol{Y})$.

Denote the set of nodes inside the community, excluding $i,J$, with ${\cal K} = \{ k \neq i,J \; : \; x_k^*=1\}$, and construct a vector from four sets of random variables as follows:
\[
T \triangleq \Big[\{y_{i,m}\}_{m=1}^{M}, \{y_{J,m}\}_{m=1}^{M}, \{G_{ik}\}_{k\in\cal K},\{G_{Jk}\}_{k\in\cal K}\Big].
\]
where the members of each set appear in the vector in increasing order of their varying index. From~\eqref{hyp.}, $T$ is a sufficient statistic of $(G,\boldsymbol{Y},J,\boldsymbol{x}^{*}_{\backslash i,J})$ for testing $x^{*}_{i} \in \{0,1\}$. Moreover, conditioned on $x^{*}_{i} = 0$, $T$ is distributed according to $U^{\otimes M}V^{\otimes M}Q^{\otimes(K-1)}P^{\otimes(K-1)}$ and conditioned on $x^{*}_{i} = 1$, $T$ is distributed according to $V^{\otimes M}U^{\otimes M }P^{\otimes(K-1)}Q^{\otimes(K-1)}$. Then, for any estimator $\hat{\boldsymbol{x}}(\boldsymbol{G},\boldsymbol{Y})$ achieving weak recovery:
\begin{align}
\twocolAlignMarker \mathbb{E}[d(\hat{\boldsymbol{x}},\boldsymbol{x}^{*})] \twocolnewline & = \sum_{i=1}^{n} \mathbb{P}(x_{i}^{*} \neq \hat{x}_{i})\nonumber \\
 & \geq \sum_{i=1}^{n} \min_{\tilde{x}_{i}(\boldsymbol{G},\boldsymbol{Y})} \mathbb{P}(x_{i}^{*} \neq \tilde{x}_{i})\nonumber \\
& \geq \sum_{i=1}^{n} \min_{\tilde{x}_{i}(\boldsymbol{G},\boldsymbol{Y},J,\boldsymbol{x}^{*}_{\backslash i,J})} \mathbb{P}(x_{i}^{*} \neq \tilde{x}_{i}) \nonumber \\
& = n  \min_{\tilde{x}_{i}(\boldsymbol{G},\boldsymbol{Y},J,\boldsymbol{x}^{*}_{\backslash i,J})}   \mathbb{P}(x_{i}^{*} \neq \tilde{x}_{i}) \nonumber \\
& = n \min_{\tilde{x}_{i}(\boldsymbol{G},\boldsymbol{Y},J,\boldsymbol{x}^{*}_{\backslash i,J})} \bigg(\frac{K}{n} \mathbb{P}(x_{i}^{*} \neq \tilde{x}_{i} | x_{i}^{*} = 1 ) \twocolbreak \hspaceonetwocol{0in}{1.5in}+ \frac{n-K}{n} \mathbb{P}(x_{i}^{*} \neq \tilde{x}_{i} | x_{i}^{*} = 0 )  \bigg) \nonumber \\
& \geq n \min_{\tilde{x}_{i}(\boldsymbol{G},\boldsymbol{Y},J,\boldsymbol{x}^{*}_{\backslash i,J})} \bigg(\frac{K}{n} \mathbb{P}(x_{i}^{*} \neq \tilde{x}_{i} | x_{i}^{*} = 1 ) \twocolbreak \hspaceonetwocol{0in}{1.5in} + \frac{K}{n} \mathbb{P}(x_{i}^{*} \neq \tilde{x}_{i} | x_{i}^{*} = 0 ) \bigg) \nonumber \\
& = K \min_{\tilde{x}_{i}(\boldsymbol{G},\boldsymbol{Y},J,\boldsymbol{x}^{*}_{\backslash i,J})} \big(\mathbb{P}(x_{i}^{*} \neq \tilde{x}_{i} | x_{i}^{*} = 1 ) \twocolbreak \hspaceonetwocol{0in}{1.5in} +  \mathbb{P}(x_{i}^{*} \neq \tilde{x}_{i} | x_{i}^{*} = 0 ) \big) \label{type.1.2}
\end{align}
Since by assumption, $\mathbb{E}[d(\hat{\boldsymbol{x}},\boldsymbol{x}^{*})] = o(K)$, then by~\eqref{type.1.2}, the sum of Type-I and II probabilities of error is $o(1)$, which implies that as $n \to\infty$~\cite{polyanski}:
\begin{align}
&TV\Big( U^{\otimes M} V^{\otimes M} Q^{\otimes(K-1)} P^{\otimes(K-1)} , \twocolbreak \hspaceonetwocol{0in}{1in}V^{\otimes M} U^{\otimes M} P^{\otimes(K-1)} Q^{\otimes(K-1)}\Big) \to 1  \label{eq.1}
\end{align}
where $TV(\cdot,\cdot)$ is the total variational distance between probability distributions. By properties of the total variational distance and KL divergence~\cite{polyanski}, for any two distributions $\tilde{P}, \tilde{Q}$: $D(\tilde{P}|| \tilde{Q}) \geq \log(\frac{1}{2(1-TV(\tilde{P}|| \tilde{Q}))})$. Hence, using~\eqref{eq.1}:
\begin{align}
& D\Big(U^{\otimes M} V^{\otimes M} Q^{\otimes(K-1)} P^{\otimes(K-1)} \Big|\Big| \twocolbreak \hspaceonetwocol{0in}{1in}
V^{\otimes M} U^{\otimes M} P^{\otimes(K-1)} Q^{\otimes(K-1)} \Big) \nonumber \\
& = M\Big(D(U||V) + D(V||U)\Big) \twocolbreak \hspaceonetwocol{0in}{1in}
+ (K-1) \Big(D(P||Q) + D(Q||P)\Big) \to \infty \label{Con.1.gen}
\end{align}
Since the LLRs are bounded by assumption, using Lemma~\ref{Lem.2} in Appendix~\ref{App.1}, 
\begin{align}
\twocolAlignMarker (K-1)D(P||Q)+MD(V||U) \twocolnewline & = E_{QU}\Big((K-1)D(P||Q)+MD(V||U), K-1, M\Big) \nonumber \\ 
& \geq E_{QU}\Big(-\frac{(K-1)D(Q||P)+MD(U||V)}{2}, K-1, M\Big) \nonumber \\
& \geq C\Big((K-1)D(Q||P)+MD(U||V)\Big) 
\end{align}
for some positive constant $C$. Substituting in~\eqref{Con.1.gen} leads to:
\begin{equation}
MD(V||U) + (K-1)D(P||Q) \to\infty
\end{equation}
which proves the first condition in~\eqref{Eq.Hajek}. 

$\boldsymbol{x}^{*}$ is drawn uniformly from the set $\{ \boldsymbol{x} \in \{0,1\}^{n}: w(\boldsymbol{x}) = K \}$ and $w(\boldsymbol{x}) = \sum_{j=1}^{n} x_{j}$; therefore $x_{i}$'s are individually Bernoulli-$\frac{K}{n}$. Then, for any estimator $\hat{\boldsymbol{x}}(\boldsymbol{G},\boldsymbol{Y})$ achieving weak recovery we have the following, where $H(\cdot)$ and $I(\cdot;\cdot)$ are the entropy and mutual information of their respective arguments. 
\begin{align}
I(\boldsymbol{G},\boldsymbol{Y}; \boldsymbol{x}^{*}) & \overset{(a)}{\geq} I(\hat{\boldsymbol{x}}(\boldsymbol{G},\boldsymbol{Y}); \boldsymbol{x}^{*}) \twocolbreak 
\overset{(b)}{\geq}  \min_{\mathbb{E}[d(\tilde{\boldsymbol{x}},\boldsymbol{x}^{*})]\leq \epsilon_{n}K} I(\tilde{\boldsymbol{x}}(\boldsymbol{G},\boldsymbol{Y}); \boldsymbol{x}^{*})  \\ \nonumber
& \geq H(\boldsymbol{x}^{*}) - \max_{\mathbb{E}[d(\tilde{\boldsymbol{x}},\boldsymbol{x}^{*})]\leq \epsilon_{n}K} H(d(\tilde{\boldsymbol{x}},\boldsymbol{x}^{*})) \\ 
& \overset{(c)}{=} \log\big({{n}\choose{K}}\big) - nh(\frac{\epsilon_{n}K}{n}) \twocolbreak 
\overset{(d)}{\geq} K\log(\frac{n}{k})(1+o(1))\label{eq.2}
\end{align}
where $(a)$ is due to the data processing inequality~\cite{polyanski}, in $(b)$ we defined $\epsilon_{n} = o(1)$, $(c)$  is due to the fact that $\max_{\mathbb{E}(w(X)) \leq pn} H(X) = nh(p)$ for any $ p \leq \frac{1}{2}$~\cite{infor_limits}, where $h(p) \triangleq -p\log(p) - (1-p)\log(1-p)$, and $(d)$ holds because ${{n}\choose{K}} \geq (\frac{n}{K})^{K}$, the assumption $K = o(n)$ and the bound $h(p) \leq -p\log(p) + p$ for $p \in [0,1]$. Denoting by $P(\boldsymbol{G},\boldsymbol{Y},\boldsymbol{x}^{*})$ the joint distribution of the graph, side information, and node labels, and using~\cite{polyanski}:
\begin{align}
& I(\boldsymbol{G},\boldsymbol{Y}; \boldsymbol{x}^{*}) \twocolbreak
= \min_{\tilde{Q}} D\Big(\mathrm{P}(\boldsymbol{G},\boldsymbol{Y}|\boldsymbol{x}^{*}) \;\big|\big|\; \tilde{\mathrm{Q}} \; \big| \;\mathrm{P}(\boldsymbol{x}^{*})\Big) \nonumber\\ 
& \leq D\Big(\mathrm{P}(\boldsymbol{G}|\boldsymbol{x}^{*}) \prod_{m=1}^{M}(\mathrm{P}(\boldsymbol{y}_{m}|\boldsymbol{x}^{*})) \Big|\Big| Q^{\otimes{{n}\choose{2}}} \prod_{m=1}^{M}(U^{\otimes n}) \; \big| \mathrm{P}(\boldsymbol{x}^{*})\Big) \nonumber\\ 
& = {{K}\choose{2}} D(P||Q) + K M D(V||U) \label{eq.3}
\end{align}
Combining~\eqref{eq.2} and~\eqref{eq.3}:
\begin{equation}
\label{Con.2.gen}
\liminf_{n\to\infty} (K-1)D(P||Q)+ 2M D(V||U) \geq 2\log(\frac{n}{K})
\end{equation}
which proves the second condition in~\eqref{Eq.Hajek}. 


\section{Sufficiency of Theorem~\ref{The.1.new}}
\label{App.4}

The sufficient conditions for weak recovery is derived for the maximum likelihood (ML) detector. Define:
\begin{align} 
\label{def.e.1}
e_{1}(S,T) & \triangleq \sum_{i\in S} \sum_{j\in T}  L_{G}(i,j) \\ 
e_{2}(S) & \triangleq \sum_{i \in S} \sum_{m=1}^{M} L_{S}(i,m)
\end{align}
for any subsets $S,T \subset \{1,\cdots,n\}$. Using these definitions, the maximum likelihood detection can be characterized as follows:
\begin{equation}
\label{ML_rule.1}
\hat{C} = \hat{C}_{ML}= \underset{C \subset \{1,\cdots,n\}\atop |C|=K}{\arg\max} \big(e_{1}(C,C) + e_{2}(C)\big)\ 
\end{equation}

Let $R \triangleq |\hat{C} \cap C^{*}|$, then $|\hat{C} \triangle C^{*}| = 2(K-R)$, and hence, to show that maximum likelihood achieves weak recovery, it is sufficient to show that there exists positive $\epsilon = o(1)$, such that $\mathbb{P}\big(R \leq (1-\epsilon)K\big) = o(1)$. 

To bound the error probability of ML, we characterize the separation of its likelihood from the likelihood of the community $C^{*}$.
\begin{align}
& e_{1}(\hat{C},\hat{C}) + e_{2}(\hat{C}) - \big(e_{1}(C^{*},C^{*}) + e_{2}(C^{*})\big) \nonumber \\
& = e_{1}(\hat{C}\backslash C^{*},\hat{C}\backslash C^{*}) + e_{1}(\hat{C}\backslash C^{*},\hat{C} \cap C^{*}) - e_{1}(C^{*}\backslash \hat{C},C^{*}) + \twocolbreak  \hspaceonetwocol{0in}{0.15in}
e_{2}(\hat{C}\backslash C^{*}) - e_{2}(C^{*}\backslash \hat{C}) 
\end{align}
By definition $|C^{*}\backslash \hat{C}| = |\hat{C}\backslash C^{*}| = K - R$. Thus, for any $0\leq r \leq K-1$,
\begin{align}
&\mathbb{P}(R=r) \nonumber \\
& \le {\mathbb P}\Big(\big\{\hat{C}: |\hat{C}| = K, |\hat{C}\cap C^{*}| = r, \twocolbreak \hspaceonetwocol{0in}{0.15in}
e_{1}(\hat{C},\hat{C}) + e_{2}(\hat{C}) - e_{1}(C^{*},C^{*}) - e_{2}(C^{*}) \geq 0   \big\}\Big) \nonumber \\ 
& =  {\mathbb P}\Big(\big \{ S \subset C^{*}, T \subset (C^{*})^{c}: |S|=|T|= K-r, \twocolbreak \hspaceonetwocol{0in}{0.15in}
e_{1}(S,C^{*}) + e_{2}(S) \leq e_{1}(T,T) + e_{1}(T,C^{*}\backslash S) + e_{2}(T)   \big\}\Big) \nonumber \\ 
& \le  {\mathbb P}\Big( \big\{S \subset C^{*}: |S| = K-r,  e_{1}(S,C^{*}) + e_{2}(S) \leq \theta  \big\}  \nonumber \\ 
&  \qquad \cup \big \{ S \subset C^{*}, T \subset (C^{*})^{c}: |S|=|T|= K-r, \twocolbreak \hspaceonetwocol{0in}{0.75in}
 e_{1}(T,T) + e_{1}(T,C^{*}\backslash S) + e_{2}(T)  \geq \theta \big\}\Big) \label{subset}
\end{align}
where  $\theta = (1-\eta)(aD(P||Q)+(K-r)MD(V||U))$, for some $\eta \in (0,1)$ and $a = {{K}\choose{2}} - {{r}\choose{2}}$. We further assume random variables $L_{G,i}$ are drawn i.i.d. according to the distribution of $L_{G}$, and $L_{S,m,j}$ are similarly i.i.d. copies of $L_{S}$.
Then, using~\eqref{subset} and a union bound:
\begin{align} 
\twocolAlignMarker\mathbb{P}(R=r) \twocolnewline &\overset{}{\leq}  {{K}\choose{K-r}} \mathbb{P}\Big(\sum_{i=1}^{a} L_{G,i} + \sum_{j=1}^{K-r}\sum_{m=1}^{M} L_{S,m,j} \leq \theta \Big) \nonumber\\
& \hspace{0.2in}+  {{K}\choose{K-r}}{{n-K}\choose{K-r}} \mathbb{P}\Big(\sum_{i=1}^{a} L_{G,i} + \sum_{j=1}^{K-r} \sum_{m=1}^{M} L_{S,m,j} \geq \theta \Big) \nonumber \\ 
 &\overset{(a)}{\leq}   e^{(K-r) \log(\frac{Ke}{K-r})} \twocolbreak\includeonetwocol{}{\hspace{0.2in}\times}
e^{- \sup_{t \geq 0} -t \theta - a\log_{P}(\mathbb{E}[e^{-tL_{G}}]) - (K-r)M\log_{V}(\mathbb{E}[e^{-tL_{S}}])}  \nonumber \\ 
&   \hspace{0.2in} +e^{(K-r) \log(\frac{(n-K)Ke^{2}}{(K-r)^{2}})} \twocolbreak\includeonetwocol{}{\hspace{0.2in}\times}
 e^{-\sup_{t\geq 0} t\theta - a\log_{Q}(\mathbb{E}[e^{tL_{G}}]) - (K-r)M\log_{U}(\mathbb{E}[e^{tL_{S}}])}  \nonumber\\
 &\overset{(b)}{\leq}  e^{(K-r) \log(\frac{Ke}{K-r}) - E_{PV}(\theta,a,M(K-r))}  \twocolbreak\hspaceonetwocol{0in}{0.2in}
 + e^{(K-r) \log(\frac{(n-K)Ke^{2}}{(K-r)^{2}}) - E_{QU}(\theta,a,M(K-r))}  \nonumber\\ 
  &\overset{(c)}{=}  e^{(K-r) \log(\frac{Ke}{K-r}) - E_{PV}(\theta,a,M(K-r))}  \twocolbreak\hspaceonetwocol{0in}{0.2in}
+ e^{(K-r) \log(\frac{(n-K)Ke^{2}}{(K-r)^{2}}) - E_{PV}(\theta,a,M(K-r)) - \theta} \nonumber \\ 
& \overset{(d)}{\leq}  e^{(K-r) \log(\frac{Ke}{K-r}) - E_{PV}(\theta,a,M(K-r))} \nonumber \\
& \hspace{0.2in} + e^{-(K-r) \big( (1-\eta)((\frac{K-1}{2})D(P||Q)+MD(V||U)) - \log(\frac{n-K}{K}) \big)}  \twocolbreak\includeonetwocol{}{\hspace{0.2in}\times}
e^{2(K-r)\log(\frac{e}{\epsilon}) - E_{PV}(\theta,a,M(K-r))}  \nonumber\\ 
& \overset{(e)}{\leq}   2 e^{2(K-r)\log(\frac{e}{\epsilon}) - E_{PV}(\theta,a,M(K-r))} \label{ML.bound.1}
\end{align}
where $(a)$ holds by Chernoff bound and because ${{a}\choose{b}}\leq (\frac{ea}{b})^{b}$, $(b)$ holds from Lemma~\ref{Lem.1} in Appendix~\ref{App.1}, $(c)$ holds because $E_{PV}(\theta,a,M(K-r)) = E_{QU}(\theta,a,M(K-r)) - \theta$, $(d)$ holds because $ a \geq \frac{(K-r)(K-1)}{2}$, $r \leq (1-\epsilon)K$ and $(e)$ holds by assuming that $\liminf_{n\to\infty} (K-1)D(P||Q)+ 2MD(V||U) > 2\log(\frac{n}{K})$, which implies that 
\[
(1-\eta)((\frac{K-1}{2})D(P||Q)+MD(V||U)) - \log(\frac{n-K}{K}) \geq 0.
\]

Lemma~\ref{Lem.1} in Appendix~\ref{App.1} shows that 
\[E_{PV}(\theta,a,M(K-r)) \geq C (aD(P||Q)+(K-r)MD(V||U)]).
\]
 Using $ a \geq \frac{(K-r)(K-1)}{2}$ and substituting in~\eqref{ML.bound.1},
\begin{align}
\label{ML.bound.2}\nonumber
\mathbb{P}(R=r) \leq & 2 e^{-(K-r) \big( C ( \frac{K-1}{2}D(P||Q) + MD(V||U)) - 2 \log(\frac{e}{\epsilon}) \big) }  \\ 
\leq & 2 e^{-(K-r) \big( \frac{C}{2} ( (K-1)D(P||Q) + MD(V||U)) - 2 \log(\frac{e}{\epsilon}) \big) } 
\end{align}

Choose $\epsilon = \big((K-1)D(P||Q) + MD(V||U)\big)^{-\frac{1}{2}}$ and let $E = \big( \frac{C}{2} ( (K-1)D(P||Q) + MD(V||U)) - 2 \log(\frac{e}{\epsilon}) \big)$. Thus,
\begin{align}
\label{ML.bound.all}\nonumber
\mathbb{P}(R \leq (1-\epsilon)K) = & \sum_{r = 0}^{(1-\epsilon)K} \mathbb{P}(R=r) \leq \sum_{r = 0}^{(1-\epsilon)K} 2 e^{-(K-r) E } \\ 
\overset{(a)}{\leq} & 2 \sum_{r' = \epsilon K}^{\infty} e^{-r'E} \leq  2 \frac{e^{-\epsilon K E}}{1- e^{-E}} \overset{(b)}{\leq} o(1)
\end{align}
where $(a)$ holds by defining $r' = K-r$ and $(b)$ holds by assuming that $(K-1)D(P||Q) + MD(V||U) \to \infty$ and by the choice of $\epsilon$. This concludes the proof of Theorem~\ref{The.1}.


\section{Proof of Lemma~\ref{Suff.Random}}
\label{App.4.1}
Recall the definition of $\hat{C}$ from~\eqref{ML_rule.1}. Note that under the conditions of this Lemma, $\hat{C}$ may no longer be the maximum likelihood solution because $|C^{*}|$ need not be $K$. Let $|C^{*}| = K'$. Then, by assumption, with probability converging to one, $|K' - K| \leq \frac{K}{\log(K)}$. Let $R = |\hat{C} \cap C^{*}|$. Thus, $| \hat{C} \triangle C^{*}| = K + K' - 2R$. Hence, it is sufficient to show that $\mathbb{P}(R \leq (1-\epsilon)K - |K' - K|) = o(1)$, where $\epsilon$ is defined in the statement of the Lemma. Let $a = {{K}\choose{2}} - {{r}\choose{2}}$ and $a' = {{K'}\choose{2}} - {{r}\choose{2}}$, then for any $r \leq (1-\epsilon)K - |K' - K|$ and by the choice of $\epsilon$, the following holds as $n \to \infty$:
\begin{align}
& \frac{K}{K'} \to 1 \text{ , } \frac{K-r}{K'-r} \to 1  \text{ , }  \frac{a}{a'} \to 1  \label{same}
\end{align}

Following similar ideas as the proof of Theorem~\ref{The.1}: 
\begin{align}
&\mathbb{P}(R=r)  \nonumber \\
& \leq \mathbb{P}\Big( \big\{ C \in \{1,\cdots,n\}: |C| = K, |C\cap C^{*}| = r, e_{1}(\hat{C},\hat{C}) + \twocolbreak \hspaceonetwocol{0in}{0.15in}
e_{2}(\hat{C}) - e_{1}(C^{*},C^{*}) - e_{2}(C^{*}) \geq 0   \big\} \Big)  \nonumber \\ 
& = \mathbb{P} \Big( \big\{ S \subset C^{*}, T \subset (C^{*})^{c}: |S|=K'-r, =|T|= K-r, \nonumber \\
 & \hspace{0.15in} e_{1}(S,C^{*}) + e_{2}(S) \leq  e_{1}(T,T) + e_{1}(T,C^{*}\backslash S) + e_{2}(T)   \big\} \Big) \nonumber \\ 
& \leq \mathbb{P}\Big( \big\{ S \subset C^{*}: |S| = K'-r, e_{1}(S,C^{*}) + e_{2}(S) \leq \theta  \big\}  \nonumber \\ 
& \hspaceonetwocol{0in}{0.15in} \cup \big\{\exists S \subset C^{*}, T \subset (C^{*})^{c}: |S|=K'-r,|T|= K-r, \twocolbreak \hspaceonetwocol{0in}{0.15in}
e_{1}(T,T) + e_{1}(T,C^{*}\backslash S) + e_{2}(T)  \geq \theta \big\} \Big) \label{subset.random}
\end{align}
where $\theta = (1-\eta)(aD(P||Q)+(K-r)MD(V||U))$, for some $\eta \in (0,1)$. Using~\eqref{subset.random} and a union bound,
\begin{align} 
& \mathbb{P}(R=r)  \nonumber \\
& \overset{(a)}{\leq} {{K'}\choose{K'-r}} \mathbb{P}\big(\sum_{i=1}^{a'} L_{G,i} + \sum_{j=1}^{K'-r} \sum_{m=1}^{M} L_{S,m,j} \leq \theta \big) \nonumber \\ 
& \hspaceonetwocol{0in}{0.08in} + {{K'}\choose{K'-r}}{{n-K'}\choose{K-r}} \mathbb{P}\big(\sum_{i=1}^{a} L_{G,i} + \sum_{j=1}^{K-r} \sum_{m=1}^{M} L_{S,m,j} \geq \theta \big) \nonumber \\ 
& \overset{(b)}{\leq}  e^{(K'-r) \log(\frac{K'e}{K'-r})}  \twocolbreak \includeonetwocol{}{\hspace{0.2in}\times}
 e^{- \sup_{t \geq 0} -t \theta - a'\log_{P}(\mathbb{E}[e^{-tL_{G}}]) - M(K'-r)\log_{V}(\mathbb{E}[e^{-tL_{S}}])} \nonumber \\ 
& \hspaceonetwocol{0in}{0.15in} +  e^{(K'-r) \log(\frac{K'}{(K'-r)}) + (K-r)\log(\frac{(n-K)e}{(K-r)})}  \twocolbreak \includeonetwocol{}{\hspace{0.2in}\times}
e^{-\sup_{t\geq 0} t\theta - a\log_{Q}(\mathbb{E}[e^{tL_{G}}]) - M(K-r)\log_{U}(\mathbb{E}[e^{tL_{S}}])} \nonumber \\ 
 & \overset{(c)}{\leq} e^{(K'-r) \log(\frac{K'e}{K'-r}) - (1-o(1)) E_{PV}(\theta,a,M(K-r))} \twocolbreak 
\hspaceonetwocol{0in}{0.15in} + e^{(K'-r) \log(\frac{K'}{(K'-r)}) + (K-r)\log(\frac{(n-K)e}{(K-r)}) - E_{QU}(\theta,a,M(K-r))} \nonumber \\ 
& \overset{(d)}{=} e^{(K-r) \log(\frac{Ke}{K-r}) (1+o(1)) - E_{PV}(\theta,a,M(K-r))(1+o(1))}  \twocolbreak
\hspaceonetwocol{0in}{0.15in} + e^{(K-r) \log(\frac{(n-K)Ke^{2}}{(K-r)^{2}})(1+o(1)) - E_{PV}(\theta,a,M(K-r)) - \theta} \nonumber \\ 
& \overset{(e)}{\leq} e^{(K-r)(1+o(1)) \log(\frac{Ke}{K-r}) - (1+o(1))E_{PV}(\theta,a,M(K-r))}\nonumber  \\ 
& \hspaceonetwocol{0in}{0.06in} + e^{-(K-r)(1+o(1)) \big( (1-\eta)((\frac{K-1}{2})D(P||Q)+MD(V||U)) - \log(\frac{n-K}{K}) \big)}  \twocolbreak \includeonetwocol{}{\hspace{0.2in}\times}
e^{2(1+o(1))(K-r)\log(\frac{e}{\epsilon}) - E_{PV}(\theta,a,M(K-r))} \nonumber \\ 
& \overset{(f)}{\leq} 2 e^{2(K-r)(1+o(1))\log(\frac{e}{\epsilon}) - (1+o(1))E_{PV}(\theta,a,M(K-r))} \label{ML.bound.1.random}
\end{align}
where $(a)$ holds for $L_{G,i} (L_{S,m,j})$ be i.i.d copies of $L_{G}(L_{S})$, respectively, $(b)$ holds by Chernoff bound and because ${{a}\choose{b}}\leq (\frac{ea}{b})^{b}$, $(c)$ holds by using~\eqref{same} and by Lemma~\ref{Lem.1} in Appendix~\ref{App.1}, $(d)$ holds by using~\eqref{same} and because $E_{PV}(\theta,a,M(K-r)) = E_{QU}(\theta,a,M(K-r)) - \theta$, $(e)$ holds because $ a \geq \frac{(K-r)(K-1)}{2}$, $r \leq (1-\epsilon)K$ and $(f)$ holds by assuming that $\liminf_{n\to\infty} (K-1)D(P||Q)+ 2MD(V||U) \geq 2\log(\frac{n}{K})$, which implies that $(1-\eta)((\frac{K-1}{2})D(P||Q)+MD(V||U)) - \log(\frac{n-K}{K}) \geq 0$.

The remainder of the proof follows similarly to Appendix~\ref{App.4} following~\eqref{ML.bound.1}.

\section{Proof of Lemma~\ref{The.5.new}}
\label{App.5}


\begin{lemma}
\label{Lem.suff.exact.new}
Suppose that~\eqref{Cond.1.The.3.new} holds. Let $\{W_{\ell}\}$ and $\{\tilde{W}_{\ell}\}$ denote sequences of i.i.d. copies of $L_{G}$ under $P$ and $Q$, respectively. Also, for any node $i$, let $Z$ and $\tilde{Z}$ denote $\sum_{m=1}^{M} L_{S}(i,m)$ under $V$ and $U$, respectively. Then, for sufficiently small, but constant, $\delta$ and $\gamma = \frac{\log(\frac{n}{K})}{K}$:
\begin{align} 
\mathbb{P}\big( \sum_{\ell=1}^{K(1-\delta)}  \tilde{W}_{\ell} + \tilde{Z}  \geq K(1-\delta)\gamma \big) = o(\frac{1}{n}) \label{suff._exact.eq.2.new} \\
\mathbb{P}\big( \sum_{\ell=1}^{K(1-2\delta)} W_{\ell} + \sum_{\ell=1}^{\delta K} \tilde{W}_{\ell} + Z \leq K(1-\delta)\gamma \big) = o(\frac{1}{K}) \label{suff._exact.eq.3.new} 
\end{align}
\end{lemma}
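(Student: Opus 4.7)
The plan is to bound both probabilities by Chernoff-type large deviation estimates, relying on three ingredients: the rate-function identity $E_{PV}(\theta,m_1,m_2)=E_{QU}(\theta,m_1,m_2)-\theta$ from Lemma~\ref{Lem.1}, the cumulant bounds of Lemma~\ref{Lem.2}, and the convexity of the cumulants $t\mapsto\log\mathbb{E}_Q[e^{tL_G}]$ and $t\mapsto\log\mathbb{E}_U[e^{tL_S}]$. Throughout, let $t^*\in[0,1]$ denote the maximizer defining $E_{QU}(\log(n/K),K,M)$, and let $\epsilon_0>0$ be any constant with $E_{QU}(\log(n/K),K,M)>(1+\epsilon_0)\log n$ for all large $n$, as granted by~\eqref{Cond.1.The.3.new}.

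For~\eqref{suff._exact.eq.2.new}, the sum $\sum_\ell\tilde W_\ell+\tilde Z$ has log-moment generating function equal to $\psi_{QU}(t,K(1-\delta),M)$, so Chernoff immediately yields
\[
\mathbb{P}\Big(\textstyle\sum_\ell\tilde W_\ell+\tilde Z\geq K(1-\delta)\gamma\Big)\leq\exp\!\bigl(-E_{QU}((1-\delta)\log(n/K),K(1-\delta),M)\bigr).
\]
I plan to lower-bound this rate by evaluating its defining sup at $t=t^*$ and observing that $t\mapsto\log\mathbb{E}_U[e^{tL_S}]$ is convex with $\log\mathbb{E}_U[1]=0$ and $\log\mathbb{E}_U[e^{L_S}]=\log\mathbb{E}_U[V/U]=\log 1=0$; hence this cumulant is non-positive on $[0,1]$, yielding $E_{QU}((1-\delta)\log(n/K),K(1-\delta),M)\geq(1-\delta)E_{QU}(\log(n/K),K,M)$. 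By assumption this exceeds $(1-\delta)(1+\epsilon_0)\log n$, and choosing $\delta$ small enough that $(1-\delta)(1+\epsilon_0)>1+\epsilon_0/2$ delivers probability $\leq n^{-(1+\epsilon_0/2)}=o(1/n)$.

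For~\eqref{suff._exact.eq.3.new}, write $S=S_P+S_Q+Z$ where $S_P=\sum_{\ell\leq K(1-2\delta)}W_\ell$ and $S_Q=\sum_{\ell\leq\delta K}\tilde W_\ell$. Applying Chernoff to $\mathbb{P}(-S\geq-K(1-\delta)\gamma)\leq\mathbb{E}[e^{-tS}]e^{tK(1-\delta)\gamma}$ at $t=1-t^*\in[0,1]$, and converting $P$- and $V$-expectations to their $Q$- and $U$-counterparts via $\mathbb{E}_P[e^{-tL_G}]=\mathbb{E}_Q[e^{(1-t)L_G}]$ and $\mathbb{E}_V[e^{-tL_S}]=\mathbb{E}_U[e^{(1-t)L_S}]$, the log-probability bound rearranges into
\[
\alpha\log(n/K)-E_{QU}(\log(n/K),K,M)+\delta K\bigl[\log\mathbb{E}_Q[e^{(t^*-1)L_G}]-2\log\mathbb{E}_Q[e^{t^*L_G}]\bigr]
\]
with $\alpha=1-\delta(1-t^*)\in[1-\delta,1]$. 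The first two terms together are at most $-\epsilon_0\log n-(1-\delta)\log K$ by the assumption. The bracketed residual is $O(\delta KD(Q\|P))$ via the tangent lower bound $\log\mathbb{E}_Q[e^{tL_G}]\geq -tD(Q\|P)$ together with Lemma~\ref{Lem.2}. A concavity argument --- evaluating the definition of $E_{QU}$ at a suitable interior $t$ and applying Lemma~\ref{Lem.2} --- yields $E_{QU}(\log(n/K),K,M)=\Omega(KD(Q\|P))$, so $\epsilon_0\log n$ and $KD(Q\|P)$ are comparable up to a constant. Consequently $\delta$ can be fixed as a small constant independent of $n$ that strictly dominates the residual by the slack, producing a Chernoff exponent $\geq(1+c)\log K$ for some $c>0$ and hence probability $o(1/K)$.

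The main obstacle is the mixed $P/Q$ structure of $S$ in~\eqref{suff._exact.eq.3.new}: it generates the cross-cumulant $\log\mathbb{E}_Q[e^{(t^*-1)L_G}]$, which lies outside the $E_{QU}$/$E_{PV}$ framework used in~\eqref{suff._exact.eq.2.new}. Taming this residual via Lemma~\ref{Lem.2} combined with the structural estimate $E_{QU}=\Omega(KD(Q\|P))$ is the crucial step; once handled, the Chernoff exponent retains its required margin over $\log K$ and the argument parallels the cleaner proof of~\eqref{suff._exact.eq.2.new}.
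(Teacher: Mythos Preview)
Your plan is essentially the paper's own argument: Chernoff evaluated at the $E_{QU}$-optimizer $t^*$, the non-positivity of $\log\mathbb E_U[e^{tL_S}]$ on $[0,1]$ for~\eqref{suff._exact.eq.2.new}, and for~\eqref{suff._exact.eq.3.new} control of the $Q$-contamination residual via Lemma~\ref{Lem.2} together with the structural lower bound $E_{QU}(\log(n/K),K,M)\geq c\,KD(Q\|P)$ (from~\eqref{eq.lem2.1.3} with $\eta=1$). The paper's choice $t=-\lambda^*$, with $\lambda^*$ the $E_{PV}$-optimizer, is exactly your $t=1-t^*$.

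One step is misstated. The claim that ``$\epsilon_0\log n$ and $KD(Q\|P)$ are comparable up to a constant'' does not follow: nothing prevents $E_{QU}$, and hence $KD(Q\|P)$, from being much larger than $\log n$, in which case the slack $\epsilon_0\log n$ cannot by itself dominate a residual of order $\delta KD(Q\|P)$. The correct absorption --- which the paper performs and which your own ingredient $KD(Q\|P)=O(E_{QU})$ already enables --- is to fold the residual into the $-E_{QU}$ term \emph{before} invoking~\eqref{Cond.1.The.3.new}: the log-probability is at most
\[
\log(n/K)-E_{QU}+C\delta KD(Q\|P)\leq \log(n/K)-(1-C'\delta)E_{QU}<\log(n/K)-(1-C'\delta)(1+\epsilon_0)\log n,
\]
which for small fixed $\delta$ is $\leq -(1+c)\log K$, as you want.
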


\begin{proof}
By Chernoff bound:
\begin{align}
&\mathbb{P}\big( \sum_{\ell=1}^{K(1-\delta)}  \tilde{W}_{\ell} + \tilde{Z}  \geq K(1-\delta)\gamma \big) \twocolbreak
\leq e^{-(1-\delta) \sup_{t \geq 0} tK\gamma - K\log(\mathbb{E}_{Q}[e^{tL_{G}}]) - \frac{M}{1-\delta}\log(\mathbb{E}_{U}[e^{tL_{S}}])} \label{Lem.suff.exact.eq.1.new}
\end{align}

From~\eqref{Eq.Hajek} it follows that  for some positive $\epsilon_{o}$:  
\begin{align}
K\gamma & \leq \frac{KD(P||Q)}{2+\epsilon_{o}} + \frac{MD(V||U)}{1+\frac{\epsilon_{o}}{2}} \nonumber \\
& \leq KD(P||Q) + MD(V||U) \nonumber \\
& \leq KD(P||Q) + \frac{M}{1-\delta} D(V||U)
\end{align}
Hence, using Lemma~\ref{Lem.1} in Appendix~\ref{App.1}, $\sup_{t \geq 0}$ is replaced by $\sup_{t \in [0,1]}$. Also, $ \log(\mathbb{E}_{U}[e^{tL_{S}}]) = (t-1)D_{t}(V||U) \leq 0$ where the first equality holds by the definition of the R\'{e}nyi-divergence between distributions $V$ and $U$~\cite{polyanski} and the second inequality because $t \in [0,1]$. This implies that $\frac{M}{1-\delta}\log(\mathbb{E}_{U}[e^{tL_{S}}]) \leq M\log(\mathbb{E}_{U}[e^{tL_{S}}])$. Substituting in~\eqref{Lem.suff.exact.eq.1.new}:
\begin{align}
\mathbb{P}\big( \sum_{\ell=1}^{K(1-\delta)}  \tilde{W}_{\ell} + \tilde{Z}  \geq K(1-\delta)\gamma \big) 
&\leq e^{-(1-\delta) E_{QU}(K\gamma,K,M)} \nonumber \\ 
& \leq e^{-(1-\delta)(1+\epsilon)\log(n)}\label{Lem.suff.exact.eq.2.new}
\end{align}
where~\eqref{Lem.suff.exact.eq.2.new} follows since~\eqref{Cond.1.The.3.new} holds by assumption, i.e., there exists $\epsilon \in (0,1): E_{QU}(K\gamma,K,M) \geq (1+\epsilon)\log(n)$. Equation~\eqref{Lem.suff.exact.eq.2.new} implies that~\eqref{suff._exact.eq.2.new} holds for sufficiently small $\delta$. 

To show~\eqref{suff._exact.eq.3.new}, Chernoff bound is used:
\begin{align}
& \mathbb{P}\big( \sum_{\ell=1}^{K(1-2\delta)} W_{\ell} + \sum_{\ell=1}^{\delta K} \tilde{W}_{\ell} + Z \leq K(1-\delta)\gamma \big) \nonumber \\ 
& \overset{(a)}{\leq} e^{tK\gamma(1-\delta) + K(1-2\delta)\log(\mathbb{E}_{P}[e^{-tL_{G}}]) + K\delta\log(\mathbb{E}_{Q}[e^{-tL_{G}}])}  \twocolbreak \includeonetwocol{}{\hspace{0.2in}\times}
e^{M(1-\delta)\log(\mathbb{E}_{V}[e^{-tL_{S}}]) + M\delta\log(\mathbb{E}_{U}[e^{-tL_{S}}])} \nonumber \\ 
& = e^{(1-2\delta)(tK\gamma  + K\log(\mathbb{E}_{P}[e^{-tL_{G}}]) + M\frac{1-\delta}{1-2\delta}\log(\mathbb{E}_{V}[e^{-tL_{S}}]))} \twocolbreak \includeonetwocol{}{\hspace{0.2in}\times}
e^{\delta(tK\gamma+ K\log(\mathbb{E}_{Q}[e^{-tL_{G}}]) + M\log(\mathbb{E}_{U}[e^{-tL_{S}}]))} \nonumber \\ 
& \overset{(b)}{\leq} e^{(1-2\delta)(tK\gamma  + K\log(\mathbb{E}_{P}[e^{-tL_{G}}]) + M\log(\mathbb{E}_{V}[e^{-tL_{S}}]))} \twocolbreak \includeonetwocol{}{\hspace{0.2in}\times}
e^{\delta(tK\gamma+ K\log(\mathbb{E}_{Q}[e^{-tL_{G}}]) + M\log(\mathbb{E}_{U}[e^{-tL_{S}}]))} \label{Lem.suff.exact.eq.3.new} 
\end{align}
where $(a)$ and $(b)$ hold because $\frac{1-\delta}{1-2\delta} \geq 1$ for sufficiently small $\delta$ and $\log(\mathbb{E}_{V}[e^{-tL_{S}}]) = (t-1)D_{t}(U||V) \leq t D_{t+1}(U||V) = \log(\mathbb{E}_{U}[e^{-tL_{S}}])$, where $D_{t}(V||U)$ is the R\'{e}nyi-divergence between distributions $V$ and $U$, which is non-decreasing in $t \geq 0$~\cite{polyanski}.

By definition $-E_{PV}(K\gamma,K,M) = -\sup_{\lambda \in [-1,0]} \lambda K\gamma - K\log(\mathbb{E}_{P}[e^{\lambda L_{G}}]) - M\log(\mathbb{E}_{V}[e^{\lambda L_{S}}]) = -\lambda^{*}K\gamma + K\log(\mathbb{E}_{P}[e^{\lambda^{*}L_{G}}]) + M\log(\mathbb{E}_{V}[e^{\lambda^{*}L_{S}}])$. Hence, by choosing $t = -\lambda^{*} \in [0,1]$ and substituting in~\eqref{Lem.suff.exact.eq.3.new},
\begin{align}
& \mathbb{P}\big( \sum_{\ell=1}^{K(1-2\delta)} W_{\ell} + \sum_{\ell=1}^{\delta K} \tilde{W}_{\ell} + Z \leq K(1-\delta)\gamma \big) \twocolbreak
\leq e^{-(1-2\delta) E_{PV}(K\gamma,K,M)} \twocolbreak \includeonetwocol{}{\hspace{0.2in}\times}
e^{\delta(K\gamma+ K\log(\mathbb{E}_{Q}[e^{-tL_{G}}]) + M\log(\mathbb{E}_{U}[e^{-tL_{S}}]))} \label{Lem.suff.exact.eq.4.new}
\end{align}

By Lemma~\ref{Lem.2} and convexity of $\psi_{QU}(t,m_{1},m_{2})$: 
\begin{align}
\psi_{QU}(-t,K,M) & \leq \psi_{QU}(-1,K,M) \twocolbreak \leq A (KD(Q||P) + MD(U||V))\label{Lem.suff.exact.eq.5.new}
\end{align}
for some positive constant $A$. Moreover, by Lemma~\ref{Lem.2}, $E_{QU}(K\gamma,K,M) \geq E_{QU}(0,K,M) \geq A_{1}(KD(Q||P) + MD(U||V))$, for some positive constant $A_{1}$. Hence, by substituting in~\eqref{Lem.suff.exact.eq.4.new}, for some positive constant $A_{2}$:
\begin{align} 
& \mathbb{P}\big( \sum_{\ell=1}^{K(1-2\delta)} W_{\ell} + \sum_{\ell=1}^{\delta K} \tilde{W}_{\ell} + Z \leq K(1-\delta)\gamma \big) \twocolbreak
\leq e^{-(1-2\delta) E_{PV}(K\gamma,K,M)+ \delta(K\gamma+ A_{2} E_{QU}(K\gamma,K,M))} \nonumber\\ 
& \overset{(a)}{\leq} e^{-E_{QU}(K\gamma,K,M) (1 - 2\delta - \delta A_{2}) + (1-\delta)K\gamma} \nonumber \\ 
& \overset{(b)}{=} e^{- \log(n)( (1+\epsilon) (1 - 2\delta - \delta A_{2}) + \delta -1) - \log(K) (1-\delta)} \nonumber \\
& \overset{(c)}{=} o(\frac{1}{K}) \label{Lem.suff.exact.eq.6.new}
\end{align} 
where $(a)$ holds because $E_{PV}(K\gamma,K,M) = E_{QU}(K\gamma,K,M) - K\gamma$ from Lemma~\ref{Lem.2}, $(b)$ holds by the assumption that~\eqref{Cond.1.The.3.new} holds, which implies that there exists $\epsilon \in (0,1): E_{QU}(K\gamma,K,M) \geq (1+\epsilon)\log(n)$ and $(c)$ holds for sufficiently small $\delta$. 

Equations \eqref{Lem.suff.exact.eq.2.new} and \eqref{Lem.suff.exact.eq.6.new} concludes the proof of Lemma~\ref{Lem.suff.exact.new}.
\end{proof}

Define the event $E \triangleq \{(\hat{C}_k,C^*_k)  :  |\hat{C}_{k} \triangle C^{*}_{k}| \leq \delta K \quad \forall k\}$; then conditioned on $E$ we have:
\begin{align*}
|\hat{C}_{k} \cap C^{*}_{k}| &\geq |\hat{C}_{k}| - |\hat{C}_{k} \triangle C^{*}_{k}| \twocolbreak = \lceil K(1-\delta) \rceil - |\hat{C}_{k} \triangle C^{*}_{k}| \twocolbreak \geq K(1-2 \delta)
\end{align*}
Thus, in Algorithm~\ref{Alg.1}, for nodes $i$ within the community $C^*$, $r_{i}$ is stochastically greater than or equal to $(\sum_{\ell=1}^{K(1-2\delta)} W_{\ell}) + (\sum_{\ell=1}^{K\delta} \tilde{W_{\ell}}) + Z$ by Lemma~\ref{Lem.suff.exact.new} and~\eqref{Lem.suff.exact.eq.3.new}. For $i \notin C^{*}$, $r_{i}$ has the same distribution as $(\sum_{\ell=1}^{K(1-\delta)} \tilde{W_{\ell}}) + \tilde{Z}$. Thus, by Lemma~\ref{Lem.suff.exact.new}, with probability converging to 1, 
\begin{align*}
r_{i} > K(1-\delta)\gamma, &\qquad  i \in C^{*}\\
r_{i} < K(1-\delta)\gamma, & \qquad i \notin C^{*}
\end{align*}
Hence, $\mathbb{P}(\tilde{C} = C^{*}) \to 1$ as $n \to \infty$.


\section{Sufficiency of Theorem~\ref{The.3.new}}
\label{App.6}

The cardinality $|C^{*}_{k}|$ is a random variable that corresponds to sampling, without replacement, from the nodes of the original graph. Let $Z$ be a binomial random variable $\text{Bin}(n(1-\delta),\frac{K}{n})$.  The Chernoff bound for $Z$:
\begin{align}
\mathbb{P}\bigg(\Big |Z - (1-\delta)K \Big| \geq \frac{K}{\log(K)} \bigg) &\leq e^{-\Omega(\frac{K}{\log^{2}(K)})} 
\label{eq:Chernoff-Z}
\end{align}
A result of Hoeffding~\cite[Theorem 4]{hoeffding1963probability} for sampling with and without replacement indicates that  $\mathbb{E}[\phi(|C^{*}_{k}|)] \leq \mathbb{E}[\phi(Z)]$ for any convex $\phi$. This can be applied to~\eqref{eq:Chernoff-Z} on the negative and positive side, individually. Putting them back together, we get a bound on the tails of $|C^{*}_{k}|$:
\begin{align}
\mathbb{P}\bigg( \Big| |C^{*}_{k}| - (1-\delta)K \Big| \geq \frac{K}{\log(K)} \bigg) 
&\leq e^{-\Omega(\frac{K}{\log^{2}(K)})} \nonumber \\ 
& \overset{}{\leq} o(1)\label{Suff.exact.eq.1.new}
\end{align}

Since~\eqref{Eq.Hajek} holds, for sufficiently small $\delta$,
\[
\liminf_{n \to \infty} \lceil (1-\delta)K \rceil D(P||Q) + 2MD(V||U) > 2 \log(\frac{n}{K})
\]
which together with~\eqref{Suff.exact.eq.1.new} indicates, via Lemma~\ref{Suff.Random}, that ML achieves weak recovery. Thus, for any $1 \leq k \leq \frac{1}{\delta}$:
\begin{equation}
\mathbb{P}\Big(\frac{ |\hat{C}_{k}\triangle C^{*}_{k} |}{K}  \leq 2\epsilon + \frac{1}{\log(K)}\Big) \geq 1 - o(1)
\end{equation}
with $\epsilon = o(1)$. Since $\delta$ is constant, by the union bound 
\begin{equation}
\mathbb{P}\Big(\frac{ |\hat{C}_{k}\triangle C^{*}_{k} |}{K}  \leq 2\epsilon + \frac{1}{\log(K)} ,\quad\forall k\Big) \geq 1 - o(1)
\end{equation}
Since $\epsilon = o(1)$, the desired~\eqref{suff._exact.eq.1.new} holds.


\section{Necessity of Theorem~\ref{The.3.new}}
\label{App.7} 
The following Lemma characterizes necessary conditions that are weaker than needed for Theorem~\ref{The.3.new}, i.e., the Lemma is stronger than needed at this point, but will subsequently be used for unbounded LLR as well.

\begin{lemma}
\label{Lem.necc.exact}
Let $\{W_{\ell}\}$ and $\{\tilde{W}_{\ell}\}$ denote sequences of i.i.d. copies of $L_{G}$ under $P$ and $Q$, respectively. 
For any node $i$ inside the community, let $Z$ denote a random variable drawn according to the distribution of $\sum_{m=1}^{M} L_{S}(i,m)$. Let $\tilde{Z}$ be the corresponding random variable when $i$ is {\em outside} the community.
Let $K_{o} \to \infty$ such that $K_{o} = o(K)$. 
Then, for any estimator $\hat{C}$ achieving exact recovery, there exists a sequence $\theta_{n}$ such that for sufficiently large $n$:
\begin{align}
\mathbb{P}\Big(\sum_{\ell=1}^{K-K_{o}} W_{\ell} + Z \leq (K-1)\theta_{n} - \tilde{\theta}_n\Big) &\leq \frac{2}{K_{o}} \label{necc.general.1.1}\\
\mathbb{P}\Big(\sum_{\ell=1}^{K-1} \tilde{W}_{\ell} + \tilde{Z} \geq (K-1)\theta_{n}\Big) &\leq \frac{1}{n-K} \label{necc.general.1.2}
\end{align}
where 
\begin{equation}
\tilde{\theta}_n \triangleq (K_{o}-1)D(P||Q) + 6\sqrt{K_o}\sigma  \label{sigma.1}
\end{equation} 
and $\sigma^{2}$ is the variance of $L_{G}$ under $P$.
\end{lemma}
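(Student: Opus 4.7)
The proof follows a genie-aided hypothesis testing strategy. Since the ML estimator is optimal, the assumption that some $\hat{C}$ achieves exact recovery implies ML also achieves it with failure probability $p_e(n)=o(1)$. My plan is to construct a single threshold $\theta_n$ from the tail of the outside-node score distribution, read off the second inequality directly from this construction, and then obtain the first by combining the exact-recovery condition with Chebyshev concentration applied to a ``dropped'' block of $K_o-1$ edges. The asymmetry between $K-1$ on the outside and $K-K_o$ on the inside arises naturally: the former is the relevant count in a pair-swap test (node $w\notin C^{*}$ compared against $v\in C^{*}$ uses $K-1$ edges to $C^{*}\setminus v$), while the latter is the relevant count after a genie has revealed $K_o-1$ members of $C^{*}\setminus v$, leaving only $K-K_o$ inside neighbors that contribute to the test for $v$.

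\emph{Defining $\theta_n$ and the second inequality.} I would let $(K-1)\theta_n$ be the smallest real number such that
\[
\mathbb{P}\Bigl(\sum_{\ell=1}^{K-1}\tilde{W}_{\ell}+\tilde{Z}\geq (K-1)\theta_n\Bigr)\leq\frac{1}{n-K}.
\]
The second inequality then holds by construction. Intuitively, $(K-1)\theta_n$ sits near the $(1-1/(n-K))$-quantile of the outside-node swap score, so that a union bound over the $n-K$ nodes outside $C^{*}$ controls the typical maximum at the correct rate.

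\emph{First inequality.} I would prove this in two steps. First, combine exact recovery with the quantile definition of $\theta_n$ to establish
\[
\mathbb{P}\Bigl(\sum_{\ell=1}^{K-1}W_{\ell}+Z\leq (K-1)\theta_n\Bigr)\leq\frac{1}{K_o}
\]
for all sufficiently large $n$. The mechanism: the pair-swap consequence of exact recovery gives $\mathbb{P}(r_v\leq r_w)\leq p_e$ for any pair $(v,w)\in C^{*}\times(C^{*})^{c}$, and $r_v,r_w$ are independent because they use edges on disjoint endpoints. Combined with an anti-concentration estimate for $\max_{w}r_w$ of Poisson/second-moment type (from the quantile choice of $\theta_n$), this forces the inside-node distribution to lie above $(K-1)\theta_n$ with probability $1-1/K_o$, using that $p_e=o(1)$ and $K_o\to\infty$ arbitrarily slowly to absorb the residual slack for large $n$. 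Second, write
\[
\sum_{\ell=1}^{K-1}W_{\ell}+Z=\Bigl(\sum_{\ell=1}^{K-K_o}W_{\ell}+Z\Bigr)+\sum_{\ell=K-K_o+1}^{K-1}W_{\ell},
\]
and apply Chebyshev's inequality (or Bernstein, using the bounded-LLR hypothesis) to the dropped sum of $K_o-1$ i.i.d.\ terms with mean $D(P\|Q)$ and variance $\sigma^2$, to bound $\mathbb{P}\bigl(\sum_{\ell=K-K_o+1}^{K-1}W_{\ell}>(K_o-1)D(P\|Q)+6\sqrt{K_o}\,\sigma\bigr)$ by $1/K_o$. Taking a union of these two events yields the first inequality with total bound $2/K_o$.

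\emph{Main obstacle.} The delicate step is not the Chebyshev piece, which is standard, but rather sharpening the transfer from exact recovery to the $1/K_o$ bound on $\mathbb{P}\bigl(\sum_{\ell=1}^{K-1}W_{\ell}+Z\leq (K-1)\theta_n\bigr)$. A naive Markov-type argument applied to the pair-swap test yields only a constant (on the order of $1/2$), not the required $1/K_o$. The sharp bound requires leveraging the independence of $r_v$ across $v\in C^{*}$ together with the tail placement of $\theta_n$, and arguing that an inside-node score falling below $(K-1)\theta_n$ would, via the genie-aided multi-swap test, contradict exact recovery with probability much larger than $p_e$. This is the place where the interplay between the pair-swap formulation (on $K-1$ edges) and the genie-aided formulation (on $K-K_o$ edges) must be handled carefully, and it is also the place that makes this lemma reusable for the unbounded-LLR regime considered later in the paper.
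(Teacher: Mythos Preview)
Your overall architecture has a genuine gap at exactly the point you flag as the obstacle, and your proposed resolution does not work. You claim the sharp $1/K_o$ bound will come from ``leveraging the independence of $r_v$ across $v\in C^*$,'' but these scores are \emph{not} independent: for $v,v'\in C^*$ the edge $G_{vv'}$ appears in both $e_1(v,C^*)$ and $e_1(v',C^*)$. A single-$v$ pair-swap argument (your route) yields only $\mathbb{P}\bigl(\sum_{\ell=1}^{K-1}W_\ell+Z\le (K-1)\theta_n\bigr)\le C\,p_e$, and since $K_o$ is a \emph{given} sequence (not one you may choose ``arbitrarily slowly'') and nothing beyond $p_e=o(1)$ is assumed, you cannot conclude $C\,p_e\le 1/K_o$. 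Separately, your Chebyshev step is mis-stated: with deviation $6\sqrt{K_o}\,\sigma$ on a sum of $K_o-1$ terms you get $\le 1/36$, not $\le 1/K_o$, so even if the first step succeeded the union would give $1/K_o+1/36$, not $2/K_o$.

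The paper does not try to bound the full $(K-1)$-sum first and then drop terms. Instead it defines \emph{two} quantiles, $\theta_n'$ for the inside tail (already at level $K-K_o$ and shifted by $\tilde\theta_n$) and $\theta_n''$ for the outside tail, and proves $\theta_n'>\theta_n''$ indirectly. The key device for the inside is a subset trick that manufactures independence: pick $T\subset C^*$ with $|T|=K_o$ and split $e_1(i,C^*)=e_1(i,T)+e_1(i,C^*\setminus T)$ for $i\in T$. The cross-terms $e_1(i,C^*\setminus T)$, each a sum of $K-K_o$ i.i.d.\ copies of $L_G$ under $P$, are mutually independent over $i\in T$; the within-$T$ parts $e_1(i,T)$ are handled by Chebyshev (applied to the two triangular pieces $e_1(i,\{1,\dots,i-1\})$ and $e_1(i,\{i+1,\dots,K_o\})$), which is exactly where the correction $\tilde\theta_n=(K_o-1)D(P\|Q)+6\sqrt{K_o}\,\sigma$ comes from. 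With $\ge K_o/2$ independent inside trials at tail probability $\ge 2/K_o$, the minimum falls below $(K-1)\theta_n'-\tilde\theta_n$ with probability $\ge 1-e^{-1}-o(1)$; combined with the analogous $\Omega(1)$ bound for the outside maximum and $\mathbb{P}(F_M)=o(1)$, this forces $\theta_n'>\theta_n''$ and any $\theta_n$ in between works. In short, the $K-K_o$ and $\tilde\theta_n$ in the statement are structural consequences of creating independence among inside scores, not a post-hoc decomposition of a full-sum bound.
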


\begin{proof}

Recall that ML is optimal for exact recovery since $C^{*}$ is chosen uniformly. Assume $\mathbb{P}(\text{ML fails}) =o(1)$. Define
\begin{align}
i_{o} & \triangleq  \arg\min_{i \in C^{*}} e_{1}(i,C^{*}) + \sum_{m=1}^{M} L_{S}(i,m) \nonumber \\
\tilde{C} & \triangleq C^{*}\backslash\{i_{o}\}\cup{\{j\}} \text{ for } j \notin C^{*}
\end{align}
Also, define the following event:
\begin{align}
F_{M} &\triangleq \Big\{ (\boldsymbol{G,Y}) : \min_{i \in C^{*}}e_{1}(i,C^{*}) + \sum_{m=1}^{M} L_{S}(i,m) \twocolbreak \hspaceonetwocol{0in}{0.3in} \leq \max_{j \notin C^{*}} e(j,C^{*}\backslash\{i_{o}\}) + \sum_{m=1}^{M} L_{S}(j,m)\Big \}
\end{align}
Since $\mathbb{P}(\text{ML fails}) = o(1)$,  using~\eqref{ML_rule.1}:
\begin{align}
& e_{1}(\tilde{C},\tilde{C}) + e_{2}(\tilde{C}) - e_{1}(C^{*},C^{*}) - e_{2}(C^{*}) \nonumber \\
& = \Big(e(j,C^{*}\backslash\{i_{o}\}) + \sum_{m=1}^{M} L_{S}(j,m) \Big) \twocolbreak \hspaceonetwocol{0in}{0.15in} - \Big(e_{1}(i,C^{*}) + \sum_{m=1}^{M} L_{S}(i,m)\Big) 
\label{eq:BadGraphWins}
\end{align}
For observations belonging to $F_{M}$, the expression~\eqref{eq:BadGraphWins} is non-negative, implying ML fails with non-zero probability. Then,
\begin{equation}
\mathbb{P}(F_{M}) \leq \mathbb{P}(\text{ML fails}) = o(1)
\end{equation}
since ML achieves exact recovery. 

Define $\theta_{n}'$, $\theta_{n}''$  and the events $E_{1}$ and $E_{2}$ as follows:
\begin{align}
\twocolAlignMarker \theta_{n}'  \triangleq \twocolnewline& \inf \bigg\{ x \in \mathbb{R}: \mathbb{P}\Big(\sum_{\ell=1}^{K-K_{o}} W_{\ell} + Z \leq (K-1)x - \tilde{\theta}_n\Big) \geq \frac{2}{K_{o}}  \bigg\} \label{theta.1} \\
\twocolAlignMarker\theta_{n}'' \triangleq \twocolnewline&\sup \bigg\{ x \in \mathbb{R}: \mathbb{P}
\Big (\sum_{\ell=1}^{K-1} \tilde{W}_{\ell} + \tilde{Z} \geq (K-1)x\Big) \geq \frac{1}{n-K}  \bigg\} \label{theta.2}\\
\twocolAlignMarker E_{1} \triangleq \twocolnewline& \Big\{ (\boldsymbol{G, Y}) :\max_{j \notin C^{*}} \Big(e(j,C^{*}\backslash\{i_{o}\}) + \sum_{m=1}^{M} L_{S}(j,m)\Big) \geq (K-1)\theta_{n}'' \Big\}
 \label{E.1}\\ 
\twocolAlignMarker E_{2} \triangleq \twocolnewline& 
 \Big\{ (\boldsymbol{G, Y}) : \min_{i \in C^{*}}\Big(e_{1}(i,C^{*}) + \sum_{m=1}^{M} L_{S}(i,m)\Big) \leq (K-1)\theta_{n}' \Big\}
\label{E.2}
\end{align}
where $\tilde{\theta}_{n}$ is defined in~\eqref{sigma.1}.

\begin{lemma}
$\mathbb{P}(E_{1}) = \Omega(1)$ and $\mathbb{P}(E_{2}) = \Omega(1)$.
\end{lemma}
\begin{proof}
\begin{align}
& \mathbb{P}(E_{1}) \twocolbreak
\overset{(a)}{=} 1- \prod_{j \notin C^{*}} \mathbb{P}\Big(e(j,C^{*}\backslash\{i_{o}\}) + \sum_{m=1}^{M} L_{S}(j,m) < (K-1)\theta_{n}'' \Big) \nonumber \\ 
& = 1 -  \Big( 1 
 - \mathbb{P}\big(e(j,C^{*}\backslash\{i_{o}\})  \twocolbreak \hspaceonetwocol{0in}{1.25in} +  \sum_{m=1}^{M} L_{S}(j,m) \geq (K-1)\theta_{n}'' \big) \Big)^{n-K} \nonumber\\ S
& \overset{(b)}{\geq} 1 - e^{\Big(-(n-K) \mathbb{P}\big(e(j,C^{*}\backslash\{i_{o}\}) + \sum_{m=1}^{M} L_{S}(j,m) \geq (K-1)\theta_{n}'' \big)\Big)} \nonumber \\
& \overset{(c)}{\geq} 1 - e^{-1}
\end{align}
where $(a)$ holds because $e(j,C^{*}\backslash\{i_{o}\}) + \sum_{m=1}^{M} L_{S}(j,m)$ are i.i.d. for all $j \notin C^{*}$, $(b)$ holds because $1-x \leq e^{-x}$ $\forall x \in \mathbb{R}$ and $(c)$ holds by definition of $\theta_{n}''$. Thus, $\mathbb{P}(E_{1}) = \Omega(1)$.

To show $\mathbb{P}(E_{2})=\Omega(1)$, we are confronted with the difficulty that $e_{1}(i,C^{*})$ are not independent. Let $T$ be the set of the first $K_{o}$ indices in $C^{*}$, where $K_{o} \to\infty$ such that $K_{o} = o(K)$. Also, let $T' = \{ i\in T: e_{1}(i,T) \leq \tilde{\theta}_{n} \}$. Then,
%
\begin{align}
& \min_{i \in C^{*}} e_{1}(i,C^{*}) + \sum_{m=1}^{M} L_{S}(i,m) \twocolbreak
 \leq \min_{i \in T'} e_{1}(i,C^{*}) + \sum_{m=1}^{M} L_{S}(i,m) \nonumber \\
&  \leq \min_{i \in T'} e_{1}(i,C^{*}\backslash T) + \sum_{m=1}^{M} L_{S}(i,m) + \tilde{\theta}_{n}
\end{align}
It follows that:
\begin{align}
& \mathbb{P}(E_{2}) \nonumber \\ &\geq \mathbb{P}\Big(\min_{i \in T'}
  e_{1}(i,C^{*}\backslash T) + \sum_{m=1}^{M} L_{S}(i,m) \leq
  (K-1)\theta_{n}' \! - \! \tilde{\theta}_{n}\Big) \nonumber\\ &
  \overset{(a)}{=} 1 - \mathbb{P}\Big(\bigcap_{i \in T'} \Big\{
  e_{1}(i,C^{*}\backslash T) + \sum_{m=1}^{M} L_{S}(i,m) \twocolbreak
  \hspaceonetwocol{0in}{0.75in} > (K-1)\theta_{n}' -
  \tilde{\theta}_{n} \Big\} \Big) \nonumber\\ & = 1 -
  \mathbb{P}\bigg(\bigcap_{i \in T'} \Big\{ e_{1}(i,C^{*}\backslash
  T) + \sum_{m=1}^{M} L_{S}(i,m) \twocolbreak
  \hspaceonetwocol{0in}{0.5in} > (K-1)\theta_{n}' -
  \tilde{\theta}_{n} \Big\} \bigg| \;|T'| \geq \frac{K_{o}}{2}\bigg )
  \mathbb{P}\big(|T'| \geq \frac{K_{o}}{2}\big) \nonumber \\ &
  \hspaceonetwocol{0in}{0.05in} - \mathbb{P}\bigg(\bigcap_{i \in T'}
  \Big\{ e_{1}(i,C^{*}\backslash T) + \sum_{m=1}^{M} L_{S}(i,m)
  \twocolbreak \hspaceonetwocol{0in}{0.25in} > (K-1)\theta_{n}' -
  \tilde{\theta}_{n} \Big\} \bigg| \;|T'| < \frac{K_{o}}{2}\bigg)
  \times \mathbb{P}\big(|T'| < \frac{K_{o}}{2}\big)\nonumber \\ &
  \geq 1- \mathbb{P}\bigg(\bigcap_{i \in T'} \Big\{
  e_{1}(i,C^{*}\backslash T) + \sum_{m=1}^{M} L_{S}(i,m) \twocolbreak
  \hspaceonetwocol{0in}{0.25in} > (K-1)\theta_{n}' -
  \tilde{\theta}_{n} \Big\} \bigg| \;|T'| \geq \frac{K_{o}}{2}\bigg) -
  \mathbb{P}\big(|T'| < \frac{K_{o}}{2}\big)\nonumber \\ & \geq 1-
  \Big(1- \mathbb{P}\big(\sum_{\ell=1}^{K-K_{o}} W_{\ell} + Z \leq
  (K-1)\theta_{n}' - \tilde{\theta_{n}}\big)\Big)^{\frac{K_{o}}{2}}
  \twocolbreak \hspaceonetwocol{0in}{0.25in} - \mathbb{P}\big(|T'| <
  \frac{K_{o}}{2}\big)\nonumber \\ &\overset{(b)}{\geq} 1-
  e^{\Big(-(\frac{K_{o}}{2})\mathbb{P}\big(\sum_{\ell=1}^{K-K_{o}}
    W_{\ell} + Z \leq (K-1)\theta_{n}' -
    \tilde{\theta_{n}}\big)\Big)} \twocolbreak
  \hspaceonetwocol{0in}{0.25in} - \mathbb{P}\big(|T'| <
  \frac{K_{o}}{2}\big) \nonumber \\ &\overset{(c)}{\geq} 1- e^{-1} -
  \mathbb{P}\big(|T'| < \frac{K_{o}}{2}\big)\nonumber
\end{align}
 where $(a)$ holds because $e_{1}(i,C^{*}\backslash T) + \sum_{m=1}^{M} L_{S}(i,m)$ are i.i.d. for all $i \in T'$, $(b)$ holds because $1-x \leq e^{-x}$  $\forall x \in \mathbb{R}$, $(c)$ holds by definition of $\theta_{n}'$.

To conclude the proof, it remains to show that $\mathbb{P}(|T'| < \frac{K_{o}}{2}) = o(1)$. Recall $T' = \{ i\in T: e_{1}(i,T) \leq \tilde{\theta}_{n} \}$. For $i \in T$, $e_{1}(i,T) = G_{i} + H_{i}$, where $G_{i} = e_{1}(i,\{1\cdots,i-1\})$ and $H_{i} = e_{1}(i,\{i+1\cdots,K_{o}\})$. Thus, by Chebyshev inequality:
\[
\mathbb{P}\Big(G_{i} \geq (i-1)D(P||Q) + 3\sqrt{K_o}\sigma \Big) \leq \frac{1}{9}
\]
for all $i \in T$. Therefore, $|\{i: G_{i} \leq (i-1)D(P||Q) + 3\sqrt{K_o}\sigma \}|$ is stochastically at least as large as a $\text{Bin}(K_{o},\frac{8}{9})$ random variable. Thus, 
\begin{equation}
\mathbb{P}\Big(\big|\{i: G_{i} \leq (i-1)D(P||Q) + 3\sqrt{K_o}\sigma \}\big| \geq \frac{3K_{o}}{4}\Big) \to 1 \label{Cheb.1}
\end{equation}
as $K_{o} \to \infty$. Similarly,
\begin{equation}
\mathbb{P}\Big(\big|\{i: H_{i} \leq (K_o-i)D(P||Q) + 3\sqrt{K_o}\sigma \}\big| \geq \frac{3K_{o}}{4}\Big) \to 1 \label{Cheb.2}
\end{equation}
as $K_{o} \to \infty$. Combining~\eqref{Cheb.1} and~\eqref{Cheb.2} and using the definition of $e_{1}(i,T)$:
\[
\mathbb{P}(|T'| \geq \frac{K_{o}}{2}) \xrightarrow{K_{o} \to \infty} 1
\]
which concludes the proof of the lemma. 
\end{proof}

By definition, $E_{1}$ and $E_{2}$ are independent. Since $\mathbb{P}(\text{ML fails}) =o(1)$ implies that $\mathbb{P}(F_M) = o(1)$:
\begin{align}
\mathbb{P}(E_{1}\cap E_{2} \cap F_{M}^{c}) & \geq \mathbb{P}(E_{1}\cap E_{2}) - \mathbb{P}(F_M) \twocolbreak
 = \mathbb{P}(E_{1})\mathbb{P}(E_{2}) - o(1) \nonumber \\
 & = \Omega(1) \label{last.1}
\end{align}
where~\eqref{last.1} holds since $\mathbb{P}(E_{1}) = \Omega(1)$ and $\mathbb{P}(E_{2}) = \Omega(1)$. 

It is easy to see that $E_{1}\cap E_{2} \cap F_{M}^{c} \subset \{\theta_{n}' > \theta_{n}'' \}$. It follows $\mathbb{P}(\theta_{n}' > \theta_{n}'') = \Omega(1)$ for sufficiently large $n$. Let $\theta_{n} = \frac{\theta_{n}' + \theta_{n}''}{2}$. For sufficiently large $n$, $\theta_{n} < \theta_{n}'$ and $\theta_{n} > \theta_{n}'$. Combining this with the definitions of $\theta_{n}'$ and $\theta_{n}''$, implies that~\eqref{necc.general.1.1} and~\eqref{necc.general.1.2} hold simultaneously.


\end{proof}

The necessity of Theorem~\ref{The.3.new} expresses the following: subject to conditions~\eqref{Eq.Hajek}, exact recovery implies~\eqref{Cond.1.The.3.new}. Lemma~\ref{Lem.necc.exact} shows that exact recovery implies~\eqref{necc.general.1.1} and~\eqref{necc.general.1.2}. It remains to be shown that ~\eqref{necc.general.1.1} and~\eqref{necc.general.1.2} imply~\eqref{Cond.1.The.3.new}. We show that by contraposition.

Assume~\eqref{Cond.1.The.3.new} does not hold, then for arbitrarily small $\epsilon > 0$ and sufficiently large $n$
\begin{align}
 E_{QU}\big(\log(\frac{n}{K}),K,M\big) &\leq (1-\epsilon)\log(n)
\end{align}
Let 
\[
\gamma \triangleq \frac{\log(\frac{n}{K})}{K}
\]
and define $S \triangleq \sum_{\ell=1}^{K-1} \tilde{W}_{\ell} + \tilde{Z}$ and $a \triangleq (K-1)\gamma + \delta$, for some $\delta > 0$. 
Since~\eqref{Eq.Hajek} holds, for sufficiently large $n$ and arbitrary small $\epsilon_{o} > 0$:
\begin{align}
K\gamma &\leq \frac{KD(P||Q)}{2+\epsilon_{o}} + \frac{MD(V||U)}{(1+\frac{\epsilon_{o}}{2})} \nonumber \\
& \leq \frac{1}{1+\frac{\epsilon_{o}}{2}} (KD(P||Q) + MD(V||U)) \nonumber \\
& \leq KD(P||Q) + MD(V||U) \label{necc.holds}
\end{align}
At $\theta_{n} = \gamma$:
\begin{align}
&\mathbb{P}\big(\sum_{\ell=1}^{K-1} \tilde{W}_{\ell} + \tilde{Z} \geq (K-1)\gamma\big) \twocolbreak
= \int\limits_{S \geq (K-1)\gamma}  \mathbb{P}\big(\tilde{w}_{1},\cdots,\tilde{w}_{K-1},\tilde{z}\big) \nonumber \\ 
& \overset{(a)}{\geq} \int\limits_{a-\delta \leq S \leq a+\delta} \big(\prod_{\ell=1}^{K-1}\mathbb{P}(\tilde{w}_{\ell})\big)\big(\mathbb{P}(\tilde{z})\big) \twocolbreak
\overset{(b)}{=} \int\limits_{a-\delta \leq S \leq a+\delta} \bigg(\frac{\mathbb{E}[e^{tS}] e^{tS}}{\mathbb{E}[e^{tS}] e^{tS}}\bigg) \big(\prod_{\ell=1}^{K-1}\mathbb{P}(\tilde{w}_{\ell})\big)\big(\mathbb{P}(\tilde{z})\big) \nonumber\\ 
&\overset{(c)}{\geq} e^{-ta - |t|\delta + \psi_{QU}(K-1,M,t)} \twocolbreak \includeonetwocol{}{\hspace{0.2in}\times}
\int\limits_{a-\delta \leq S \leq a+\delta} \bigg(\prod_{\ell=1}^{K-1} \frac{\mathbb{P}(\tilde{w}_{\ell})e^{t\tilde{w}_{\ell}}}{\mathbb{E}[e^{t\tilde{w}_{\ell}]}}\bigg) \bigg(\frac{\mathbb{P}(\tilde{z})e^{t\tilde{z}}}{\mathbb{E}[e^{t\tilde{z}}]}\bigg) \nonumber\\ 
& \overset{(d)}{=} e^{-ta - |t|\delta + \psi_{QU}(K-1,M,t)} \mathbb{P}_{\tilde{Q}\tilde{U}}\big(a-\delta \leq S \leq a+\delta\big)\nonumber \\ 
& \overset{(e)}{\geq} e^{-\big(ta-\psi_{QU}(K-1,M,t)\big) - |t|\delta }  \twocolbreak \includeonetwocol{}{\hspace{0.05in}\times}
\Bigg(1 - \twocolbreak \includeonetwocol{}{\hspace{0.05in}}   \frac{\big((K-1)\tilde{\sigma}^{2}_{L_{G}} + M\tilde{\sigma}^{2}_{L_{S}}\big) + \big((K-1)\tilde{\mu}_{L_{G}} + M\tilde{\mu}_{L_{S}} - a\big)^{2}}{\delta^{2}}\Bigg) \label{necc.proof.bounded.1}
\end{align}
where $(a)$ holds because $\tilde{W}_{\ell}$ are i.i.d. and independent of $\tilde{Z}$, $(b)$ holds for any $t \in \mathbb{R}$ such that $\mathbb{E}[e^{tS}]$ is finite, $(c)$ holds by the definition of $\psi_{QU}$ and because $a-\delta \leq S \leq a+\delta$, $(d)$ holds because $\frac{\mathbb{P}(\tilde{W}_{\ell})e^{t\tilde{W}_{\ell}}}{\mathbb{E}[e^{t\tilde{W}_{\ell}]}}$ and $\frac{\mathbb{P}(\tilde{Z})e^{t\tilde{Z}}}{\mathbb{E}[e^{t\tilde{Z}}]}$ define two new probability distributions $\tilde{Q}$ and $\tilde{U}$ over the same support of $Q$ and $U$, respectively and $(e)$ holds from Chebyshev's inequality and by defining $\tilde{\sigma}^{2}_{L_{G}}$, $\tilde{\mu}_{L_{G}}$, $\tilde{\sigma}^{2}_{L_{S}}$ and $\tilde{\mu}_{L_{S}}$ to be the variances and means of $L_{G}$ and $L_{S}$ under $\tilde{Q}$ and $\tilde{U}$, respectively.

Since $ta-\psi_{QU}(K-1,M,t)$ is concave in $t$, to find $t^{*} = \arg\sup_{t \in \mathbb{R}} ta-\psi_{QU}(K-1,M,t)$ we set the derivative to zero, finding 
\[
a = \psi'_{QU} = \frac{(K-1)\mathbb{E}_{Q}[L_{G}e^{tL_{G}}]}{\mathbb{E}_{Q}[e^{tL_{G}}]} + M\frac{\mathbb{E}_{U}[L_{S}e^{tL_{S}}]}{\mathbb{E}_{U}[e^{tL_{S}}]}.
\]
Also, by the definition of $\tilde{Q}$ and $\tilde{U}$, 
\begin{align*}
(K-1)\tilde{\mu}_{L_{G}} + M\tilde{\mu}_{L_{S}} &=  \frac{(K-1)\mathbb{E}_{Q}[L_{G}e^{tL_{G}}]}{\mathbb{E}_{Q}[e^{tL_{G}}]} + M\frac{\mathbb{E}_{U}[L_{S}e^{tL_{S}}]}{\mathbb{E}_{U}[e^{tL_{S}}]} \\
&= a.
\end{align*}
Thus, by substituting in~\eqref{necc.proof.bounded.1}:
\begin{align}
& \mathbb{P}_{QU}\big(\sum_{\ell=1}^{K-1} \tilde{W}_{\ell} + \tilde{Z} \geq (K-1)\gamma\big) \twocolbreak
\geq e^{-\big(t^{*}a-\psi_{QU}(K-1,M,t^{*})\big) - |t^{*}|\delta } \big(1 - \frac{(K-1)\tilde{\sigma}^{2}_{L_{G}} + M\tilde{\sigma}^{2}_{L_{S}}}{\delta^{2}}\big) \label{necc.proof.bounded.2}
\end{align}
By direct computation, and Lemma~\ref{Lem.2},
\begin{align}
(K-1)\tilde{\sigma}^{2}_{L_{G}} + M\tilde{\sigma}^{2}_{L_{S}} & = \psi_{QU}''(K-1,M,t) \nonumber \\
& \leq B\big((K-1) D(P||Q) +  MD(V||U)\big)
\end{align}
for some positive constant $B$. This allows us to eliminate the Chebyshev term (asymptotically) by setting
\[
\delta = \big((K-1) D(P||Q) +  MD(V||U)\big)^{\frac{2}{3}}.
\]
Moreover, for sufficiently large $n$:
\begin{align}
a &= (K-1)\gamma + \delta \nonumber \\
&\leq K\gamma + \delta \nonumber\\ 
& \overset{(a)}{\leq} \frac{KD(P||Q)}{2+\epsilon_{o}} + \frac{MD(V||U)}{1+\frac{\epsilon_{o}}{2}} \twocolbreak \includeonetwocol{}{\hspace{0.2in}} + (KD(P||Q) +  MD(V||U))^{\frac{2}{3}} \nonumber\\ 
& \overset{}{\leq} KD(P||Q)+MD(V||U) (\frac{1}{1+\frac{\epsilon_{o}}{2}} +o(1)) \label{necc.proof.bounded.3}
\end{align}
where $(a)$ holds from~\eqref{necc.holds}. Thus, for sufficiently large $n$,
\begin{align*}
-(K-1)D(Q||P)-\twocolAlignMarker MD(U||V) \le a \twocolbreak \le (K-1)D(P||Q)+MD(V||U)
\end{align*}
Hence, by Lemma~\ref{Lem.1}, 
\begin{align*}
t^{*} &= \arg\sup_{t \in \mathbb{R}} ta-\psi_{QU}(K-1,M,t) \\
&= \arg\sup_{t \in [0,1]} ta-\psi_{QU}(K-1,M,t).
\end{align*}
Using this result and substituting in~\eqref{necc.proof.bounded.2}:
\begin{align}
\mathbb{P}_{QU}\big(\sum_{\ell=1}^{K-1} \tilde{W}_{\ell} + \tilde{Z} \geq (K-1)\gamma \big) & \geq e^{-E_{QU}(a,K-1,M) - \delta} \nonumber \\ 
& \geq e^{-E_{QU}(a,K,M) - \delta} \label{necc.proof.bounded.4}
\end{align}
where~\eqref{necc.proof.bounded.4} holds because $t^{*} \in [0,1]$ and  $\log(\mathbb{E}_{Q}[e^{tL_{G}}]) = (t-1)D_{t}(P||Q) \leq 0$, where $D_{t}(V||U) \geq 0$ is the R\'{e}nyi-divergence between distributions $P$ and $Q$~\cite{polyanski}.  Moreover,
\begin{align}
 E_{QU}(a,K,M)
& = E_{QU}((K-1)\gamma + \delta,K,M) \nonumber \\
& \leq E_{QU}(K\gamma + \delta,K,M) \nonumber \\
&\leq E_{QU}(K\gamma,K,M) + \delta \label{new.1.1}
\end{align} 
where~\eqref{new.1.1} holds because $t \in[0,1]$ and, by~\eqref{necc.proof.bounded.3},
\[a \in \big[-KD(Q||P)-MD(U||V), KD(P||Q)+MD(V||U)\big]
\]
 Also, by Lemma~\ref{Lem.2}, for some positive constant $B$:
\begin{align}
E_{QU}(0,K-1,M) 
& \geq B((K-1)D(Q||P) + MD(U||V)) \nonumber \\
& \geq B'((K-1)D(P||Q) + MD(V||U)) \label{new.1.2}
\end{align}
where~\eqref{new.1.2} holds for some positive constant $B'$ because for bounded LLR $D(Q||P) \approx D(P||Q)$ and $D(U||V) \approx D(V||U)$. Thus, for sufficiently large $n$, and for some positive constant $B''$:
\begin{align}
\delta &= ((K-1)D(P||Q) + MD(V||U))^{\frac{2}{3}} \twocolbreak \leq (B''E_{QU}(0,K-1,M))^{\frac{2}{3}} \nonumber \\
& \leq (B''E_{QU}(K\gamma,K,M))^{\frac{2}{3}} \label{new.1.3}
\end{align}
Combining Equations~\eqref{new.1.1},~\eqref{new.1.2},~\eqref{new.1.3}: 
\begin{align}
\twocolAlignMarker E_{QU}(a,K,M) + \delta \twocolnewline
 &\leq E_{QU}(K\gamma,K,M) + 2\delta \nonumber \\
 &\leq E_{QU}(K\gamma,K,M) + 2(B''E_{QU}(K\gamma,K,M))^{\frac{2}{3}}
\end{align}
Substituting in~\eqref{necc.proof.bounded.4}:
\begin{align}
\mathbb{P}_{QU}\big(\sum_{\ell=1}^{K-1} \tilde{W}_{\ell} \twocolAlignMarker+ \tilde{Z} \geq (K-1)\gamma \big) \twocolnewline
&\geq e^{-(E_{QU}(K\gamma,K,M) + 2(B''E_{QU}(K\gamma,K,M))^{\frac{2}{3}})} \nonumber \\  
& \overset{(a)}{\geq} e^{-( (1-\epsilon)\log(n) + 2(B''(1-\epsilon)\log(n))^{\frac{2}{3}})} \nonumber\\  
& \geq e^{-(1-\epsilon)\log(n)(1 + o(1))} \label{necc.proof.bounded.5}
\end{align}
where $(a)$ comes from the contraposition assumption that~\eqref{Cond.1.The.3.new} does not hold, i.e., $E_{QU}(K\gamma,K,M) \leq (1-\epsilon)\log(n)$ for arbitrary small $\epsilon > 0$. Equation~\eqref{necc.proof.bounded.5} shows that
\[
n\mathbb{P}_{QU}(\sum_{\ell=1}^{K-1} \tilde{W}_{\ell} + \tilde{Z} \geq (K-1)\gamma) \geq n^{\epsilon(1+o(1))}
\]
which implies that~\eqref{necc.general.1.2} does not hold for $\theta_{n} = \gamma$. 

Similarly, we will show that~\eqref{necc.general.1.1} does not hold for $\theta_{n} = \gamma$. Define
\begin{align}
K_{o} &= \frac{K}{\log(K)} = o(K) \nonumber \\
\delta'  &= \frac{(K_{o}-1)(D(P||Q)-\gamma) + 6\sqrt{K_o}\sigma}{(K-K_{o})D(P||Q) + MD(V||U)}
\end{align}
Note that $\delta' = o(1)$, which holds because $K\gamma \leq KD(P||Q) + MD(V||U)$,  $K_{o} = o(K)$ and $ K_{o}\sigma^{2} = K_{o} \frac{d^{2}(\log(\mathbb{E}_{Q}[e^{tL_{G}}]))}{dt^{2}}|_{t=1} \leq BK_{o} D(P||Q)$ by Lemma~\ref{Lem.2} for some positive constant $B$. Let $a = (K-K_{o}) (\gamma -\delta' D(P||Q) - \frac{\delta'}{K-K_{o}} MD(V||U)) - \delta$, for some $\delta > 0$. Then, by a similar analysis as in~\eqref{necc.proof.bounded.1}:
\begin{align}
&\mathbb{P}_{PV}\big(\sum_{\ell=1}^{K-K_{o}} W_{\ell} + Z \leq (K-1)\gamma + \tilde{\theta}_{n}\big) \nonumber\\ 
& = \mathbb{P}_{PV}\bigg(\sum_{\ell=1}^{K-K_{o}} W_{\ell} + Z  \twocolbreak \includeonetwocol{}{\hspace{0.2in}} \leq
(K-K_{o}) (\gamma -\delta' D(P||Q) - \frac{\delta'}{K-K_{o}} MD(V||U))\bigg)\nonumber \\ 
& \overset{(a)}{\geq} e^{-\big(t^{*}a-\psi_{PV}(K-K_{o},M,t^{*})\big) - |t^{*}|\delta } \twocolbreak \includeonetwocol{}{\hspace{0.2in}\times}
\big(1 - \frac{(K-K_{o})\tilde{\sigma}^{2}_{L_{G}} + M\tilde{\sigma}^{2}_{L_{S}}}{\delta^{2}}\big) \nonumber\\ 
& \overset{(b)}{\geq} e^{-(t^{*}a-\psi_{PV}(K-K_{o},M,t^{*})) - |t^{*}|\delta }(1 - o(1)) \label{necc.proof.bounded.6}
\end{align}
where $(a)$ holds for $t^{*} = \arg\sup_{t \in \mathbb{R}} ta-\psi_{PV}(K-K_{o},M,t)$ and by defining two new probability distributions $\tilde{P}$ and $\tilde{V}$ over the same support of $P$ and $V$, respectively and $\tilde{\sigma}^{2}_{L_{G}}$, $\tilde{\mu}_{L_{G}}$, $\tilde{\sigma}^{2}_{L_{S}}$ and $\tilde{\mu}_{L_{S}}$ to be the variances and means of $L_{G}$ and $L_{S}$ under $\tilde{P}$ and $\tilde{V}$, respectively. $(b)$ holds by choosing 
\[
\delta = ((K-K_{o}) D(P||Q) +  MD(V||U))^{\frac{2}{3}}
\]
and noticing that for bounded LLR, 
\begin{align*}
(K-K_{o})\tilde{\sigma}^{2}_{L_{G}} + M\tilde{\sigma}^{2}_{L_{S}} &= \psi''(K-K_{o},M,t) \\
&\leq B((K-K_{o}) D(P||Q) +  MD(V||U)),
\end{align*}
by Lemma~\ref{Lem.2} for some positive constant $B$. 

Moreover, for sufficiently large $n$:
\begin{align}
a &= (K-K_{o}) (\gamma -\delta' D(P||Q) - \frac{\delta'}{K-K_{o}} MD(V||U)) - \delta \nonumber \\
& = (1-o(1)) (K\gamma - K\delta'D(P||Q) - \delta'MD(V||U)) -\delta  \nonumber\\ \nonumber
& \overset{(a)}{\leq} (KD(P||Q)+MD(V||U)) ( \frac{1}{1+\frac{\epsilon_{o}}{2}} - \delta' - o(1)) \nonumber\\ 
& \overset{(b)}{\leq} KD(P||Q)+MD(V||U) \label{necc.proof.bounded.7}
\end{align}
where $(a)$ holds from~\eqref{necc.holds} and $(b)$ holds because $(K-1)D(P||Q) + MD(V||U) \to \infty$ and $\delta' = o(1)$. Thus 
\[
a \in [-KD(Q||P)-MD(U||V), KD(P||Q)+MD(V||U)].
\]
By Lemma~\ref{Lem.1}, 
\begin{align*}
    t^{*} &= \arg\sup_{t \in \mathbb{R}} ta-\psi_{PV}(K-K_{o},M,t) \\
    &= \arg\sup_{t \in [-1,0]} ta-\psi_{PV}(K,M,t).
    \end{align*}
    Substituting in~\eqref{necc.proof.bounded.6}:
\begin{align} 
\mathbb{P}_{PV}\big(\sum_{\ell=1}^{K-K_{o}} W_{\ell} \twocolAlignMarker + Z \leq (K-1)\gamma + \tilde{\theta}_{n}\big)  \twocolbreak
\geq e^{-(E_{PV}(a,K,M) + \delta)} (1-o(1)) \label{necc.proof.bounded.8}
\end{align}
Moreover,
\begin{align}
& E_{PV}(a,K,M) \twocolbreak \leq E_{PV}(K\gamma,K,M) +  \delta'(KD(P||Q)+MD(V||U)) + \delta \label{new.2.x.1}
\end{align}
which holds because $t \in [-1,0]$ and $a \in [-KD(Q||P)-MD(U||V), KD(P||Q)+MD(V||U)]$ by~\eqref{necc.proof.bounded.7}. Also, by Lemma~\ref{Lem.2}, for some positive constant $B$
\begin{align}
E_{PV}(K\gamma,K,M) &\geq E_{PV}(\frac{KD(P||Q)+MD(V||U)}{1+\frac{\epsilon_{o}}{2}},K,M) \nonumber \\
& \geq B(KD(P||Q) + MD(V||U)) \label{new.2.x.2}
\end{align}
Thus, for sufficiently large $n$ and for some positive constant $B'$:
\begin{align} 
\delta & = (KD(P||Q) + MD(V||U))^{\frac{2}{3}} \nonumber \\
& \leq (B'E_{PV}(K\gamma,K,M))^{\frac{2}{3}} \label{new.2.x.3}
\end{align}
Combining equations~\eqref{new.2.x.1},~\eqref{new.2.x.2},~\eqref{new.2.x.3}:
\begin{align}
& E_{PV}(a,K,M) + \delta \twocolbreak \leq E_{PV}(K\gamma,K,M) +  \delta'(KD(P||Q)+MD(V||U)) + 2\delta \nonumber \\
& \leq E_{PV}(K\gamma,K,M) + \delta'B''E_{PV}(K\gamma,K,M) \twocolbreak \includeonetwocol{}{\hspace{0.2in}} + 2(B'E_{PV}(K\gamma,K,M))^{\frac{2}{3}}
\end{align}
for some positive constants $B'$ and $B''$. Substituting in~\eqref{necc.proof.bounded.8}:
\begin{align} 
& \mathbb{P}_{PV}\big(\sum_{\ell=1}^{K-K_{o}} W_{\ell} + Z \leq (K-1)\gamma + \tilde{\theta}_{n}\big)  \twocolbreak
\geq e^{-E_{PV}(K\gamma,K,M)( 1 + \delta'B'' + \frac{2(B')^{\frac{2}{3}}}{(E_{PV}(K\gamma,K,M))^{\frac{1}{3}} }) } (1-o(1)) \nonumber\\  
& \overset{(a)}{=} e^{-E_{PV}(K\gamma,K,M)( 1 + o(1)) } \nonumber \\  
& \overset{(b)}{=} e^{(K\gamma - E_{QU}(K\gamma,K,M)) ( 1 + o(1) )} \nonumber\\  
& \overset{(c)}{\geq} e^{(\epsilon\log(n) - \log(K))(1 + o(1))}  \nonumber\\ 
&\geq e^{(\epsilon\log(K) - \log(K))(1 + o(1))}  \nonumber\\ 
& = e^{-\log(K) (1-\epsilon+ o(1))} \label{necc.proof.bounded.9}
\end{align}
where $(a)$ holds because $\delta' = o(1)$, $(b)$ holds because from Lemma~\ref{Lem.1} $E_{PV}(K\gamma,K,M) = E_{QU}(K\gamma,K,M) - K\gamma$ and $(c)$ is due to the contraposition assumption that~\eqref{Cond.1.The.3.new} does not hold, i.e., $E_{QU}(K\gamma,K,M) \leq (1-\epsilon)\log(n)$ for arbitrary small $\epsilon > 0$.

Equation~\eqref{necc.proof.bounded.9} shows:
\[
K_{o}\mathbb{P}_{PV}(\sum_{\ell=1}^{K-K_{o}} W_{\ell} + Z \leq (K-1)\gamma + \tilde{\theta}_{n}) \geq K^{\epsilon(1+o(1))}
\]
which implies that~\eqref{necc.general.1.1} does not hold for $\theta_{n} = \gamma$. 

Thus, if \eqref{Cond.1.The.3.new} does not hold, both ~\eqref{necc.proof.bounded.5} and~\eqref{necc.proof.bounded.9} show that ~\eqref{necc.general.1.1} and~\eqref{necc.general.1.2} does not hold simultaneously at $\theta_{n} = \gamma$. Thus, for any $\theta_{n} > \gamma$,~\eqref{necc.general.1.1} will not hold and for any $\theta_{n} < \gamma$,~\eqref{necc.general.1.2} will not hold, and hence, if \eqref{Cond.1.The.3.new} does not hold, then there does not exist $\theta_{n}$ such that ~\eqref{necc.general.1.1} and~\eqref{necc.general.1.2} hold simultaneously. This concludes the proof.


\section{Necessity of Theorem~\ref{The.3.5}}
\label{App.8}

Recall that Definition~\ref{def.4} introduced Chernoff-information-type functions for the LLR of the graph plus side information; for convenience we now introduce a narrowed version of the same functions that focus on graph information only.

\begin{definition}
\label{def.5}
\begin{align}
\psi_{Q}(t,m_{1})  & \triangleq m_{1}\log(\mathbb{E}_{Q}[e^{tL_{G}}]) \\
\psi_{P}(t,m_{1})  & \triangleq m_{1}\log(\mathbb{E}_{P}[e^{tL_{G}}]) \\
E_{Q}(\theta,m_{1}) &\triangleq \sup_{t\in[0,1]}  t\theta - \psi_{Q}(t,m_{1}) \\ 
E_{P}(\theta,m_{1}) &\triangleq \sup_{t\in[-1,0]} t\theta - \psi_{P}(t,m_{1})
\end{align}
\end{definition}

The quantities introduced in Definition~\ref{def.4} reduce to Definition~\ref{def.5} by setting $m_2=0$, therefore Lemmas~\ref{Lem.1} and~\ref{Lem.2} continue to hold.

In view of Lemma~\ref{Lem.necc.exact}, it suffices to test whether there exists $\theta_{n}$ such that both~\eqref{necc.general.1.1} and~\eqref{necc.general.1.2} hold. We will show that if one of the conditions $(1)$-$(6)$ of Theorem~\ref{The.3.5} is not satisfied, then there does not exist $\theta_{n}$ such that ~\eqref{necc.general.1.1} and~\eqref{necc.general.1.2} hold simultaneously.

Let $\theta_{n} = \gamma = \frac{\log(\frac{n}{K})}{K}$, and $a = (K-1)\gamma - \sum_{m=1}^{M} \hfeat + \delta$ for $\delta = \log(n)^{\frac{2}{3}}$. 
\begin{align}
\mathbb{P}&\big(\sum_{\ell=1}^{K-1} \tilde{W}_{\ell} + \tilde{Z} \geq (K-1)\gamma \big)\nonumber \\
&= \sum_{\ell_{1}=1}^{L_{1}}\cdots\sum_{\ell_{M}=1}^{L_{M}} \bigg[ (\prod_{m=1}^{M}\alpha_{-,\msubK}^{m}) \twocolbreak \includeonetwocol{}{\hspace{0.2in}\times} 
 \mathbb{P}_{Q}\big(\sum_{\ell=1}^{K-1} \tilde{W}_{\ell}  \geq (K-1)\gamma - \sum_{m=1}^{M} \hfeat \big) \bigg] \nonumber \\ 
&\overset{(a)}{\geq} \sum_{\ell_{1}=1}^{L_{1}}\cdots\sum_{\ell_{M}=1}^{L_{M}} \bigg[ (\prod_{m=1}^{M}\alpha_{-,\msubK}^{m})  \twocolbreak \includeonetwocol{}{\hspace{0.2in}\times}
e^{-(t^{*}a - (K-1)\log(\mathbb{E}_{Q}[e^{t^{*}L_{G}}]) ) - |t^{*}|\delta} (1 - o(1)) \bigg] \label{necc.proof.M_ary.1}
\end{align}
where $(a)$ holds by Lemma~\ref{Le.10}, where $t^{*} = \arg\sup_{t \in \mathbb{R}} (ta - (K-1)\log(\mathbb{E}_{Q}[e^{tL_{G}}]))$\footnote{For ease of notation, we omit any subscript for both $a$ and $t^{*}$. However, both depend on the outcomes of the features as shown in their definitions.}. 

Under~\eqref{reg.1}:
\begin{align}
KD(Q||P) &= \rho(a-b-bT)(1+o(1))\log(n) \nonumber\\
KD(P||Q) & = \rho(aT+b-a)(1+o(1))\log(n) \nonumber
\end{align}
Thus, according to conditions of Theorem~\ref{The.3.5}, 
\[
a \in \big[-KD(Q||P), KD(P||Q)\big].
\]
So, by Lemma~\ref{Lem.1}, 
\begin{align*}
    t^{*} &= \arg\sup_{t \in \mathbb{R}} (ta - (K-1)\log(\mathbb{E}_{Q}[e^{tL_{G}}])) \\
    & = \arg\sup_{t \in [0,1]} (ta - (K-1)\log(\mathbb{E}_{Q}[e^{tL_{G}}]))
\end{align*}

Without loss of generality, we focus on one term of the nested sum in~\eqref{necc.proof.M_ary.1}. Then,

\begin{itemize}
\itemsep 5pt
\item If $\sum_{m=1}^{M} \hfeat = o(\log(n))$ and both $\sum_{m=1}^{M} \log(\alpha_{+,\msubK}^{m})$ and $\sum_{m=1}^{M} \log(\alpha_{-,\msubK}^{m})$ are $ o(\log(n))$, then by evaluating the supremum and by substituting in~\eqref{necc.proof.M_ary.1},
\begin{align}
\mathbb{P}\big(\sum_{\ell=1}^{K-1} \tilde{W}_{\ell} + \tilde{Z} \geq (K-1)\gamma \big) & \geq n^{-\eta_{1}(\rho,a,b) + o(1)} \nonumber
\end{align}
Thus, if $\eta_{1}(\rho,a,b) \leq 1 - \varepsilon$ for some $0 < \varepsilon < 1$, then $(n-K)\mathbb{P}(\sum_{\ell=1}^{K-1} \tilde{W}_{\ell} + \tilde{Z} \geq (K-1)\gamma) \geq n^{\varepsilon +o(1)}$ which shows that~\eqref{necc.general.1.2} does not hold for $\theta_{n} = \gamma$.

\item If $\sum_{m=1}^{M} \hfeat = o(\log(n))$ and $\sum_{m=1}^{M} \log(\alpha_{+,\msubK}^{m}) = \sum_{m=1}^{M} \log(\alpha_{-,\msubK}^{m})=-\beta\log(n)+ o(\log(n)), \beta > 0$, then by evaluating the supremum and by substituting in~\eqref{necc.proof.M_ary.1},
\begin{align}
\mathbb{P}\big(\sum_{\ell=1}^{K-1} \tilde{W}_{\ell} + \tilde{Z} \geq (K-1)\gamma\big) &\geq n^{-\eta_{1}(\rho,a,b)-\beta + o(1)}\nonumber
\end{align}
Thus, if $\eta_{1}(\rho,a,b)+\beta \leq 1 - \varepsilon$ for some $0 < \varepsilon < 1$, then $(n-K)\mathbb{P}(\sum_{\ell=1}^{K-1} \tilde{W}_{\ell} + \tilde{Z} \geq (K-1)\gamma) \geq n^{\varepsilon +o(1)}$ which shows that~\eqref{necc.general.1.2} does not hold for $\theta_{n} = \gamma$.

\item If $\sum_{m=1}^{M} \hfeat = \beta\log(n)+ o(\log(n)), 0 < \beta < \rho(a-b-bT), \sum_{m=1}^{M}\log(\alpha_{+,\msubK}^{m}) = o(\log(n))$, then by evaluating the supremum and by substituting in~\eqref{necc.proof.M_ary.1},
\begin{align}
\mathbb{P}\big(\sum_{\ell=1}^{K-1} \tilde{W}_{\ell} + \tilde{Z} \geq (K-1)\gamma\big) &\geq n^{-\eta_{2}(\rho,a,b,\beta) + o(1)}\nonumber
\end{align}
Thus, if $\eta_{2}(\rho,a,b,\beta)  \leq 1 - \varepsilon$ for some $0 < \varepsilon < 1$, then $(n-K)\mathbb{P}(\sum_{\ell=1}^{K-1} \tilde{W}_{\ell} + \tilde{Z} \geq (K-1)\gamma) \geq n^{\varepsilon +o(1)}$ which shows that~\eqref{necc.general.1.2} does not hold for $\theta_{n} = \gamma$.

\item If $\sum_{m=1}^{M} \hfeat = -\beta\log(n)+ o(\log(n)), 0 < \beta < \rho(a-b-bT), \sum_{m=1}^{M}\log(\alpha_{-,\msubK}^{m}) = o(\log(n))$, then by evaluating the supremum and by substituting in~\eqref{necc.proof.M_ary.1},
\begin{align}
\mathbb{P}\big(\sum_{\ell=1}^{K-1} \tilde{W}_{\ell} + \tilde{Z} \geq (K-1)\gamma\big) &\geq n^{-\eta_{3}(\rho,a,b,\beta) + o(1)}\nonumber
\end{align}
Thus, if $\eta_{3}(\rho,a,b,\beta)  \leq 1 - \varepsilon$ for some $0 < \varepsilon < 1$, then $(n-K)\mathbb{P}(\sum_{\ell=1}^{K-1} \tilde{W}_{\ell} + \tilde{Z} \geq (K-1)\gamma) \geq n^{\varepsilon +o(1)}$ which shows that~\eqref{necc.general.1.2} does not hold for $\theta_{n} = \gamma$.

\item If $\sum_{m=1}^{M} \hfeat = \beta\log(n)+ o(\log(n)), 0 < \beta < \rho(a-b-bT), \sum_{m=1}^{M}\log(\alpha_{+,\msubK}^{m}) = -\beta'\log(n) + o(\log(n))$, then by evaluating the supremum and by substituting in~\eqref{necc.proof.M_ary.1},:
\begin{align}
\mathbb{P}\big(\sum_{\ell=1}^{K-1} \tilde{W}_{\ell} + \tilde{Z} \geq (K-1)\gamma\big) &\geq n^{-\eta_{2}(\rho,a,b,\beta) -\beta' + o(1)}\nonumber
\end{align}
Thus, if $\eta_{2}(\rho,a,b,\beta) + \beta'  \leq 1 - \varepsilon$ for some $0 < \varepsilon < 1$, then $(n-K)\mathbb{P}(\sum_{\ell=1}^{K-1} \tilde{W}_{\ell} + \tilde{Z} \geq (K-1)\gamma) \geq n^{\varepsilon +o(1)}$ which shows that~\eqref{necc.general.1.2} does not hold for $\theta_{n} = \gamma$.

\item If $\sum_{m=1}^{M} \hfeat = -\beta\log(n)+ o(\log(n)), 0 < \beta < \rho(a-b-bT), \sum_{m=1}^{M}\log(\alpha_{-,\msubK}^{m}) = -\beta'\log(n) + o(\log(n))$, then by evaluating the supremum and by substituting in~\eqref{necc.proof.M_ary.1},
\begin{align}
\mathbb{P}\big(\sum_{\ell=1}^{K-1} \tilde{W}_{\ell} + \tilde{Z} \geq (K-1)\gamma\big) &\geq n^{-\eta_{3}(\rho,a,b,\beta) -\beta' + o(1)}\nonumber
\end{align}
Thus, if $\eta_{3}(\rho,a,b,\beta) + \beta'  \leq 1 - \varepsilon$ for some $0 < \varepsilon < 1$, then $(n-K)\mathbb{P}(\sum_{\ell=1}^{K-1} \tilde{W}_{\ell} + \tilde{Z} \geq (K-1)\gamma) \geq n^{\varepsilon +o(1)}$ which shows that~\eqref{necc.general.1.2} does not hold for $\theta_{n} = \gamma$.
\end{itemize}

Now we show that~\eqref{necc.general.1.1} does not hold for $\theta_{n} = \gamma$. Let $K_{o} = \frac{K}{\log(K)} = o(K)$. Also, let $a = (K-1)\gamma + \tilde{\theta}_{n} - \sum_{m=1}^{M} \hfeat - \delta$ for $\delta = \log(n)^{\frac{2}{3}}$. Then,
\begin{align}
& \mathbb{P}\big(\sum_{\ell=1}^{K-K_{o}} W_{\ell} + Z \leq (K-1)\gamma + \tilde{\theta}_{n}\big) \nonumber \\
& \overset{(a)}{\geq} \sum_{\ell_{1}=1}^{L_{1}}\cdots\sum_{\ell_{L}=1}^{L_M} (\prod_{m=1}^{M}\alpha_{+,\msubK}^{m})  \twocolbreak \includeonetwocol{}{\hspace{0.2in}\times}
e^{-(t^{*}a - (K-K_{o})\log(\mathbb{E}_{P}[e^{t^{*}L_{G}}]) ) - |t^{*}|\delta} (1-o(1)) \nonumber \\
& \overset{(b)}{=} \sum_{\ell_{1}=1}^{L_{1}}\cdots\sum_{\ell_{L}=1}^{L_M} (\prod_{m=1}^{M}\alpha_{+,\msubK}^{m})  \twocolbreak \includeonetwocol{}{\hspace{0.2in}\times}
  e^{-(\lambda^{*}a - (K-K_{o})\log(\mathbb{E}_{Q}[e^{\lambda ^{*}L_{G}}]) )  + a - |\lambda^{*}-1|\delta } (1-o(1)) \label{necc.proof.M_ary.2}
\end{align}
where $(a)$ holds by Lemma~\ref{Le.10}, where $t^{*} = \arg\sup_{t \in \mathbb{R}} (ta - (K-K_{o})\log(\mathbb{E}_{P}[e^{tL_{G}}]) )$ and $(b)$ holds for $\lambda^{*} = 1 + t^{*}$ and by Lemma~\ref{Lem.1}.

Thus, according to conditions of Theorem~\ref{The.3.5}, 
\begin{equation}
a \in \big[-KD(Q||P), KD(P||Q)\big]. \label{key_step}
\end{equation}
Thus, by Lemma~\ref{Lem.1}, $\arg\sup_{t \in \mathbb{R}}$ is replaced by $\arg\sup_{t \in [-1,0]}$. 

\begin{itemize}
\itemsep 5pt
\item If $\sum_{m=1}^{M} \hfeat = o(\log(n))$ and both $\sum_{m=1}^{M} \log(\alpha_{+,\msubK}^{m})$ and $\sum_{m=1}^{M} \log(\alpha_{-,\msubK}^{m})$ are $ o(\log(n))$, then by evaluating the supremum and by substituting in~\eqref{necc.proof.M_ary.2},
\begin{align}
\mathbb{P}\big(\sum_{\ell=1}^{K-K_{o}} W_{\ell} + Z \leq (K-1)\gamma + \tilde{\theta}_{n}\big) & \geq n^{-\eta_{1}(\rho,a,b) + o(1)} \nonumber
\end{align}
Thus, if $\eta_{1}(\rho,a,b) \leq 1 - \varepsilon$ for some $0 < \varepsilon < 1$, then $K\mathbb{P}(\sum_{\ell=1}^{K-K_{o}} W_{\ell} + Z \leq (K-1)\gamma + \tilde{\theta}_{n}) \geq n^{\varepsilon +o(1)}$ which shows that~\eqref{necc.general.1.1} does not hold for $\theta_{n} = \gamma$.

\item If $\sum_{m=1}^{M} \hfeat = o(\log(n))$ and $\sum_{m=1}^{M} \log(\alpha_{+,\msubK}^{m}) = \sum_{m=1}^{M} \log(\alpha_{-,\msubK}^{m})=-\beta\log(n)+ o(\log(n)), \beta > 0$, then by evaluating the supremum and by substituting in~\eqref{necc.proof.M_ary.2},
\begin{align}
\mathbb{P}\big(\sum_{\ell=1}^{K-K_{o}} W_{\ell} + Z \leq (K-1)\gamma + \tilde{\theta}_{n}\big) &\geq n^{-\eta_{1}(\rho,a,b)-\beta + o(1)} \nonumber
\end{align}
Thus, if $\eta_{1}(\rho,a,b)+\beta \leq 1 - \varepsilon$ for some $0 < \varepsilon < 1$, then $K\mathbb{P}(\sum_{\ell=1}^{K-K_{o}} W_{\ell} + Z \leq (K-1)\gamma + \tilde{\theta}_{n}) \geq n^{\varepsilon +o(1)}$ which shows that~\eqref{necc.general.1.1} does not hold for $\theta_{n} = \gamma$.

\item If $\sum_{m=1}^{M} \hfeat = \beta\log(n)+ o(\log(n)), 0 < \beta < \rho(a-b-bT), \sum_{m=1}^{M}\log(\alpha_{+,\msubK}^{m}) = o(\log(n))$, then by evaluating the supremum and by substituting in~\eqref{necc.proof.M_ary.2},
\begin{align}
\mathbb{P}\big(\sum_{\ell=1}^{K-K_{o}} W_{\ell} + Z \leq (K-1)\gamma + \tilde{\theta}_{n}\big) &\geq n^{-\eta_{2}(\rho,a,b,\beta) + o(1)}\nonumber
\end{align}
Thus, if $\eta_{2}(\rho,a,b,\beta)  \leq 1 - \varepsilon$ for some $0 < \varepsilon < 1$, then $K\mathbb{P}(\sum_{\ell=1}^{K-K_{o}} W_{\ell} + Z \leq (K-1)\gamma + \tilde{\theta}_{n}) \geq n^{\varepsilon +o(1)}$ which shows that~\eqref{necc.general.1.1} does not hold for $\theta_{n} = \gamma$.

\item If $\sum_{m=1}^{M} \hfeat = -\beta\log(n)+ o(\log(n)), 0 < \beta < \rho(a-b-bT), \sum_{m=1}^{M}\log(\alpha_{-,\msubK}^{m}) = o(\log(n))$, then by evaluating the supremum and by substituting in~\eqref{necc.proof.M_ary.2},
\begin{align}
\mathbb{P}\big(\sum_{\ell=1}^{K-K_{o}} W_{\ell} + Z \leq (K-1)\gamma + \tilde{\theta}_{n}\big) &\geq n^{-\eta_{3}(\rho,a,b,\beta) + o(1)}\nonumber
\end{align}
Thus, if $\eta_{3}(\rho,a,b,\beta)  \leq 1 - \varepsilon$ for some $0 < \varepsilon < 1$, then $K\mathbb{P}(\sum_{\ell=1}^{K-K_{o}} W_{\ell} + Z \leq (K-1)\gamma + \tilde{\theta}_{n}) \geq n^{\varepsilon +o(1)}$ which shows that~\eqref{necc.general.1.1} does not hold for $\theta_{n} = \gamma$.

\item If $\sum_{m=1}^{M} \hfeat = \beta\log(n)+ o(\log(n)), 0 < \beta < \rho(a-b-bT), \sum_{m=1}^{M}\log(\alpha_{+,\msubK}^{m}) = -\beta'\log(n) + o(\log(n))$, then by evaluating the supremum and by substituting in~\eqref{necc.proof.M_ary.2},
\begin{align}
\mathbb{P}\big(\sum_{\ell=1}^{K-K_{o}} W_{\ell} + Z \leq (K-1)\gamma + \tilde{\theta}_{n}\big) &\geq n^{-\eta_{2}(\rho,a,b,\beta) -\beta' + o(1)} \nonumber
\end{align}
Thus, if $\eta_{2}(\rho,a,b,\beta) + \beta'  \leq 1 - \varepsilon$ for some $0 < \varepsilon < 1$, then $K\mathbb{P}(\sum_{\ell=1}^{K-K_{o}} W_{\ell} + Z \leq (K-1)\gamma + \tilde{\theta}_{n}) \geq n^{\varepsilon +o(1)}$ which shows that~\eqref{necc.general.1.1} does not hold for $\theta_{n} = \gamma$.

\item If $\sum_{m=1}^{M} \hfeat = -\beta\log(n)+ o(\log(n)), 0 < \beta < \rho(a-b-bT), \sum_{m=1}^{M}\log(\alpha_{-,\msubK}^{m}) = -\beta'\log(n) + o(\log(n))$, then by evaluating the supremum and by substituting in~\eqref{necc.proof.M_ary.2},
\begin{align}
\mathbb{P}\big(\sum_{\ell=1}^{K-K_{o}} W_{\ell} + Z \leq (K-1)\gamma + \tilde{\theta}_{n}\big) &\geq n^{-\eta_{3}(\rho,a,b,\beta) -\beta' + o(1)}\nonumber
\end{align}
Thus, if $\eta_{3}(\rho,a,b,\beta) + \beta'  \leq 1 - \varepsilon$ for some $0 < \varepsilon < 1$, then $K\mathbb{P}(\sum_{\ell=1}^{K-K_{o}} W_{\ell} + Z \leq (K-1)\gamma + \tilde{\theta}_{n}) \geq n^{\varepsilon +o(1)}$ which shows that~\eqref{necc.general.1.1} does not hold for $\theta_{n} = \gamma$.
\end{itemize}

To summarize, when $\theta_{n} = \gamma$, if one of the conditions $(1)$-$(6)$ of Theorem~\ref{The.3.5} does not hold, then~\eqref{necc.general.1.1} and~\eqref{necc.general.1.2} cannot hold simultaneously. Thus, for any $\theta_{n} > \gamma$,~\eqref{necc.general.1.1} will not hold and for any $\theta_{n} < \gamma$,~\eqref{necc.general.1.2} will not hold, and hence, if one of the conditions $(1)$-$(6)$ of Theorem~\ref{The.3.5} does not hold, then there does not exist $\theta_{n}$ such that ~\eqref{necc.general.1.1} and~\eqref{necc.general.1.2} hold simultaneously. This concludes the proof of the necessary conditions.

Finally, we comment on how the proof would change if instead of the regime~\eqref{reg.1}, $K$ was chosen such that for all large $n$, $\log(\frac{n}{K}) = (C-o(1)) \log(n)$ for some constant $C \in (0,1]$. A key step in the proof was to ensure that $\theta$ in definition~\ref{def.5} is between $[-KD(Q||P),\; KD(P||Q)]$, e.g, see~\eqref{key_step}. Hence, the only modification needed is to take $C$ into account. For example, when $\sum_{m=1}^{M} \hfeat = \beta\log(n)+ o(\log(n))$ for some positive $\beta$, then a condition on $\beta$ would be $-\rho(a-b-bT) < C \pm \beta < \rho(a-b-bT) $. The proofs for the modified regime would then follow a similar strategy as the proofs in this section. Similar modifications are needed for  the sufficiency proofs as well.


\section{Sufficiency of Theorem~\ref{The.3.5}}
\label{App.10}

The sufficient conditions are derived via Algorithm~\ref{Alg.1} provided in Section~\ref{exact.lim.1} with only one modification in the weak recovery step. Since the LLRs of the side information may not be bounded, the maximum likelihood detector with side information presented in Lemma~\ref{Suff.Random} cannot be used for the weak recovery step. Instead the maximum likelihood detector without side information provided in~\cite{infor_limits} will be used. 

The following lemma gives sufficient conditions for Algorithm~\ref{Alg.1} to achieve exact recovery.
\begin{lemma}
\label{The.3.6}
Define $C^*_k = C^* \cap {S_k}^c$ and assume $\hat{C}_k$ achieves weak recovery, i.e.
\begin{equation}
\label{M_ary.suff.eq.1}
\mathbb{P}\big(|\hat{C}_{k} \triangle C^{*}_{k}| \leq \delta K \text{ for } 1\leq k \leq \frac{1}{\delta}\big) \to 1
\end{equation}
Under conditions~\eqref{reg.1}, if conditions $(1)$-$(6)$ of Theorem~\ref{The.3.5} hold, then $\mathbb{P}(\tilde{C} = C^{*}) \to 1$.
\end{lemma}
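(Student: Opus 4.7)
The plan is to mirror the proof of Lemma~\ref{The.5.new} in Appendix~\ref{App.5}, extending the tail-probability machinery of Lemma~\ref{Lem.suff.exact.new} to side information whose LLR may scale with $n$. First I would observe that, under hypothesis~\eqref{M_ary.suff.eq.1}, the weak-recovery step returns $|\hat C_k \cap C^*_k| \geq K(1-2\delta)$ for every $k$ with probability $1-o(1)$. Fix the threshold at $K(1-\delta)\gamma$ with $\gamma = \log(n/K)/K$, and invoke the stochastic-dominance argument of Appendix~\ref{App.5}: for $i \in C^*$ the score $r_i$ dominates $\sum_{\ell=1}^{K(1-2\delta)} W_\ell + \sum_{\ell=1}^{K\delta} \tilde W_\ell + Z$, while for $i \notin C^*$ it equals in distribution $\sum_{\ell=1}^{K(1-\delta)} \tilde W_\ell + \tilde Z$, where $W,\tilde W$ are i.i.d.\ copies of $L_G$ under $P,Q$, and $Z,\tilde Z$ are the side-information LLR sums under $V_m$ and $U_m$ respectively. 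A union bound over the $K$ inside and $n-K$ outside nodes then reduces the lemma to showing the first probability is $o(1/K)$ and the second is $o(1/n)$.

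The central new ingredient, replacing the R\'enyi-divergence argument of Lemma~\ref{Lem.suff.exact.new}, is to condition on each side-information outcome $[\ell_1,\ldots,\ell_M]$ before applying Chernoff to the graph sum; this is permissible because $M$ and each $L_m$ are constant, so there are only finitely many outcomes. For the outside-node event, the per-outcome contribution is then bounded by $\exp\bigl(f_3(n) - E_Q(K(1-\delta)\gamma - f_1(n),\, K(1-\delta))\bigr)$ with $E_Q$ as in Definition~\ref{def.5}, where Lemma~\ref{Lem.1} justifies the Legendre transform once one checks that the effective threshold $K(1-\delta)\gamma - f_1(n)$ lies in $[-KD(Q||P),\,KD(P||Q)]$ under~\eqref{reg.1}; this is exactly the step that makes the constraint $|\beta_i| < \rho(a-b-bT)$ in Theorem~\ref{The.3.5} indispensable. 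The inside-node probability is handled symmetrically using $E_P$ and $f_2$ in place of $E_Q$ and $f_3$, with the extra $\sum_{\ell=1}^{\delta K} \tilde W_\ell$ absorbed by R\'enyi monotonicity exactly as in the derivation leading to~\eqref{Lem.suff.exact.eq.6.new}, producing only an $o(1)$ correction in the exponent.

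What remains is a direct calculus evaluation of the suprema defining $E_Q$ and $E_P$ under~\eqref{reg.1}. Depending on the trend of $f_1(n)$, this yields the exponent $\eta_1(\rho,a,b)$, $\eta_2(\rho,a,b,\beta)$, or $\eta_3(\rho,a,b,\beta)$ from~\eqref{eta1}--\eqref{eta3}, while $f_2(n)$ or $f_3(n)$ contributes an additive shift $\beta'$. Matching the resulting exponents against items (1)--(6) of Theorem~\ref{The.3.5} shows that, whenever all six strict inequalities hold, each per-outcome contribution is $e^{-(1+\Omega(1))\log n}$, and after the finite sum over outcomes and the union bound over nodes, $\mathbb{P}(\tilde C = C^*) \to 1$. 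The main obstacle is not any single case but ensuring that the Chernoff optimizer stays in $[0,1]$ uniformly across outcomes and across the $1/\delta$ subproblems; this is what fixes the $\eta_i$ normalizations and also explains why the Remark following Theorem~\ref{The.3.5} has to exclude $f_1 = \omega(\log n)$ and $|\beta|\geq \rho(a-b-bT)$.
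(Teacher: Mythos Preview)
Your proposal is correct and follows essentially the same route as the paper's proof in Appendix~\ref{App.9}: condition on the weak-recovery event, apply the stochastic-dominance reduction of Appendix~\ref{App.5}, then condition on each of the finitely many side-information outcomes before applying Chernoff to the graph sum, and finally evaluate the exponents case-by-case against the six items of Theorem~\ref{The.3.5}. The paper's treatment of the contaminated $\delta K$ terms in the inside-node bound is a direct convexity/Lemma~\ref{Lem.2} calculation rather than an explicit R\'enyi-monotonicity step, but the effect is the same $o(1)$ correction you describe.
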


\begin{proof}
Please see Appendix~\ref{App.9}
\end{proof}

In view of Lemma~\ref{The.3.6}, it suffices to show that there exists an estimator that achieves weak recovery for a random cluster size and satisfies~\eqref{M_ary.suff.eq.1}. We use the estimator presented in~\cite[Lemma 4]{infor_limits}, where it was shown that the maximum likelihood estimator can achieve weak recovery for a random cluster size upon observing only the graph if:
\begin{align}
& KD(P||Q) \to \infty \label{suff.M.1} \\
& \liminf_{n\to\infty} (K-1)D(P||Q) \geq 2\log(\frac{n}{K}) \label{suff.M.2} \\
& \mathbb{P}\bigg( \bigg| |C^{*}_{k}| - (1-\delta)K \bigg| \geq \frac{K}{\log(K)} \bigg) \leq o(1) \label{suff.M.3}
\end{align}
It is obvious that in the regime~\eqref{reg.1}, both~\eqref{suff.M.1} and~\eqref{suff.M.2} are satisfied. Thus, it remains to show that~\eqref{suff.M.3} holds too. Let $\hat{C}_{k}$ be the ML estimator for $C^{*}_{k}$ based on observing $\boldsymbol{G}_{k}$ defined in Algorithm~\ref{Alg.1}. The distribution of $|C^{*}_{k}|$ is obtained by sampling the indices of the original graph without replacement. Hence, for any convex function $\phi$: $\mathbb{E}[\phi(|C^{*}_{k}|)] \leq \mathbb{E}[\phi(Z)]$, where $Z$ is a binomial random variable $\text{Bin}(n(1-\delta),\frac{K}{n})$. Therefore, the Chernoff bound for $Z$ also holds for $|C^{*}_{k}|$. Thus,
\begin{align}
& \mathbb{P}\Big( \Big| |C^{*}_{k}| - (1-\delta)K \Big| \geq \frac{K}{\log(K)} \Big) \overset{}{\leq} o(1)\label{Suff.exact.eq.1}
\end{align}
Thus,~\eqref{suff.M.3} holds, which implies that ML achieves weak recovery with $K$ replaced with $\lceil (1-\delta)K \rceil$ in~\cite[Lemma 4]{infor_limits}. Thus, from~\cite[Lemma 4]{infor_limits}, for any $1 \leq k \leq \frac{1}{\delta}$:
\begin{equation}
\mathbb{P}\Big(\frac{ |\hat{C}_{k}\triangle C^{*}_{k} |}{K}  \leq 2\epsilon + \frac{1}{\log(K)}\Big) \geq 1 - o(1)
\end{equation}
with $\epsilon = o(1)$. Since $\delta$ is constant,  by the union bound over all $1 \leq k \leq \frac{1}{\delta}$, we have:
\begin{equation}
\mathbb{P}\Big(\frac{ |\hat{C}_{k}\triangle C^{*}_{k} |}{K}  \leq 2\epsilon + \frac{1}{\log(K)} \quad\forall 1 \leq k \leq \frac{1}{\delta}\Big) \geq 1 - o(1)
\end{equation}
Since $\epsilon = o(1)$, the desired~\eqref{M_ary.suff.eq.1} holds.


\section{Proof of Lemma~\ref{The.3.6}}
\label{App.9}

To prove Lemma~\ref{The.3.6}, we follow essentially the same strategy used for Lemma~\ref{The.5.new} in Appendix~\ref{App.5}. Namely, we intend to show that the total LLR for nodes inside and outside the community are, asymptotically, stochastically dominated by a certain constant. Since the strategy is essentially similar to an earlier result, we only provide a sketch in this appendix.

\begin{lemma}
\label{Lem.suff.M_ary}
In the regime~\eqref{reg.1}, suppose conditions $(1)$-$(6)$ of Theorem~\ref{The.3.5} hold. Let $\{W_{\ell}\}$ and $\{\tilde{W}_{\ell}\}$ denote two sequences of i.i.d copies of $L_{G}$ under $P$ and $Q$, respectively. Also, let $Z$ be a random variable whose distribution is identical to $\sum_{m=1}^{M} \hnod$ conditioned on $i\in C^{*}$, and $\tilde{Z}$ drawn according to the same distribution conditioned on $i\notin C^{*}$. Then, for sufficiently small constant $\delta$ and $\gamma = \frac{\log(\frac{n}{K})}{K}$:
\begin{align} 
\mathbb{P}\Big( \sum_{\ell=1}^{K(1-\delta)}  \tilde{W}_{\ell} + \tilde{Z}  \geq K(1-\delta)\gamma \Big) = o(\frac{1}{n}) \label{suff.M_ary.eq.2} \\
\mathbb{P}\Big( \sum_{\ell=1}^{K(1-2\delta)} W_{\ell} +\sum_{\ell=1}^{\delta K}  \tilde{W}_{\ell} + Z \leq K(1-\delta)\gamma \Big) = o(\frac{1}{K}) \label{suff.M_ary.eq.3} 
\end{align}
\end{lemma}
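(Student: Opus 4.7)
The plan is to follow the blueprint of Lemma~\ref{Lem.suff.exact.new} in Appendix~\ref{App.5}, but adapt it to the variable-quality (possibly unbounded) side information via the six-way case split of Theorem~\ref{The.3.5}. Both bounds will be proved by a Chernoff estimate in which the side-information contribution is handled by explicit conditioning on the outcome vector $[\ell_1,\ldots,\ell_M]$ and then summing, rather than absorbed into an expectation $\log\mathbb{E}_U[e^{tL_S}]$ as in the bounded-LLR case. For each outcome, Lemma~\ref{Lem.1} restricts the Chernoff sup to $t\in[0,1]$ (for \eqref{suff.M_ary.eq.2}) or $t\in[-1,0]$ (for \eqref{suff.M_ary.eq.3}) once we verify that the relevant $\theta$ lies in $[-KD(Q\|P),\,KD(P\|Q)]$ under the regime~\eqref{reg.1}; this is where the constraints $0<\beta<\rho(a-b-bT)$ in items (3)--(6) are used.

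For~\eqref{suff.M_ary.eq.2} I would first write
\begin{align*}
\mathbb{P}\Big(\sum_{\ell=1}^{K(1-\delta)}\tilde W_\ell+\tilde Z\ge K(1-\delta)\gamma\Big)
= \sum_{\ell_1,\ldots,\ell_M}\Big(\prod_{m=1}^M\alpha_{-,\ell_m}^m\Big)\,
\mathbb{P}_Q\Big(\sum_{\ell=1}^{K(1-\delta)}\tilde W_\ell\ge K(1-\delta)\gamma - f_1(n)\Big),
\end{align*}
apply Chernoff to the inner probability, and use Lemma~\ref{Lem.1} to identify the resulting exponent with $E_Q$ from Definition~\ref{def.5}. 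The prefactor $\prod_m\alpha_{-,\ell_m}^m = e^{f_3(n)}$ combines with the Chernoff exponent. Substituting the asymptotic behavior of $(f_1,f_2,f_3)$ specified by each of the six items of Theorem~\ref{The.3.5} and carrying out the resulting explicit one-dimensional maximization (which is identical to the one performed in Appendix~\ref{App.8} for the necessity argument, only with the inequality reversed), each contribution evaluates to $n^{-\eta_j(\rho,a,b,\cdot)+o(1)}$ for the relevant $j\in\{1,2,3\}$. The hypothesis $\eta_j>1$ then yields $o(1/n)$; since $M$ and each $L_m$ are constant, only a constant number of outcome sequences contribute, so the sum stays $o(1/n)$.

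For~\eqref{suff.M_ary.eq.3} the structure is the same, but with $t=\lambda^*\in[-1,0]$ and the Chernoff exponent split into $K(1-2\delta)\psi_P(\lambda^*,1)+\delta K\psi_Q(\lambda^*,1)$ plus the conditioned side-information term. Using $E_{PV}(\theta,K,M)=E_{QU}(\theta,K,M)-\theta$ from Lemma~\ref{Lem.1}, I would convert the $P$-based exponent at $\theta=K(1-\delta)\gamma$ into $E_{QU}(K\gamma,K,M)-K\gamma$ up to an $O(\delta)$ perturbation, and then invoke exactly the same explicit suprema as above to obtain an overall exponent $\log K\cdot(1+\varepsilon)+o(\log K)$ whenever the corresponding $\eta_j>1$, which gives $o(1/K)$.

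The main obstacle is the $\delta$ bookkeeping: showing that the perturbation from $K$ to $K(1-\delta)$ (and from the ``pure'' edge count to the mixed $K(1-2\delta)$ plus $\delta K$ count) only shifts the exponent by a multiplicative $1+O(\delta)$, so that the strict inequalities $\eta_j(\rho,a,b,\cdot)>1$ in Theorem~\ref{The.3.5} are preserved for sufficiently small constant $\delta$. This runs parallel to~\eqref{Lem.suff.exact.eq.3.new}--\eqref{Lem.suff.exact.eq.6.new} in Appendix~\ref{App.5}, but now needs to be carried out inside each of the six outcome-sequence regimes, using the fact that the $\alpha^m_{\pm,\ell}$ are either constant or monotonic in $n$ so that the listed behaviors of $f_1,f_2,f_3$ exhaust every sequence that contributes to the Chernoff sum at exponential order. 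Once all six cases are verified, the union over outcome sequences yields the two displayed bounds and the lemma follows.
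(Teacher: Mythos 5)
Your proposal is correct and follows essentially the same route as the paper's proof in Appendix~\ref{App.9}: condition on the side-information outcome so that the probability splits into a finite sum with prefactor $\prod_m \alpha^m_{\pm,\ell_m}$, apply Chernoff only to the graph-edge sum, restrict the supremum to $[0,1]$ (or $[-1,0]$) via Lemma~\ref{Lem.1} with $m_2=0$ after checking the threshold is in $[-KD(Q\|P),\,KD(P\|Q)]$, and then evaluate the exponent explicitly in each of the six $(f_1,f_2,f_3)$ regimes to get $n^{-\eta_j+o(1)}$; the $\delta$-bookkeeping (mixing $K(1-2\delta)$ copies under $P$ with $\delta K$ under $Q$, controlling the $\delta$-terms via Lemma~\ref{Lem.2}) then yields the stated $o(1/n)$ and $o(1/K)$ once $\delta$ is taken small relative to the slack $\eta_j-1$. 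One small notational slip: in the second bound you refer to $E_{PV}(\theta,K,M)$ and $E_{QU}(\theta,K,M)$, but since the side information has already been conditioned away these should be $E_P(\theta,K)$ and $E_Q(\theta,K)$ from Definition~\ref{def.5} (i.e.\ $m_2=0$), which is also what the paper uses.
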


\begin{proof}
Using the Chernoff bound:
\begin{align}
&\mathbb{P}\Big( \sum_{\ell=1}^{K(1-\delta)} \tilde{W}_{\ell} + \tilde{Z}  \geq K(1-\delta)\gamma \Big) \nonumber \\
& \leq \mathbb{P}\Big( \sum_{\ell=1}^{K}  \tilde{W}_{\ell} + \tilde{Z} \geq K(1-\delta)\gamma \Big) \nonumber \\
& \leq \sum_{\ell_{1}=1}^{L_{1}}\cdots\sum_{\ell_{M}=1}^{L_{M}} \Big(\prod_{m=1}^{M}\alpha_{-,\msubK}^{m}\Big)  \twocolbreak \includeonetwocol{}{\hspace{0.2in}\times}
 e^{-\sup_{t \geq 0} t(K(1-\delta)\gamma - \sum_{m=1}^{M} \hfeat) - K\log(\mathbb{E}_{Q}[e^{tL_{G}}]) } \label{suff.proof.M_ary.4}
\end{align}
The terms inside the nested sum in~\eqref{suff.proof.M_ary.4} are upper bounded by:
\begin{itemize}
\itemsep 5pt
\item  $n^{-\eta_{1}(\rho,a,b) + o(1)}$, if $\sum_{m=1}^{M} \hfeat = o(\log(n))$ and both $\sum_{m=1}^{M} \log(\alpha_{+,\msubK}^{m})$ and $\sum_{m=1}^{M} \log(\alpha_{-,\msubK}^{m})$ are $ o(\log(n))$.

\item $n^{-\eta_{1}(\rho,a,b) - \beta + o(1)}$, if $\sum_{m=1}^{M} \hfeat = o(\log(n))$ and $\sum_{m=1}^{M} \log(\alpha_{+,\msubK}^{m}) = \sum_{m=1}^{M} \log(\alpha_{-,\msubK}^{m})=-\beta\log(n)+ o(\log(n)), \beta > 0$.

\item $n^{-\eta_{2}(\rho,a,b,\beta) + o(1)}$, if $\sum_{m=1}^{M} \hfeat = \beta\log(n)+ o(\log(n)), 0 < \beta < \rho(a-b-bT), \sum_{m=1}^{M}\log(\alpha_{+,\msubK}^{m}) = o(\log(n))$.

\item $n^{-\eta_{3}(\rho,a,b,\beta) + o(1)}$, if $\sum_{m=1}^{M} \hfeat = -\beta\log(n)+ o(\log(n)), 0 < \beta < \rho(a-b-bT), \sum_{m=1}^{M}\log(\alpha_{-,\msubK}^{m}) = o(\log(n))$.

\item $n^{-\eta_{2}(\rho,a,b,\beta) - \beta' + o(1)}$, if $\sum_{m=1}^{M} \hfeat = \beta\log(n)+ o(\log(n)), 0 < \beta < \rho(a-b-bT), \sum_{m=1}^{M}\log(\alpha_{+,\msubK}^{m}) = -\beta'\log(n) + o(\log(n))$.

\item $n^{-\eta_{3}(\rho,a,b,\beta) - \beta' + o(1)}$, if $\sum_{m=1}^{M} \hfeat = -\beta\log(n)+ o(\log(n)), 0 < \beta < \rho(a-b-bT), \sum_{m=1}^{M}\log(\alpha_{-,\msubK}^{m}) = -\beta'\log(n) + o(\log(n))$.
\end{itemize}
Since $M$ and $L_m$ are independent of $n$ and finite, it follows that if items $(1)$-$(6)$ of Theorem~\ref{The.3.5} are satisfied, then Equation~\eqref{suff.M_ary.eq.2} holds. 

To show~\eqref{suff.M_ary.eq.3}, Chernoff bound is used.
\begin{align}
& \mathbb{P}\Big( \sum_{\ell=1}^{K(1-2\delta)} W_{\ell} +\sum_{\ell=1}^{\delta K} \tilde{W}_{\ell} + Z \leq K(1-\delta)\gamma \Big) \nonumber \\ 
& \leq \sum_{\ell_{1}=1}^{L_{1}}\cdots\sum_{\ell_{M}=1}^{L_{M}} (\prod_{m=1}^{M}\alpha_{+,\msubK}^{m})  \twocolbreak \includeonetwocol{}{\hspace{0.2in}\times}
 e^{t(K(1-2\delta)\gamma- \sum_{m=1}^{M} \hfeat)+ K(1-2\delta) \log(\mathbb{E}_{P}[e^{-tL_{G}}])}  \twocolbreak \includeonetwocol{}{\hspace{0.2in}\times}
e^{K\delta\log(\mathbb{E}_{Q}[e^{-tL_{G}}]) + tK\gamma\delta} \label{suff.proof.M_ary.5}
\end{align}
Without loss of generality, we focus on one term inside the nested sum in\eqref{suff.proof.M_ary.5}:

\begin{itemize}
\itemsep 5pt
\item  If $\sum_{m=1}^{M} \hfeat = o(\log(n))$ and both $\sum_{m=1}^{M} \log(\alpha_{+,\msubK}^{m})$ and $\sum_{m=1}^{M} \log(\alpha_{-,\msubK}^{m})$ are $ o(\log(n))$, then:
\begin{align}
& (\prod_{m=1}^{M}\alpha_{+,\msubK}^{m})  e^{t(K(1-2\delta)\gamma- \sum_{m=1}^{M} \hfeat)+ K(1-2\delta) \log(\mathbb{E}_{P}[e^{-tL_{G}}])} \twocolbreak \includeonetwocol{}{\hspace{0.2in}\times}
e^{ K\delta\log(\mathbb{E}_{Q}[e^{-tL_{G}}]) + tK\gamma\delta} \nonumber \\
&\leq (\prod_{m=1}^{M}\alpha_{+,\msubK}^{m})  e^{(1-2\delta) \big( t(k\gamma -\frac{\sum_{m=1}^{M} \hfeat}{1-2\delta}) + K \log(\mathbb{E}_{P}[e^{-tL_{G}}]) \big)} \twocolbreak \includeonetwocol{}{\hspace{0.2in}\times}
e^{ \delta \big( tK\gamma +  K\log(\mathbb{E}_{Q}[e^{-tL_{G}}])\big)}\label{suff.proof.M_ary.6}
\end{align}
Since $\sum_{m=1}^{M} \hfeat = o(\log(n))$, it is easy to show that \
\[
K\gamma-\frac{\sum_{m=1}^{M} \hfeat}{1-2\delta} \in [-KD(Q||P)\;,\;KD(P||Q)].
\]
Define $\theta \triangleq K\gamma-\frac{\sum_{m=1}^{M} \hfeat}{1-2\delta}$ and choose $t^{*}\in [0,1]$, such that $t^{*}\theta + K\log(\mathbb{E}[e^{-t^{*}L_{G}}]) = -E_{P}(\theta,K)$. Substituting in~\eqref{suff.proof.M_ary.6}:
\begin{align}
& (\prod_{m=1}^{M}\alpha_{+,\msubK}^{m})  e^{t(K(1-2\delta)\gamma- \sum_{m=1}^{M} \hfeat)+ K(1-2\delta) \log(\mathbb{E}_{P}[e^{-tL_{G}}])}  \twocolbreak \includeonetwocol{}{\hspace{0.2in}\times}
e^{K\delta\log(\mathbb{E}_{Q}[e^{-tL_{G}}]) + tK\gamma\delta} \nonumber \\ 
& \leq (\prod_{m=1}^{M}\alpha_{+,\msubK}^{m}) e^{-(1-2\delta)E_{P}(\theta,K) + \delta \big( t^{*}K\gamma +  K\log(\mathbb{E}_{Q}[e^{-t^{*}L_{G}}])\big)}  \nonumber \\
&\leq  (\prod_{m=1}^{M}\alpha_{+,\msubK}^{m}) e^{-(1-2\delta)E_{P}(\theta,K) + \delta \big( K\gamma +  K\log(\mathbb{E}_{Q}[e^{-t^{*}L_{G}}])\big)} \label{suff.proof.M_ary.7}
\end{align}
where the last inequality holds because $t^{*} \in [0,1]$. Also, by Lemma~\ref{Lem.2} and convexity of $\log(\mathbb{E}_{Q}[e^{-tL_{G}}])$, the following holds for some positive constant $A$: 
\begin{equation}
K \log(\mathbb{E}_{Q}[e^{-t^{*}L_{G}}]) \leq K \log(\mathbb{E}_{Q}[e^{-L_{G}}]) \leq A KD(Q||P) \label{suff.proof.M_ary.8}
\end{equation}
Moreover, by Lemma~\ref{Lem.2}, $E_{P}[\theta,K] = E_{Q}[\theta,K] - \theta$ and $E_{Q}[\theta,K] \geq E_{Q}[0,K] \geq A_{1}KD(Q||P)$. Combining the last observation with~\eqref{suff.proof.M_ary.8}, for some positive constant $A_{2}$,
\begin{align}
& (\prod_{m=1}^{M}\alpha_{+,\msubK}^{m})  e^{t(K(1-2\delta)\gamma- \sum_{m=1}^{M} \hfeat)+ K(1-2\delta) \log(\mathbb{E}_{P}[e^{-tL_{G}}])} \twocolbreak \includeonetwocol{}{\hspace{0.2in}\times}
e^{K\delta\log(\mathbb{E}_{Q}[e^{-tL_{G}}]) + tK\gamma\delta} \nonumber \\ 
& \leq (\prod_{m=1}^{M}\alpha_{+,\msubK}^{m}) e^{-(1-2\delta)(E_{Q}(\theta,K)-\theta) + \delta K\gamma +  \delta A_{2}E_{Q}(\theta,K)} \nonumber \\
& = (\prod_{m=1}^{M}\alpha_{+,\msubK}^{m}) e^{-E_{Q}(\theta,K)(1-2\delta -\delta A_{2}) +(1-2\delta)\theta + \delta K\gamma} \label{suff.proof.M_ary.9}
\end{align}
Since $\sum_{m=1}^{M} \log(\alpha_{+,\msubK}^{m}) = o(\log(n))$, evaluating the supremum in $E_{Q}[\theta,K]$ and substituting in~\eqref{suff.proof.M_ary.9} leads to:
\begin{align}
& (\prod_{m=1}^{M}\alpha_{+,\msubK}^{m})  e^{t(K(1-2\delta)\gamma- \sum_{m=1}^{M} \hfeat)+ K(1-2\delta) \log(\mathbb{E}_{P}[e^{-tL_{G}}])} \twocolbreak \includeonetwocol{}{\hspace{0.2in}\times}
e^{K\delta\log(\mathbb{E}_{Q}[e^{-tL_{G}}]) + tK\gamma\delta} \nonumber \\ 
& \leq e^{-\log(n)(1-2\delta-\delta A_{2})(\eta_{1} + o(1))} \nonumber \\
& \leq n^{-(1+\varepsilon)(1-2\delta-\delta A_{2}) + o(1))} \label{suff.proof.M_ary.10}
\end{align}
where~\eqref{suff.proof.M_ary.10} holds by assuming $\eta_{1} \geq 1+ \varepsilon$ for some $\varepsilon >0$. Multiplying~\eqref{suff.proof.M_ary.10}  by $K$:
\begin{align}
&K (\prod_{m=1}^{M}\alpha_{+,\msubK}^{m}) \twocolbreak \includeonetwocol{}{\hspace{0.2in}\times}  e^{t(K(1-2\delta)\gamma- \sum_{m=1}^{M} \hfeat)+ K(1-2\delta) \log(\mathbb{E}_{P}[e^{-tL_{G}}])}  \twocolbreak \includeonetwocol{}{\hspace{0.2in}\times}
e^{K\delta\log(\mathbb{E}_{Q}[e^{-tL_{G}}]) + tK\gamma\delta} \nonumber \\
& \leq n^{1 - (1+\varepsilon)(1-2\delta-\delta A_{2}) + o(1))}
\end{align}
Thus, for any $\varepsilon >0$, there exists a sufficiently small $\delta$ such that $(1+\varepsilon)(1-2\delta-\delta A_{2}) > 1$. This concludes the proof of the first case of Lemma~\ref{Lem.suff.M_ary}.

\item If $\sum_{m=1}^{M} \hfeat = o(\log(n))$ and $\sum_{m=1}^{M} \log(\alpha_{+,\msubK}^{m}) = \sum_{m=1}^{M} \log(\alpha_{-,\msubK}^{m})=-\beta\log(n)+ o(\log(n)), \beta > 0$, then:
\begin{align}
& (\prod_{m=1}^{M}\alpha_{+,\msubK}^{m}) e^{t(K(1-2\delta)\gamma- \sum_{m=1}^{M} \hfeat)+ K(1-2\delta) \log(\mathbb{E}_{P}[e^{-tL_{G}}])} \twocolbreak \includeonetwocol{}{\hspace{0.2in}\times}
e^{K\delta\log(\mathbb{E}_{Q}[e^{-tL_{G}}]) + tK\gamma\delta} \nonumber \\ 
& \leq \prod_{m=1}^{M}(\alpha_{+,\msubK}^{m})e^{-\log(n)(1-2\delta-\delta A_{2}) (\eta_{1} + o(1))} \label{suff.proof.M_ary.11}
\end{align}
Since $\sum_{m=1}^{M} \log(\alpha_{+,\msubK}^{m}) = -\beta\log(n)+ o(\log(n)), \beta > 0$:
\begin{align}
& (\prod_{m=1}^{M}\alpha_{+,\msubK}^{m})  e^{t(K(1-2\delta)\gamma- \sum_{m=1}^{M} \hfeat)+ K(1-2\delta) \log(\mathbb{E}_{P}[e^{-tL_{G}}])} \twocolbreak \includeonetwocol{}{\hspace{0.2in}\times}
e^{K\delta\log(\mathbb{E}_{Q}[e^{-tL_{G}}]) + tK\gamma\delta} \nonumber \\ 
& \leq e^{-\log(n)(1-2\delta-\delta A_{2}) (\eta_{1} + \frac{\beta}{1-2\delta-\delta A_{2}} + o(1))} \nonumber \\
& \leq e^{-\log(n)(1-2\delta-\delta A_{2}) (\eta_{1} + \beta + o(1))} \label{suff.proof.M_ary.12}
\end{align}
where the last inequality holds because $0<1-2\delta-\delta A_{2}<1$ for sufficiently small $\delta$. Thus: 
\begin{align}
& K (\prod_{m=1}^{M}\alpha_{+,\msubK}^{m}) \twocolbreak \includeonetwocol{}{\hspace{0.2in}\times} e^{t(K(1-2\delta)\gamma- \sum_{m=1}^{M} \hfeat)+ K(1-2\delta) \log(\mathbb{E}_{P}[e^{-tL_{G}}])} \twocolbreak \includeonetwocol{}{\hspace{0.2in}\times}
e^{K\delta\log(\mathbb{E}_{Q}[e^{-tL_{G}}]) + tK\gamma\delta} \nonumber \\ 
& \leq  n^{1 - (\eta_{1} + \beta) (1-2\delta-\delta A_{2}) + o(1)} \nonumber \\
& \leq n^{1 - (1+\varepsilon)(1-2\delta-\delta A_{2}) + o(1))}
\end{align}
where the last inequality holds by assuming $\eta_{1} + \beta \geq 1+ \varepsilon$ for some $\varepsilon >0$. Thus, for any $\varepsilon >0$, there exists a sufficiently small $\delta$ such that $(1+\varepsilon)(1-2\delta-\delta A_{2}) > 1$. This concludes the proof of the second case of Lemma~\ref{Lem.suff.M_ary}.

\item If $\sum_{m=1}^{M} \hfeat = \beta\log(n)+ o(\log(n)), 0 < \beta < \rho(a-b-bT), \sum_{m=1}^{M}\log(\alpha_{+,\msubK}^{m}) = o(\log(n))$, then:
\begin{align}
& (\prod_{m=1}^{M}\alpha_{+,\msubK}^{m})  e^{t(K(1-2\delta)\gamma- \sum_{m=1}^{M} \hfeat)+ K(1-2\delta) \log(\mathbb{E}_{P}[e^{-tL_{G}}])} \twocolbreak \includeonetwocol{}{\hspace{0.2in}\times}
e^{K\delta\log(\mathbb{E}_{Q}[e^{-tL_{G}}]) + tK\gamma\delta} \nonumber \\
&\leq (\prod_{m=1}^{M}\alpha_{+,\msubK}^{m})  e^{(1-2\delta) \big( t(k\gamma -\sum_{m=1}^{M} \hfeat) + K \log(\mathbb{E}_{P}[e^{-tL_{G}}]) \big)} \twocolbreak  \includeonetwocol{}{\hspace{0.2in}\times}
e^{\delta \big( t(k\gamma - \sum_{m=1}^{M} \hfeat) +  K\log(\mathbb{E}_{Q}[e^{-tL_{G}}])\big)}\label{suff.proof.M_ary.13}
\end{align}
Since $\sum_{m=1}^{M} \hfeat = \beta\log(n)+ o(\log(n)), 0 < \beta < \rho(a-b-bT)$, it is easy to show that 
\[
K\gamma-\sum_{m=1}^{M} \hfeat \in [-KD(Q||P)\;,\;KD(P||Q)]
\]
Define $\theta \triangleq K\gamma-\sum_{m=1}^{M} \hfeat$ and choose $t^{*}\in [0,1]$, such that $t^{*}\theta + K\log(\mathbb{E}[e^{-t^{*}L_{G}}]) = -E_{P}(\theta,K)$. Substituting in~\eqref{suff.proof.M_ary.13}:
\begin{align}
& (\prod_{m=1}^{M}\alpha_{+,\msubK}^{m})  e^{t(K(1-2\delta)\gamma- \sum_{m=1}^{M} \hfeat)+ K(1-2\delta) \log(\mathbb{E}_{P}[e^{-tL_{G}}])}  \twocolbreak \includeonetwocol{}{\hspace{0.2in}\times}
e^{K\delta\log(\mathbb{E}_{Q}[e^{-tL_{G}}]) + tK\gamma\delta} \nonumber \\ 
& \leq (\prod_{m=1}^{M} \alpha_{+,\msubK}^{m})  \twocolbreak \includeonetwocol{}{\hspace{0.2in}\times}
e^{-(1-2\delta) E_{P}[\theta,K] + \delta \big( t^{*}(K\gamma - \sum_{m=1}^{M} \hfeat)+  K\log(\mathbb{E}_{Q}[e^{-t^{*}L_{G}}])\big)} \label{suff.proof.M_ary.14}
\end{align}
By Lemma~\ref{Lem.2} and convexity of $\log(\mathbb{E}_{Q}[e^{-t^{*}L_{G}}])$, the following holds for some positive constant $A$: 
\begin{equation}
K \log(\mathbb{E}_{Q}[e^{-t^{*}L_{G}}]) \leq K \log(\mathbb{E}_{Q}[e^{-L_{G}}]) \leq A KD(Q||P) \label{suff.proof.M_ary.15}
\end{equation}
Moreover, since 
\[
-KD(Q||P) < K\gamma-\sum_{m=1}^{M} \hfeat <0   \quad ,
\]
it follows that  $\theta = -(1-\tilde{\eta})KD(Q||P)$ for some $\tilde{\eta} \in (0,1)$. Thus, by Lemma~\ref{Lem.2}, for some positive constant $A_{1}$: 
\begin{align*}
E_{Q}[\theta,K] &= E_{Q}[-(1-\tilde{\eta})KD(Q||P),K] \\
& \geq A_{1}KD(Q||P) \\
&\geq \frac{A_{1}}{A} K \log(\mathbb{E}_{Q}[e^{-t^{*}L_{G}}])
\end{align*}
where the last inequality holds because of~\eqref{suff.proof.M_ary.15}. Substituting in~\eqref{suff.proof.M_ary.14}, for some positive constant $A_{2}$,
\begin{align}
& (\prod_{m=1}^{M}\alpha_{+,\msubK}^{m})  e^{t(K(1-2\delta)\gamma- \sum_{m=1}^{M} \hfeat)+ K(1-2\delta) \log(\mathbb{E}_{P}[e^{-tL_{G}}])} \twocolbreak  \includeonetwocol{}{\hspace{0.2in}\times}
e^{K\delta\log(\mathbb{E}_{Q}[e^{-tL_{G}}]) + tK\gamma\delta} \nonumber \\ 
& \leq (\prod_{m=1}^{M}\alpha_{+,\msubK}^{m}) 
e^{-(1-2\delta) (E_{Q}[\theta,K]  - \theta)  + \delta A_{2}E_{Q}[\theta,K] } \nonumber \\
&\leq  (\prod_{m=1}^{M}\alpha_{+,\msubK}^{m})  
e^{-E_{Q}[\theta,K] (1-2\delta-\delta A_{2})  + (1-2\delta) \theta } \label{suff.proof.M_ary.16}
\end{align}
Since $\sum_{m=1}^{M} \log(\alpha_{+,\msubK}^{m}) = o(\log(n))$, by evaluating the supremum in $E_{Q}[\theta,K]$, multiplying by $K$ and substituting in~\eqref{suff.proof.M_ary.16}:
\begin{align}
& K (\prod_{m=1}^{M}\alpha_{+,\msubK}^{m})  e^{t(K(1-2\delta)\gamma- \sum_{m=1}^{M} \hfeat)+ K(1-2\delta) \log(\mathbb{E}_{P}[e^{-tL_{G}}])}  \twocolbreak \includeonetwocol{}{\hspace{0.2in}\times}
e^{K\delta\log(\mathbb{E}_{Q}[e^{-tL_{G}}]) + tK\gamma\delta} \nonumber \\ 
& \leq K e^{-\log(n)(1-2\delta-\delta A_{2})(\eta_{2} -\beta + \frac{(1-2\delta)\beta}{1-2\delta-\delta A_{2}} + o(1))} \nonumber \\
& \overset{(a)}{\leq} K e^{-\log(n)(1-2\delta-\delta A_{2})(\eta_{2} + o(1))} \nonumber \\
& \leq n^{1-(1+\varepsilon)(1-2\delta-\delta A_{2}) + o(1))} \label{suff.proof.M_ary.17}
\end{align}
where $(a)$ holds for sufficiently small $\delta$. Thus, for any $\varepsilon >0$, there exists a sufficiently small $\delta$ such that $(1+\varepsilon)(1-2\delta-\delta A_{2}) > 1$. This concludes the proof of the third case of Lemma~\ref{Lem.suff.M_ary}.

\item If $\sum_{m=1}^{M} \hfeat = -\beta\log(n)+ o(\log(n)), 0 < \beta < \rho(a-b-bT), \sum_{m=1}^{M}\log(\alpha_{-,\msubK}^{m}) = o(\log(n))$, then:
\begin{align}
& (\prod_{m=1}^{M}\alpha_{+,\msubK}^{m}) e^{t(K(1-2\delta)\gamma- \sum_{m=1}^{M} \hfeat)+ K(1-2\delta) \log(\mathbb{E}_{P}[e^{-tL_{G}}])}  \twocolbreak \includeonetwocol{}{\hspace{0.2in}\times}
e^{K\delta\log(\mathbb{E}_{Q}[e^{-tL_{G}}]) + tK\gamma\delta} \nonumber \\
&\leq (\prod_{m=1}^{M}\alpha_{+,\msubK}^{m})  e^{(1-2\delta) \big( t(k\gamma -\sum_{m=1}^{M} \hfeat) + K \log(\mathbb{E}_{P}[e^{-tL_{G}}]) \big)}  \twocolbreak \includeonetwocol{}{\hspace{0.2in}\times}
e^{\delta \big( t(k\gamma - 2\sum_{m=1}^{M} \hfeat) +  K\log(\mathbb{E}_{Q}[e^{-tL_{G}}])\big)}\label{suff.proof.M_ary.18}
\end{align}
Following similar analysis as in~\eqref{suff.proof.M_ary.16}:
\begin{align}
& (\prod_{m=1}^{M}\alpha_{+,\msubK}^{m})  e^{t(K(1-2\delta)\gamma- \sum_{m=1}^{M} \hfeat)+ K(1-2\delta) \log(\mathbb{E}_{P}[e^{-tL_{G}}])}  \twocolbreak \includeonetwocol{}{\hspace{0.2in}\times}
e^{K\delta\log(\mathbb{E}_{Q}[e^{-tL_{G}}]) + tK\gamma\delta} \nonumber \\ 
&\leq (\prod_{m=1}^{M}\alpha_{+,\msubK}^{m})  e^{-\log(n)(1-2\delta-\delta A_{2})(\eta_{3}+ o(1))} e^{-\sum_{m=1}^{M} \hfeat}\label{suff.proof.M_ary.19}
\end{align}
Since $\sum_{m=1}^{M}\log(\alpha_{-,\msubK}^{m}) = o(\log(n))$, by multiplying by $K$:
\begin{align}
& K (\prod_{m=1}^{M}\alpha_{+,\msubK}^{m})  e^{t(K(1-2\delta)\gamma- \sum_{m=1}^{M} \hfeat)+ K(1-2\delta) \log(\mathbb{E}_{P}[e^{-tL_{G}}])} \twocolbreak \includeonetwocol{}{\hspace{0.2in}\times}
e^{K\delta\log(\mathbb{E}_{Q}[e^{-tL_{G}}]) + tK\gamma\delta} \nonumber \\
&\leq K e^{-\log(n)(1-2\delta-\delta A_{2})(\eta_{3}+ o(1))} \nonumber \\
&\leq K n^{-(\eta_{3}+o(1))(1-2\delta-\delta A_{2})} \nonumber \\ 
&\leq n^{1 -(1+\varepsilon)(1-2\delta-\delta A_{2}) + o(1))} \label{suff.proof.M_ary.20}
\end{align}
where the last inequality holds by  assuming $\eta_{3} \geq 1+ \varepsilon$ for some $\varepsilon >0$. Thus, for any $\varepsilon >0$, there exists a sufficiently small $\delta$ such that $(1+\varepsilon)(1-2\delta-\delta A_{2}) > 1$. This concludes the proof of the fourth case of Lemma~\ref{Lem.suff.M_ary}.

\item If $\sum_{m=1}^{M} \hfeat = \beta\log(n)+ o(\log(n)), 0 < \beta < \rho(a-b-bT), \sum_{m=1}^{M}\log(\alpha_{+,\msubK}^{m}) = -\beta'\log(n) + o(\log(n))$, then following similar analysis as in~\eqref{suff.proof.M_ary.16}:
\begin{align}
& K(\prod_{m=1}^{M}\alpha_{+,\msubK}^{m})  e^{t(K(1-2\delta)\gamma- \sum_{m=1}^{M} \hfeat)+ K(1-2\delta) \log(\mathbb{E}_{P}[e^{-tL_{G}}])}  \twocolbreak \includeonetwocol{}{\hspace{0.2in}\times}
e^{K\delta\log(\mathbb{E}_{Q}[e^{-tL_{G}}]) + tK\gamma\delta} \nonumber \\ 
& \leq K n^{-(1-2\delta-\delta A_{2})(\eta_{2} + \beta' +o(1))} \nonumber \\
& \leq n^{1-(1+\varepsilon)(1-2\delta-\delta A_{2}) + o(1))} \label{suff.proof.M_ary.21}
\end{align}
where the last inequality holds by  assuming $\eta_{2} + \beta' \geq 1+ \varepsilon$ for some $\varepsilon >0$. Thus, for any $\varepsilon >0$, there exists a sufficiently small $\delta$ such that $(1+\varepsilon)(1-2\delta-\delta A_{2}) > 1$. This concludes the proof of the fifth case of Lemma~\ref{Lem.suff.M_ary}.

\item If $\sum_{m=1}^{M} \hfeat = -\beta\log(n)+ o(\log(n)), 0 < \beta < \rho(a-b-bT), \sum_{m=1}^{M}\log(\alpha_{-,\msubK}^{m}) = -\beta'\log(n) + o(\log(n))$, then following similar analysis as in~\eqref{suff.proof.M_ary.16}:
\begin{align}
& K (\prod_{m=1}^{M}\alpha_{+,\msubK}^{m})  e^{t(K(1-2\delta)\gamma- \sum_{m=1}^{M} \hfeat)+ K(1-2\delta) \log(\mathbb{E}_{P}[e^{-tL_{G}}])}  \twocolbreak  \includeonetwocol{}{\hspace{0.2in}\times}
e^{K\delta\log(\mathbb{E}_{Q}[e^{-tL_{G}}]) + tK\gamma\delta} \nonumber \\ 
& \leq K n^{-(1-2\delta-\delta A_{2})(\eta_{3} + \beta' +o(1))} \nonumber \\
& \leq n^{1-(1+\varepsilon)(1-2\delta-\delta A_{2}) + o(1))} \label{suff.proof.M_ary.22}
\end{align}
where the last inequality holds by  assuming $\eta_{3} + \beta' \geq 1+ \varepsilon$ for some $\varepsilon >0$. Thus, for any $\varepsilon >0$, there exists a sufficiently small $\delta$ such that $(1+\varepsilon)(1-2\delta-\delta A_{2}) > 1$. This concludes the proof of the last case of Lemma~\ref{Lem.suff.M_ary}.
\end{itemize}

\end{proof}

The proof of Lemma~\ref{The.3.6} then follows similarly as the proof of Lemma~\ref{The.5.new}.




\section{Auxiliary Lemmas For Belief Propagation} 
\label{App.2.BP}

\begin{lemma}
\label{Lem.3}
Recall the definition of $\LLRCroppedTree$ from~\eqref{likeli}. For any measurable function $g(.)$:
\begin{equation}
\label{measurable}
\mathbb{E}[g(\LLRCroppedTree) | \RootLabel = 0] = \mathbb{E}[g(\LLRCroppedTree) e^{-\LLRCroppedTree} | \RootLabel = 1]
\end{equation}
\end{lemma}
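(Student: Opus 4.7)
The plan is to prove this identity by a direct change-of-measure argument, which is the standard tool whenever one wants to relate expectations of a log-likelihood ratio under the two hypotheses.

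First I would unpack the definition. Since $\Gamma_0^t = \log\bigl(\mathbb{P}(T^t,\tilde{\boldsymbol\tau}^t\mid \tau_0=1)/\mathbb{P}(T^t,\tilde{\boldsymbol\tau}^t\mid \tau_0=0)\bigr)$, the quantity $e^{-\Gamma_0^t}$ is exactly the Radon--Nikodym derivative of the law of $(T^t,\tilde{\boldsymbol\tau}^t)$ under $\tau_0=0$ with respect to the law under $\tau_0=1$. Note that these two measures live on the same product space (the finite-depth cropped tree together with discrete/continuous side information), and because the side information alphabet is shared and the branching-Poisson parameters are all strictly positive, the two conditional laws are mutually absolutely continuous so $\Gamma_0^t$ is well-defined and finite almost surely.

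Next I would write out the right-hand expectation explicitly. Let $\mu_1$ and $\mu_0$ denote the laws of $(T^t,\tilde{\boldsymbol\tau}^t)$ given $\tau_0=1$ and $\tau_0=0$ respectively, and let $\omega$ denote a generic outcome. Then
\begin{align*}
\mathbb{E}\bigl[g(\Gamma_0^t) e^{-\Gamma_0^t}\,\big|\, \tau_0 = 1\bigr]
&= \int g(\Gamma_0^t(\omega))\, e^{-\Gamma_0^t(\omega)}\, d\mu_1(\omega) \\
&= \int g(\Gamma_0^t(\omega))\, \frac{d\mu_0}{d\mu_1}(\omega)\, d\mu_1(\omega) \\
&= \int g(\Gamma_0^t(\omega))\, d\mu_0(\omega)
= \mathbb{E}\bigl[g(\Gamma_0^t)\,\big|\, \tau_0 = 0\bigr],
\end{align*}
which is the claimed identity. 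The only subtlety is to verify the Radon--Nikodym step at the end: on the set where $\mu_0 \ll \mu_1$ the substitution is immediate, and on the (null) set where the likelihood ratio vanishes, $g\cdot e^{-\Gamma_0^t}$ is zero under $\mu_1$ and the corresponding outcomes contribute zero to $\mu_0$ as well.

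I do not expect any serious obstacle here; the entire content of the lemma is the tautology that $e^{-\Gamma_0^t}$ converts expectations under $\tau_0=1$ into expectations under $\tau_0=0$. The only care needed is that $g$ is measurable so that the integrals are well-defined (which is assumed), and that one may legitimately push the derivative $d\mu_0/d\mu_1$ inside the integral, which follows from standard measure theory.
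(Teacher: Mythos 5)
Your proposal is correct and uses essentially the same change-of-measure argument as the paper: both recognize $e^{-\LLRCroppedTree}$ as the Radon--Nikodym derivative of the conditional law under $\RootLabel=0$ with respect to the conditional law under $\RootLabel=1$ and substitute accordingly. You work from the right-hand side to the left while the paper goes the other way, and you add a brief note on mutual absolute continuity that the paper leaves implicit, but the proof is the same.
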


\begin{proof}
Let $ Y = (\CroppedTree,\TreeSideInfo^t)$ denote the observed tree and side information. Then,
\begin{align}\nonumber
\mathbb{E}[g(\LLRCroppedTree) | \RootLabel = 0] &= \mathbb{E}_{Y|\RootLabel =0}[g(\LLRCroppedTree)] \\ \nonumber 
& = \int_{Y} g(\LLRCroppedTree) \frac{\mathbb{P}(Y|\RootLabel=0)}{\mathbb{P}(Y|\RootLabel=1)} \mathbb{P}(Y|\RootLabel=1) \\ \nonumber
& = \int_{Y} g(\LLRCroppedTree) e^{-\LLRCroppedTree} \mathbb{P}(Y|\RootLabel=1) \\ \nonumber
&= \mathbb{E}_{Y|\RootLabel =1}[g(\LLRCroppedTree)e^{-\LLRCroppedTree}] \nonumber \\
&= \mathbb{E}[g(\LLRCroppedTree) e^{-\LLRCroppedTree}| \RootLabel = 1]
\end{align}
\end{proof}


\begin{lemma}
\label{Lem.4}

Let $b_{t} = \mathbb{E}[\frac{e^{Z_{1}^{t}+U_{1}}}{1+ e^{Z_{1}^{t}+U_{1}-\nu} }]$ and $a_{t} = \mathbb{E}[e^{2(Z_{0}^{t}+U_{0})}]$. Let $\ChiS = \mathbb{E}[e^{U_{1}}] = \mathbb{E}[e^{2U_{0}}]$. Then, for any $t \geq 0$ 
\begin{align}
&a_{t+1}  = \mathbb{E}[e^{Z_{1}^{t}+U_{1}}] = \ChiS e^{\lambda b_{t}} \label{a_t_b_t} \\ 
&\mathbb{E}[e^{3(Z_{0}^{t}+U_{0})}]  = \mathbb{E}[e^{2(Z_{1}^{t}+U_{1})}] \twocolbreak \includeonetwocol{}{\hspace{0.75in}}
 = \mathbb{E}[e^{3U_{0}}] e^{3\lambda b_{t} + \frac{\lambda^{2}}{K(p-q)} \mathbb{E}[ ( \frac{ e^{Z_{1}^{t} + U_{1}} }{1+e^{Z_{1}^{t} + U_{1}-\nu}} )^{2} ] } \label{a_t_b_t_2}
\end{align}
\end{lemma}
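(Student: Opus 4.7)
The plan is to reduce both identities to computing the first two exponential moments of $\psi_i^{t+1}$ conditioned on $\tau_i=1$, using Lemma~\ref{Lem.3} as a change of measure and the Poisson branching of the random tree.

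First, Lemma~\ref{Lem.3} applied to $g(\Gamma_0^{t+1})=e^{k\Gamma_0^{t+1}}$ collapses $\mathbb{E}[e^{2(Z_0^{t+1}+U_0)}]=\mathbb{E}[e^{Z_1^{t+1}+U_1}]$ and $\mathbb{E}[e^{3(Z_0^{t+1}+U_0)}]=\mathbb{E}[e^{2(Z_1^{t+1}+U_1)}]$. Specialised to the side information alone, the same identity yields $\ChiS=\mathbb{E}[e^{U_1}]=\mathbb{E}[e^{2U_0}]$ and $\mathbb{E}[e^{2U_1}]=\mathbb{E}[e^{3U_0}]$. Using the independence $U_1\perp Z_1^{t+1}$, the remaining task is to compute $\mathbb{E}[e^{Z_1^{t+1}}]$ and $\mathbb{E}[e^{2Z_1^{t+1}}]$.

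Second, conditioned on $\tau_i=1$, the children of $i$ with $\tau_j=1$ (resp.\ $\tau_j=0$) form an independent Poisson process with mean $Kp$ (resp.\ $(n-K)q$), and the summands in $\psi_i^{t+1}=-K(p-q)+\sum_{j\in\mathcal{N}_i}M(h_j+\psi_j^t)$ are IID copies of $M(Z_1^t+U_1)$ (resp.\ $M(Z_0^t+U_0)$). The Poisson PGF identity $\mathbb{E}[\prod_jf(X_j)]=\exp(\mu(\mathbb{E}[f(X)]-1))$ then yields
\[
\mathbb{E}[e^{kZ_1^{t+1}}]=\exp\Big\{-kK(p-q)+Kp\big(\mathbb{E}[e^{kM(Z_1^t+U_1)}]-1\big)+(n-K)q\big(\mathbb{E}[e^{kM(Z_0^t+U_0)}]-1\big)\Big\}.
\]
The algebraic identity $e^{M(x)}=1+(p/q-1)w(x)$ with $w(x)\triangleq e^{x-\nu}/(1+e^{x-\nu})$ expresses $e^{kM}-1$ as a polynomial of degree $k$ in $w$, so the inner expectations reduce to moments of $W_\tau\triangleq w(Z_\tau^t+U_\tau)$. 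Two further uses of Lemma~\ref{Lem.3} convert the $W_0$-moments to moments under $Z_1^t+U_1$: the first gives $\mathbb{E}[W_0]=e^{-\nu}(1-e^{-\nu}b_t)$ via the complementary identity $1/(1+e^{x-\nu})=1-w(x)$, and the second, via the algebraic decomposition $e^x/(1+e^{x-\nu})^2=e^x/(1+e^{x-\nu})-e^{-\nu}(e^x/(1+e^{x-\nu}))^2$, gives $\mathbb{E}[W_0^2]=e^{-2\nu}(b_t-e^{-\nu}D)$, where $D$ denotes the squared expectation appearing in the statement.

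Finally, substituting and simplifying via $(n-K)e^{-\nu}=K$ together with $\lambda=K^2(p-q)^2/((n-K)q)$, the $b_t$-independent pieces cancel the prefactor $-kK(p-q)$: for $k=1$ the linear coefficient on $b_t$ reduces to $\lambda$, giving the first identity after multiplying by $\ChiS$; for $k=2$ the three linear-in-$b_t$ contributions combine to $3\lambda$, and the two $w^2$-contributions combine, via the telescoping $Kp-(n-K)qe^{-\nu}=K(p-q)$, into the coefficient $\lambda^2/(K(p-q))$ on $D$, giving the second identity after multiplying by $\mathbb{E}[e^{2U_1}]=\mathbb{E}[e^{3U_0}]$. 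The main obstacle is the $k=2$ bookkeeping: one must verify that the $w^2$-branch from the $(n-K)q$ Poisson process contributes, through $\mathbb{E}[W_0^2]=e^{-2\nu}(b_t-e^{-\nu}D)$, both to $D$ and to an additional $\lambda b_t$ that precisely promotes the linear coefficient from the naive $2\lambda$ up to the claimed $3\lambda$.
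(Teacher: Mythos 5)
Your proof is correct and takes essentially the same route as the paper's (Appendix~\ref{App.2.BP}): both rely on Lemma~\ref{Lem.3} as a change of measure, the Poisson moment generating function for the compound branching sum, and the polynomial expansion of $e^{kM(x)}=f^{k}(e^{x-\nu})$. The only organizational difference is that the paper evaluates the Poisson PGF conditioned on $\tau_0=0$ (children counts $\mathrm{Poi}(Kq)$ and $\mathrm{Poi}((n-K)q)$) and eliminates the $W_0$-moments via the telescoping relation $\mathbb{E}[W_1^k]+e^{\nu}\mathbb{E}[W_0^k]=\mathbb{E}[W_1^{k-1}]$, whereas you apply the change of measure up front and evaluate the PGF conditioned on $\tau_0=1$ (counts $\mathrm{Poi}(Kp)$ and $\mathrm{Poi}((n-K)q)$); after the identity $(n-K)e^{-\nu}=K$ the two bookkeeping schemes produce identical cancellations.
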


\begin{proof}

The first equality in~\eqref{a_t_b_t} holds by Lemma~\ref{Lem.3} for $g(x) = e^{2x}$. Similarly, the first equality in ~\eqref{a_t_b_t_2} holds by Lemma~\ref{Lem.3} for $g(x) = e^{3x}$. 

Let $f(x) = \frac{1+ \frac{p}{q} x}{1+x} = 1 + \frac{\frac{p}{q}-1}{1+x^{-1}}$. Then:
\begin{align}
a_{t+1} & = \mathbb{E}[e^{2(Z_{0}^{t}+U_{0})}] \nonumber  \\ 
& \overset{(a)}{=} e^{-2K(p-q)} \mathbb{E}[e^{2U_{0}}] \mathbb{E}[ (\mathbb{E}[f^{2}(e^{Z_{1}^{t}+U_{1}-\nu})])^{H_{u}} ] \twocolbreak \includeonetwocol{}{\hspace{0.5in}\times}
 \mathbb{E}[ (\mathbb{E}[f^{2}(e^{Z_{0}^{t}+U_{0}-\nu})])^{F_{u}} ] \nonumber \\ 
&\overset{(b)}{=} \ChiS e^{-2K(p-q)} e^{Kq(\mathbb{E}[f^{2}(e^{Z_{1}^{t}+U_{1}-\nu})]-1)} \twocolbreak
\includeonetwocol{}{\hspace{0.5in}\times}
e^{(n-K)q(\mathbb{E}[f^{2}(e^{Z_{0}^{t}+U_{0}-\nu})]-1)} \label{lem4.1}
\end{align}
where $(a)$ holds by the definition of $Z_{0}^{t}$ and $U_{0}$, $(b)$ holds by the definition of $\ChiS$ and by using the fact that $\mathbb{E}[c^{X}] = e^{\lambda(c-1)}$ for $X \sim \text{Poi}(\lambda)$ and $c > 0$. By the definition of $f(x)$:
\begin{align}
& Kq\big(\mathbb{E}\big[f^{2}(e^{Z_{1}^{t}+U_{1}-\nu})\big]-1\big) + (n-K)q\big(\mathbb{E}\big[f^{2}(e^{Z_{0}^{t}+U_{0}-\nu})\big]-1\big) \nonumber \\ 
& = Kq \mathbb{E}\bigg[\frac{2(\frac{p}{q}-1)}{1+e^{-(Z_{1}^{t}+U_{1}-\nu)}} + \frac{(\frac{p}{q}-1)^{2}}{(1+e^{-(Z_{1}^{t}+U_{1}-\nu)})^{2}} \bigg] \twocolbreak \includeonetwocol{}{\hspace{0.1in}}
+ (n-K)q  \mathbb{E}\bigg[\frac{2(\frac{p}{q}-1)}{1+e^{-(Z_{0}^{t}+U_{0}-\nu)}} + \frac{(\frac{p}{q}-1)^{2}}{(1+e^{-(Z_{0}^{t}+U_{0}-\nu)})^{2}} \bigg] \nonumber \\ 
& \overset{(a)}{=} 2K(p-q) + Kq (\frac{p}{q}-1)^{2} \mathbb{E}\bigg[\frac{1}{1+e^{-(Z_{1}^{t}+U_{1}-\nu)}}\bigg] \nonumber \\ 
& \overset{(b)}{=} 2K(p-q) + \lambda b_{t} \label{lem4.2}
\end{align}
where $(a)$ holds by Lemma~\ref{Lem.3} and $(b)$ holds by the definition of $\lambda$ and $b_{t}$.

Using~\eqref{lem4.2} and substituting in~\eqref{lem4.1} concludes the proof of~\eqref{a_t_b_t}. The proof of~\eqref{a_t_b_t_2} follows similarly using $f^{3}(x)$ instead of $f^{2}(x)$.
\end{proof}


\section{Proof of Lemma~\ref{recursive}} 
\label{BP-App.1}
The independent splitting property of the Poisson distribution is used to give an equivalent description of the numbers of children having a given label for any vertex in the tree. An equivalent description of the generation of the tree is as follows: for each node $i$, generate  a set ${\mathcal N}_i$ of children with $N_i=|{\mathcal N}_i|$. If $\tau_{i} = 1$, we generate $N_{i} \sim \text{Poi}(Kp + (n-K)q)$ children. Then for each child $j$, independent from everything else, let $\tau_{j} =1$ with probability $\frac{Kp}{Kp + (n-K)q}$  and  $\tau_{j} =0$ with probability $\frac{(n-K)q}{Kp + (n-K)q}$. If $\tau_{i} = 0$ generate $N_{i} \sim \text{Poi}(nq)$, then for each child $j$, independent from everything else, let $\tau_{j} =1$ with probability $\frac{K}{n}$ and $\tau_{j} =0$ with probability $\frac{(n-K)}{n}$. Finally, for each node $i$ in the tree, $\tilde{\tau}_{i}$ is observed according to $\alpha_{+,\ell}, \alpha_{-,\ell}$. Then:
\begin{align} 
\Gamma_{0}^{t+1} &= \log\Bigg( \frac{\mathbb{P}(T^{t+1},\tilde{\tau}^{t+1} | \RootLabel = 1)}{\mathbb{P}(T^{t+1},\tilde{\tau}^{t+1} | \RootLabel = 0)} \Bigg) \nonumber \\ 
& = \log\Bigg( \frac{\mathbb{P}(N_{0} , \tilde{\tau}_{0}, \{T_{k}^{t}\}_{k \in {\mathcal N}_0},\{\tilde{\tau}_{k}^{t}\}_{k \in {\mathcal N}_0} | \RootLabel = 1)}{\mathbb{P}(N_{0} , \tilde{\tau}_{0}, \{T_{k}^{t}\}_{k \in {\mathcal N}_0},\{\tilde{\tau}_{k}^{t}\}_{k \in {\mathcal N}_0} | \RootLabel = 0)} \Bigg) \nonumber\\ 
& \overset{(a)}{=} \log\Bigg( \frac{\mathbb{P}\big( N_{0},\tilde{\tau}_{0} | \RootLabel =1\big)}{\mathbb{P}\big( N_{0},\tilde{\tau}_{0} | \RootLabel = 0\big)} \Bigg) \twocolbreak \includeonetwocol{}{\hspace{0.2in}} +  \log\Bigg( \frac{\prod_{k \in {\mathcal N}_0} \mathbb{P}\big(T_{k}^{t},\tilde{\tau}_{k}^{t} | \RootLabel = 1\big)}{\prod_{k \in {\mathcal N}_0} \mathbb{P}\big(T_{k}^{t},\tilde{\tau}_{k}^{t} | \RootLabel = 0\big)} \Bigg) \nonumber \\ 
& \overset{(b)}{=} \log\Bigg( \frac{\mathbb{P}\big( N_{0} | \RootLabel =1\big)}{\mathbb{P}\big( N_{0} | \RootLabel =0\big)} \Bigg) + \log\Bigg( \frac{\mathbb{P}\big( \tilde{\tau}_{0} | \RootLabel =1\big)}{\mathbb{P}\big( \tilde{\tau}_{0} | \RootLabel =0\big)} \Bigg) \nonumber \\
& \includeonetwocol{}{\hspace{0.1in}} + \sum_{k \in {\mathcal N}_0} \log\Bigg( \frac{ \sum_{\tau_{k} \in \{0,1\} } \mathbb{P}\big( T_{k}^{t},\tilde{\tau}_{k}^{t} | \tau_{k} \big) \mathbb{P}\big( \tau_{k}|\RootLabel=1 \big) }{\sum_{\tau_{k} \in \{0,1\} }  \mathbb{P}\big( T_{k}^{t},\tilde{\tau}_{k}^{t} | \tau_{k} \big) \mathbb{P}\big( \tau_{k}|\RootLabel=0 \big)} \Bigg) \nonumber  \\ 
& \overset{(c)}{=} -K(p-q) + h_{0} + \sum_{k \in {\mathcal N}_0} \log(\frac{\frac{p}{q}e^{\Gamma_{k}^{t} - \nu}+1}{e^{\Gamma_{k}^{t} - \nu}+1})\label{rec.eq}
\end{align} 
where $(a)$ holds because conditioned on $\RootLabel$: 1) $(N_{0},\tilde{\tau}_{0})$ are independent of the rest of the tree and 2) $(T_{k}^{t},\tilde{\tau}_{k}^{t})$ are independent random variables $\forall k \in {\mathcal N}_0$,  $(b)$ holds because conditioned on $\RootLabel$, $N_{0}$ and $\tilde{\tau}_{0}$ are independent, $(c)$ holds by the definition of $N_{0}$ and $h_{0}$ and because $\tau_{k}$ is Bernoulli-$\frac{Kp}{Kp + (n-K)q}$ if $\RootLabel =1$ and is Bernoulli-$\frac{K}{n}$ if $\RootLabel =0$.
%


\section{Proof of Lemma~\ref{lower.upper}} 
\label{BP-App.2}
Let $f(x) \triangleq \frac{1+\frac{p}{q}x}{1+x}$, then:
\begin{align}
\mathbb{E}\big[e^{\frac{Z_{0}^{t}}{2}}\big] 
& = e^{\frac{-K(p-q)}{2}} \mathbb{E}_{H_0}\big[ (\mathbb{E}_{Z_1U_1}[f^{\frac{1}{2}}(e^{Z_{1}^{t}+U_{1}-\nu})])^{H_{0}} \big] \twocolbreak\includeonetwocol{}{\hspace{1in}\times}\mathbb{E}_{F_0}\big[ (\mathbb{E}_{Z_0U_0}[f^{\frac{1}{2}}(e^{Z_{0}^{t}+U_{0}-\nu})])^{F_{0}} \big] \nonumber \\ 
&\overset{(a)}{=} e^{\frac{-K(p-q)}{2}} e^{Kq(\mathbb{E}[f^{\frac{1}{2}}(e^{Z_{1}^{t}+U_{1}-\nu})]-1)} \twocolbreak\includeonetwocol{}{\hspace{1in}\times}e^{(n-K)q(\mathbb{E}[f^{\frac{1}{2}}(e^{Z_{0}^{t}+U_{0}-\nu})]-1)} \label{low.up.1}
\end{align}
where $(a)$ holds using  $\mathbb{E}[c^{X}] = e^{\lambda(c-1)}$ for $X \sim \text{Poi}(\lambda)$ and $c > 0$. 

By the intermediate value form of Taylor's theorem, for any $x \geq 0$ there exists $y$ with $1 \leq y \leq x$ such that $\sqrt{1+x} = 1 + \frac{x}{2} - \frac{x^{2}}{8(1+y)^{1.5}}$. Therefore,
\begin{equation}
\label{bound.f}
\sqrt{1+x} \leq 1 + \frac{x}{2} - \frac{x^{2}}{8(1+A)^{1.5}}, \qquad 0\leq x \leq A
\end{equation}
Let $A = \frac{p}{q} - 1$ and $B = (1+A)^{1.5}$. By assumption, $B$ is bounded. Then,	
\begin{align}
& \bigg(\frac{1 + \frac{p}{q} e^{Z_{0}^{t} + U_{0}-\nu}}{1 + e^{Z_{0}^{t} + U_{0}-\nu} }\bigg)^{\frac{1}{2}} \nonumber \\
&= \Big(1+ \frac{\frac{p}{q}-1}{1+ e^{-(Z_{0}^{t} + U_{0} -\nu)}}\Big)^{\frac{1}{2}} \nonumber \\
& \leq 1 + \frac{1}{2} \frac{\frac{p}{q}-1}{1 + e^{-(Z_{0}^{t} + U_{0} -\nu)} } - \frac{1}{8B} \frac{(\frac{p}{q}-1)^{2}}{(1 + e^{-(Z_{0}^{t} + U_{0} -\nu)} )^{2}}   \label{bound.f1}
\end{align}
It follows that:
\begin{align}
& Kq\big(\mathbb{E}[f^{\frac{1}{2}}(e^{Z_{1}^{t}+U_{1}-\nu})]-1 \big) \twocolbreak \includeonetwocol{}{\hspace{0.2in}} 
+ (n-K)q \big(\mathbb{E}[f^{\frac{1}{2}}(e^{Z_{0}^{t}+U_{0}-\nu})]-1\big) \nonumber \\ 
& \leq  \frac{Kq(\frac{p}{q}-1)}{2}\bigg( \mathbb{E}\bigg[\frac{1}{1 + e^{-(Z_{1}^{t}+U_{1}-\nu)}}\bigg] \twocolbreak \includeonetwocol{}{\hspace{0.2in}} 
+ e^{\nu} \mathbb{E}\bigg[\frac{1}{1 + e^{-(Z_{0}^{t}+U_{0}-\nu)}}\bigg]  \bigg) \nonumber \\  
& \includeonetwocol{}{\hspace{0.2in}} - \frac{Kq(\frac{p}{q}-1)^{2}}{8B} \bigg( \mathbb{E}\bigg[\frac{1}{(1 + e^{-(Z_{1}^{t}+U_{1}-\nu)})^{2}}\bigg] \twocolbreak \includeonetwocol{}{\hspace{0.2in}}
+ e^{\nu} \mathbb{E}\bigg[\frac{1}{(1 + e^{-(Z_{0}^{t}+U_{0}-\nu)})^{2}}\bigg]  \bigg) \nonumber \\ 
& \overset{(a)}{=} \frac{K(p-q)}{2} - \frac{K(p-q)^{2}}{8Bq} \mathbb{E}\bigg[\frac{1}{1+e^{-(Z_{1}^{t}+U_{1}-\nu)}}\bigg] \\
& = \frac{K(p-q)}{2} - \frac{\lambda}{8B} b_{t} \label{low.up.2}
\end{align}
where $(a)$ holds by the following consequence of Lemma~\ref{Lem.3} (from Appendix~\ref{App.2.BP}):
\begin{align}
& \mathbb{E}\bigg[\frac{1}{1 + e^{-(Z_{1}^{t}+U_{1}-\nu)}}\bigg] + e^{\nu} \mathbb{E}\bigg[\frac{1}{1 + e^{-(Z_{0}^{t}+U_{0}-\nu)}}\bigg]  = 1 \nonumber \\ 
& \mathbb{E}\bigg[\frac{1}{(1 + e^{-(Z_{1}^{t}+U_{1}-\nu)})^{2}}\bigg] + e^{\nu} \mathbb{E}\bigg[\frac{1}{(1 + e^{-(Z_{0}^{t}+U_{0}-\nu)})^{2}}\bigg]  \twocolbreak  \includeonetwocol{}{\hspace{0.2in}}
 = \mathbb{E}\bigg[\frac{1}{1 + e^{-(Z_{1}^{t}+U_{1}-\nu)}}\bigg]
\end{align}  
Using~\eqref{low.up.1} and~\eqref{low.up.2}:
\begin{align}
\label{low.up.3}
\mathbb{E}\Big[e^{\frac{Z_{0}^{t}+U_{0}}{2}}\Big] & \leq \mathbb{E}\big[e^{\frac{U_{0}}{2}}\big] \; e^{\frac{-\lambda}{8B}b_{t}}  
\end{align}

Similarly, using the fact that $\sqrt{1+x} \geq 1 + \frac{x}{2} - \frac{x^{2}}{8}$ for all $x \geq 0$:
\begin{align}
\label{low.up.4}
\mathbb{E}\Big[e^{\frac{Z_{0}^{t}+U_{0}}{2}}\Big] & \geq \mathbb{E}\big[e^{\frac{U_{0}}{2}}\big]\, e^{\frac{-\lambda}{8}b_{t}}  
\end{align}


\section{Proof of Lemma~\ref{upper_b}}
\label{BP-App.3}

Fix $\lambda >0$ and define $(v_{t}: t \geq 0)$ recursively by $v_{0} = 0$ and $v_{t+1} = \lambda \ChiS e^{v_{t}}$.  From Lemma~\ref{Lem.4} in Appendix~\ref{App.2.BP}, $a_{t+1} = \ChiS e^{\lambda b_{t}}$.

We first prove by induction that $\lambda b_{t} \leq \lambda a_{t} \leq v_{t+1}$ for all $t \geq 0$. $a_{0} = \mathbb{E}[e^{U_{1}}] = \ChiS$ and $\lambda b_{0} = \lambda \mathbb{E}[\frac{e^{U_{1}}}{1+ e^{U_{1} - \nu}}] \leq \lambda \mathbb{E}[e^{U_{1}}] = \lambda a_{0}$. Thus, $\lambda b_{0} \leq \lambda a_{0} = \lambda \ChiS = v_{1}$. Assume that $\lambda b_{t-1} \leq \lambda a_{t-1} \leq v_{t}$. Then, $\lambda b_{t} \leq \lambda a_{t} = \lambda \ChiS e^{\lambda b_{t-1}} \leq \lambda \ChiS e^{v_{t}} = v_{t+1}$, where the first inequality holds by the definition of $a_{t}$ and $b_{t}$ and the second inequality holds by the induction assumption. Thus, $\lambda b_{t} \leq \lambda a_{t} \leq v_{t+1}$ for all $t \geq 0$.

Next we prove by induction that $\frac{v_{t}}{\lambda}$ is increasing in $t \geq 0$. We have $\frac{v_{t+1}}{\lambda} = \ChiS e^{v_{t}}$. Then, $\frac{v_{1}}{\lambda} = \ChiS \geq 0 = \frac{v_{o}}{\lambda}$. Now assume that $\frac{v_{t}}{\lambda} > \frac{v_{t-1}}{\lambda}$. Then,  $\frac{v_{t+1}}{\lambda} = \ChiS e^{v_{t}} = \ChiS e^{\lambda (\frac{v_{t}}{\lambda})} > \ChiS e^{v_{t-1}} = \frac{v_{t}}{\lambda}$. Thus, we have: $\frac{v_{t+1}}{\lambda} > \frac{v_{t}}{\lambda}$ for all $t \geq 0$.

Note that $\frac{v_{t+1}}{\lambda} = \ChiS e^{\lambda (\frac{v_{t}}{\lambda})}$ has the form of $x = \ChiS e^{\lambda x}$, which has no solutions for $\lambda > \frac{1}{\ChiS e}$ and has two solutions for $\lambda \leq\frac{1}{\ChiS e}$, where the largest solution is $\ChiS e$. Thus, for $\lambda\leq\frac{1}{\ChiS e}$, $b_{t} \leq \frac{v_{t+1}}{\lambda} \leq \ChiS e$.


\section{Proof of Lemma~\ref{Lem.lower.b.1}}
\label{BP-App.4}
By definition of $a_{t}$, we have:
\begin{align} 
a_{t+1} - \mathbb{E}\big[e^{-\nu + 2(Z_{1}^{t+1} + U_{1})}\big] & = \mathbb{E}\big[e^{Z_{1}^{t+1} + U_{1}} (1- e^{Z_{1}^{t+1} + U_{1}-\nu})\big] \nonumber\\ 
& \leq \mathbb{E}\bigg[\frac{e^{Z_{1}^{t}+U_{1}}}{1+ e^{Z_{1}^{t}+U_{1}-\nu} }\bigg] \nonumber\\
&= b_{t+1}\nonumber
\end{align}
where the first inequality holds because $ 1- x \leq \frac{1}{1+x} $. Then,
\begin{align} 
b_{t+1} &\geq a_{t+1} - \mathbb{E}[e^{-\nu + 2(Z_{1}^{t+1} + U_{1})}] \nonumber \\
&  \overset{(a)}{=} \ChiS e^{\lambda b_{t}} - e^{-\nu} \ChiS' e^{3\lambda b_{t} + \frac{\lambda^{2}}{K(p-q)} \mathbb{E}\big[ ( \frac{ e^{Z_{1}^{t} + U_{1}} }{1+e^{Z_{1}^{t} + U_{1}-\nu}} )^{2} \big] } \nonumber\\ \nonumber
& \overset{(b)}{\geq} \ChiS e^{\lambda b_{t}} - \ChiS' e^{Cb_{t} - \nu} \nonumber \\
&= \ChiS e^{\lambda b_{t}} \big(1- \frac{\ChiS'}{\ChiS} e^{-\nu + (C-\lambda)b_{t}}\big) \nonumber\\ 
&\overset{(c)}{\geq} \ChiS e^{\lambda b_{t}} \big(1- \frac{\ChiS'}{\ChiS} e^{\frac{-\nu}{2}}\big)
\end{align}
where $(a)$ holds from Lemma~\ref{Lem.4}, $(b)$ holds because $(\frac{e^{x}}{1+ e^{x-\nu}})^{2} \leq e^{\nu} (\frac{e^{x}}{1+e^{x-\nu}})$, which holds because $e^{\nu} \geq \frac{e^{x}}{1+ e^{x-\nu}}$ for all $x$, and $(c)$ holds by the assumption that $ b_{t} \leq \frac{\nu}{2(C-\lambda)}$.

\section{Proof of Lemma~\ref{Lem.lower.b.main}}
\label{BP-App.5}

Given $\lambda$ with $\lambda > \frac{1}{\ChiS e}$, assume $\nu \geq \nu_{o}$ and $\nu \geq 2\ChiS (C-\lambda)$ for some positive $\nu_{o}$. Moreover, select the following constants depending only on $\lambda$ and the LLR of side information:
\begin{itemize}
\item $D$ and $\nu_{o}$ large enough such that $\lambda \ChiS e^{\lambda D} (1- \frac{\ChiS'}{\ChiS}e^{-\nu_{o}}) > 1$ and $\ChiS\lambda e (1- \frac{\ChiS'}{\ChiS}e^{-\nu_{o}}) \geq \sqrt{\lambda \ChiS e}$.

\item $w_{o} > 0$ so large that 
\begin{equation}
w_{o}\lambda \ChiS e^{\lambda D} (1- \frac{\ChiS'}{\ChiS}e^{-\nu_{o}}) - \lambda D \geq w_{o}.
\label{eq:condition-w0}
\end{equation}

\item A positive integer $\bar{t}_{o}$ large enough such that $\lambda (\ChiS (\lambda \ChiS e)^{\frac{\bar{t}_{o}}{2}-1} -D) \geq w_{o}$
\end{itemize}

The goal is to show that there exists some $\tilde{t}$ after which $\nu = o(b_t)$. 

Let $t^{*} = \max\{ t > 0: b_{t} < \frac{\nu}{2(C-\lambda)}\}$ and $\bar{t}_{1} = \log^{*}(\nu)$. The first step is to show that $t^{*} \leq \bar{t}_{o} + \bar{t}_{1}$. 

By the definition of $b_t$,

\begin{align*}
b_{0} &= \mathbb{E}\big[\frac{e^{U_{1}}}{1+ e^{U_{1}-\nu}}\big] \\
& < \mathbb{E}[e^{U_{1}}] = \ChiS
\end{align*}

Since $\nu \geq 2\ChiS(C-\lambda)$, we get $b_{0} < \frac{\nu}{2(C-\lambda)}$. 


Since for all $t \leq t^{*}$, $b_{t} < \frac{\nu}{2(C-\lambda)}$, then by Lemma~\ref{Lem.lower.b.1}:
\begin{align}
b_{t+1} &\geq \ChiS e^{\lambda b_{t}} (1 - \frac{\ChiS'}{\ChiS} e^{\frac{-\nu}{2}}) \nonumber \\
& \geq \ChiS e^{\lambda b_{t}} (1 - \frac{\ChiS'}{\ChiS} e^{\frac{-\nu_{o}}{2}})
\end{align}
where the last inequality holds since $\nu \geq \nu_{o}$. Thus, 
\begin{align}
b_{1} &\geq \ChiS e^{\lambda b_{0}}(1 - \frac{\ChiS'}{\ChiS} e^{\frac{-\nu_{o}}{2}}) \nonumber \\
 &\geq \ChiS (1 - \frac{\ChiS'}{\ChiS} e^{\frac{-\nu_{o}}{2}})  \nonumber \\
&\geq \sqrt{\frac{\ChiS}{\lambda e}} \label{achie.1}
\end{align}
where the last inequality holds by the choice of $\nu_{o}$. Moreover,
\begin{align}
b_{t+1} &\geq \ChiS e^{\lambda b_{t}} (1 - \frac{\ChiS'}{\ChiS} e^{\frac{-\nu_{o}}{2}}) \nonumber \\
& \overset{(a)}{\geq} \ChiS e \lambda b_{t} (1 - \frac{\ChiS'}{\ChiS} e^{\frac{-\nu_{o}}{2}}) \nonumber \\
&\overset{(b)}{\geq} \sqrt{\ChiS \lambda e} b_{t}
\end{align}
where $(a)$ holds because $e^{u} \geq eu$ for all $u > 0$ and $(b)$ holds by choice of $\nu_{0}$. Thus, for all $1 \leq t \leq t^{*}+1$: $b_{t} \geq \sqrt{\ChiS \lambda e} b_{t-1}$. Since $b_{1} \geq \sqrt{\frac{\ChiS}{\lambda e}}$, it follows by induction that:
\begin{equation}
\label{lower.eq.3}
b_{t} \geq \ChiS (\lambda \ChiS e)^{\frac{t}{2} -1}  \text{ for all } 1 \leq t \leq t^{*}+1
\end{equation}


We now divide the analysis into two cases. First, if $\bar{t}_{o}$ is such that $b_{\bar{t}_{o} -1} \geq \frac{\nu}{2(C-\lambda)}$. This implies that $\bar{t}_{o} -1 \geq t^{*} +1$ by the definition of $t^{*}$. Thus, $t^{*} \leq \bar{t}_{o} -2 \leq \bar{t}_{o} + \bar{t}_{1}$, which proves our claim for the first case.

If $\bar{t}_{o}$ is such that $b_{\bar{t}_{o} -1} < \frac{\nu}{2(C-\lambda)}$. Then, $\bar{t}_{o} \leq t^{*} +1$. Thus, $b_{\bar{t}_{o}} \geq \ChiS (\lambda Le)^{\frac{\bar{t}_{o}}{2} -1}$. Let $t_{o} = \min \{ t: b_{t} \geq \ChiS (\lambda \ChiS e)^{\frac{\bar{t}_{o}}{2} -1}\}$. Thus, by Lemma~\ref{Lem.lower.b.2}, we get $t_{o} \leq \bar{t}_{o}$. Moreover, by the choice of $t_{o}$ and $w_{o}$:
\begin{align}
w_{o} &\leq \lambda (\ChiS (\lambda \ChiS e)^{\frac{\bar{t}_{o}}{2}-1} -D) \leq \lambda (b_{t_{o}} - D)
\end{align}
Now define sequence $(w_{t}: t \geq0)$: $w_{t+1} = e^{w_{t}}$, where $w_{o}$ was chosen according to~\eqref{eq:condition-w0}. We already showed that $w_{o} \leq \lambda (b_{t_{o}} - D)  $. Assume that $w_{t-1} \leq \lambda (b_{t_{o}+t-1} - D)$ for $t_{o}+t-1 \leq t^{*}$. Then, 
\begin{align}\nonumber
\lambda (b_{t_{o}+t} - D) &\overset{(a)}{\geq} \lambda ( \ChiS e^{\lambda b_{t_{o}+t-1}} (1- \frac{\ChiS'}{\ChiS}e^{-\nu_{o}}) - D) \\ \nonumber
&\overset{(b)}{\geq} \lambda ( \ChiS e^{\lambda D + w_{t-1}} (1- \frac{\ChiS'}{\ChiS}e^{-\nu_{o}}) - D) \\ \nonumber
&\overset{(c)}{=} \lambda \ChiS e^{\lambda D} w_{t} (1- \frac{\ChiS'}{\ChiS}e^{-\nu_{o}}) - \lambda D \\ \nonumber
&\overset{(d)}{\geq} w_{t}
\end{align}
where $(a)$ holds by Lemma~\ref{Lem.lower.b.1}, $(b)$ holds by the assumption that $w_{t-1} \leq \lambda (b_{t_{o}+t-1} - D)$, $(c)$ holds by the definition of the sequence $w_{t}$ and $(d)$ holds by the choice of $w_{o}$ and the fact that $w_{t} \geq w_{o}$. Thus, we showed by induction that 
\begin{equation}
\label{lower.eq.4}
w_{t} \leq \lambda (b_{t_{o}+t} - D) \text{ for } 0 \leq t \leq t^{*} - t_{o}+1. 
\end{equation}

By the definition of $\bar{t}_{1}$ and since $w_{1} \geq 1$, we have $\nu \leq w_{\bar{t}_{1}+1}$. Thus, $w_{\bar{t}_{1}+1} \geq \nu - \lambda D$. Since, by the definition of $C$, $\lambda \leq 2(C-\lambda)$. Therefore, $w_{\bar{t}_{1}+1} \geq  \frac{\nu \lambda}{2(C-\lambda)} - \lambda D$. We will show that $t^{*} \leq \bar{t}_{o} + \bar{t}_{1}$ by contradiction. Let $t^{*} > \bar{t}_{o} + \bar{t}_{1}$. Thus, from~\eqref{lower.eq.4}, for $t = t_{o} + \bar{t}_{1} + 1$: 
\begin{equation}
\label{lower.eq.5}
b_{t_{o}+\bar{t}_{1} + 1} \geq \frac{w_{\bar{t}_{1} + 1}}{\lambda}  + D \geq \frac{\nu}{2(C-\lambda)}
\end{equation}
which implies that $t_{o}+\bar{t}_{1} + 1 \geq t^{*} +1$, i.e., $t_{o}+\bar{t}_{1}  \geq t^{*} $, which contradicts the assumption that $t^{*} > \bar{t}_{o} + \bar{t}_{1}$. 

To sum up, we have shown so far that if $\lambda > \frac{1}{\ChiS e}$, then $t^{*} \leq \bar{t}_{o} + \bar{t}_{1}$.

Since $t^{*}$ is the last iteration for $b_{t}  < \frac{\nu}{2(C - \lambda)}$. Then, $b_{t^{*}+1} \geq \frac{\nu}{2(C - \lambda)}$. We begin with $b_{t^{*}+1} =\frac{\nu}{2(C - \lambda)}$. Then by Lemma~\ref{Lem.lower.b.1}: 
\begin{align}
\label{lower.eq.6}
b_{t^{*}+2} &\geq \ChiS e^{\lambda b_{t^{*}+1}} (1 - \frac{\ChiS'}{\ChiS} e^{\frac{-\nu}{2}})
\end{align}

By Lemma~\ref{Lem.lower.b.2}, the sequence $b_{t}$ is non-decreasing in $t$. We also known $t^{*} +2 \leq \bar{t}_{o} + \bar{t}_{1} +2$. Using~\eqref{lower.eq.6}:
\begin{align}
\label{lower.eq.7}
b_{\bar{t}_{o} + \log^{*}(\nu)+2} \geq \ChiS e^{\frac{\lambda \nu}{2(C-\lambda)}} (1-\frac{\ChiS'}{{\ChiS}} e^{\frac{-\nu}{2}})
\end{align}
which concludes one case of the proof. 

When $b_{t^{*}+1} > \frac{\nu}{2(C - \lambda)}$, we use the truncation process~\cite[Lemma 6]{recovering_one}, which depends only on the tree structure. Applying this truncation process, it can directly be shown that the tree can be truncated such that with probability one the value of $b_{t^{*}+1}$ in the truncated tree is $\frac{\nu}{2(C - \lambda)}$. The truncation process~\cite[Lemma 6]{recovering_one} depends only on the structure of the tree. In this paper, the side information is independent of the tree structure given the labels, therefore the same truncation process holds for our case, which concludes the proof using~\eqref{lower.eq.6} and~\eqref{lower.eq.7}.



\section{Proof of Theorem~\ref{The.6}}
\label{BP-App.5.1}

The assumption $(np)^{\log^{*}(\nu)} = n^{o(1)}$ ensures that $(np)^{\TreeIter} = n^{o(1)}$. Since $\frac{K^{2}(p-q)^{2}}{q(n-K)} \to \lambda$,  $ p \geq q $ and $\frac{p}{q} = \theta(1)$, then $(\frac{n-K}{K})^{2} = O(np)$. Since $K = o(n)$, then $np \to \infty$. Thus, $(np)^{\TreeIter} = n^{o(1)}$ can be replaced by $(np+2)^{\TreeIter} = n^{o(1)}$, and hence, the coupling  Lemma~\ref{couple} holds. Moreover, since $(\frac{n-K}{K})^{2} = O(np)$ and $np = n^{o(1)}$,  $K = n^{1-o(1)}$.

Consider a modified form of Algorithm~\ref{tab} whose output is $\hat{C} = \{i: R_{i}^{\TreeIter} \geq \nu \}$. Then for deterministic $|C^{*}| = K$, the following holds:
\begin{align} \nonumber
p_{e} & = \mathbb{P}(\text{No coupling}) p_{e|\text{no coupling}} + \mathbb{P}(\text{coupling}) p_{e|\text{coupling}} \\ 
& \leq n^{-1 + o(1)} + \frac{K}{n} e^{-\nu(r + o(1))}\label{ache.1}
\end{align} 
where the last inequality holds by Lemmas~\ref{couple} and~\ref{lower_up_map} for some positive constant $r$. Multiplying~\eqref{ache.1} by $\frac{n}{K}$:
\begin{align}
\frac{\mathbb{E}[|C^{*} \triangle \hat{C} |]}{K} \leq \frac{n^{o(1)}}{K} + e^{-\nu(r + o(1))}  \to 0 \label{ache.2}
\end{align}
where the last inequality holds because $K = n^{1-o(1)}$ and $\nu \to \infty$.

Now going back to Algorithm~\ref{tab} and its output $\tilde{C}$, using Equation~\eqref{fix.est}:
\begin{equation}
\frac{\mathbb{E}[|C^{*} \triangle \tilde{C} |]}{K} \leq 2 \frac{\mathbb{E}[|C^{*} \triangle \hat{C} |]}{K} \to 0
\end{equation}
which concludes the proof under deterministic $|C^{*}| = K$.

When $|C^{*}|$ is random such that $K \geq 3\log(n)$ and $\mathbb{P}(| |C^{*}| - K | \geq \sqrt{3K\log(n)}) \leq n^{\frac{-1}{2} + o(1)}$, we have $\mathbb{E}[| |C^{*}| -K|] \leq n^{\frac{1}{2} + o(1)}$. Thus, for $\tilde{C}$, using Equation~\eqref{fix.est}: 
\begin{equation}
\frac{\mathbb{E}[|C^{*} \triangle \tilde{C} |]}{K} \leq 2 \frac{\mathbb{E}[|C^{*} \triangle \hat{C} |]}{K} + \frac{\mathbb{E}[| |C^{*}| -K|]}{K} \to 0
\end{equation}
which concludes the proof.


\section{Proof of Theorem~\ref{The.7}}
\label{BP-App.5.2}

Since $(np+2)^{\TreeIter} = n^{o(1)}$, the coupling Lemma~\ref{couple} holds. Moreover, since  $(\frac{n-K}{K})^{2} = O(np)$ and $np = n^{o(1)}$,  $K = n^{1-o(1)}$. Consider a deterministic $|C^{*}|=K$. Then, for any local estimator $\hat{C}$:
\begin{align} \nonumber
p_{e} & = \mathbb{P}(\text{No coupling}) p_{e|\text{no coupling}} + \mathbb{P}(\text{coupling}) p_{e|\text{coupling}} \\ 
& \geq  \frac{K(n-K)}{n^{2}} \mathbb{E}^{2}[e^{\frac{U_{0}}{2}}] e^{\frac{-\lambda \ChiS e}{4}} - n^{-1+o(1)}\label{conv.1}
\end{align} 
where the last inequality holds by Lemmas~\ref{couple} and~\ref{lower_up_map}. Multiplying~\eqref{conv.1} by $\frac{n}{K}$:
\begin{align}
\frac{\mathbb{E}[|C^{*} \triangle \hat{C} |]}{K} \geq \Big(1-\frac{K}{n}\Big) \mathbb{E}^{2}[e^{\frac{U_{0}}{2}}] e^{\frac{-\lambda \ChiS e}{4}} - o(1) \label{conv.2}
\end{align}
where the last inequality holds because $K = n^{1-o(1)}$. Thus, for $\lambda \leq \frac{1}{\ChiS e}$, $\frac{\mathbb{E}[|C^{*} \triangle \hat{C} |]}{K}$ is bounded away from zero for any local estimator $\hat{C}$.

It can be shown that under a non-deterministic $|C^*|$ that obeys a distribution in the class of distributions mentioned earlier, the local estimator will do no better, therefore the same converse will hold.


\section{Proof of Theorem~\ref{The.8}}
\label{BP-App.5.3}

Let $Z$ be a binomial random variable $\text{Bin}(n(1-\delta),\frac{K}{n})$. In view of Lemma~\ref{The.5.new}, it suffices to verify~\eqref{suff._exact.eq.1.new} when $\hat{C}_{k}$ for each $k$ is the output of belief propagation for estimating $C^{*}_{k}$ based on observing $\boldsymbol{G}_{k}$ and $\boldsymbol{Y}_{k}$. The distribution of $|C^{*}_{k}|$ is obtained by sampling the indices of the original graph without replacement. Thus, for any convex function $\phi$: $\mathbb{E}[\phi(|C^{*}_{k}|)] \leq \mathbb{E}[\phi(Z)]$. Therefore, Chernoff bound for $Z$ also holds for $|C^{*}_{k}|$. This leads to:
\begin{align}
\mathbb{P}\Big( \big| |C^{*}_{k}| - (1-\delta)K \big| \geq \sqrt{3K(1-\delta)\log(n)} \Big) & \overset{}{\leq} n^{-1.5 +o(1)} \nonumber \\
&\leq n^{\frac{-1}{2} + o(1)} \label{Suff.exact.BP.eq.1}
\end{align}
Thus, by Theorem~\ref{The.6}, belief propagation achieves weak recovery for recovering $C^{*}_{k}$ for each $k$. Thus:
\begin{equation}
\mathbb{P}\big(|\hat{C}_{k}\triangle C^{*}_{k}|  \leq \delta K \quad \text{ for } 1 \leq k \leq \frac{1}{\delta}\big) \to 1
\end{equation}
which together with Lemma~\ref{The.5.new} conclude the proof.


\section{Proof of Lemma~\ref{mean_variance_BP}}
\label{BP-App.6}

First, we expand $M(x)$ using Taylor series:
\begin{align}
M(x) & = \frac{\frac{p}{q}-1}{1 + e^{-(x-\nu)}} - \frac{1}{2} \big(\frac{\frac{p}{q}-1}{1 + e^{-(x-\nu)}}\big)^{2} \twocolbreak \includeonetwocol{}{\hspace{0.1in}}
+ O\bigg(\big(\frac{\frac{p}{q}-1}{1 + e^{-(x-\nu)}} \big)^{3} \bigg) \label{f_x_1}
\end{align}
Thus:
\begin{align}
\mathbb{E}[Z_{0}^{t+1}] =& -K(p-q) + Kq\mathbb{E}[M(Z_{1}^{t}+U_{1})] \twocolbreak 
 + (n-K)q \mathbb{E}[M(Z_{0}^{t}+U_{0})] \nonumber \\
 =&  -K(p-q) + K(p-q)\mathbb{E}\Big[\frac{1}{1+e^{-(Z_{1}^{t}+U_{1}-\nu)}}\Big] \twocolbreak 
+ (n-K)(p-q)\mathbb{E}\Big[\frac{1}{1+e^{-(Z_{0}^{t}+U_{0}-\nu)}}\Big] \nonumber \\
&   -\frac{K(p-q)^{2}}{2q}\mathbb{E}\Big[\big(\frac{1}{1+e^{-(Z_{1}^{t}+U_{1}-\nu)}}\big)^{2}\Big] \twocolbreak 
- \frac{(n-K)(p-q)^{2}}{2q}\mathbb{E}\Big[\big(\frac{1}{1+e^{-(Z_{1}^{0}+U_{0}-\nu)}}\big)^{2}\Big] \nonumber \\
& + O\Bigg(  \frac{K(p-q)^{3}}{q^{2}}\mathbb{E}\Big[\big(\frac{1}{1+e^{-(Z_{1}^{t}+U_{1}-\nu)}}\big)^{3}\Big] \twocolbreak 
+ \frac{(n-K)(p-q)^{3}}{q^{2}}\mathbb{E}\Big[\big(\frac{1}{1+e^{-(Z_{0}^{t}+U_{0}-\nu)}}\big)^{3}\Big] \Bigg) \label{mean_variance_BP.eq1}
\end{align}
Using Lemma~\ref{Lem.3} for $g(x) = \frac{1}{1+e^{-(x-\nu)}}$, 
\begin{align}
 K(p-q)\mathbb{E}&\Big[\frac{1}{1+e^{-(Z_{1}^{t}+U_{1}-\nu)}}\Big] \twocolbreak 
+ (n-K)(p-q)\mathbb{E}\Big[\frac{1}{1+e^{-(Z_{0}^{t}+U_{0}-\nu)}}\Big] \nonumber \\
& = K(p-q)  \label{mean_variance_BP.eq2}
\end{align}
Similarly:
\begin{align}
\frac{K(p-q)^{2}}{2q}\mathbb{E}&\Big[\big(\frac{1}{1+e^{-(Z_{1}^{t}+U_{1}-\nu)}}\big)^{2}\Big] \twocolbreak 
+ \frac{(n-K)(p-q)^{2}}{2q}\mathbb{E}\Big[\big(\frac{1}{1+e^{-(Z_{1}^{0}+U_{0}-\nu)}}\big)^{2}\Big]  \nonumber \\
& = \frac{K(p-q)^{2}}{2q}\mathbb{E}\Big[\frac{1}{1+e^{-(Z_{1}^{t}+U_{1}-\nu)}}\Big] \label{mean_variance_BP.eq3}
\end{align}
and,
\begin{align}
\frac{K(p-q)^{3}}{q^{2}}&\mathbb{E}\Big[\big(\frac{1}{1+e^{-(Z_{1}^{t}+U_{1}-\nu)}}\big)^{3}\Big] \twocolbreak  
+ \frac{(n-K)(p-q)^{3}}{q^{2}}\mathbb{E}\Big[\big(\frac{1}{1+e^{-(Z_{0}^{t}+U_{0}-\nu)}}\big)^{3}\Big]  \nonumber \\
& = \frac{K(p-q)^{3}}{q^{2}}\mathbb{E}\Big[\big(\frac{1}{1+e^{-(Z_{1}^{t}+U_{1}-\nu)}} \big)^{2}\Big] \label{mean_variance_BP.eq4}
\end{align}
Using~\eqref{mean_variance_BP.eq2},~\eqref{mean_variance_BP.eq3} and~\eqref{mean_variance_BP.eq4} and substituting in~\eqref{mean_variance_BP.eq1}:
\begin{align}
\mathbb{E}[Z_{0}^{t+1}] = & -\frac{\lambda}{2}b_{t} + O\bigg( \frac{K(p-q)^{3}}{q^{2}} \mathbb{E}[\big(\frac{1}{1+e^{-(Z_{1}^{t}+U_{1}-\nu)}}\big)^{2}]\bigg) \nonumber \\
=& -\frac{\lambda}{2}b_{t} + o(1) \label{mean_variance_BP.eq5}
\end{align}
where the last equality holds by the definition of $\lambda$ and $b_{t}$ and because $\frac{K(p-q)^{3}}{q^{2}} = \lambda \frac{n}{K} (1-\frac{K}{n})(\frac{p}{q}-1)$ which is $o(1)$ because of the assumptions of the lemma which also implies that $\frac{p}{q} \to 1$.

To show~\eqref{mean_2}, we use Taylor series: $M(x) = \frac{\frac{p}{q}-1}{1 + e^{-(x-\nu)}} +O(\big(\frac{\frac{p}{q}-1}{1 + e^{-(x-\nu)}}\big)^{2})$. Then,
\begin{align}
\mathbb{E}[Z_{1}^{t+1}] & = \mathbb{E}[Z_{0}^{t+1}] + K(p-q)\mathbb{E}[M(Z_{1}^{t}+U_{1})] \nonumber \\
& = \mathbb{E}[Z_{0}^{t+1}] + \frac{K(p-q)^{2}}{q} \mathbb{E}\Big[\frac{1}{1+e^{-(Z_{1}^{t}+U_{1}-\nu)}}\Big] \twocolbreak \includeonetwocol{}{\hspace{0.1in}}
 + O\bigg(\frac{K(p-q)^{3}}{q^{2}} \mathbb{E}\Big[\big( \frac{1}{1+e^{-(Z_{1}^{t}+U_{1}-\nu)}} \big)^{2}\Big] \bigg) \nonumber \\
& = \mathbb{E}[Z_{0}^{t+1}] + \lambda b_{t} + o(1) = \frac{\lambda}{2}b_{t} + o(1) \label{mean_variance_BP.eq6}
\end{align}

We now calculate the variance. For $Y = \sum_{i=1}^{L} X_{i}$, where $L$ is Poisson distributed and $\{X_i\}$ are independent of $Y$ and are i.i.d., it is well-known that $\text{var}(Y) = \mathbb{E}[L]\mathbb{E}[X_{1}^2]$. Thus,
\begin{align}
&\text{var}(Z_{0}^{t+1}) \nonumber \\
&=  Kq\, \mathbb{E}[M^{2}(Z_{1}^{t}+U_1)] + (n-K)q \, \mathbb{E}[M^{2}(Z_{0}^{t}+U_0)] \nonumber \\
& \overset{(a)}{=}  \frac{K(p-q)^{2}}{q^{2}}\mathbb{E}\Big[\big(\frac{1}{1+e^{-(Z_{1}^{t}+U_{1}-\nu)}}\big)^{2}\Big] \twocolbreak \includeonetwocol{}{\hspace{0.1in}} 
+ \frac{(n-K)(p-q)^{2}}{q^{2}}\mathbb{E}\Big[\big(\frac{1}{1+e^{-(Z_{0}^{t}+U_{0}-\nu)}}\big)^{2}\Big] \nonumber \\
& \includeonetwocol{}{\hspace{0.1in}} +  O\bigg(  \frac{K(p-q)^{3}}{q^{2}}\mathbb{E}\Big[\big(\frac{1}{1+e^{-(Z_{1}^{t}+U_{1}-\nu)}}\big)^{3}\Big] \twocolbreak \includeonetwocol{}{\hspace{0.1in}}
+ \frac{(n-K)(p-q)^{3}}{q^{2}}\mathbb{E}\Big[\big(\frac{1}{1+e^{-(Z_{0}^{t}+U_{0}-\nu)}}\big)^{3}\Big] \bigg)  \nonumber \\
& \overset{(b)}{=} \lambda b_{t} + o(1) \label{mean_variance_BP.eq7}
\end{align}
where $(a)$ holds because $\log^{2}(1+x) = x^{2} + O(x^{3})$ for all $x \geq 0$ and $(b)$ holds by similar analysis as in~\eqref{mean_variance_BP.eq5}. 

Similarly,
\begin{align}
\text{var}(Z_{1}^{t+1}) &= \text{var}(Z_{0}^{t+1}) \twocolbreak \includeonetwocol{}{\hspace{0.1in}} 
+ O\bigg( \frac{K(p-q)^{3}}{q^{2}} \mathbb{E}\Big[\big(\frac{1}{1+e^{-(Z_{1}^{t}+U_{1}-\nu)}}\big)^{2}\Big]\bigg) \nonumber \\
& = \lambda b_{t} + o(1) \label{mean_variance_BP.eq8}
\end{align}


\section{Proof of Lemma~\ref{Gaussian}}
\label{BP-App.7}

Before we prove the lemma, we need the following lemma from~\cite[Theorem 3]{Poisson_lemma}.
\begin{lemma}
\label{berry}
Let $S_{\gamma} = X_{1} + \cdots + X_{N_{\gamma}}$, where $X_{i}: i\geq 1$ are i.i.d. random variables with mean $\mu$, variance $\sigma^{2}$ and $\mathbb{E}[|X_{i}|^{3}] \leq  \rho^{3}$, and for some $\gamma >0$, $N_{\gamma}$ is a $\text{Poi}(\gamma)$ random variable independent of $(X_{i}: i\geq 1)$. Then,
\begin{align}
&\sup_{x} \big| \mathbb{P}\big( \frac{S_{\gamma} - \gamma\mu }{\sqrt{\gamma(\mu^{2}+\sigma^{2})}} \leq x \big) - \phi(x) \big|  \leq \frac{0.3041 \rho^{3}}{\sqrt{\gamma(\mu^{2}+\sigma^{2})^{3}}}
\end{align}
\end{lemma}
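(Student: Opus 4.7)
The plan is to establish this compound-Poisson Berry--Esseen bound via Esseen's smoothing inequality applied to characteristic functions, exploiting the fact that $S_{\gamma}$ is an infinitely divisible sum whose characteristic function has the explicit closed form $\varphi_{S_\gamma}(t) = \exp(\gamma(\varphi_X(t) - 1))$. The key quantities are the variance $\mathrm{Var}(S_\gamma) = \gamma(\mu^2 + \sigma^2)$, which follows from the total-variance identity $\mathrm{Var}(S_\gamma) = \mathbb{E}[N_\gamma]\,\mathrm{Var}(X) + \mathrm{Var}(N_\gamma)\,\mu^2 = \gamma\sigma^2 + \gamma\mu^2$, together with an absolute third-moment control coming from $\mathbb{E}[|X|^3] \leq \rho^3$.

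First I would let $W_\gamma \triangleq (S_\gamma - \gamma\mu)/s_\gamma$, where $s_\gamma \triangleq \sqrt{\gamma(\mu^2+\sigma^2)}$, and write its characteristic function as
\begin{equation}
\varphi_{W_\gamma}(t) = \exp\!\Big( \gamma\bigl(\varphi_X(t/s_\gamma) - 1\bigr) - it\gamma\mu/s_\gamma \Big).
\end{equation}
Next I would Taylor expand $\varphi_X(u) = 1 + i\mu u - \tfrac{1}{2}(\mu^2+\sigma^2)u^2 + R(u)$ with $|R(u)| \leq \tfrac{1}{6}\rho^3 |u|^3$ (the classical third-order remainder bound). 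Substituting $u = t/s_\gamma$ gives, after cancellation of the linear drift term,
\begin{equation}
\gamma\bigl(\varphi_X(t/s_\gamma)-1\bigr) - it\gamma\mu/s_\gamma = -\tfrac{t^2}{2} + \gamma R(t/s_\gamma),
\end{equation}
so $\varphi_{W_\gamma}(t) - e^{-t^2/2} = e^{-t^2/2}(e^{\gamma R(t/s_\gamma)} - 1) + (\text{tail term})$. Then I would invoke Esseen's smoothing inequality
\begin{equation}
\sup_x |F_{W_\gamma}(x) - \Phi(x)| \leq \frac{1}{\pi}\int_{-T}^{T} \frac{|\varphi_{W_\gamma}(t)-e^{-t^2/2}|}{|t|}\,dt + \frac{C_0}{T},
\end{equation}
choose the truncation $T = c\, s_\gamma^3/\rho^3$ for a suitable absolute constant $c$, and bound the integral by splitting into a region where $|t| \leq T$ and $|\gamma R(t/s_\gamma)|$ is small (use $|e^z-1|\leq |z|e^{|z|}$) and a high-frequency region controlled by $e^{-t^2/2}$.

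The routine bookkeeping produces \emph{some} Berry--Esseen constant, but the hard part is hitting the sharp value $0.3041$. Achieving this requires the optimized version of Esseen's smoothing inequality (using the Prawitz / Shevtsova smoothing kernel rather than the classical one), together with a careful two-scale analysis of the integrand that treats the imaginary and real parts of $\varphi_X - 1$ separately, as in the recent line of work by Korolev and Shevtsova on optimal constants for compound Poisson distributions; the constant $0.3041$ is essentially the current best Berry--Esseen constant for i.i.d.\ sums transported through the compound Poisson representation. Since this is a purely distributional fact about compound Poisson sums, independent of the community-detection setup in this paper, I would simply cite the sharp-constant result and use it as a black box rather than re-deriving the optimized constant here.
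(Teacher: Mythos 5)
Your proposal ends exactly where the paper does: the paper gives no proof of this lemma at all, simply importing it as Theorem~3 of the cited reference on Berry--Esseen bounds for Poisson random sums, and your final decision to invoke the sharp-constant result (the Korolev--Shevtsova line of work, which is indeed where the constant $0.3041$ comes from) as a black box is the same move. Your preliminary sketch via the compound-Poisson characteristic function $\exp(\gamma(\varphi_X(t)-1))$, the total-variance identity, and Esseen smoothing is a correct outline of how one would obtain such a bound with a non-optimal constant, but it is not needed beyond motivating the citation.
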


For $t \geq 0$, $Z_{0}^{t+1}$ can be represented as follows:
\begin{align}
Z_{0}^{t+1} &= -K(p-q) + \sum_{i=1}^{N_{nq}} X_{i} 
\label{Z_{0}}
\end{align}
where $N_{nq}$ is distributed according to $\text{Poi}(nq)$, the random variables $X_{i}, i\geq 1$ are mutually independent and independent of $N_{nq}$ and $X_{i}$ is a mixture: 
\[
X_{i} = \frac{(n-K)q}{nq} M(Z_{0}^{t} + U_{0}) + \frac{Kq}{nq}M(Z_{1}^{t} + U_{1}).
\]

Starting with~\eqref{Z_{0}}, using the properties of compound Poisson distribution, and then applying Lemma~\ref{mean_variance_BP}:
\begin{align}
nq\mathbb{E}[X_{i}^{2}] &= \text{var}(Z_{0}^{t+1}) = \lambda b_{t} + o(1) 
\label{gauss.3}
\end{align}
Also, using $\log^{3}(1+x) \leq x^{3}$ for all $x \geq 0$:
\begin{align}
 nq\mathbb{E}[|X_{i}^{3}|] 
& \leq \frac{K(p-q)^{3}}{q^{2}} \mathbb{E}\Big[\big(\frac{1}{1 + e^{-(Z_{1}^{t}+U_{1})+\nu }} \big)^{3}\Big] \twocolbreak \includeonetwocol{}{\hspace{0.1in}} 
+ \frac{(n-K)(p-q)^{3}}{q^{2}} \mathbb{E}\Big[\big(\frac{1}{1 + e^{-(Z_{0}^{t}+U_{0})+\nu } }\big)^{3}\Big] \nonumber \\
& \overset{(a)}{\leq}  \frac{K(p-q)^3}{q^2} \nonumber \\
& \overset{(b)}{=} o(1) \label{gauss.4}
\end{align}
where $(a)$ holds by Lemma~\ref{Lem.3} for $g(x) = \frac{1}{1+e^{-(x-\nu)}}$ and $(b)$ holds since $\frac{p}{q} \to 1$.

Combining~\eqref{gauss.3} and~\eqref{gauss.4} yields $\frac{\mathbb{E}[|X_{i}^{3}|]}{\sqrt{nq\mathbb{E}^{3}[X_{i}^{2}]}} = \frac{nq\mathbb{E}[|X_{i}^{3}|]}{\sqrt{(nq\mathbb{E}[X_{i}^{2}])^{3}}} \to 0$, which together with Lemma~\ref{berry} yields:
\begin{align}
&\sup_{x} \big| \mathbb{P}\big( \frac{Z_{0}^{t+1} + \frac{\lambda b_{t}}{2} }{\sqrt{\lambda b_{t}}} \leq x \big) - \phi(x) \big|  \to 0 
\label{gauss.5} 
\end{align}

Similarly, for $t \geq 0$, $Z_{1}^{t+1}$ can be represented as follows:
\begin{align}
Z_{1}^{t+1} &= -K(p-q) + \frac{1}{\sqrt{(n-K)q}} \sum_{i=1}^{N_{(n-K)q + Kp}} Y_{i} \label{Z_{1}}
\end{align}
where $N_{(n-K)q + Kp}$ is distributed according to $\text{Poi}((n-K)q + Kp)$, the random variables $Y_{i}, i\geq 1$ are mutually independent and independent of $N_{(n-K)q + Kp}$ and $Y_{i}$ is a mixture: 
\begin{align*}
Y_{i} = &\frac{(n-K)q}{(n-K)q + Kp}M(Z_{0}^{t} + U_{0}) \twocolbreak + \frac{Kp}{(n-K)q + Kp}M(Z_{1}^{t} + U_{1}).
\end{align*}

Starting with~\eqref{Z_{1}}, using the properties of compound Poisson distribution, and then applying Lemma~\ref{mean_variance_BP}:
\begin{align}
((n-K)q + Kp)\mathbb{E}[Y_{i}^{2}] &= \text{var}(Z_{1}^{t+1}) = \lambda b_{t} + o(1) \label{gauss.6}
\end{align}
Also, using $\log^{3}(1+x) \leq x^{3}$ for all $x \geq 0$:
\begin{align}
((n-K)q + Kp)\mathbb{E}[|Y_{i}^{3}|] & = nq\mathbb{E}[|X_{i}|^{3}] \twocolbreak \includeonetwocol{}{\hspace{-0.2in}}
+ K(p-q) \mathbb{E}\bigg[\Big(\frac{\frac{p}{q}-1}{1+e^{-(Z_{1}^{t}+U_{1})+\nu }}\Big)^{3}\bigg]\nonumber \\
& \leq o(1) \label{gauss.7}
\end{align}
where~\eqref{gauss.7} holds since $\frac{p}{q} \to 1$.

Combining~\eqref{gauss.6} and~\eqref{gauss.7} yields $\frac{\mathbb{E}[|Y_{i}^{3}|]}{\sqrt{(n-K)q + Kp)\mathbb{E}^{3}[Y_{i}^{2}]}} \to 0$, which together with Lemma~\ref{berry} yields:
\begin{align}
&\sup_{x} \big| \mathbb{P}\big( \frac{Z_{1}^{t+1} - \frac{\lambda b_{t}}{2} }{\sqrt{\lambda b_{t}}} \leq x \big) - \phi(x) \big|  \to 0 \label{gauss.8} 
\end{align}

Hence, using~\eqref{gauss.5} and~\eqref{gauss.8}, it suffices to show that $\lambda b_{t} \to v_{t+1}$, which implies that~\eqref{gauss.1} and~\eqref{gauss.2} are satisfied. We use induction to prove that $\lambda b_{t} \to v_{t+1}$. At $t=0$, we have: $v_{1} = \lambda \mathbb{E}[\frac{1}{e^{-\nu} + e^{-U_{1}}}] = \lambda b_{0}$. Hence, our claim is satisfied for $t=0$. Assume that $\lambda b_{t} \to v_{t+1}$. Then,
\begin{align}
b_{t+1} & = \mathbb{E}[\frac{1}{e^{-\nu} + e^{-(Z_{1}^{t+1}+U_{1})}}] = \mathbb{E}_{U_{1}}[\mathbb{E}_{Z_{1}}[\frac{1}{e^{-\nu} + e^{-(Z_{1}^{t} + u)}}]] \nonumber \\
& = \mathbb{E}_{U_{1}}[\mathbb{E}_{Z_{1}}[f(Z_{1}^{t+1}; u,\nu)]] = \mathbb{E}_{U_{1}}[\mathbb{E}_{Z_{1}}[\mathcal{E}_n]] \label{gauss.9}
\end{align}
where $f(z;u,\nu) = \frac{1}{e^{-\nu} + e^{-(z+u)}}$ and $\mathcal{E}_n$ is a sequence of random variables representing $f(Z;u,\nu)$ as it evolves with $n$. Let $G(s)$ denote a Gaussian random variable with mean $\frac{s}{2}$ and variance $s$. 

From~\eqref{gauss.8}, we have $\text{Kolm}\big(Z_{1}^{t+1}, G(\lambda b_{t})\big) \to 0$ where $\text{Kolm}(\cdot,\cdot)$ is the Kolmogorov distance (supremum of absolute difference of CDFs). Since $f(z; u,\nu)$ is non-negative and monotonically increasing in $z$ and since the Kolmogorov distance is preserved under monotone transformation of random variables, it follows that  $\text{Kolm}\big( f(Z_{1}^{t+1}; u, \nu) , f(G(\lambda b_{t}); u,\nu)\big) \to 0$ . Since $\lim_{z \to \infty} f(z;u\nu)  = e^{\nu}$, using the definition of Kolmogorov distance and by expressing the CDF of $f(G(\lambda b_{t}); u,\nu)$ in terms of the CDF of $G(\lambda b_{t})$ and the inverse of $f(z; u, \nu)$, we get:
\begin{align}
&\sup_{0<c<e^{\nu}} \Big|  F_{\mathcal{E}_n}(c) - F_{G(\lambda b_{t})}\Big(\log\big(\frac{c e^{-u}}{1-c e^{-\nu}}\big)\Big) \Big| \to 0 \label{gauss.10}
\end{align}
From the induction hypothesis, $\lambda b_{t} \to v_{t+1}$. Thus,
\begin{align}
&\sup_{0<c<e^{\nu}} \Big|  F_{\mathcal{E}_n}(c) - F_{G(v_{t+1})}\Big(\log\big(\frac{c e^{-u}}{1-c e^{-\nu}}\big)\Big) \Big| \to 0 \label{gauss.12}
\end{align}
which implies that the sequence of random variables $\mathcal{E}_n$ converges in Kolmogorov distance to a random variable $\frac{1}{e^{-\nu} + e^{-(G(v_{t+1})+u)}}$ as $n\to\infty$. This implies the following convergence in distribution:
\begin{align}
    \mathcal{E}_n \overset{i.d.}{\rightarrow} \frac{1}{e^{-\nu} + e^{-(G(v_{t+1})+u)}} \label{gauss.12.1}
\end{align}
Moreover, the second moment of $\mathcal{E}_n$ is bounded from above independently of $n$:
\begin{align}
\mathbb{E}[\mathcal{E}_{n}^{2}] & \overset{(a)}{\leq} e^{2\nu} \overset{(b)}{\leq} A \label{gauss.13}
\end{align}
where $(a)$ holds by the definition of $\mathcal{E}_n$, and $(b)$ holds for positive constant $A$ since based on the assumptions of the lemma, $\nu$ is constant as $n \to \infty$. 
%

By~\eqref{gauss.12},~\eqref{gauss.12.1} and~\eqref{gauss.13}, the dominated convergence theorem implies that, as $n \to \infty$, the mean of $\mathcal{E}_n$ converges to the mean of the random variable $\frac{1}{e^{-\nu} + e^{-(G(v_{t+1})+u)}}$. Since the cardinality of side information is finite and independent of $n$, it follows that:
\begin{align}
b_{t+1} & = \mathbb{E}_{U_{1}}\big[\mathbb{E}[\mathcal{E}_n]\big] \nonumber \\
& \overset{(a)}{\to} \mathbb{E}_{U_{1}}\bigg[\mathbb{E}_{Z}\bigg[\frac{1}{e^{-\nu} + e^{-(\frac{v_{t+1}}{2} + \sqrt{v_{t+1}} Z) - u} } \bigg] \bigg] \nonumber \\
& = \frac{v_{t+2}}{\lambda} \label{final.1}
\end{align}
where in $(a)$ we define $Z \sim \mathcal{N}(0,1)$. Equation~\eqref{final.1} implies that $\lambda b_{t+1} \to v_{t+2}$, which concludes the proof of the lemma.
 

\section{Proof of Lemma~\ref{The.weak.unbound}}
\label{BP-App.8}

Let $\kappa = \frac{n}{K}$. Since for all $\ell$: $| h_{\ell} | < \nu$, it follows that for any $t \geq 0$ and for sufficiently large $\kappa$:
\begin{align}
v_{t+1} & = \lambda \;\mathbb{E}_{Z,U_{1}}\bigg[\frac{1}{e^{-\nu} + e^{-(\frac{v_{t}}{2} +\sqrt{v_{t}}Z)-U_{1}}}\bigg] \nonumber \\
& = \lambda \sum_{\ell=1}^{L} \frac{\alpha_{+,\ell}^{2}}{\alpha_{-,\ell}} \;\mathbb{E}_{Z} \bigg[\frac{1}{e^{-\nu(1-\frac{h_{\ell}}{\nu})} + e^{-(\frac{v_{t}}{2} +\sqrt{v_{t}}Z)}}\bigg] \nonumber \\
& \overset{(a)}{=} \lambda \sum_{\ell=1}^{L} \frac{\alpha_{+,\ell}^{2}}{\alpha_{-,\ell}} \;\mathbb{E}_{Z} \bigg[\frac{1}{e^{-C_{l}\nu} + e^{-(\frac{v_{t}}{2} +\sqrt{v_{t}}Z)}}\bigg] \nonumber \\
& \overset{(b)}{=} \lambda \ChiS e^{v_{t}} (1+o(1)) \label{all_o.1}
\end{align}
where $(a)$ holds for positive constants $C_{\ell}$, $\ell \in \{ 1,\cdots , L\}$ and $(b)$ holds because $\mathbb{E}_{Z}[e^{\frac{v_{t}}{2} +\sqrt{v_{t}}Z}] = e^{v_{t}}$.

Consider the sequence $w_{t+1} = e^{w_t}$ with $w_{0} = 0$. Define $t^{*} = \log^{*}(\nu)$ to be the number of times the logarithm function must be iteratively applied to $\nu$ to get a result less than or equal to one. Since $w_{1} = 1$ and $w_{t}$ is increasing in $t$, we have $w_{t^{*}+1} \geq \nu$ (check by applying the $\log$ function $t^{*}$ times to both sides). Thus, as $\kappa$ grows, we have $\nu = o(w_{t^{*}+2})$.

Since $\ChiS \to \infty$ as $\kappa$ grows, it follows by induction that for any fixed $\lambda >0$:
\begin{align}
& v_{t} \geq w_{t} \label{all_o.3.1}
\end{align}
for all $t \geq 0$ and for all sufficiently large $\kappa$. Thus,
\begin{align}
& v_{t^{*}+2} \geq  w_{t^{*}+2} \label{all_o.3}
\end{align}
which implies that as $\kappa$ grows, $\nu = o(v_{t^{*}+2})$ and $h_{\ell} = o(v_{t^{*}+2})$ for all $\ell$. Since $v_{t}$ is increasing in $t$, using~\eqref{all_o.1} and~\eqref{all_o.3}, we get for all sufficiently large $\kappa$ and after $t^{*} +2$ iterations of belief propagation (or for a tree of depth $t^{*}+2$):
\begin{align}
& \mathbb{E}_{U_{0}}\Big[Q(\frac{\nu+\frac{v_{t^*+2}}{2} - U_{0}}{\sqrt{v_{t^*+2}}})\Big] = Q\Big( \frac{1}{2} \sqrt{v_{t^{*}+2} }(1+o(1))\Big) \label{Q.1.1}  \\
& \mathbb{E}_{U_{1}}\Big[Q(\frac{-\nu+\frac{v_{t^*+2}}{2} + U_{1}}{\sqrt{v_{t^*+2}}})\Big] = Q\Big( \frac{1}{2} \sqrt{v_{t^{*}+2}}(1+o(1))\Big) \label{Q.1.2}
\end{align}
%
Since $Q(x) \leq e^{-\frac{1}{2} x^{2}}$ for $x\geq0$, then using~\eqref{all_o.3},~\eqref{Q.1.1} and~\eqref{Q.1.2}:
\begin{align}
&\frac{n-K}{K} Q\Big( \frac{1}{2} \sqrt{v_{t^{*}+2}}(1+o(1)) \Big) \to 0 \label{all_o.21} \\ 
& Q\Big( \frac{1}{2}\sqrt{v_{t^{*}+2}}(1+o(1)) \Big) \to 0 \label{all_o.22}
\end{align}
Using~\eqref{all_o.21} and~\eqref{all_o.22} and Lemma~\ref{MAP_unbounded}, we get:
\begin{align}
&\lim_{\frac{n}{K} \to \infty} \lim_{nq,Kq \to \infty} \lim_{n\to\infty} \frac{\mathbb{E}[\hat{C} \triangle C^{*}]}{K} = 0
\end{align}


%
%
%
%

\bibliographystyle{IEEEtran}
\bibliography{IEEEabrv,ISIT2016_updated}

\end{document}